\newtheoremstyle{break}
  {\topsep}{\topsep}%
  {\itshape}{}%
  {\bfseries}{}%
  {\newline}{}%
\theoremstyle{break}
\newtheorem*{theorem*}{Theorem}
\newtheorem{theorem}{Theorem}[section]
\newtheorem{proposition}[theorem]{Proposition}
\newtheorem{corollary}[theorem]{Corollary}
\newtheorem{remark}[theorem]{Remark}
\newtheorem{lemma}[theorem]{Lemma}
\newtheorem{definition}[theorem]{Definition}
\newtheorem{conjecture}[theorem]{Conjecture}
\numberwithin{equation}{section}
\DeclarePairedDelimiter\autobracket{(}{)}
\DeclarePairedDelimiter\abs{\lvert}{\rvert}
\newcommand{\br}[1]{\autobracket*{#1}}
\title{Strong cosmic censorship for the spherically symmetric Einstein-Maxwell-charged-Klein-Gordon system with positive $\Lambda$: stability of the Cauchy horizon and $H^1$ extensions}
\author{Flavio Rossetti\thanks{\textit{E-mail address}: \texttt{flavio.rossetti@tecnico.ulisboa.pt}}}
\affil{CAMGSD, Departamento de Matem{\'a}tica, Instituto Superior T{\'e}cnico IST,
Universidade de Lisboa UL, Avenida Rovisco Pais 1, 1049-001 Lisboa, Portugal}
\newcommand{\bbR}{\mathbb{R}}
\newcommand{\bbN}{\mathbb{N}}
\newcommand{\bbS}{\mathbb{S}}
\newcommand{\Ric}{\text{Ric}}
\newcommand{\R}{\text{R}}
\newcommand{\Tem}{T^{\text{EM}}}
\newcommand{\Tphi}{T^{\phi}}
\newcommand{\uapp}{u_{\mathcal{A}}}
\newcommand{\dee}[2]{\partial_{#1} {#2}}
\renewcommand{\Re}{\operatorname{Re}}
\renewcommand{\Im}{\operatorname{Im}}
\date{} 
\begin{document}
\maketitle

\begin{abstract}
We investigate the interior of a dynamical black hole as  described by the Einstein-Maxwell-charged-Klein-Gordon system of equations with a cosmological constant, under spherical symmetry. In particular, we consider a characteristic initial value problem where, on the outgoing initial  hypersurface, interpreted as the event horizon $\mathcal{H}^+$ of a dynamical black hole, we prescribe: a) initial data asymptotically approaching a fixed sub-extremal Reissner-Nordstr{\"o}m-de Sitter solution; b) an exponential Price law upper bound for the charged scalar field.

After showing local well-posedness for the corresponding first-order system of partial differential equations, we establish the existence of a Cauchy horizon $\mathcal{CH}^+$ for the evolved spacetime, extending the  bootstrap methods used in the case $\Lambda = 0$ by Van de Moortel \cite{VdM1}.  In this context, we show the existence of $C^0$ spacetime extensions beyond $\mathcal{CH}^+$. Moreover, if the scalar field decays at a  sufficiently fast rate along $\mathcal{H}^+$, we show that the renormalized Hawking mass remains bounded for a large set of initial data. With respect to the analogous model concerning an uncharged and massless scalar field, we are able to  extend the known range of parameters for which  mass inflation is prevented, up to the optimal threshold  suggested by the linear analyses by Costa-Franzen \cite{CostaFranzen} and Hintz-Vasy \cite{HintzVasy1}.

 In this no-mass-inflation scenario, which includes near-extremal solutions, we further prove that the spacetime can be extended across the Cauchy horizon with continuous metric,  Christoffel symbols in $L^2_{\text{loc}}$ and  scalar field in $H^1_{\text{loc}}$.

By generalizing the work by Costa-Girão-Natário-Silva \cite{CGNS4} to the case of a charged and massive scalar field, our results reveal a potential failure of the Christodoulou-Chru{\'s}ciel version of the strong cosmic censorship under spherical symmetry.
\end{abstract}

\tableofcontents

\section{Introduction} 

\subsection{An overview of the model}

The present work concerns the system of equations describing a spherically symmetric Einstein-Maxwell-charged-Klein-Gordon model with positive cosmological constant $\Lambda$, namely:
\begin{equation} \label{main_system}
\begin{cases}
\Ric(g)-\frac{1}{2}\R(g) \,g + \Lambda g = 2 \br{\Tem + \Tphi }, &\textit{\normalfont (Einstein's equations)}\\[0.3em]
dF=0, \quad d \star F = \star J, &\textit{\normalfont (Maxwell's equations)}\\[0.3em]
\br{D_{\mu}D^{\mu}-m^2} \phi = 0, &\textit{\normalfont (Klein-Gordon equation)}
\end{cases}
\end{equation}
where 
\[
\Tem_{\mu \nu} = g^{\alpha \beta}F_{\alpha \mu} F_{\beta \nu} - \frac{1}{4} g_{\mu \nu} F^{\alpha \beta}F_{\alpha \beta}
\]
is the energy-momentum tensor of the electromagnetic field,
\[
\Tphi_{\mu \nu} = \Re \br{D_{\mu} \phi \overline{D_{\nu} \phi }} - \frac{1}{2}  g_{\mu \nu} \br{D_{\alpha} \phi \overline{D^{\alpha}\phi} + m^2 |\phi|^2 }
\]
is the energy-momentum tensor of the charged scalar field and
\[
J_{\mu}= -\frac{i q}{2} \br{\phi \overline{D_{\mu} \phi} - \overline{\phi} D_{\mu} \phi}
\]
is the current appearing in Maxwell's equations.
The above system will be presented in full detail in section \ref{section:so3}.

The main objective of our work is to study the evolution  problem associated to the partial differential equations (\textbf{PDE}s) \eqref{main_system}, with respect to initial data prescribed on two transversal  characteristic hypersurfaces. Along one of these hypersurfaces, the  initial data are specified to mimic the expected behaviour of a charged, spherically symmetric black hole,  approaching\footnote{In an appropriate sense, which will be specified rigorously during our study.} a sub-extremal Reissner-Nordstr{\"o}m-de Sitter solution and interacting with a charged and (possibly) massive scalar field. This scenario is essentially encoded in the requirement of an exponential decay for such a scalar field along the event horizon in terms of an Eddington-Finkelstein type of coordinate, i.e.\ an \textit{exponential Price law upper bound}. The evolved data then describe the interior of a dynamical black hole. A key point of our study regards the existence of a Cauchy horizon and of metric extensions beyond it, in different classes of regularity and most  notably in $H^1_{\text{loc}}$. An overview of our results is given in section \ref{subsec:main_results}.

This problem arises naturally as a consequence of a decades-long debate on the mathematical treatment of black hole spacetimes. Already in the case $\Lambda = 0$, it was observed that the celebrated Kerr and Reissner-Nordstr{\"o}m solutions to the Einstein equations 
share a daunting property: the future domain of dependence\footnote{We define the future domain of dependence of a spacelike hypersurface $S$ as the set of points $p$ such that every past-directed, past-inextendible causal curve starting at $p$ intersects $S$, see e.g. \cite{NatarioRel}.} of complete, regular, asymptotically flat, spacelike hypersurfaces is future-extendible in a smooth way. This translates into a failure of classical determinism\footnote{The lack of global uniqueness can occur without any loss of regularity of the solution metric, even though the Einstein equations are hyperbolic, up to their diffeomorphism invariance. See \cite{Dafermos_2003} for a discussion of this phenomenon.} which is in sharp contrast with the Schwarzschild case, where analogous extensions are forbidden even in the continuous class \cite{Sbierski}. On the other hand, observers crossing 
a Cauchy horizon\footnote{That is the boundary of the future domain of dependence of a complete Cauchy hypersurface in the extended spacetime manifold. We refer the reader to the introduction of \cite{CGNS4} for a discussion on the topic and to \cite{ChruscIsen93} for constructions of specific non-isometric extensions.} are generally expected to experience a \textit{blueshift instability} \cite{PenroseBattelle, SimpsonPenrose}, leading to the blow-up of dynamical quantities  and thus rescuing the deterministic principle. The instability is related to an unbounded amount of energy associated to test waves propagating in such a universe, as measured along the Cauchy horizon.\footnote{At the level of geometric optics,  \cite{PenroseBattelle} asserts: ``\textit{As [an observer crossing the Cauchy horizon $\mathcal{CH}^+$] looks out at the universe that he is `leaving behind', he sees, in one final flash, as he crosses $\mathcal{CH}^+$, the entire later history of the rest of his `old universe}' ''. See also Fig.\ \ref{fig:blueshiftinst}.} 

This led to the first formulation of the strong cosmic censorship conjecture (\textbf{SCCC}), which, roughly speaking, forbids the existence of metric extensions beyond the Cauchy horizon of black hole spacetimes, at least in some regularity class.

In fact, the regularity of such extensions measures to which extent the SCCC fails. Although the Einstein equations are PDEs in the second derivative of the metric tensor, the $H^1_{\text{loc}}$ (Christodoulou-Chru{\'s}ciel) regularity is actually the minimal requirement to make sense of the extensions as weak solutions to the Einstein equations (see also \cite{CGNS4, DafYak18, ChrFormation, ChruscielSCCC}), even though local well--posedness is not known to hold in this case \cite{BoundedL2}.  Weaker formulations can also be considered  \cite{Reintjes1}. 

Being related to an exponential blow-up of the \textbf{energy}, the linear instability associated to blueshift concerns the first derivatives of scalar perturbations. Such effect becomes weaker in the near-extremal limit, for which the surface gravity of the Cauchy horizon approaches zero. This instability is absent in the extremal case \cite{GajicLinear1, GajicLinear2, GajicLuk}.

Moreover, dispersive effects in the black hole exterior also impact the picture described by the blueshift instability. Scalar perturbations are expected to decay along the event horizon of black holes at a rate which, essentially, is exponential in the case $\Lambda > 0$ \cite{DyatlovLinear, DafRodSchwdS, fang2022linear, fang2022nonlinear,  HintzVasyStability, Mavro}, inverse polynomial in the case $\Lambda = 0$ (see \cite{Price72, GundPricePull} for heuristics and numerical arguments, see \cite{DafRodPriceLaw, AngArGajPrice, AngArGajAsymp, DonningerSchlagSoffer1, DonningerSchlagSoffer2, Tataru, MetcalfeTataruTohaneanu, HintzPrice} for rigorous results) and logarithmic for $\Lambda < 0$ \cite{HolzSmulAdS, HolzSmulAdS2}. This behaviour is known as  \textbf{Price's law}.

The case $\Lambda > 0$ suggests  a scenario where the exponential blow-up of the energy due to blueshift (in the sub-extremal case) can be counterbalanced by an exponentially fast decay in the black hole exterior (at least near extremality), leading to extensions of $H^1$ regularity at the Cauchy horizon. While this effect has been described rigorously in \cite{CostaFranzen, HintzVasy1} for the linear wave equation on a Reissner-Nordstr{\"o}m-de Sitter background, the present work proves the analogous result in a non-linear setting with charged scalar field (see already section \ref{subsec:main_results} and see \cite{CGNS4} for the non-linear and uncharged case). 
\begin{figure}[h]
\centering
\includegraphics[width=0.8\textwidth]{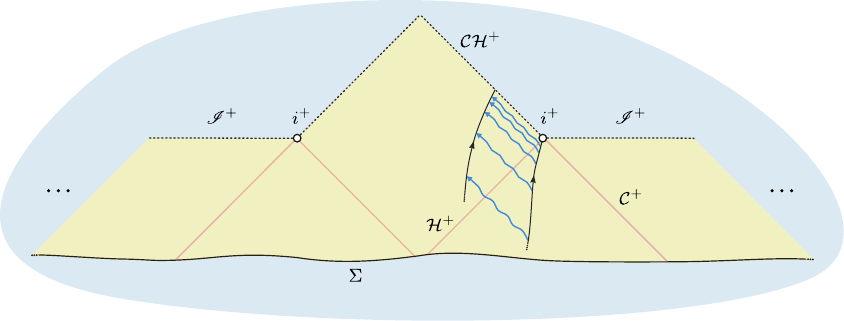}
\caption{A schematic depiction of the blueshift instability in the case, for instance, of a sub-extremal Reissner-Nordstr{\"o}m-de Sitter black hole (although we stress that this phenomenon is not specific to the case $\Lambda > 0$). Assume that radiation pulses at constant time intervals, as measured in proper time, from a timelike curve in the exterior of the black hole. The wordline in the exterior of the black hole has infinite affine length. On the other hand, observers along the timelike worldline in the black hole interior can reach the Cauchy horizon $\mathcal{CH}^+$ in finite proper time, thus receiving an infinite amount of pulses  at the moment of crossing $\mathcal{CH}^+$.}
\label{fig:blueshiftinst}
\end{figure} 

 The current understanding of non-linear effects is influenced by the celebrated work by Israel and Poisson on the Einstein-Maxwell-null dust system \cite{IsraelPoisson}. A subsequent turning point  was the adoption of the Einstein-Maxwell-(real) scalar field model to study black hole interiors \cite{Dafermos_2003}. In particular, the addition of a  scalar field not only renders the problem non-trivial (already under spherical symmetry) but is also motivated by the hyperbolic properties expected from the Einstein equations. A review of this construction can be found in the introduction of \cite{DafermosLuk}. In the latter, the analysis of the non-linear problem in the absence of spherical symmetry was initiated. We will review relevant past work in section \ref{subsec:review_work}.

The present article is strictly related to the following formulations of the SCCC for the non-linear system \eqref{main_system}.
\begin{conjecture}[$C^0$ formulation of the SCCC for the Einstein-Maxwell-charged-Klein-Gordon system with $\Lambda > 0$, under spherical symmetry] \label{conjcontinuous}
Let $\Sigma$ be a regular, compact (without boundary),\footnote{The case of $S^3$ is often considered in the literature. Non-compact, ``asymptotically de Sitter'' hypersurfaces can also be examined. See also Fig.\ \ref{fig:global_structure} for an overview on potential global structures.} spacelike hypersurface. Let us prescribe generic (in the spherically symmetric class) initial data on $\Sigma$, for the Einstein-Maxwell-charged-Klein-Gordon system with $\Lambda > 0$. Then, the future maximal globally hyperbolic development\footnote{Which is, roughly speaking, the largest Lorentzian manifold determined by the evolution of Cauchy data via the Einstein equations. See \cite{RingstromCauchy} for a precise definition.} of such initial data cannot be locally extended as a Lorentzian manifold endowed with a continuous metric.
\end{conjecture}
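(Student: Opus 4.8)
The plan is to adapt the $C^0$-inextendibility technology developed for the Schwarzschild singularity \cite{Sbierski, DafYak18} to the spherically symmetric quotient of the development. First I would reduce the system of section~\ref{section:so3} to the quotient manifold $\mathcal{Q} = \mathcal{M}/SO(3)$ in a double-null gauge, writing the metric as $g = -\Omega^2\, du\, dv + r^2\, g_{\bbS^2}$, where $g_{\bbS^2}$ is the round metric on $\bbS^2$, $r$ is the area radius and $\Omega$ the conformal factor. Evolving the prescribed generic Cauchy data from $\Sigma$, the first substantive task is to read off the causal structure of the future boundary $\partial\mathcal{Q}^+$ of the maximal globally hyperbolic development and to decompose it into a potential Cauchy-horizon part $\mathcal{CH}^+$, on which $r$ remains bounded away from $0$, and a singular part $\mathcal{S} = \{r \to 0\}$. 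Since any continuous extension of the development restricts to a continuous extension through some component of $\partial\mathcal{Q}^+$, it suffices to rule out $C^0$-extendibility across each piece separately.

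On the singular part $\mathcal{S}$ the argument is the classical one. I would first establish causal geodesic incompleteness together with the divergence of an invariant controlled by the renormalized Hawking mass, so that $2m/r \to \infty$ and the Kretschmann scalar blows up as $r \to 0$. Then, assuming for contradiction that a continuous extension exists, I would take a timelike geodesic terminating on $\mathcal{S}$, pass to the extension where it converges to an interior point $p$, and exploit that the area radius must extend continuously with value $0$ at $p$; a Sbierski-type localization then shows this is incompatible with the existence of a nondegenerate Lorentzian metric near $p$. I expect this part to proceed essentially as in the uncharged, $\Lambda = 0$ case, with only the matter terms and the cosmological constant entering the mass evolution.

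The main obstacle---and the step on which the whole statement turns---is the Cauchy-horizon part $\mathcal{CH}^+$. To prove the conjecture as worded one would need to show that, for \emph{generic} spherically symmetric data, either no Cauchy horizon forms or $r \to 0$ is forced along it, so that no continuous extension survives there. This is precisely where the strategy runs against the structure of the model. The bootstrap analysis extending \cite{VdM1} to $\Lambda > 0$ indicates, on the contrary, that an exponential Price-law bound on the scalar field along $\mathcal{H}^+$ produces, for an open set of admissible data, a Cauchy horizon along which $r$ is bounded below and $r$, $\Omega$ extend continuously. Thus the collapse-of-area-radius mechanism that drives $C^0$-inextendibility on $\mathcal{S}$ does not operate on $\mathcal{CH}^+$, and there is no evident substitute obstruction to a continuous extension across it.

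I therefore do not expect this plan to close: the difficulty at $\mathcal{CH}^+$ is structural rather than technical. The formation of the Cauchy horizon, together with the continuous extensions obtained by generalizing \cite{CGNS4}, points to the conclusion that the generic development is in fact continuously extendible across $\mathcal{CH}^+$, so that Conjecture~\ref{conjcontinuous} should be expected to \emph{fail} on the relevant family of data rather than to hold. Carried to its honest end, the attempted proof of non-extendibility instead isolates the exact place where $C^0$-extendibility is established, which is the direction the remainder of the paper develops.
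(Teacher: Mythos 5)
Your assessment is essentially the paper's own position: the statement is a conjecture, not a theorem, and the paper never proves it --- on the contrary, it constructs the obstruction you identify. The bootstrap analysis culminating in corollary \ref{coroll: cauchyhor} and theorem \ref{thm:general_C0_extension} establishes exactly what your attempted proof runs into: under the exponential Price law upper bound \eqref{assumption:expPricelaw} and assumptions \textbf{(A)}--\textbf{(G)}, a Cauchy horizon $\mathcal{CH}^+$ forms with $r$ bounded away from zero, and $r$, $\Omega^2$, $\phi$ extend continuously across it. So your conclusion --- that the collapse-of-area-radius mechanism has no substitute on $\mathcal{CH}^+$ and the conjecture should be expected to fail --- is precisely the ``negative resolution'' the paper argues for.

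One caveat on your final paragraph: you assert that ``the generic development is in fact continuously extendible across $\mathcal{CH}^+$,'' which overstates what is actually established, and the paper is deliberately more careful (see section \ref{sec:SCCCsubtle}). The continuous extensions are proved for a \emph{characteristic} IVP with data prescribed on $\mathcal{H}^+$ satisfying the exponential Price law upper bound; linking this to \emph{generic} data on a spacelike $\Sigma$ requires (i) solving the exterior problem so that the induced data on the event horizon satisfy \textbf{(A)}--\textbf{(G)}, which currently rests on numerics and the nonlinear stability result \cite{Hintz} rather than a rigorous derivation, and (ii) a genericity argument in the moduli space of Cauchy data. This is why the paper says its results ``suggest'' a negative resolution to conjecture \ref{conjcontinuous} rather than disprove it. Relatedly, your treatment of the singular piece $\mathcal{S}=\{r\to 0\}$ via Sbierski-type localization is not needed anywhere in the paper (since no proof of the conjecture is attempted), and for the data considered here the future boundary reached by the analysis is entirely of Cauchy-horizon type, so that part of your decomposition never comes into play.
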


\begin{conjecture}[$H^1_{\text{loc}}$ formulation of the SCCC for the Einstein-Maxwell-charged-Klein-Gordon system with $\Lambda > 0$, under spherical symmetry] \label{conjH1}
Let $\Sigma$ be a  regular, compact (without boundary), spacelike hypersurface. Let us prescribe generic (in the spherically symmetric class) initial data on $\Sigma$, for the Einstein-Maxwell-charged-Klein-Gordon system with $\Lambda > 0$. Then, the future maximal globally hyperbolic development of such initial data cannot be locally extended as a Lorentzian manifold endowed with a continuous metric and Christoffel symbols in $H^1_{\text{loc}}$.
\end{conjecture}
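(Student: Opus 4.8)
The plan is to prove non-extendibility by exhibiting a \emph{geometric} obstruction rather than by examining a single coordinate representation: for generic spherically symmetric data I would show that the renormalized Hawking mass $\varpi$ diverges as one approaches the Cauchy horizon $\mathcal{CH}^+$, and then argue that such a divergence is incompatible with a continuous metric whose Christoffel symbols lie in $L^2_{\mathrm{loc}}$ (and hence, since $H^1_{\mathrm{loc}}\subset L^2_{\mathrm{loc}}$, a fortiori in $H^1_{\mathrm{loc}}$). This is the \emph{mass inflation} mechanism of Israel--Poisson \cite{IsraelPoisson}, Dafermos \cite{Dafermos_2003} and Costa--Gir\~ao--Nat\'ario--Silva \cite{CGNS4}: the blow-up of $\varpi$ is a $C^1$ invariant whose divergence cannot be removed by any change of coordinates, so it rules out \emph{all} admissible extensions at once, not merely the particular $C^0$ extension produced in the earlier sections.

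Concretely I would proceed in four steps. First, using the existence of $\mathcal{CH}^+$ and of the $C^0$ extension already established, I set up double-null coordinates $(u,v)$ near $\mathcal{CH}^+$ in which the metric is continuous and $r$ extends continuously to a strictly positive limit. Second --- and this is where \emph{genericity} must enter --- I would establish a \emph{lower} Price-law bound along $\mathcal{H}^+$: whereas the paper assumes only an exponential \emph{upper} bound, proving the conjecture requires that for generic data the leading late-time coefficient of $\phi$ be nonzero, so that $|\phi|$ and its transversal derivatives decay \emph{exactly} exponentially and no faster, adapting the charged-field technology of Van de Moortel \cite{VdM1} to $\Lambda>0$. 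Third, I would propagate this lower bound into the interior to bound the ingoing energy flux of the scalar field from below, and feed it into the Raychaudhuri and wave equations to force $\varpi\to+\infty$ at $\mathcal{CH}^+$. Fourth, I would convert the blow-up of $\varpi$ into a blow-up of a curvature scalar and hence of the relevant Christoffel components, concluding that they cannot belong to $L^2_{\mathrm{loc}}$, contradicting the assumed $H^1_{\mathrm{loc}}$ regularity.

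The main obstacle is the third step in the charged, massive, $\Lambda>0$ setting, and I expect it to be fatal. The linear analyses of Costa--Franzen \cite{CostaFranzen} and Hintz--Vasy \cite{HintzVasy1} indicate that for near-extremal Reissner--Nordstr\"om--de Sitter parameters the field decays fast enough along $\mathcal{H}^+$ that the ingoing energy flux is \emph{finite}, the mass-inflation integral converges, and $\varpi$ remains \emph{bounded}. Thus for an \emph{open} set of sub-extremal parameters the blow-up that step three demands simply does not occur --- precisely the regime in which the earlier part of this paper constructs a continuous extension with Christoffel symbols in $L^2_{\mathrm{loc}}$. Any proof of the conjecture would therefore have to show that this no-mass-inflation set is non-generic, and I do not see how: near-extremality is an open condition on the asymptotic data, so it cannot plausibly be discarded on measure-zero grounds. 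Overcoming the obstacle would require an entirely different, non-blow-up obstruction to $H^1_{\mathrm{loc}}$ extendibility, which is not known; consequently I expect this plan to fail in the near-extremal regime, and the honest reading of the present results is that they provide evidence \emph{against} Conjecture \ref{conjH1} rather than toward a proof of it.
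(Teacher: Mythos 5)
You have correctly diagnosed the situation: the statement you were asked to prove is a \emph{conjecture}, and the paper contains no proof of it. On the contrary, the paper's main theorems are constructed precisely as evidence \emph{against} it. Theorem \ref{thm:no_mass_inflation} shows that under the condition $s > K_{-}$ and $\rho = K_{-}/K_{+} < 2$ the renormalized Hawking mass $\varpi$ stays bounded up to $\mathcal{CH}^+$ (with no lower Price-law bound needed), and Theorem \ref{thm:H1} then builds a continuous extension with Christoffel symbols in $L^2_{\text{loc}}$ and $\phi \in H^1_{\text{loc}}$. This is exactly the obstruction you identify in your third step: the mass-inflation integral you would need to diverge is the quantity controlled in \eqref{noinflation_bootstrap}, and in the regime singled out by the linear analyses of Costa--Franzen and Hintz--Vasy it converges. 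So your conclusion --- that the mass-inflation route fails in the near-extremal/fast-decay regime and that the results of this paper point toward a negative resolution of Conjecture \ref{conjH1} --- coincides with the paper's own stated position.

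Two caveats on your write-up. First, your claim that the no-mass-inflation set is ``open'' in the parameters and hence cannot be generically discarded is morally the paper's view, but the paper is more careful: the decay rate $s$ is not a free parameter of the interior problem; it is determined by the spectral gap of the exterior black hole (the quasinormal-mode spectrum), and the inequality $s > K_{-}$ for near-extremal Reissner--Nordstr\"om--de Sitter is so far only supported numerically, not rigorously. Second, the conjecture concerns generic data on a \emph{spacelike} hypersurface, whereas the paper's theorems start from characteristic data on $\mathcal{H}^+$ with an assumed exponential upper bound; the paper explicitly flags that it does not show these characteristic data arise from such spacelike evolution, which is why it claims only to ``suggest'' and ``put into question,'' rather than to disprove, the conjecture. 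Your fourth step also glosses a known subtlety the paper mentions: blow-up of $\varpi$ is a $C^1$-geometric obstruction, and converting it into the failure of \emph{every} $H^1_{\text{loc}}$ extension (rather than of $C^1$ extensions) requires additional argument; but since you yourself conclude the blow-up does not occur in the relevant regime, this does not affect your verdict.
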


In particular, this work contributes to fill a gap in the literature by suggesting a negative resolution to  conjecture \ref{conjcontinuous} and, more significantly, a negative resolution, for a large subset of the parameter space, to conjecture \ref{conjH1}. Such resolutions become definitive once the initial conditions of our system are shown to arise from suitable Cauchy data (although we do not deal with this problem in the current work,   we explain in section \ref{sec:SCCCsubtle} why we expect such results  to hold in the black hole exterior).

Being a natural generalization of \cite{CGNS4} (but notice that the coupled scalar field is now \textit{charged} and possibly \textit{massive}) and \cite{VdM1} (in our case, however, $\Lambda > 0$, thus the Price law is exponential rather than inverse polynomial), we take these works as starting points for our analysis. In section \ref{subsec:review_work} we review previous relevant work and compare the  techniques we employed with those of \cite{CGNS4} and \cite{VdM1}.

\subsection{Summary of the main results} \label{subsec:main_results}
The main conclusions of our work can be outlined as follows. We first discuss the characteristic initial value problem (\textbf{IVP}) for the first-order system related to \eqref{main_system} (this corresponds to equations \eqref{dur}--\eqref{algconstr}) and discuss its local well-posedness, modulo diffeomorphism invariance (see theorem \ref{thm:localexistence} and proposition \ref{prop: contdependence}). 

Subsequently, we show the equivalence between the first-order and the second-order Einstein-Maxwell-charged-Klein-Gordon systems (see remark \ref{rmk:equivalence}) under additional regularity of the initial data, which we assume for the remaining part of our work. 
Finally, we assess the global uniqueness of the dynamical black hole under investigation, and its consequences for the SCCC, in the following classes of regularity.

\begin{theorem*}[stability of the Cauchy horizon and existence of continuous extensions. See corollary \ref{coroll: cauchyhor}, theorem \ref{thm:general_C0_extension}]
Let $Q_+$, $\varpi_+$ and $\Lambda$ be, respectively, the charge, mass and cosmological constant associated to a fixed sub-extremal Reissner-Nordstr{\"o}m-de Sitter black hole. 
Let us consider, with respect to the first-order system  \eqref{dur}--\eqref{algconstr}, the future maximal globally hyperbolic development $(\mathcal{M}, g, F, \phi)$ of spherically symmetric initial data prescribed on two transversal null hypersurfaces, where $q \in \mathbb{R} \setminus \{0\}$ is the charge of the scalar field $\phi$ and $m \ge 0$ its mass.

In particular, given $\mathcal{Q} = \mathcal{M}/SO(3)$, let $(u, v)$ be a system of null coordinates on $\mathcal{Q}$,  determined\footnote{When the initial data of the characteristic IVP coincide with Reissner-Nordstr{\"o}m-de Sitter initial data, the coordinate $v$ is equal to the Eddington-Finkelstein coordinate $\frac{r^* + t}{2}$ used near the black hole event horizon. See appendix  \ref{app:eddfinkel} for more details.} by  \eqref{assumption:nu} and \eqref{assumption:kappa} and such that
\begin{equation} \label{prelim_metric}
g = -\Omega^2(u, v)du dv + r^2(u, v) \sigma_{S^2},
\end{equation}
where $\sigma_{S^2}$ is the standard metric on the unit round sphere.  For some $U > 0$ and $v_0 > 0$, we express the two initial hypersurfaces in the above null coordinate system as $[0, U] \times \{v_0\} \cup \{0\} \times [v_0, +\infty)$.

Assume that the initial data to the characteristic IVP satisfy \hyperref[section:assumptions]{assumptions} \textbf{(A)}--\textbf{(G)}, which require, in particular, that the initial data asymptotically approach those of the afore-mentioned sub-extremal black-hole, and that 
 an exponential Price law upper bound holds: there exists $s > 0$ and $C > 0$ such that
\begin{equation} \label{prelim_Pricelaw}
|\phi|(0, v) + |\partial_v \phi|(0, v) \le C e^{-sv}.
\end{equation}
Then, if $U$ is sufficiently small,\footnote{Compared to the $L^{\infty}$ norm of the initial data.} there exists a unique solution to this characteristic IVP, defined on $[0, U] \times [v_0, +\infty)$, given by the metric in \eqref{prelim_metric} and such that
\[
u \mapsto \lim_{v \to +\infty} r(u, v) \ge \mathfrak{r} > 0
\]
is a continuous function from $[0, U]$ to $[\mathfrak{r}, r_{-}]$, for some $\mathfrak{r} > 0$, and
\[
\lim_{u \to 0} \lim_{v \to +\infty} r(u, v) = r_{-},
\]
where $r_{-}$ is the radius of the Cauchy horizon of the reference sub-extremal Reissner-Nordstr{\"o}m-de Sitter black hole.

Moreover, $(\mathcal{M}, g, F, \phi)$ can be   extended up to the Cauchy horizon $\mathcal{CH}^+ = \{ v = +\infty \}$ with continuous metric and continuous scalar field $\phi$. 
\end{theorem*}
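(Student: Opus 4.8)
The plan is to deduce the global statement from uniform a priori estimates via a bootstrap (continuity) argument, built on top of the local theory already established. By theorem \ref{thm:localexistence} a unique solution exists on a neighborhood of the two initial hypersurfaces, and by the extension criterion of theorem \ref{thm:extension_principle} this solution persists on a rectangle $[0,U] \times [v_0, V]$ as long as the area radius $r$ stays bounded away from $0$ and the relevant first-order quantities remain controlled. It therefore suffices to show that, for $U$ small enough, one has estimates uniform in $V$, and then to let $V \to +\infty$; uniqueness on the resulting slab follows from the same local theory and proposition \ref{prop: contdependence}.

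I would then fix bootstrap assumptions on $[0,U] \times [v_0, V]$ controlling three families of quantities: (i) the area radius $r$ confined to a compact subinterval of $(0, r_+)$ containing $r_-$, in particular bounded below by some $\mathfrak{r}>0$; (ii) the scalar field via exponentially decaying bounds $|\phi|, |\partial_v\phi|, |D_u\phi| \lesssim e^{-sv}$, seeded on $u=0$ by the Price law \eqref{prelim_Pricelaw}; and (iii) the renormalized Hawking mass and charge $Q$ staying close to the reference values $\varpi_+, Q_+$. Using the Raychaudhuri equations one controls the signs and sizes of $\nu=\partial_u r$ and $\lambda=\partial_v r$ (both negative in the interior), while integrating the Klein-Gordon equation along characteristics propagates the exponential decay of $\phi$ from the event horizon into the bulk. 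Feeding the back-reaction into the mass and $1-\mu$ equations and using the smallness of $U$ to prevent focusing of $r$, one recovers strictly improved constants and closes the bootstrap. This closure is where I expect the genuine difficulty to lie: the Price law rate $s$ must dominate the blueshift at $\mathcal{CH}^+$ (governed by the Cauchy horizon surface gravity) strongly enough to absorb the nonlinear error terms and to keep $r$ from collapsing to $0$.

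Once the bootstrap is closed, the estimate $\lambda<0$ together with $\int_{v_0}^{\infty}|\lambda|(u,v')\,dv' < \infty$, which follows from the exponential decay, shows that $r(u,\cdot)$ converges monotonically as $v\to+\infty$ for each fixed $u$, defining the limit map $u \mapsto \lim_{v\to+\infty} r(u,v) \ge \mathfrak{r}$. Uniform-in-$u$ control of $\lambda$ makes this convergence uniform and hence the limit map continuous, taking values in $[\mathfrak{r}, r_-]$. On the event horizon $u=0$ the initial data asymptotically coincide with the reference black hole, which forces $\lim_{v\to+\infty} r(0,v)=r_-$; this establishes corollary \ref{coroll: cauchyhor}.

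For the continuous extension (theorem \ref{thm:general_C0_extension}) I would compactify the Cauchy horizon $\{v=+\infty\}$ by passing to a renormalized null coordinate $V$ with, for instance, $V \sim e^{-sv}$, so that $\{V=0\}$ becomes a finite boundary. The convergence of $r$ is already in hand, and the uniform integrability of $\partial_v\phi$ and of the rescaled $\partial_v(\Omega^2)$ supplied by the bootstrap estimates yields continuous limits of $\phi$ and $\Omega^2$ in this coordinate. Rewriting the metric \eqref{prelim_metric} in $(u,V)$ coordinates then produces continuous coefficients up to and across $\{V=0\}$, giving the desired $C^0$ extension with continuous scalar field. The crux of the whole argument remains the quantitative competition, in step two, between the decay rate $s$ and the surface gravity of $\mathcal{CH}^+$, which simultaneously controls the nonlinear back-reaction and rules out trapped-surface formation for small $U$.
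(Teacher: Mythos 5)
Your high-level skeleton (local existence, extension criterion, a priori bounds, then let $V\to+\infty$) matches the paper, but the core of your argument --- a single global bootstrap propagating the uniform rate $e^{-sv}$, closed by requiring that ``$s$ dominates the blueshift'' --- is not the paper's mechanism and cannot work as stated. First, the rate $e^{-sv}$ does not propagate: already in the redshift region the decay degrades to $e^{-c(s)v}$ with $c(s)=\min\{s,\,2K_+-\eta\}$ (proposition \ref{prop: redshift}, definition \eqref{def_cs}), and past $\Gamma_Y$ it degrades further to $e^{-\mathcal{C}_s(u,v)}$, where $\mathcal{C}_s$ in \eqref{def_curlyC} is \emph{decreasing} in $v$ at fixed $u$; the estimates only close because the early blueshift strip $J^+(\Gamma_Y)\cap J^{-}(\gamma)$ is made narrow, i.e.\ $\gamma=\{v=(1+\beta)v_Y(u)\}$ with $\beta < c(s)/(7K_{-})$ as in \eqref{def_beta}, so the exponential growth accumulated across the strip is dominated by the decay inherited at $\Gamma_Y$. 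This is precisely why the paper uses a five-region decomposition (event horizon, redshift, no-shift, early blueshift, late blueshift) with region-dependent rates and different closure mechanisms (smallness of $r_+-r$; induction over finitely many narrow subregions; smallness of $r-r_{-}$; and, in $J^+(\gamma)$, the inequality $\partial_v\log\Omega^2\le-(2K_{-}-\overline{C}\varepsilon)$ of proposition \ref{prop:lateblueshift}), rather than one bootstrap with a single rate.

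Second, your bootstrap item (iii) --- $\varpi$ and $Q$ remain close to $\varpi_+$, $Q_+$ up to the Cauchy horizon --- is not closable in the generality of the theorem: the $C^0$ statement holds for \emph{every} $s>0$, including the regime where mass inflation ($|\varpi|\to+\infty$ at $\mathcal{CH}^+$) is expected; boundedness of $\varpi$ requires the extra conditions $s>K_{-}$ and $\rho<2$ of theorem \ref{thm:no_mass_inflation}, which are not assumed here. Accordingly, the paper's late-blueshift analysis deliberately avoids any bound on $\varpi$ or $K$, and the lower bound on $r$ comes instead from the integrability of $|\lambda|$ in $v$ (equivalently of $|\nu|\lesssim u^{-1+\frac{K_{-}}{K_+}\beta-\alpha}$ in $u$) established along $\gamma$ in lemma \ref{lemma:lambdanu_blueshift} --- a geometric consequence of the narrowness of the strip, not of $s$ beating $K_{-}$. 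For the same reason, your diagnosis that the crux is ``$s$ dominating the surface gravity of $\mathcal{CH}^+$'' would, if it were the real mechanism, restrict the theorem to $s>K_{-}$, contradicting its stated generality. Finally, your compactifying coordinate $V\sim e^{-sv}$ is the wrong gauge: since $\partial_v\log\Omega^2\to 2K\to-2K_{-}$, the rescaled coefficient $e^{sv}\Omega^2\sim e^{(s-2K_{-})v}$ either vanishes or blows up at $\{V=0\}$ unless $s=2K_{-}$, so the extension would be degenerate; the paper instead takes $V$ as in \eqref{def_Vcoord} with $f'(v)=2K(u_\gamma(v),v)\to-2K_{-}$ (see \eqref{def_fprimed} and \eqref{der_condition}), i.e.\ the renormalization must be tied to the blueshift rate $2K_{-}$, not to the Price-law rate $s$.
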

This \textbf{suggests a negative resolution to  conjecture \ref{conjcontinuous}} (see also section \ref{sec:SCCCsubtle}), for every non-zero value of the scalar field charge and for every non-negative value of its mass. This resolution becomes definitive if  \eqref{prelim_Pricelaw} is established for solutions evolved from regular Cauchy data.

\begin{theorem*}[no-mass-inflation scenario and $H^1$ extensions. See theorems \ref{thm:no_mass_inflation} and \ref{thm:H1}]
Under the assumptions of the previous theorem, let $s$ be the constant in \eqref{prelim_Pricelaw} and consider
\[
\rho \coloneqq \frac{K_{-}}{K_+} > 1,
\]
where $K_{-}$ and $K_+$ are the absolute values of the surface gravities of the Cauchy horizon and of the event horizon, respectively, of the reference sub-extremal Reissner-Nordstr{\"o}m-de Sitter black hole. We define the renormalized Hawking mass
\[
\varpi  = \frac{r}{2}\br{1- g(d r^{\sharp}, d r^{\sharp})} + \frac{Q^2}{2r} - \frac{\Lambda}{6}r^3,
\]
where $Q$ is the charge function of the dynamical black hole and $\Lambda$ is the cosmological constant.
If:
\begin{equation} \label{condition_srho}
s > K_{-} \quad \text{and} \quad \rho < 2,
\end{equation}
then the geometric quantity $\varpi$ is bounded, namely there exists $C > 0$, depending uniquely on the initial data, such that:
\[
\|\varpi \|_{L^{\infty}([0, U] \times [v_0, +\infty))} = C < +\infty.
\]
Moreover, for such values of $s$ and $\rho$, there exists a coordinate system for which $g$ can be extended continuously up to $\mathcal{CH}^+$, with Christoffel symbols in $L^2_{\text{loc}}$ and $\phi$ in $ H^1_{\text{loc}}$.
\end{theorem*}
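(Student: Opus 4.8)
The plan is to split the argument into two parts: first establish the boundedness of the renormalized Hawking mass $\varpi$ (the no-mass-inflation statement), and then use this bound, together with the a priori decay estimates already obtained in the construction of $\mathcal{CH}^+$, to produce a coordinate system in which the metric extends continuously and the regularity of the Christoffel symbols and of $\phi$ can be read off. Throughout I would rely on the pointwise bounds coming from the bootstrap of the previous theorem: the exponential decay $|\phi|,|\partial_v\phi|\lesssim e^{-sv}$ propagated into the interior, the convergence $r\to r_{\mathcal{CH}}(u)$ with $\partial_v r\to 0$, the boundedness of $Q$, and the control of the null lapse $\Omega^2$ and of the transversal derivative $\partial_u r$ at the rate dictated by the surface gravity $K_-$.

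For the mass bound, the key is to track $\kappa=-\Omega^2/(4\,\partial_u r)$, because the renormalized mass can be written as $\varpi=\tfrac r2-\tfrac{r\,\partial_v r}{2\kappa}+\tfrac{Q^2}{2r}-\tfrac{\Lambda}{6}r^3$, so that $\varpi$ stays bounded precisely when $\partial_v r/\kappa=1-\mu$ does, i.e.\ when $\kappa$ does not degenerate to $0$ faster than $\partial_v r\to 0$. Integrating the Raychaudhuri equation in the $v$-direction gives $\partial_v\log\kappa=-\,r\,|D_v\phi|^2/|\partial_u r|$, so mass inflation is equivalent to the divergence of $\int^{+\infty} r\,|D_v\phi|^2/|\partial_u r|\,dv$. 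The first task is therefore to insert the bounds $|D_v\phi|^2\lesssim e^{-2sv}$ and $|\partial_u r|\gtrsim e^{-2K_-v}$: the integrand is then $\lesssim e^{(2K_--2s)v}$, which is integrable exactly when $s>K_-$. I expect the condition $\rho<2$ to enter one step earlier, in justifying the sharp lower bound $|\partial_u r|\gtrsim e^{-2K_-v}$ (equivalently, in controlling $\Omega^2$ and the transversal derivatives up to $\mathcal{CH}^+$): the decay rate of $\partial_u r$ is set by the competition between the event-horizon redshift (rate $2K_+$) and the Cauchy-horizon blueshift (rate $2K_-$), and $K_-<2K_+$ is the threshold below which the RNdS rate $e^{-2K_-v}$ dominates the scalar-field-induced corrections. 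Once $\kappa\to\kappa_\infty(u)>0$ is secured, the remaining charge- and mass-terms in $\partial_v\varpi$ and $\partial_u\varpi$ (the contributions proportional to $m^2 r^2|\phi|^2$ and to the current $q\,\Im(\bar\phi D\phi)$) are manifestly integrable and yield a bounded, continuous limit $\varpi\to\varpi_{\mathcal{CH}}(u)$.

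For the extension I would pass to a new null coordinate $V=V(v)$, defined through $dV\propto\Omega^2\,dv$ near $\mathcal{CH}^+$, so that the rescaled lapse $\tilde\Omega^2=\Omega^2\,\frac{dv}{dV}$ has a finite, strictly positive limit as $V\to V_{\mathcal{CH}}$; in these coordinates $g=-\tilde\Omega^2\,du\,dV+r^2\sigma_{S^2}$, and continuity of the extended metric follows from the $C^0$ statement of the previous theorem together with the continuity of $r_{\mathcal{CH}}(u)$, $Q_{\mathcal{CH}}(u)$ and $\varpi_{\mathcal{CH}}(u)$. The $H^1_{\text{loc}}$ regularity of $\phi$ reduces to the finiteness of $\iint(|\partial_u\phi|^2+|\partial_V\phi|^2)\,du\,dV$ on compact sets; the $\partial_u\phi$ term is harmless since $\partial_u\phi$ is bounded and $dV\sim e^{-2K_-v}\,dv$, while the borderline term satisfies $\int|\partial_V\phi|^2\,dV=\int\frac{|\partial_v\phi|^2}{dV/dv}\,dv\sim\int e^{2K_-v}\,|\partial_v\phi|^2\,dv\lesssim\int e^{(2K_--2s)v}\,dv$, finite iff $s>K_-$ — the same threshold, now sharp for $\phi$. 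Finally, the Christoffel symbols of $g$ in the $(u,V)$ chart are built from $\partial r$ and $\partial\log\tilde\Omega^2$; I would place them in $L^2_{\text{loc}}$ by the same change of variables, using the boundedness of $\varpi$ (equivalently $\kappa\to\kappa_\infty>0$) to control $\partial_v\log\tilde\Omega^2$ and hence $\partial_V\log\tilde\Omega^2\sim e^{2K_-v}\,\partial_v\log\tilde\Omega^2$.

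The main obstacle, compared with the uncharged, massless treatment of \cite{CGNS4}, is the simultaneous control of the dynamical charge $Q$ and of the gauge-covariant transversal derivative $D_u\phi$ up to $\mathcal{CH}^+$. Since $Q$ is now sourced by the current $q\,\Im(\bar\phi D\phi)$ through Maxwell's equations, its limit $Q_{\mathcal{CH}}(u)$ must be obtained concurrently with the mass bound, and the charge- and mass-potential terms feed back into the wave equation for $\phi$; closing the estimate for $D_u\phi$ at the sharp rate — which is what ultimately pushes the no-mass-inflation region out to $\{\,s>K_-,\ \rho<2\,\}$ and matches the linear predictions of \cite{CostaFranzen, HintzVasy1} — is the delicate point, and is precisely where the hypothesis $\rho<2$ is genuinely needed.
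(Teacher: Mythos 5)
Your global architecture --- reduce no-mass-inflation to the convergence of an integral of $|\partial_v \phi|^2$ against a degenerating geometric weight, then build the extension in a null coordinate normalized so that the lapse has a positive limit --- is indeed the paper's strategy, and your rescaled coordinate $dV \propto \Omega^2\, dv$ is essentially the paper's $\mathring{v}$. But two of your central steps contain genuine errors. First, the identity you integrate is not a Raychaudhuri equation: $\kappa$ satisfies the \emph{$u$-direction} equation $\partial_u \log \kappa = r|D_u\phi|^2/\nu$, while the $v$-direction Raychaudhuri equation controls $\nu/(1-\mu) = -\Omega^2/(4\lambda)$, namely \eqref{raych_recast}, so the weight in the denominator of the critical integral is $|\partial_v r| = |\lambda|$, not $|\partial_u r|$; the correct criterion is the finiteness of $\int |\theta|^2/|\lambda|\, dv$, i.e.\ the paper's \eqref{noinflation_bootstrap}. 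This is not cosmetic: $\lambda \to 0$ at the Cauchy horizon, and the required lower bound $|\lambda| \gtrsim e^{-(2K_{-}+\ldots)v}$ (the paper's \eqref{noinflation_lowlambda}) presupposes that $2K$ is pinned near $-2K_{-}$ up to $\mathcal{CH}^+$ (\eqref{supbound_2K}), which presupposes control of $\varpi$ --- the very thing being proven. The same is true of the improved interior decay of $|\partial_v\phi|$ (\eqref{noinflation_dvphi}). Your linear chain of estimates is therefore circular; the paper resolves the circularity by bootstrapping the integral bound \eqref{noinflation_bootstrap} from the curve $\gamma$, and deriving the $\lambda$-bounds and the $\partial_v\phi$-improvement \emph{inside} that bootstrap.

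Second, you insert $|\partial_v \phi|^2 \lesssim e^{-2sv}$ in the interior, but the Price-law rate $s$ does not propagate: the redshift region caps it at $c(s) = \min\{s,\, 2K_+ - \eta\}$ (see \eqref{def_cs} and proposition \ref{prop: redshift}), and it is $c(s)$, not $s$, that survives into $J^+(\gamma)$ through $\mathcal{C}_s$. This degradation is precisely where $\rho < 2$ enters: the bootstrap closes iff the exponent \eqref{neg_exp} is negative, i.e.\ iff $c(s) > K_{-}$, which unpacks into $s > K_{-}$ when $s$ is the binding constraint and into $2K_+ > K_{-}$, i.e.\ $\rho < 2$, when the redshift cap binds (see \eqref{noinflation_aux}). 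Your attribution of $\rho < 2$ to a lower bound $|\partial_u r| \gtrsim e^{-2K_{-}v}$, or to closing the estimate on $D_u \phi$, is not the actual mechanism --- and with your (unattenuated) rate $s$ used everywhere, the hypothesis $\rho < 2$ would never be needed at all, which should have been a red flag. The same substitution of $s$ for $c(s)$ appears in your $H^1$ computation; once the correct integral bound $\int |\theta|^2/|\lambda| < \infty$ is in hand, however, that part does go through as in theorem \ref{thm:H1}, since $\Omega^2(U,\cdot) \gtrsim |\lambda|(U,\cdot)$ reduces $\int |\partial_V \phi|^2\, dV$ to the same integral.
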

When the decay given by the exponential Price law upper bound is sufficiently fast and the parameters of the reference sub-extremal black hole lie in a certain range,\footnote{This includes the case in which the reference black hole is close to being extremal.} \textbf{the above result  puts into question the validity of conjecture \ref{conjH1}} (see also section \ref{sec:SCCCsubtle}).

As shown in Fig.\ \ref{fig:no_mass_inflation1}, this theorem significantly extends the range of parameters for which $H^1$ extensions can be obtained, when comparing to the case of a massless and uncharged scalar field \cite{CGNS4}, in a way which is expected to be sharp in the $(s, \rho)$ parameter space \cite{CostaFranzen, HintzVasy1}. We stress that the range of parameters allowing for $H^1$ extensions is not expected to depend on the charge of the scalar field. On the other hand, the techniques of the present paper are more general than those previously used in the case of an uncharged scalar field, when $\Lambda > 0$. 

Notice that the first condition in \eqref{condition_srho} depends on the choice of the outgoing null coordinate $v$, see also appendix \ref{app:eddfinkel}.

\begin{figure}[H]
\centering
\includegraphics[width=0.6\textwidth]{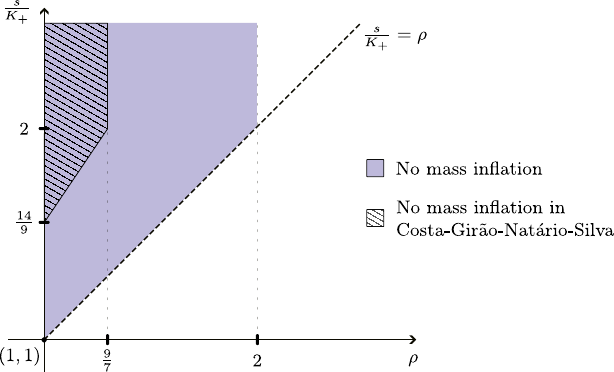}
\caption{Values of $s$ and $\rho = \frac{K_{-}}{K_+}$  that guarantee the absence of mass inflation for a generic set of initial data, in comparison to the results in \cite{CGNS4}. In the extended range that we obtained,  $H^1_{\text{loc}}$ extensions of the solutions to the IVP are constructed. This extended range includes near-extremal black holes, for which $\rho$ is close to 1. The value of $s$ is expected to depend in a non-trivial way on the parameters of the fixed sub-extremal black hole, see \cite{QNMandSCCC, SCCCstill_subtle, hintz2021mode, HintzXie, CasalsMarinho}.} 
\label{fig:no_mass_inflation1}
\end{figure}

\subsection{Strong cosmic censorship in full generality} \label{sec:SCCCsubtle}
The results of our analysis show that the main conclusion of \cite{CGNS4} is still valid when we add  a charge (and possibly mass) to the coupled scalar field: \textbf{according to the state of the art of SCCC, the potential scenarios that lead to a violation of  determinism in the case $\bm{\Lambda > 0}$ cannot be ruled out}.

To explain how to reach this conclusion, we first illustrate the connection between our results and conjecture \ref{conjH1}, and then the relation between the latter and other formulations of SCCC.

The results presented in section \ref{subsec:main_results}
stem from the evolution of initial data from two transversal \textit{null} hypersurfaces, one of which  modelling the event horizon of a dynamical black hole. This suggests that the black hole we take under  consideration has already formed at the beginning of the evolution.\footnote{Notice that the interior and exterior problems can be treated separately, due to domain of dependence arguments, and the exterior region in this context was proved to be stable in \cite{Hintz}, with respect to gravitational and electromagnetic perturbations.}  

Strictly speaking, however, the SCCC constrains the admissible future developments of \textit{spacelike} hypersurfaces, on which \textit{generic}\footnote{By ``generic'' we mean  open and dense  in the moduli space of initial
data, in a suitable topology (see, e.g. \cite{RingstromAnn} and \cite{LukOh1}). This can also be
interpreted in terms of a ``finite co-dimension condition'' (a related discussion can be found in \cite{LukOh1}
for the non-linear case and in \cite{DafYak18} in a linear setting). Genericity of the set of black
hole parameters plays a prominent role in \cite{KehleDioph}, for a negative cosmological constant.} initial data are specified. Although \textbf{we are not committed to any specific global structure}, it is useful to think of the characteristic IVP of section \ref{section:ivp} as arising from suitable (compact or non-compact) initial data prescribed in the black hole exterior (see also Fig.\ \ref{fig:global_structure}). The reason why we believe that our results are relevant to the SCCC is that the initial data we prescribe for our characteristic IVP are compatible with numerical solutions obtained in the exterior of Reissner-Nordstr{\"o}m-de Sitter black holes (see the paragraph below).  Such results are also expected to hold for perturbations of such black holes, due to the non-linear stability results \cite{Hintz, HintzVasyStability}.

\begin{figure}[h]
     \centering
     \includegraphics[width=\linewidth]{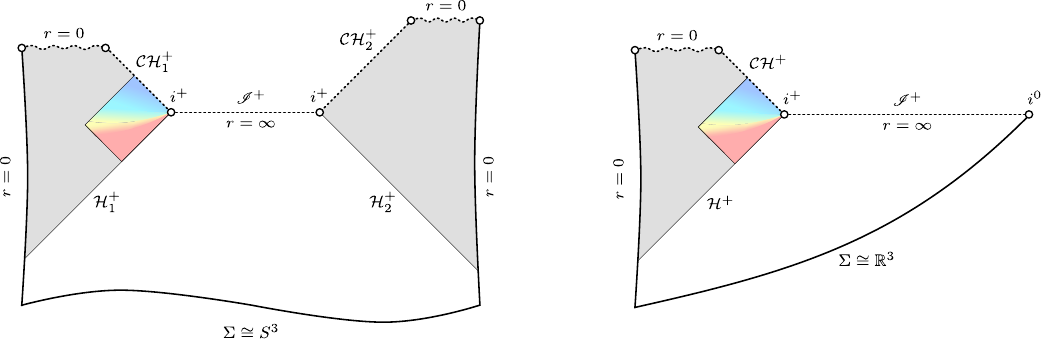}
     \caption{Two possible global spacetime structures  in which the characteristic IVP of section \ref{section:ivp} fits in. The coloured region is the one depicted in the forthcoming Fig.\ \ref{fig:bh_interior}, namely the black hole interior region under inspection in the present work. These global structures require a dynamical formation of trapped surfaces, starting from non-trapped initial data (see also \cite{CostaTrapped, AnTrapped}).}
     \label{fig:global_structure}
\end{figure}

In particular, our $v$ coordinate can be compared to half of the outgoing Eddington-Finkelstein coordinate of \cite{SCCCstill_subtle} (see appendix \ref{app:eddfinkel}). This can be used to see that, according to the same article,  condition \eqref{condition_srho} can be (numerically) attained at the linear level: $s > K_{-}$ (i.e.\ $\beta = \frac{s}{2 K_{-}} > \frac12$, in the language of \cite{SCCCstill_subtle})  for linear waves propagating in the exterior of near-extremal (i.e.\ $\rho$ close to 1) Reissner-Nordstr{\"o}m-de Sitter black holes. This result is also expected to hold on dynamical black holes sufficiently close to Reissner-Nordstr{\"o}m-de Sitter, as the one we analyse. Indeed, notice that on a fixed background $s$ is equal, up to a multiplying factor, to the spectral gap of the Laplace-Beltrami operator $\square_g$. If we denote by $\text{QNM}$ the set of the quasinormal modes of the reference black hole, the spectral gap can also be expressed by 
\[
\inf_{\substack{\omega \in \text{QNM}\\ \Im(\omega) < 0 }} \br{- \Im (\omega)}.
\]
In particular, $s$ depends on the black hole parameters. At the same time, \cite{Hintz} shows that, under small perturbations, the parameters of the perturbed solution are sufficiently close to the initial ones.

Therefore, once the numerical results of \cite{SCCCstill_subtle} are verified rigorously and extended to the non-linear case, a mathematical proof for the validity of \eqref{condition_srho} would in principle be possible, leading to a negative resolution of conjecture \ref{conjH1}.
 
 On the other hand, conjectures \ref{conjcontinuous} and \ref{conjH1} regard  initial data which are generic in the spherically symmetric class, but \textit{non-generic} in the much larger \textit{moduli space} of all possible initial configurations. In particular, when we talk about the \textit{SCCC under spherical symmetry}, the validity or failure of such a conjecture does not automatically settle the non-spherically symmetric problem. Nonetheless, due to analogies with the conformal structure of a Kerr-Newman spacetime, it is widely believed \cite{Dafermos_2003, Dafermos_2005_uniqueness, CGNS4, DafermosLuk} (and there is substantial numerical evidence in the case $\Lambda > 0$ \cite{CasalsMarinho, DaveyDiasGil})  that results obtained in the charged spherically symmetric context can provide vital evidence to either uphold or refute formulations of the SCCC in absence of spherical symmetry.\footnote{The case of uncharged black holes, however, needs to be inspected separately. See the numerical study \cite{DiasReallSantos} for Kerr-de Sitter.  In particular, the $H^1$ formulation of the SCCC is expected to be violated in Kerr-Newman-de Sitter black holes for parameters sufficiently close to those of a near-extremal Reissner-Nordstr{\"o}m-de Sitter solution \cite{DaveyDiasGil}.} 

A larger moduli space can also be achieved by working in a  rougher class of initial data. In \cite{DafYak18}, where global uniqueness is investigated at the linear level, the following is proved: given initial data $(\psi_0, \psi_0') \in  H^1_{\text{loc}} \times L^2_{\text{loc}}$ to the linear wave equation $\square_g \psi = 0$ on a sub-extremal Reissner-Nordstr{\"o}m-de Sitter black hole, with data prescribed on a complete spacelike hypersurface, $\psi$ cannot be extended across the Cauchy horizon in $H^1_{\text{loc}}$ (see also the numerical work \cite{Reall+Rough}). 
This means that, if the numerical results \cite{QNMandSCCC} obtained in the exterior of Reissner-Nordstr{\"o}m-de Sitter black holes were to be proved rigorously, the initial data leading to a potential failure of the SCCC could be ignored, because they are non-generic in this larger set of rough initial configurations.
Whether this Sobolev level of regularity for the initial data is ``suitable'' may well depend on the problem to be studied and is still a topic of discussion in the literature. We also refer  the reader to  \cite{DafShlapScattering, McNamara78} to compare with instability results in the $\Lambda = 0$ case.

\subsection{Previous works and outline of the bootstrap method} \label{subsec:review_work}

\textbf{The case $\bm{\Lambda = 0}$}: after foundational work by Christodoulou \cite{ChristodoulouIVP}, Dafermos  \cite{Dafermos_2003} first laid the framework for the  analysis of the SCCC in spherical symmetry via modern PDE methods. 
In \cite{Dafermos_2005_uniqueness},  the $C^0$ version of the SCCC was proved to be false for the spherically symmetric Einstein-Maxwell-(real) scalar field system, conditionally on the validity of a Price law upper bound, which was later seen to hold in \cite{DafRodPriceLaw}.  
The work \cite{LukOh1} proved the validity of the $C^2$ formulation of SCCC for the same spherically symmetric system.
 
 The first steps towards the analysis of a complex-valued  (i.e.\ \textit{charged}) and possibly massive scalar field were taken in \cite{Kommemi}: this is the model that we denoted as the Einstein-Maxwell-charged-Klein-Gordon system. The work \cite{Kommemi}, in particular, presented a soft analysis of the future boundary of spacetimes undergoing spherical collapse.
 More recent work on the SCCC for the same system can be found in \cite{VdM1, VdM2, VdM_mass, KehleVdM1}, where  continuous spacetime extensions were constructed and several instability results obtained.  
 
 The analysis of the non-linear problem without symmetry assumptions and in vacuum, namely the validity of the SCCC for a Kerr spacetime, was initiated in the seminal work \cite{DafermosLuk}. Here, the $C^0$ formulation of the SCCC was shown to be false, under the assumption of the quantitative stability of the Kerr exterior. The problem is indeed intertwined with stability results of black hole exteriors, object of an intense activity that built on a sequence of remarkable results and notably led, in recent times, to   \cite{GiorgiKlainermanSzeftelKerr, klainkerr, klainbrief} for slowly rotating Kerr solutions. 
Further celebrated results are the non-linear stability of the Schwarzschild family \cite{DafHolzRodTaylor21}. For an overview on the problem and recent contributions both in the linear and non-linear realm, we refer the reader to  \cite{DHR3, AnderssonBlue, SRTdC1, SRTdC2, GiorgiLinear, Benomio1} and references therein.
See also \cite{Aretakis1, AngArGajAsymp} for results in the extremal case.

\textbf{The case $\bm{\Lambda > 0}$}: Differently from the asymptotically flat case, Reissner-Nordstr{\"o}m-de Sitter black holes present a spacelike future null infinity and an additional Killing horizon $\mathcal{C}^+$ (the cosmological horizon). As a result, scalar perturbations of such  black holes  decay  \textit{exponentially} fast along $\mathcal{H}^+$ \cite{DyatlovLinear, Mavro, fang2022linear}. This behaviour competes with the blueshift effect expected near the Cauchy horizon and determining a growth, also exponential, of the main perturbed quantities. While the $H^1$ version of the SCCC is expected to prevail in the asymptotically flat case (the exponential contribution from blueshift wins over the inverse polynomial tails propagating from the event horizon), the situation drastically changes when $\Lambda > 0$. In the latter case, indeed, knowing the exact asymptotic rates of scalar perturbations along the event horizon, and thus determining which of the two exponential contributions is the leading one, is of vital importance to determine the admissibility of spacetime extensions.

So far, a quantitative description of the exponents for such a decay is available at the linear level via numerical methods.
In particular, it has been shown numerically that the $H^1$ norm (i.e. the energy) of linear waves propagating in sub-extremal Reissner-Nordstr{\"o}m-de Sitter black holes is bounded up to the Cauchy horizon, provided that such black hole backgrounds are close to being extremal \cite{QNMandSCCC, SCCCstill_subtle, MoTian} (see also the rigorous results \cite{hintz2021mode} and \cite{HintzXie} in the small mass limit). We refer to  \cite{CasalsMarinho, DaveyDiasGil} for the analogous stability result for scalar perturbations of Kerr-Newman-de Sitter. A different outcome was found in the Kerr-de Sitter case \cite{DiasReallSantos}, where numerical results showed that the  $H^1_{\text{loc}}$ version of the SCCC is respected in the linear setting.

On the other hand, the SCCC with $\Lambda > 0$ has been rigorously studied via the Einstein-Maxwell-(real) scalar field model in \cite{CGNS1, CGNS2, CGNS3} (compare with \cite{Dafermos_2003}). One of the main novelties of this series of papers is a partition of the black hole interior in terms of level sets of the radius function, rather than curves of constant shifts as in \cite{Dafermos_2003, Dafermos_2005_uniqueness} (the latter curves fail to be spacelike when $\Lambda \ne 0$). These three works deal with a real scalar field, taken identically zero along the  event horizon. Conversely, an exponential Price law (both a lower bound and an upper bound) was prescribed in \cite{CGNS4} along the event horizon (compare with \cite{Dafermos_2005_uniqueness}), and generic initial data leading to $C^0$ extensions were constructed. Conditions to have either $H^1_{\text{loc}}$ extensions or mass inflation were given in terms of the decay of the scalar field along the event horizon.

The work that we present in the next sections is a natural prosecution of \cite{CGNS4} as it replaces the neutral scalar field with a charged one, whose behaviour along the event horizon is dictated by an exponential Price law upper bound. We extend, in a way which is expected to be optimal, the set of parameters $s$ and $\rho$ that allow for $H^1_{\text{loc}}$ extensions.

Again, the problem is related with stability results of black hole exteriors: see \cite{Hintz} for the non-linear stability of slowly  rotating Kerr-Newman-de Sitter black holes.

 \textbf{The case $\bm{\Lambda < 0}$}: recent progress with the investigation of the SCCC in the case $\Lambda < 0$ can be found, for instance, in  \cite{HolzSmulAdS, HolzSmulAdS2, KehleDioph, KehleLinear}.

\textbf{Comparison with \cite{VdM1} and \cite{CGNS4}}: the present work generalizes the results of \cite{CGNS4} to the case of a charged and (possibly) massive scalar field, without requiring any lower bound\footnote{An (integrated or pointwise) lower bound for the scalar field is generally expected in order to prove instability estimates, such as mass inflation. We will not pursue this direction in our study, even though the estimates we obtain in the black hole interior pave the way to a future analysis of instability results.} for it along the initial outgoing characteristic hypersurface. Differently from the neutral case, a charged scalar field is compatible with a more realistic description of gravitational collapse (see Fig.\ \ref{fig:global_structure} for potential global structures). At the same time, the charged setting requires gauge covariant derivatives, brings additional terms in the equations and generally undermines monotonicity properties due to the fact that the scalar field is now complex-valued. With respect to the real case, additional dynamical quantities, such as the electromagnetic potential $A_u$ and the charge function $Q$, need to be controlled during the evolution. Moreover, the presence of a mass term for the scalar field allows to describe compact objects formed by massive spin-0 bosons.

In particular, we replace many monotonicity arguments exploited in \cite{CGNS4} with bootstrap methods. In this sense, our work follows the spirit of \cite{VdM1}, where the stability of the Cauchy horizon was proved by bootstrapping the main estimates from the event horizon. In our case, however, we need to propagate exponential terms of the form $e^{c(s)v}$, with $c(s) > 0$ and $v$ being an Eddington-Finkelstein type of coordinate. 

\begin{figure}[h]
\centering
\includegraphics[width=0.5\textwidth]{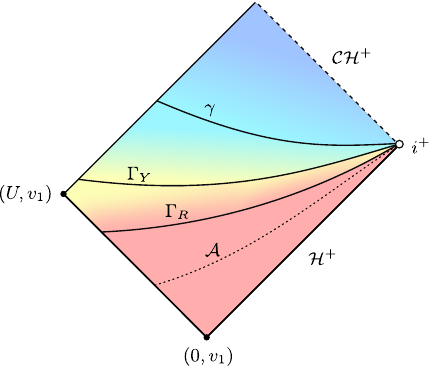}
\caption{A posteriori, these are the spacelike curves that partition the interior of the dynamical black hole for $v_1 \ge v_0$ large. Compare with Fig.\ 3 in \cite{VdM1} and with the figures in \cite[section 4]{CGNS4}. 
For the construction, we consider $r_{-}$ and $r_+$, with $0 < r_{-} < r_+$, respectively the radius of the Cauchy horizon and the radius of the event horizon of the reference sub-extremal black hole. We fix $R < r_+$, close to $r_+$, and $Y > r_{-}$ close to $r_{-}$.
The region to the past of $\Gamma_R = \{r = R\}$ is the redshift region (the curve $\mathcal{A}$ is the apparent horizon). Between $\Gamma_R$ and $\Gamma_Y = \{r = Y\}$ is the no-shift region. Bounded between $\Gamma_Y$ and $\gamma$ (which is not a level set of $r$) is the early blueshift region. Finally, to the future of $\gamma$ we have the late blueshift region.}
\label{fig:bh_interior}
\end{figure} 

We now summarize the main technical strategies that we adopt in the paper (see also Fig.\ \ref{fig:bh_interior} for the partition of the black hole interior that we adopt) and how they differ from the techniques of \cite{CGNS4} and \cite{VdM1}. Here, the quantities $r_{-}$ and $r_+$ represent the radii of the Cauchy horizon and of the event horizon, respectively, of the fixed sub-extremal Reissner-Nordstr{\"o}m-de Sitter black hole. We recall that in section \ref{subsec:main_results} we presented the coordinates $u = u_{\text{Kru}}$, $v = v_{\text{EF}}$ and the function $\Omega^2$ (see \eqref{prelim_metric}). The functions $\varpi$ and $\mu$ are defined in \eqref{def_varpi} and \eqref{def_mu}, respectively.
\begin{itemize}
\item \textbf{Redshift effect}, sections \ref{sec:EH}, \ref{sec:RR} and \ref{sec:NR}: Along the event horizon, we assume an exponential Price law upper bound $e^{-sv}$, for a fixed $s \in (0, +\infty)$, for the scalar field. Near the event horizon, the redshift effect competes with the exponential Price law. This is translated into the presence of the slower\footnote{Here, $c(s)$ is bounded from above by a value depending on the surface gravity of the event horizon of the final black hole, see definition \eqref{def_cs}.} decay rate $e^{-c(s)v}$, rather than $e^{-sv}$.  The monotonicity results and Gr{\"o}nwall inequalities adopted in the real-scalar-field case are replaced by a bootstrap argument. The latter is analogous to the procedure in \cite{VdM1}, where however in our case we need to take care of the different decay rates in function of $s$, due to the exponential nature of the estimates. 
\item \textbf{Blueshift effect}, sections \ref{sec:NR}, \ref{sec:EBR1}, \ref{sec:EBR2}, \ref{section:lateblueshift}: Near the Cauchy horizon (whose existence is proved at the end of section \ref{section:lateblueshift}) an exponentially growing contribution, arising from blueshift, affects the main estimates. However, this effect is relevant only in a relatively small region in which we can still propagate the previously obtained decay rates, up to a small error in the exponent of the main terms. In \cite{CGNS4}, this was dealt with by using the monotonicity properties of $\varpi$ (for instance to bound $1-\mu$ away from zero in the no-shift region and to show the integrability of $\partial_u r$ and $\partial_v r$ near the Cauchy horizon). In our case, however, these monotonicity properties are lacking. In their place, we exploit bootstrap arguments. 

The bootstrap procedure has to differ from that of \cite{VdM1}: there, the estimate $u_{\text{EF}} \sim |\log(2K_+ u_{\text{Kru}})| \sim v_{\text{EF}}$ (see section \ref{section:notations} for an overview of the different coordinate systems) is used multiple times. The constant in this estimate leads to a multiplying factor in front of terms of the form $v^{-s}$. These factors can be dealt with using the smallness parameters of the bootstrap procedure. On the other hand, in our case, such estimates on the null coordinates lead to exponentially large error terms that cannot be removed.
To solve this problem, we 
\[
\text{replace} \quad  e^{-c(s)v} \quad \text{with} \quad e^{-\mathcal{C}_s(u, v)} \sim e^{-(c(s) -\tau)v}
\]
in the main estimates, for a suitable positive\footnote{Strictly speaking, $\mathcal{C}_s$ is positive only in a specific region of the black hole interior that has the curve $\gamma$ as future boundary, see section \ref{sec:EBR1}. This does not constitute a problem since all estimates to the future of this region can be expressed in terms of ${\mathcal{C}_s}_{|\gamma}$.} function $\mathcal{C}_s$ and for $\tau > 0$. In particular, the bootstrap estimates need to contain a function that depends both on the Kruskal-type coordinate $u$ and on the Eddington-Finkelstein type of coordinate $v$.\footnote{We follow the convention of \cite{CGNS1, CGNS2, CGNS3, CGNS4}, where the null coordinates $(u, v) = (u_{\text{Kru}}, v_{EF})$ are used in the entire black hole interior. See also section \ref{section:notations} for a comparison with other conventions.}

A key role in the last region is played by the term $\partial_v \log \Omega^2$ (see also the importance of this term in \cite{VdM1}) to replace the monotonicity properties used in \cite{CGNS4} and to obtain a finer control on the $H^1$ norm of the scalar field. Indeed, the product $u \Omega^2$ measures, in some sense, the redshift and blueshift effects.

\item \textbf{$\bm{C^0}$ and $\bm{H^1}$ extensions}, sections \ref{section:C0extension}, \ref{subsec:nomassinflation}, \ref{subsec:constructionH1extension}: once we propagate the exponential decay rates from $\mathcal{H}^+$, the construction of a $C^0$ extension of the spacetime metric mainly follows from  the results in \cite{Dafermos_2005_uniqueness, CGNS4, LukOh1, VdM1}. 

The $H^1$ extension that we construct is related to the boundedness of the renormalized Hawking mass $\varpi$ (see also \cite{CGNS4}). In  the expression \eqref{varpi_explicit} for $\varpi$ (here $\theta = r \partial_v \phi$ and $\lambda = \partial_v r$), all quantities are bounded except possibly for the integral of 
\begin{equation} \label{integralterm_varpi}
\frac{|\partial_v \phi|^2}{\partial_v r}.
\end{equation}
We prove that, for a sufficiently fast decay rate of the scalar field along $\mathcal{H}^+$, i.e. for $s$ sufficiently large, and if the asymptotic black hole is sufficiently near-extremal, $|\partial_v r|$ admits a lower bound that makes \eqref{integralterm_varpi} integrable in $v$. Thus,  $\varpi$ is bounded. The integrability of \eqref{integralterm_varpi} is then used to prove that the Christoffel symbols are locally square integrable, therefore leading to a $H^1_{\text{loc}}$ extension of the metric along $\mathcal{CH}^+$.
\end{itemize}

The \textbf{outline of the paper} is the following: in section \ref{section:so3} we delve into the main aspects of the characteristic IVP for the Einstein-Maxwell-charged-Klein-Gordon system under spherical symmetry.
Well-posedness of the characteristic IVP is investigated in section \ref{section:wellposedness}. 

In section \ref{section:stability} we establish quantitative bounds along $\mathcal{H}^+$ for the main PDE variables of the characteristic IVP, and propagate these bounds to the black hole interior by bootstrap. This allows us to prove the stability of $\mathcal{CH}^+$ and to construct a continuous spacetime extension at the end of the same section.

In section \ref{sec:nomassinflation} we provide a sharp condition on the reference sub-extremal black hole to generically prevent the occurrence of mass inflation. Under such an assumption, we also construct an $H^1_{\text{loc}}$ spacetime extension up to and including the Cauchy horizon.

\subsection{Acknowledgements}
The author is grateful to his PhD supervisors João Costa and José Natário for suggesting the problem and for their invaluable guidance and advice while completing the manuscript. The author would also like to thank Ryan Unger for insightful comments on the manuscript.
This work was supported by FCT/Portugal through the PhD scholarship UI/BD/152068/2021; partially supported by FCT/Portugal through CAMGSD, IST-ID, projects
UIDB/04459/2020 and UIDP/04459/2020; and partially supported by the H2020-MSCA-2022-SE project EinsteinWaves, GA No. 101131233.

\section{The Einstein-Maxwell-charged-Klein-Gordon system under spherical symmetry} \label{section:so3}
The present work revolves around the system of equations describing a spherically symmetric Einstein-Maxwell-charged-Klein-Gordon model with positive cosmological constant $\Lambda$: 
\begin{align*}
\begin{cases}
\Ric_{\mu \nu}(g)-\frac{1}{2}\R(g) g_{\mu \nu} + \Lambda g_{\mu \nu} = 2 \br{\Tem_{\mu \nu} + \Tphi_{\mu \nu} }, \\[0.3em]
\Tem_{\mu \nu} = g^{\alpha \beta}F_{\alpha \mu} F_{\beta \nu} - \frac{1}{4} g_{\mu \nu} F^{\alpha \beta}F_{\alpha \beta}, \\[0.3em]
\Tphi_{\mu \nu} = \Re \br{D_{\mu} \phi \overline{D_{\nu} \phi }} - \frac{1}{2}  g_{\mu \nu} \br{D_{\alpha} \phi \overline{D^{\alpha}\phi} + m^2 |\phi|^2 }, \\[0.3em]
dF=0, \quad d \star F = \star J, \\[0.3em]
\br{D_{\mu}D^{\mu}-m^2} \phi = 0. 
\end{cases}
\end{align*}
The above system describes a spherically symmetric  spacetime $(\mathcal{M}, g)$ interacting with a scalar field $\phi$, the latter being endowed with charge $q \in \bbR \setminus \{0\}$ and mass $m \ge 0$.
The units are chosen so that $c=4\pi G = \epsilon_0 = 1$ and the symbol $\star$ denotes the Hodge-star operator. The charge generates an electromagnetic field which interacts with the spacetime itself and is modelled by the strength field tensor $F$.  The description of the charged scalar field through a complex-valued function and the use of the gauge covariant derivative 
\begin{equation} \label{cov_derivative}
D_{\mu} = \nabla_{\mu} + i q A_{\mu}
\end{equation}
 are standard techniques in the gauge-theoretical framework (see \cite{Kommemi} 
 for more details) and in Lagrangian formalism. The current 
\begin{equation} \label{noethercurr}
J_{\mu}= -\frac{i q}{2} \br{\phi \overline{D_{\mu} \phi} - \overline{\phi} D_{\mu} \phi}
\end{equation}
appearing in Maxwell's equations, in particular, can be seen as the Noether current of the (classical) Lagrangian density $\mathcal{L}=|D_{\mu} \phi|^2-m^2|\phi|^2$.

We express the isometric action of $SO(3)$ on our spacetime manifold by assuming
\begin{equation} \label{metric_eqn}
g = g_{\mathcal{Q}} +r^2(u, v)\sigma_{S^2}, 
\end{equation}
with
\[
g_{\mathcal{Q}} \coloneqq -\Omega^2(u, v)du dv,
\]
for some real-valued function $\Omega$, where $\sigma_{S^2}=d\theta^2 + \sin^2(\theta)d\varphi^2$ is the standard metric on the unit round 2-sphere, $r$ is the area-radius function and $u, v$ are null coordinates in the $(1+1)$--dimensional manifold $\mathcal{Q} \coloneqq \mathcal{M}/SO(3)$. In particular, we can conformally embed $(\mathcal{Q}, g_{\mathcal{Q}})$ in $(\bbR^2, \eta)$, where $\eta$ is the Minkowski metric, and we can restrict ourselves to working with this subset of the two-dimensional space  since conformal transformations preserve the causal structure.  Moreover, we will make use of the $U(1)$ gauge freedom to assume that $\phi$ and $A$ (see \eqref{cov_derivative}) do not depend on the angular coordinates. We will also require that $\partial_u = \frac{\partial}{\partial u}$ is a future-pointing, ingoing\footnote{In the language of \cite{Kommemi}, a vector field defined on the quotient spacetime manifold is \textbf{ingoing} if it points towards the centre of symmetry, i.e.\ the projection on $\mathcal{M}/SO(3)$ of the set of fixed points of the $SO(3)$ action. See also \cite[proposition 2.1]{Kommemi} for a related analysis of topologies for the initial data, in the $\Lambda = 0$ case.} vector field and $\partial_v$ is a future-pointing, outgoing one. 

We use the same symbol to denote a spherically symmetric function defined on $\mathcal{M}$, and its push-forward through the projection map $\pi \colon \mathcal{M} \to \mathcal{Q}$.

In order to analyse the behaviour of our dynamical system, we are first naturally led to investigate the maximal globally hyperbolic development (\textbf{MGHD}) generated by  (spherically symmetric) initial data prescribed on a suitable spacelike hypersurface (see also section \ref{sec:SCCCsubtle} for possible global structures). After this analysis, the null structure of the equations motivates us to treat the characteristic initial value problem on $\mathcal{Q}$, using the $(u, v)$ coordinate system.

With respect to the afore-mentioned $(u, v)$ null coordinates, we are going to work in the set $[0, U] \times [0, +\infty)$, for some $U > 0$. Therefore, there exists a 1-form $A$ defined on the MGHD of the initial data set, such that $F=dA=F_{uv}(u, v)du \wedge dv$ for some real-valued function $F_{uv}$. In particular, we can define the real-valued function $Q(u, v)$ such that 
\begin{equation} \label{fieldtensor}
F= \frac{Q \Omega^2}{2r^2}du \wedge dv.
\end{equation}
Since we can choose the electromagnetic potential $A$ up to an exact form by performing a gauge choice, we can set $A=A_u du$. Thus, the gauge covariant derivative $D_v$ is just $\nabla_v$ (see \eqref{cov_derivative}) and 
\begin{equation} \label{def_uptofunction}
\dee{v}{A_u} = -\frac{Q \Omega^2}{2 r^2}.
\end{equation}
The above expression defines $A_u$ up to a function of $u$. So, without loss of generality, we will assume that $A_u$ vanishes on the initial ingoing hypersurface. Namely, for $U > 0$ and $v_0 \ge 0$ fixed, we take $A_u(u, v_0) = 0$ for every $u \in [0, U]$.

\subsection{The second-order and first-order systems:  the characteristic initial value problem} \label{section:ivp}
Using the results of appendix \ref{appendixUsefulExpressions}, we can write down the Einstein equations in the following form:
\begin{empheq}[left=\empheqlbrace]{align}
&\dee{u}{\br{\frac{\partial_u r}{\Omega^2}}} = -r \frac{\abs{D_u \phi}^2}{\Omega^2}, \quad  \text{(Raychaudhuri equation along the }u\text{ direction),} \label{raych_u}\\
&\dee{v}{\br{\frac{\dee{v}{r}}{\Omega^2}}} = -r \frac{\abs{\dee{v} \phi}^2}{\Omega^2}, \quad  \text{  (Raychaudhuri equation along the }v\text{  direction),} \label{raych_v} \\
&\dee{u}{\dee{v}{r}} = \frac{\Omega^2}{4r^3}Q^2 + \frac{\Omega^2 r}{4}\Lambda - \frac{\Omega^2}{4r} - \frac{\dee{v}{r} \dee{u}{r}}{r}  + \frac{m^2 \Omega^2}{4}r |\phi|^2, \quad\,\,\,  \text{ (wave eqn.\ for }r\text{),} \label{waveeqn_r}\\
&\dee{u}{\dee{v}{\log \Omega^2}} = - 2\Re{\br{D_u \phi \overline{\dee{v} \phi}}} - \frac{\Omega^2 Q^2}{ r^4} + \frac{\Omega^2}{2 r^2} + \frac{2\dee{v}{r} \dee{u}{r}}{r^2}, \,\,\,\, \text{(wave eqn.\ for}\log \Omega^2\text{),}  \label{waveeqn_log}
\end{empheq}
Moreover, the equation of motion $\br{D_{\mu}D^{\mu} - m^2}\phi = 0$ becomes
\begin{equation} \label{klein_gordon}
\dee{u}{\dee{v}{\phi}} = -\frac{1}{r} \br{D_u{r} \dee{v}{\phi} + \dee{v}{r} D_u{\phi}} + \frac{iq Q \Omega^2}{4r^2} \phi -\frac{m^2 \Omega^2}{4}\phi, \quad \text{(Klein-Gordon equation)}
\end{equation}
where we used that 
\[
\nabla_{\mu}\nabla^{\mu} \phi = -\frac{4}{r\Omega^2}\br{\dee{u}{r}\dee{v}{\phi} + \dee{v}{r}\dee{u}{\phi} + r \dee{u}{\dee{v}{\phi}} },
\]
and
\[
\nabla_{\mu}A^{\mu} = -\frac{2}{\Omega^2} \dee{v}{A_u} - \frac{4}{\Omega^2 r}A_u \dee{v}{r} = \frac{Q}{r^2} - \frac{4}{\Omega^2 r}A_u \dee{v}{r}.
\]
Furthermore, the second Maxwell equation $d\star F = \star J$ implies
\begin{empheq}[left=\empheqlbrace]{align}
\partial_u Q &= -q r^2 \Im \br{\phi \overline{D_u \phi}}, \label{maxwelleqnu} \\
\partial_v Q &= q r^2 \Im \br{ \phi \overline{\dee{v} \phi}}. \label{maxwelleqnv}
\end{empheq}
In order to study the well-posedness of our system of PDEs, it is useful to rewrite the above equations as a first-order system. To do this, we define the following quantities:
\begin{align}
\nu &\coloneqq \dee{u}{r}, 
\\
\lambda &\coloneqq \dee{v}{r}, \label{def_lambda}
\end{align}
which were first introduced by Christodoulou, and
\begin{align}
\varpi &\coloneqq \frac{Q^2}{2r} + \frac{r}{2} - \frac{\Lambda}{6}r^3 + \frac{2r}{\Omega^2}\nu \lambda, \label{def_varpi}
\\
\mu &\coloneqq \frac{2\varpi}{r} - \frac{Q^2}{r^2} + \frac{\Lambda}{3}r^2, \label{def_mu}
\\
\theta &\coloneqq r \dee{v} \phi, \label{def_theta}
\\
\zeta &\coloneqq r D_u \phi, \label{def_zeta}
\\
\kappa &\coloneqq -\frac{\Omega^2}{4 \nu}, \label{def_kappa}
\end{align}
whenever $r$, $\Omega^2$ and $\nu$ are non-zero, an assumption that we will make on the initial hypersurfaces of our PDE system.
All the newly defined quantities are real-valued, except for $\theta$ and $\zeta$, since the scalar field $\phi$ is complex-valued. From the above we can obtain the useful relation\footnote{The variable $\kappa$ was originally introduced in order to avoid problematic terms showing up in the case $\lambda = 0$.}
\[
\lambda = \kappa(1-\mu).
\]
We also notice that $\varpi$ is a geometric quantity (often called the \textbf{renormalized Hawking mass}), due to the fact that $1-\mu = g(d r^{\sharp}, d r^{\sharp})$. Using the new definitions, some lines of computations show that the Einstein, Klein-Gordon and Maxwell equations imply the following first-order system:\footnote{Strictly speaking, the equality $\partial_v \nu = \partial_u \lambda$ and the equation for $\partial_v \zeta$ require an additional step to be obtained. Indeed, they can be seen to hold for continuous solutions such that all partial derivatives present in the PDE system are continuous, after proving that $\partial_u \partial_v r = \partial_v \partial_u r$ (which follows from the second equality in \eqref{dulambda}) and  that $\partial_u \partial_v \phi = \partial_v \partial_u \phi$ (which follows from \eqref{duphi}). See also the proof of local well-posedness in \cite{mythesis}.}
\allowdisplaybreaks
\begin{align}
\dee{u}{r} &= \nu, \label{dur} \\
\dee{v}{r} &= \lambda, \label{dvr}\\
D_u \phi = \br{\partial_u + i q A_u}\phi &= \frac{\zeta}{r}, \label{duphi}\\
D_v \phi =\dee{v}{\phi} &= \frac{\theta}{r}, \label{dvphi}\\ 
 \dee{v}{\nu} = \dee{u}{\lambda} &=-2\nu \kappa \frac{1}{r^2}\br{\frac{Q^2}{r} +\frac{m^2 r^3|\phi|^2}{2} + \frac{\Lambda}{3}r^3 - \varpi}, \label{dulambda}
\\
\dee{u}{\varpi} &=\frac{\nu}{2}m^2 r^2 |\phi|^2 - Q q \Im{\br{\phi \overline{\zeta}}} + \frac{1}{2} \br{\frac{|\zeta|}{\nu}}^2 \nu \br{1-\frac{2\varpi}{r} + \frac{Q^2}{r^2} -\frac{\Lambda}{3}r^2}, \label{duvarpi}
\\
\dee{v}{\varpi} &= \frac{\lambda}{2}m^2 r^2 |\phi|^2 + Qq \Im{\br{\phi \overline{\theta}}} + \frac{|\theta|^2}{2\kappa}, \label{dvvarpi}
\\
D_u \theta = \br{\partial_u + i q A_u} \theta &= -\frac{\zeta}{r} \lambda +\frac{\nu \kappa \phi}{r}\br{m^2 r^2 - i q Q}, \label{dutheta}
\\
D_v \zeta = \dee{v}{\zeta} &= - \frac{\theta}{r}\nu +\frac{\nu \kappa \phi}{r}\br{m^2 r^2 + i q Q}, \label{dvzeta}
\\
\dee{u}{\kappa} &= \frac{\kappa \nu }{r}\br{\frac{|\zeta|}{\nu}}^2, \label{dukappa}\\
\dee{u}{Q} &= -q r \Im(\phi \overline{\zeta}), \label{duQ}  \\
\dee{v}{Q} &= q r \Im(\phi \overline{\theta}), \label{dvQ} \\
\dee{v}{A_u} &= 2\frac{Q \nu \kappa}{r^2}, \label{dvA}
\end{align}
with the algebraic constraint
\begin{equation}
\lambda = \kappa (1-\mu). \label{algconstr}
\end{equation}
It is also convenient to define the quantity
\begin{equation} \label{def_surfacegrav}
K \coloneqq \frac12 \partial_r (1-\mu)(r, \varpi, Q) = \frac{\varpi}{r^2} - \frac{Q^2}{r^3} - \frac{\Lambda}{3} r,
\end{equation}
which, as we will see, is a generalization of the surface gravity of the Killing horizons of a Reissner-Nordstr{\"o}m black hole with cosmological constant $\Lambda$. The above quantities then satisfy
\begin{equation} \label{dulambda_kappa}
 \partial_v \nu =\partial_u \lambda  = \nu \kappa \br{2 K - m^2 r |\phi|^2},
\end{equation}
by \eqref{dulambda}. 
By \eqref{def_varpi} and \eqref{def_kappa}, we also have: 
\begin{equation} \label{varpi_kappa}
\kappa \varpi = \kappa \br{\frac{Q^2}{2r} + \frac{r}{2} - \frac{\Lambda}{6}r^3} - \frac{r}{2}\lambda.
\end{equation}
Therefore, we can recast \eqref{dulambda} as:
\begin{equation} \label{rwaveeqn}
\partial_u(r \lambda) = \partial_v (r \nu) = -\nu \kappa \br{ \frac{Q^2}{r^2} + m^2 r^2 |\phi|^2 -1 + \Lambda r^2}.
\end{equation}
Using \eqref{def_kappa}, \eqref{duphi}, \eqref{dvphi} and \eqref{algconstr}, the equations of the first-order and second-order systems can be expressed in several forms. For our purposes, it will be useful to write the equation for $\partial_u \varpi$ as
\begin{equation} \label{duvarpi2}
\partial_u \varpi = \frac{\nu}{2}m^2 r^2 |\phi|^2 - Qq r \Im{(\phi \overline{D_u \phi})} - \frac{2\lambda}{\Omega^2}r^2 |D_u \phi|^2,
\end{equation}
and the Klein-Gordon equation \eqref{klein_gordon} as one of the following expressions:
\begin{align}
D_u(r\partial_v \phi) &=- \lambda D_u \phi + m^2 \nu \kappa r \phi - i \frac{q Q \nu \kappa}{r} \phi, \label{waveeqn_phi2_r} \\
\partial_v(r D_u \phi) &= -\nu \partial_v \phi + m^2 \nu \kappa r \phi + i \frac{q Q \nu \kappa}{r} \phi. \label{waveeqn_phi3_r}
\end{align}
As noticed in \cite{VdM1} (and the result still holds for non-zero values of $\Lambda$), the wave equation \eqref{waveeqn_log} for $\log \Omega^2$ can be cast as 
\begin{equation} \label{waveeqn_log_new}
\partial_u \partial_v \log \Omega^2 = \kappa \partial_u (2K) - 2 \Re \br{D_u \phi \overline{\partial_v \phi}} - \frac{2\kappa}{r^2} \br{\partial_u \varpi - \frac{2 Q \partial_u Q}{r}}.
\end{equation}
The proof of this result only requires some algebra and the definitions \eqref{def_surfacegrav} and \eqref{def_varpi}, in this exact order.

In the next sections, we are going to discuss the local well-posedness of a characteristic IVP for the first-order system \eqref{dur}--\eqref{algconstr}. Solutions to such a system are elements $(\lambda,$ $\varpi,$ $\theta,$ $\kappa,$ $\phi,$ $Q,$ $r,$ $\nu,$ $\zeta,$ $A_u)$ in an appropriate Cartesian product of function spaces. The system is overdetermined, since we are solving 13 partial differential equations and an algebraic constraint for 10 unknowns. This is why we will solve for equations \eqref{dvr}-\eqref{duphi}, \eqref{dulambda}-\eqref{duvarpi}, \eqref{dutheta}-\eqref{duQ}, \eqref{dvA} and we will consider equations \eqref{dur}, \eqref{dvphi}, \eqref{dvvarpi}, \eqref{dvQ}, \eqref{algconstr} as constraints.
In particular, for some $U > 0$, $v_0 \ge 0$, we will prescribe initial data on the set $\br{[0, U] \times \{v_0\}} \cup \br{\{0\} \times [v_0, +\infty)} \subset [0, U]\times [0, +\infty)$, seen as a subset of the conformal embedding of $(\mathcal{Q}, g_{\mathcal{Q}})$ in $(\bbR^2, \eta)$. In fact, we will specify:
\begin{equation} \label{initialdata_u}
\begin{cases}
r(u, v_0),\\
\zeta(u, v_0),\\
A_u(u, v_0),
\end{cases}
\quad \text{ for } u \in [0, U],
\end{equation}
and 
\begin{equation} \label{initialdata_v}
\begin{cases}
\lambda(0, v),\\
\varpi(0, v),\\
\theta(0, v),\\
\kappa(0, v),\\
\phi(0, v), \\
Q(0, v),
\end{cases}
\quad \text{ for } v \in [v_0, +\infty).
\end{equation} 
The constraint equation \eqref{dur} will be imposed on $[0, U] \times \{v_0\}$ so that the function $u \mapsto \nu(u, v_0)$ can be obtained from $u \mapsto r(u, v_0)$.
In the following, we will denote the prescribed initial data by the zero subscript, e.g. $r_0(u) \coloneqq r(u, v_0)$ and $A_{u, 0}(u)=A_u(u, v_0)$. The coordinates $(u, v)$ have not been fixed yet: we only required them to be null and to stem from the above conformal embedding.  We are going to fix a specific gauge in the following. 

\subsection{Assumptions} \label{section:assumptions}
Let us make some general assumptions so that our spherically symmetric model describes a dynamical black hole asymptotically approaching (in some sense) a sub-extremal Reissner-Nordstr{\"o}m solution with cosmological constant $\Lambda$ (we refer to \cite{CostaGirao, CostaFranzen, CGNS2} for an overview of this spacetime). In this section, we fix the values $v_0$ and $U$ so that $v_0 \ge 0$ and $0 < U <+\infty$. 
\begin{itemize}
\item[\textbf{A)}] \textbf{Gauge-fixing for the null coordinates}: exploiting the diffeomorphism invariance of the Einstein-Maxwell-charged-Klein-Gordon system, we fix the $u$ coordinate by setting
\begin{equation} \label{assumption:nu}
\nu_0(u) = -1, \quad \forall \, u \in [0, U]
\end{equation}
and we specify the $v$ coordinate by choosing
\begin{equation} \label{assumption:kappa}
\kappa_0(v) = 1, \quad \forall \, v \in [v_0, +\infty).
\end{equation}
The negativity of $\nu_0$ can be related to the absence of anti-trapped surfaces.  We also require the positivity of the area-radius:
\begin{equation} \label{pos_radius}
r_0(u) > 0 \quad \forall \, u \in [0, U] \quad \text{ and } \quad r(0, v) > 0 \quad \forall\, v \in [v_0, +\infty).
\end{equation}
\end{itemize}
Using the above gauge, we define 
\[
\mathcal{H}^+ \coloneqq \{(u, v)  \colon \, u=0, v \ge v_0\}
\]
as the \textbf{event horizon} of the dynamical black hole.
\begin{itemize}
\item[\textbf{B)}] \textbf{Regularity assumptions}:
we suppose that 
\[
\nu_0, \zeta_0, \lambda_0, \theta_0, \kappa_0 \text{ are } C^0 \text{ functions in their respective domains},
\]
and that 
\[
r_0, \varpi_0, \phi_0, A_{u, 0}, Q_0 \text{ are } C^1 \text{ functions in their respective domains} .
\]
We emphasize that $\theta_0$, $\zeta_0$ and $\phi_0$ are complex-valued functions. We will also make use of the function $v \mapsto r(0, v)$, whose regularity depends on the regularity of $\lambda_0$. These assumptions are enough to prove well-posedness of the first-order system.  In order to demonstrate that such a system implies the Einstein-Maxwell equations, it is sufficient to additionally require that $\nu_0$, $\lambda_0$ and $\kappa_0$ are $C^1$ functions in their respective domains (see also the forthcoming remark \ref{rmk:equivalence}).
\item[\textbf{C)}] \textbf{Compatibility conditions}:
we require the following constraints to be satisfied:
\begin{equation}
\begin{cases}
r'_0(u) = \nu_0(u), \\
\phi'_0(v) = \frac{\theta_0(v)}{r(0, v)}, \\
\varpi'_0(v) = \frac{\lambda_0(v)}{2}m^2 r(0, v)^2 |\phi_0(v)|^2 + Q_0(v) q \Im(\phi_0(v) \overline{\theta_0}(v)) + \frac{|\theta_0(v)|^2}{2\kappa_0(v)}, \\
Q'_0(v) = q \, r(0, v) \Im(\phi_0(v) \overline{\theta_0}(v)) \\
\lambda_0(v) = \kappa_0(v) \br{1- \frac{2\varpi_0(v)}{r(0, v)} + \frac{Q_0(v)^2}{r(0, v)^2} - \frac{\Lambda}{3}r(0, v)^2},
\end{cases}
\end{equation}
for every $u \in [0, U]$, every $v \in [v_0, +\infty)$.
\item[\textbf{D)}] \textbf{Absence of anti-trapped surfaces}: we require
\begin{equation} \label{absence_antitrapped}
\nu(0, v) < 0, \quad \forall \, v \in [v_0, +\infty).
\end{equation}
Once we construct a solution in a set $\mathcal{D} = [0, U') \times [v_0, +\infty)$ for $0 < U' < +\infty$, the Raychaudhuri equation \eqref{raych_u} implies that $\nu$ remains negative in $\mathcal{D}$. 
\end{itemize}
Starting from section \ref{section:stability}, we also assume the following.
\begin{itemize}
\item[\textbf{E)}] \textbf{Exponential Price law upper bound}: we require that $\Lambda > 0$ and that, on the event horizon of the dynamical black hole, $\phi$ decays as
\begin{equation} \label{assumption:expPricelaw}
|\phi|(0, v) + |\partial_v \phi|(0, v) \le C e^{-sv}, \quad \forall \, v \in [v_0, +\infty)
\end{equation}
for some $C>0$, $s > 0$. This assumption is one of the main consequences of the presence of a positive cosmological constant $\Lambda$. Here, the coordinate $v$ is related to an Eddington-Finkelstein coordinate adopted near the event horizon of a Reissner-Nordstr{\"o}m-de Sitter black hole (see appendix \ref{app:eddfinkel}).
\item[\textbf{F)}] \textbf{Asymptotically approaching a sub-extremal  black hole}: consider a fixed sub-extremal Reissner-Nordstr{\"o}m-de Sitter spacetime. Let $r_+$ and $K_+$ be, respectively, the radius and the surface gravity of its event horizon, and let $Q_+$ be its charge and $\varpi_+$ its mass. Notice that several constraints apply on these four values, due to the sub-extremality condition.\footnote{Given $K_+ \coloneqq \frac{1}{2}\partial_r(1-\mu)(r_+, \varpi_+, Q_+)$, the values of $\varpi_+, Q_+$ and $r_+$ are constrained by the sub-extremality condition $K_+ > 0$ (see also \cite[section 3]{Hintz}).
 Moreover, $(1-\mu)(r_+, \varpi_+, Q_+)=0$.  As in \cite{CGNS2}, in the course of our work we assume that $(1-\mu)(\cdot, \varpi_+, Q_+)$ admits the three distinct and positive roots $r_{-}, r_+$ and $r_C$, with $ r_{-} < r_+ < r_C$, corresponding to the radii of the Cauchy horizon, event horizon and cosmological horizon, respectively, of a Reissner-Nordstr{\"o}m-de Sitter black hole.}
Then, on the event horizon, we require:\footnote{More generally, we could assign a condition on the $\limsup$ of $r, Q, \varpi, K$. This requires minimal changes in the proof of proposition \ref{prop: boundsEH} and yields the same results.} 
\begin{align} 
\lim_{v \to +\infty} r(0, v) &=  r_+,  \label{assumption:r} \\
\lim_{v \to +\infty} Q(0, v) &= Q_+, \label{assumption:Q}
\\
\lim_{v \to +\infty} \varpi(0, v) &= \varpi_+, \label{assumption:varpi} \\
\lim_{v \to +\infty} K(0, v) &= K_+. \label{assumption:surface_grav}
\end{align}
\item[\textbf{G)}] \textbf{Hawking's area theorem}: we assume\footnote{The case $\lambda_{|\mathcal{H}^+} \equiv 0$ was studied in \cite{CGNS2, CGNS3}, since Reissner-Nordstr{\"o}m-de Sitter data were prescribed along $\mathcal{H}^+$. The case in which $\lambda(0, \cdot)$ is identically zero for large values of the $v$ coordinate, but not for  small ones, falls into that analysis after a coordinate shift. Moreover, due to our construction, the initial data along the event horizon are approaching the data of a Reissner-Nordstr{\"o}m-de Sitter black hole in a non-trivial way. So, the apparent horizon does not coincide with the event horizon. By Hawking's area theorem \cite{CGNS4} and lemma \ref{lemma:constantu}, then, we are left to examine the case $\lambda_{|\mathcal{H}^+} > 0$.}
\begin{equation} \label{assumption:lambda}
\lambda(0, v) > 0, \quad \forall \, v \ge v_0.
\end{equation}
 \end{itemize}

\subsection{Notations and conventions} \label{section:notations}
Given two non-negative functions $f$ and $g$, we use the notation $f \lesssim g$ to denote the existence of a positive constant $C$ such that $f \le C g$. The relation $f \gtrsim g$ is defined analogously, and we write $f \sim g$ whenever both $f \lesssim g$ and $f \gtrsim g$ hold. Different constants may be denoted by the same symbols if the value of such constant is not relevant in the computations. Whenever the symbols $\lesssim$ and $\gtrsim$ are used, we mean that the respective constants depend on the initial data only, except for the following cases:
\begin{itemize}
\item In the redshift region (section \ref{sec:RR}), constants depend on the initial data and possibly on the constant $\eta$ (see the statement of proposition \ref{prop: redshift}), 
\item In the no-shift region (section \ref{sec:NR}), early blueshift region (sections \ref{sec:EBR1} and \ref{sec:EBR2}) and late blueshift region (section \ref{section:lateblueshift}), constants depend on the initial data, and possibly on $\eta$, $R$ and $Y$ (see propositions \ref{prop: redshift} and \ref{prop: noshift}).
\end{itemize}

\textbf{Coordinate systems}:
We follow the conventions of \cite{CGNS1, CGNS2, CGNS3, CGNS4}, where the $(u, v) = (u_{\text{Kru}}, v_{\text{EF}})$ coordinate system  (that we specified in section \ref{section:assumptions}) is used in the entirety of the black hole interior. Notice that  additional coordinate systems have been used in the literature in  the case $\Lambda = 0$. For instance, in \cite{VdM1} and \cite{LukOh1}, our coordinate system is only used in the redshift region, whereas the  Eddington-Finkelstein coordinate $u_{EF}$,  defined as
\[
u_{\text{EF}} \coloneqq \frac{1}{2K_+}\log \br{2K_+ u_{\text{Kru}}}
\]
is used in the remaining region of the black hole interior.
Moreover, the quantity $\nu = \nu_{\text{Kru}}$ (see also $\Omega^2 = \Omega^2_{\text{Kru}}$) that we use throughout the paper corresponds to the quantity $\nu_{\text{H}}$ (see also $\Omega^2_H$) in \cite{VdM1}. It is therefore different from the quantity 
\[
\nu_{\text{EF}} = 2K_+ u_{\text{Kru}} \nu_{\text{Kru}}
\]
used in \cite{VdM1} to the future of the redshift region.

\section{Well-posedness of the initial value problem} \label{section:wellposedness}
Following \cite{CGNS1}, we are going to discuss local existence, uniqueness and continuous dependence with respect to the initial data for the solutions to the first-order system \eqref{dur}-\eqref{algconstr}. In this section, we consider the constants $U \ge 0$ and $V \ge 0$, while we take $v_0 = 0$ for the sake of convenience. The initial data  $\kappa_0$, $\nu_0$ and $A_{u, 0}$ are taken to be constant, according to \hyperref[section:assumptions]{assumption} \textbf{(A)} and \eqref{def_uptofunction}. The case of more general functions $\nu_0, \kappa_0$ and $A_{u, 0}$ follows in a straightforward way.
\begin{definition}[solution to the PDE system] \label{solution_continuity}
We define a solution to the PDE system \eqref{dur}--\eqref{dvA}, with initial conditions \eqref{initialdata_u}, \eqref{initialdata_v} and satisfying the regularity \hyperref[section:assumptions]{assumptions} \textbf{(B)} and constraint \eqref{algconstr}, to be a vector $(\lambda,$ $\varpi,$ $\theta,$ $\kappa,$ $\phi,$ $Q,$ $r,$ $\nu,$ $\zeta,$ $A_u)$ of continuous functions defined on $[0, U] \times [0, V]$, such that all partial derivatives appearing in such a  system are continuous.
\end{definition}

\begin{theorem}[local existence and uniqueness] \label{thm:localexistence}
Under the \hyperref[section:assumptions]{assumptions} \textbf{(A)}, \textbf{(B)}, \textbf{(C)} and \textbf{(D)}, let us prescribe initial data on the characteristic initial set $[0, U] \times \{0\} \cup \{0\} \times [0, V]$ for some $0< U <+\infty$ and $0 < V < + \infty$. We define the quantity 
\begin{align*}
N_{\text{i.d.}} \coloneqq  &\br{\|\lambda_0\|+ \|\varpi_0\|+ \|\theta_0\|+ \|\kappa_0\| + 
\|\phi_0\|+ \|Q_0\|}_{L^\infty([0, V])} \\
&+ \br{\|r_0\| + \|\nu_0 \|+  \|\zeta_0\|+ \|A_{u, 0}\|}_{L^{\infty}([0, U])}.
\end{align*}
Then, there exists a \textbf{time of existence} $0 < \epsilon=\epsilon(N_{\text{i.d.}}) \le U$ for which the characteristic IVP for the first-order system \eqref{dur}--\eqref{algconstr} admits a unique solution in $\mathcal{D}_{\epsilon} \coloneqq [0, \epsilon] \times [0, V]$, the solution being such that the functions $r, \nu$ and $\kappa$ are bounded away from zero in $\mathcal{D}_{\epsilon}$.

Similarly, there exists a \textbf{time of existence} $0 < \epsilon = \epsilon(N_{\text{i.d.}}) \le V$ such that the characteristic IVP admits a unique solution in $\mathcal{D}^{\epsilon} \coloneqq [0, U] \times [0, \epsilon]$ and such that the functions $r, \nu$ and $\kappa$ are bounded away from zero in $\mathcal{D}^{\epsilon}$.
\end{theorem}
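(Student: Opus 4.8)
The plan is to prove local existence and uniqueness by reformulating the characteristic PDE system as a fixed-point problem for an integral operator and applying the Banach fixed-point theorem (Picard iteration) on a suitably small characteristic rectangle. First I would integrate each evolution equation of the first-order system along its characteristic direction, converting the differential equations into integral equations: for the quantities evolved in the $u$-direction (namely $r$, $\phi$, $\varpi$, $\theta$, $\kappa$, $Q$, and $A_u$ via \eqref{dur}, \eqref{duphi}, \eqref{duvarpi}, \eqref{dutheta}, \eqref{dukappa}, \eqref{duQ}, \eqref{dvA}) we integrate from $u=0$ using the data on $\{0\}\times[0,V]$, while for those evolved in the $v$-direction (namely $\lambda$ and $\nu$ via \eqref{dvr}, \eqref{dulambda}) we integrate from $v=0$ using the data on $[0,U]\times\{0\}$. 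The remaining quantities $\zeta$ and $\theta/\lambda$ are recovered algebraically from the definitions \eqref{def_zeta}, \eqref{def_theta} and the constraint \eqref{algconstr}. This yields a self-contained map $\Phi$ sending a candidate solution vector to a new one, whose fixed points are exactly the solutions in the sense of definition \ref{solution_continuity}.

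Next I would set up the functional-analytic framework: fix the Banach space $X = C^0(\mathcal{D}_\epsilon)$ (with values in the appropriate Cartesian product) equipped with the sup-norm, and define a closed ball $B \subset X$ centered at the constant-extended initial data, of radius comparable to $N_0$. The two key estimates to verify are (i) that $\Phi$ maps $B$ into itself, and (ii) that $\Phi$ is a contraction on $B$. Both follow from the structural observation that every right-hand side in \eqref{dur}--\eqref{dvA} is a polynomial (or smooth) function of the solution components, their complex conjugates, and reciprocals of $r$, $\nu$, $\kappa$, $\Omega^2$. On the small rectangle $\mathcal{D}_\epsilon$, each integral picks up a factor of $\epsilon$ (the length of the integration interval in $u$; symmetrically $V$ plays the analogous role, but it is the $u$-direction smallness we exploit to make $\epsilon=\epsilon(N_0)$), so choosing $\epsilon$ small relative to $N_0$ shrinks the Lipschitz constant below $1$ and keeps the image inside $B$. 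A crucial preliminary step is to ensure that $r$, $\nu$ and $\kappa$ stay bounded away from zero throughout $\mathcal{D}_\epsilon$: since these are continuous and strictly nonzero on the initial hypersurfaces (by \hyperref[section:assumptions]{assumptions} \textbf{(A)}, \textbf{(D)} and \eqref{pos_radius}), a continuity/smallness argument guarantees they remain so on a small enough rectangle, which in turn keeps the nonlinearities involving $1/r$, $1/\nu$, $1/\kappa$ Lipschitz on $B$.

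The main obstacle I anticipate is twofold. First, the system is \emph{overdetermined}: thirteen equations plus the algebraic constraint \eqref{algconstr} for ten unknowns. So after constructing the fixed point of the reduced evolutionary subsystem (the one singled out as ``solved for'' in the discussion preceding \eqref{initialdata_u}), I must \emph{propagate the constraints} --- namely verify that \eqref{dur}, \eqref{dvphi}, \eqref{dvvarpi}, \eqref{dvQ} and \eqref{algconstr} hold on all of $\mathcal{D}_\epsilon$ given that they hold on the initial hypersurfaces. This is done by deriving a closed system of evolution equations for the constraint-violation quantities, showing they satisfy a homogeneous linear ODE in one null direction with vanishing data, and invoking uniqueness (Grönwall) to conclude they vanish identically. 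The compatibility conditions in \hyperref[section:assumptions]{assumption} \textbf{(C)} furnish precisely the vanishing of the constraints on the initial data. Second, the footnote flags that the mixed-partial symmetries $\partial_v\nu=\partial_u\lambda$ and the equation for $\partial_v\zeta$ require an extra argument: once a $C^0$ solution with continuous partials is obtained, one shows $\partial_u\partial_v r = \partial_v\partial_u r$ from \eqref{dulambda} and $\partial_u\partial_v\phi=\partial_v\partial_u\phi$ from \eqref{duphi}, which legitimizes treating these as genuine constraints rather than independent equations. The symmetric statement for $\mathcal{D}^\epsilon$ follows by interchanging the roles of the $u$ and $v$ directions verbatim, extracting $\epsilon=\epsilon(N_0)\le V$ from the $v$-integration instead.
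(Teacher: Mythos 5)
Your overall strategy --- recasting a reduced evolutionary subsystem as an integral fixed-point problem, applying Banach's theorem on a thin characteristic rectangle, and then propagating the remaining equations as constraints via a homogeneous Gr{\"o}nwall argument together with equality of mixed partials --- is exactly the paper's. However, your assignment of unknowns to evolution equations is scrambled in ways that break the construction. First, $\lambda$ carries initial data only on the \emph{outgoing} hypersurface $\{0\}\times[0,V]$ (see \eqref{initialdata_v}); there is no datum for $\lambda$ on $[0,U]\times\{0\}$, so it cannot be ``integrated from $v=0$'' as you propose (and \eqref{dvr} is in any case an equation for $r$, not for $\lambda$). It must be evolved in the $u$-direction via \eqref{dulambda}, as in \eqref{firstorder_lambda}. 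Second, and more seriously, $\zeta$ cannot be ``recovered algebraically from \eqref{def_zeta}'': in your scheme $\phi$ is itself constructed by integrating \eqref{duphi}, i.e.\ $D_u\phi=\zeta/r$ holds \emph{by construction}, so \eqref{def_zeta} returns no information and $\zeta$ is left undetermined --- yet $\zeta$ enters the right-hand sides of \eqref{duphi}, \eqref{duvarpi}, \eqref{dutheta}, \eqref{dukappa} and \eqref{duQ}. In the paper $\zeta$ is an independent unknown of the iteration, evolved in the $v$-direction via \eqref{dvzeta} from its prescribed datum $\zeta_0$ on the ingoing segment \eqref{initialdata_u} (formula \eqref{firstorder_zeta}); omitting that equation means fixed points of your map neither solve the full system nor attain the data $\zeta_0$. (Smaller slips of the same kind: $A_u$ is listed among the $u$-direction quantities although \eqref{dvA} is a $v$-equation with datum on the ingoing segment, and \eqref{dur} appears both as the equation you solve for $r$ and as a constraint you later propagate.)

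The second genuine gap is in the contraction estimate. Your claim that ``each integral picks up a factor of $\epsilon$'' is true only for the $u$-integrated components; the components integrated in $v$ (in the paper's split: $r$, $\nu$, $\zeta$, $A_u$) involve integrals over intervals of length up to $V$, which is \emph{not} small, so a naive one-stage operator has Lipschitz constant of order $V$ in those slots and need not contract. The paper resolves this with a two-stage definition of the operator $T$: the $v$-integrated components $T_r f$, $T_\nu f$, $T_\zeta f$, $T_A f$ are computed with the already-updated quantities $T_\lambda f$, $T_\varpi f$, $\dots$ substituted under the integral sign, so that, e.g.,
\[
\left| T_r f_1 - T_r f_2 \right|(u, v) \le \int_0^v \left| T_{\lambda} f_1 - T_{\lambda} f_2 \right|(u, v')\, dv' \le \tilde{C}_{\epsilon} V \, \|f_1 - f_2\|_{L^{\infty}},
\]
with $\tilde{C}_{\epsilon} \to 0$ as $\epsilon \to 0$. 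Some device of this kind (or an equivalent weighted norm) is indispensable; as written, your step (ii) fails. Your constraint-propagation plan, the treatment of the mixed partials, and the symmetric statement for $\mathcal{D}^{\epsilon}$ are otherwise consistent with the paper, though note the paper also needs a final patching argument (covering the domain by stripes of width $\epsilon_1$) to upgrade uniqueness in the ball $S_{\epsilon}$ to uniqueness among all solutions, which your proposal does not address.
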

\begin{proof}
The proof consists of an adaptation of the fixed-point argument exploited in the proof of theorem 4.2 in \cite{CGNS1}. 
In particular, the proof shows that the following formal expressions, 
derived from equations \eqref{dur}--\eqref{dvA}, are rigorously satisfied in a suitable metric space of solutions:
\allowdisplaybreaks
\begin{align}
\lambda(u, v) &= \lambda_0(v) - \int_0^u \frac{2\nu \kappa}{r^2} \br{\frac{Q^2}{r} + \frac{m^2 r^3}{2}|\phi|^2 + \frac{\Lambda}{3}r^3 - \varpi} (u', v) du', \label{firstorder_lambda} \\
\varpi(u, v) &= \varpi_0(v) e^{-\int_0^u \frac{|\zeta|^2}{r\nu}(u', v) du'} \label{firstorder_varpi} \\
&+ \int_0^u e^{-\int_s^u \frac{|\zeta|^2}{r\nu} (u', v)du'} \br{\frac{|\zeta|^2}{2\nu} \br{1 + \frac{Q^2}{r^2} - \frac{\Lambda}{3}r^2} - q Q \Im \br{\overline{\zeta}\phi} + \frac{\nu}{2}m^2 r^2 |\phi|^2}(s, v) ds, \nonumber  \\
\theta(u, v) &= \theta_0(v) e^{-i q \int_0^u A_u(u', v) du'} \label{firstorder_theta} \\
&+ \int_0^u e^{-i q \int_s^u A_u(u', v) du'} \br{-\frac{\zeta \lambda}{r} + \frac{\nu \kappa \phi}{r} \br{m^2 r^2 - i q Q}}(s, v) ds, \nonumber \\
\kappa(u, v) &= e^{\int_0^u \frac{|\zeta|^2}{r \nu} (u', v) du'}, \label{firstorder_kappa} \\
\phi(u, v) &= \phi_0(v)e^{-i q \int_0^u A_u (u', v)du'} + \int_0^u \frac{\zeta}{r}(s, v) e^{-i q \int_s^u A_u(u', v)du'} ds, \label{firstorder_phi} \\
Q(u, v) &= Q_0(v) - q \int_0^u r(u', v) \Im(\phi \overline{\zeta})(u', v) du', \label{firstorder_Q}\\
r(u, v) &= r_0(u) + \int_0^v \lambda(u, v')dv', \label{firstorder_r} \\
\nu(u, v) &= - \exp \br{-\int_0^v \frac{2\kappa}{r^2}\br{\frac{Q^2}{r} + \frac{m^2r^3}{2} |\phi|^2 + \frac{\Lambda}{3}r^3 - \varpi} (u, v')dv' } \label{firstorder_nu} \\
\zeta(u, v) &= \zeta_0(u)-\int_0^v \br{\frac{\theta}{r}\nu - \frac{\nu \kappa \phi}{r}\br{m^2 r^2 + i q Q}}(u, v')dv', \label{firstorder_zeta} \\
A_u(u, v) &=   2 \int_0^v \frac{Q \nu \kappa}{r^2}(u, v') dv'. \label{firstorder_A}
\end{align}
In the above formulas, we exploited the gauge choice $A_{u, 0} \equiv 0$ and the fact that $\nu_0 \equiv -1$ and $\kappa_0 \equiv 1$. The full details of the proof can be found in \cite{mythesis}.
\end{proof}

\begin{remark}[maximal past sets]
A set $\mathcal{P} \subset [0, U] \times [v_0, +\infty)$ is a \textbf{past set} if $[0, u] \times [v_0, v] \subset \mathcal{P}$ for every $(u, v) \in \mathcal{P}$.
As noticed in \cite[Theorem 4.4]{CGNS1}, every solution to the characteristic IVP with initial data prescribed on $[0, U] \times [v_0, +\infty)$, for some $v_0 \ge 0$, and satisfying \hyperref[section:assumptions]{assumptions} \textbf{(A)}, \textbf{(B)}, \textbf{(C)} and \textbf{(D)}, can be extended to a maximal past set $\mathcal{P} \supset [0, U] \times \{v_0\} \cup \{0\} \times [v_0, +\infty)$.
\end{remark}

\begin{proposition}[continuous dependence with respect to the initial data] \label{prop: contdependence}
Let us consider two solutions $f$ and $\tilde f$ of the first-order system, with initial data $(r_0,$ $\nu_0,$ $\lambda_0,$ $\varpi_0,$ $\theta_0,$ $\zeta_0,$ $\kappa_0,$ $\phi_0,$ $Q_0,$ $A_{u, 0})$ and $(\tilde r_0,$ $\tilde \nu_0,$ $\tilde \lambda_0,$ $\tilde \varpi_0,$ $\tilde \theta_0,$ $\tilde \zeta_0,$ $\tilde \kappa_0,$ $\tilde \phi_0,$ $\tilde Q_0,$ $\tilde A_{u, 0})$, respectively. Assume that the two solutions are defined in $\mathcal{D} \coloneqq [0, U] \times [0, V]$ for some $U, V \in \bbR_0^+$ and define the two quantities
\begin{align*}
d(U, V) &\coloneqq \|f - \tilde{f}\|_{L^{\infty}(\mathcal{D})},
\end{align*}
and 
\begin{align*}
d_0(U, V) &\coloneqq \left( \|r_0- \tilde r_0\| + \|\nu_0 -\tilde \nu_0\| + \|\zeta_0 - \tilde \zeta_0\| + \|A_{u,0} - \tilde A_{u, 0}\|\right)_{L^{\infty}([0, U])} \\ 
& + \left( \|\lambda_0 - \tilde \lambda_0 \| + \|\varpi_0 -\tilde \varpi_0 \| + \|\theta_0 - \tilde \theta_0 \| +
\|\kappa_0 - \tilde \kappa_0 \| + \| \phi_0 - \tilde \phi_0 \| + \|Q_0 - \tilde Q_0\|  \right)_{L^{\infty}([0, V])}.
\end{align*}
Then, if $d_0(U, V)$ is sufficiently small, we have:
\[
d(U, V) \le C d_0(U, V),
\]
where $C$ is a positive constant depending on $U, V$ and on
\[
N(\mathcal{D}) \coloneqq \|f \|_{L^{\infty}(\mathcal{D})} + \left \| \frac{1}{r} \right \|_{L^{\infty}(\mathcal{D})} + \left \|\frac{1}{\nu}  \right\|_{L^{\infty}(\mathcal{D})}.
\]
\end{proposition}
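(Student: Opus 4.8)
The plan is to run a Grönwall-type argument on the integral reformulation of the system established in the proof of Theorem \ref{thm:localexistence}. Since both $f$ and $\tilde f$ are solutions on $\mathcal{D}$, each of them satisfies the integral identities \eqref{firstorder_lambda}--\eqref{firstorder_A} rigorously. The finiteness of $N(\mathcal{D})$ provides uniform upper bounds on all components of $f$ together with lower bounds keeping $r$ and $\nu$ bounded away from zero on $\mathcal{D}$ (these are exactly the quantities $\|1/r\|_{L^{\infty}}$ and $\|1/\nu\|_{L^{\infty}}$ entering $N(\mathcal{D})$). First I would use the smallness of $d_0(U,V)$ to transfer comparable bounds to $\tilde f$: choosing $d_0(U,V)$ small enough relative to $N(\mathcal{D})$ guarantees that $\tilde r$ and $\tilde \nu$ also remain bounded away from zero and that $\|\tilde f\|_{L^{\infty}(\mathcal{D})}$ is controlled by $N(\mathcal{D})$, so that every nonlinearity appearing in \eqref{firstorder_lambda}--\eqref{firstorder_A} may be evaluated on \emph{both} solutions within a fixed compact region of state space determined by $N(\mathcal{D})$. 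This is precisely the role played by the smallness hypothesis on $d_0(U,V)$.

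The key analytic observation is that, on this region, all the integrands on the right-hand sides of \eqref{firstorder_lambda}--\eqref{firstorder_A} are Lipschitz functions of the solution components, with Lipschitz constant depending only on $N(\mathcal{D})$. For the polynomial and quotient terms this is immediate from the lower bounds on $r$ and $\nu$; for the exponential factors in \eqref{firstorder_varpi}, \eqref{firstorder_kappa}, \eqref{firstorder_nu} and the $U(1)$ phase factors $e^{-iq\int A_u}$ in \eqref{firstorder_theta}, \eqref{firstorder_phi}, I would use the elementary bounds $|e^{a}-e^{b}|\le e^{\max(|a|,|b|)}|a-b|$ and $|e^{i\alpha}-e^{i\beta}|\le|\alpha-\beta|$, together with the fact that the relevant exponents are themselves uniformly bounded integrals of quantities controlled by $N(\mathcal{D})$. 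Subtracting the integral identities for $f$ and $\tilde f$ component by component and inserting these Lipschitz estimates then yields, for every $(u,v)\in\mathcal{D}$, a pointwise inequality of the schematic form
\[
\|(f-\tilde f)(u,v)\| \le C\,d_0(U,V) + C\int_0^u \|(f-\tilde f)(s,v)\|\,ds + C\int_0^v \|(f-\tilde f)(u,s)\|\,ds,
\]
with $C=C(N(\mathcal{D}))$, where the two integral terms reflect the fact that some components are recovered by integration in $u$ and the others by integration in $v$.

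It remains to close this two-dimensional integral inequality, and \textbf{this is where I expect the main difficulty to lie}: a single Grönwall estimate applied naively controls $\|f-\tilde f\|$ only on rectangles with $u+v$ small, precisely because of the coupling between the $u$- and $v$-integrations. I would resolve this exactly as in the uniqueness step of Theorem \ref{thm:localexistence}, by iterating over thin strips. On the initial strip $[0,\epsilon_1]\times[0,V]$, with $\epsilon_1=\epsilon_1(N(\mathcal{D}))$ the contraction width from that proof, the contraction estimate \eqref{contraction} already gives $\|f-\tilde f\|_{L^{\infty}}\le C\,d_0(U,V)$; one then propagates the bound to $[\epsilon_1,2\epsilon_1]\times[0,V]$, regarding the restriction of the two solutions to $\{u=\epsilon_1\}$ as new (now non-zero, but mutually $d_0$-close) characteristic data, and repeats. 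After $\lceil U/\epsilon_1\rceil$ steps the whole of $\mathcal{D}$ is covered, and the accumulated constant depends only on $U$, $V$ and $N(\mathcal{D})$, yielding $d(U,V)\le C\,d_0(U,V)$. Alternatively, the same conclusion follows by testing the displayed inequality against the weighted sup-norm $\sup_{\mathcal{D}} e^{-M(u+v)}\|f-\tilde f\|$ for $M=M(N(\mathcal{D}))$ large enough, which absorbs the two integral terms and closes the estimate directly.
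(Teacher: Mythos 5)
Your proposal is correct, and its skeleton — the integral identities \eqref{firstorder_lambda}--\eqref{firstorder_A}, Lipschitz estimates for the integrands with constants depending only on $N(\mathcal{D})$, and control of $\tilde f$ obtained from the smallness of $d_0(U,V)$ together with the quantitative bounds of theorem \ref{thm:localexistence} applied to $\tilde f$'s data (this last detour is necessary for both you and the paper: extracting lower bounds on $\tilde r$, $|\tilde \nu|$ directly from $|f-\tilde f|$ being small would be circular) — is the same as the paper's. Where you genuinely diverge is the closing step. You treat the two null directions symmetrically, arrive at the two-dimensional inequality containing both a $u$-integral and a $v$-integral, and close it either by strip iteration or by the weighted norm $\sup_{\mathcal{D}} e^{-M(u+v)}|f-\tilde f|$; both devices work (the weighted norm absorbs each integral with a factor $C/M$, and the strips accumulate a harmless factor $C^{\lceil U/\epsilon_1\rceil}$ with $\epsilon_1=\epsilon_1(N(\mathcal{D}))$). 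The paper never faces that two-dimensional inequality, because it exploits a triangular structure of the system: every quantity recovered by $u$-integration is estimated \emph{pointwise} by $d_0$ plus sup-norm differences of the $v$-integrated quantities $(r, \nu, \zeta, A_u)$ — the $u$-integrals are simply bounded by $U$ times sup-norms, no smallness in $u$ required, and substituting in the order $\kappa,\phi \to Q \to \varpi \to \lambda \to \theta$ eliminates all $u$-integrated unknowns from the right-hand sides. Only the $v$-integrals then survive, giving the one-dimensional Volterra inequality $d(U,V) \le C \left( d_0(U,V) + \int_0^V d(U,V')\, dV' \right)$, which a single standard Gr{\"o}nwall lemma closes. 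In short: your route is more robust, in that it would apply to a fully coupled characteristic system with no such structure, while the paper's is shorter and stays within the one-variable Gr{\"o}nwall framework; the coupling you flag as ``the main difficulty'' is exactly what the paper's structural observation makes disappear.
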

\begin{proof}
See \cite{mythesis}.
\end{proof}

\begin{remark}[On the equivalence between the first-order and second-order system] \label{rmk:equivalence}
Under \hyperref[section:assumptions]{assumptions} \textbf{(A)}, \textbf{(B)}, \textbf{(C)}, \textbf{(D)}, the second-order system \eqref{raych_u}--\eqref{klein_gordon}, \eqref{duQ}--\eqref{dvQ} implies the first-order system \eqref{dur}--\eqref{algconstr}.

On the other hand, if $(r, \Omega^2, \phi, Q)$ solves the second-order system (namely $r$, $\Omega^2$, $\phi$, $Q$ satisfy the system, they are continuous functions and all derivatives in the second-order system are continuous), then the Raychaudhuri equations and the wave equation for $r$ imply that $r \in C^2$. Such a property is guaranteed for a solution to the first-order system if we additionally require that
\[
\textbf{(B2) }
\nu_0, \lambda_0 \text{ and } \kappa_0 \text{ are } C^1 \text{ functions in their respective domains.}
\]
A straightforward adaptation of \cite[Section 6]{CGNS1}  reveals that assumption \textbf{(B2)} is in fact sufficient to show that a solution to the first-order system solves the second-order system, in the above-mentioned sense. 

Therefore, we say that the first-order system and the second-order system are equivalent if assumptions \textbf{(A)}, \textbf{(B)}, \textbf{(B2)}, \textbf{(C)} and \textbf{(D)} are satisfied.
\end{remark}

\begin{remark}[general properties of solutions] \label{rmk:signs}
As a consequence of \eqref{firstorder_lambda}-\eqref{firstorder_A}, we have that, for a solution $f = (\lambda,$ $\varpi,$ $\theta,$ $\kappa,$ $\phi,$ $Q,$ $r,$ $\nu,$ $\zeta,$ $A_u)$ of the first order system in a past set $\mathcal{P}$:
\begin{itemize}
\item $\kappa$ is positive,
\item $\nu$ is negative.
\end{itemize}
We notice that the renormalized Hawking mass $\varpi$ is generally not a monotonic function, differently from the case of the Einstein-Maxwell-(\textbf{real}) scalar field system. However, we have
\[
\partial_u \br{\varpi - \frac{Q^2}{2r}} < 0 \quad \text{ and } \quad \partial_v \br{\varpi - \frac{Q^2}{2r}} \ge 0
\]
in the region $\{(u, v) \colon \, \lambda(u, v) \ge 0\}$.
\end{remark}

\begin{remark}[An extension criterion] \label{rmk:extension_principle}
Related to the well-posedness of the characteristic IVP, one can investigate the conditions that allow to extend the domain of existence of a solution. 
Results of this sort were obtained in \cite{Dafermos_2005_trapped}, \cite{CGNS1} and \cite{Kommemi}. In \cite{Dafermos_2005_trapped, Kommemi}, in particular, the lack of extensions was used to characterize the \textit{first singularities} possibly present in the considered spacetimes, therefore addressing that problem of spacetime predictability which, in more general settings, is captured by the weak cosmic censorship conjecture. In particular, the following result was proved in \cite{Kommemi} for the Einstein-Maxwell-(charged) scalar field system with $\Lambda = 0$: given initial data prescribed on $[0, U] \times \{0\} \cup \{0 \} \times [0, V]$ for some $U, V \in \mathbb{R}_0^+$, where the initial data satisfy the assumptions of the local existence theorem \ref{thm:localexistence}, a solution $f$ defined in 
\[
\mathcal{D} \coloneqq [0, U') \times [0, V'),
\]
 with $0 < U' < U$ and $0 < V' < V$, can be extended along both the future ingoing and outgoing directions if we can control the $L^{\infty}$ norm of the solution and of the inverse of $r$ and $\Omega^2$. More precisely, the result is formulated in a global sense in terms of Cauchy data. In \cite{Kommemi}, it was also shown that for such extension to exist, two conditions are sufficient (provided that singularities emanating from spacetime endpoints are avoided): 1) that the area-radius function can be estimated in $\mathcal{D}$ from below and above by positive constants, and 2) that the spacetime volume of $\mathcal{D}$, i.e.
\[
\int_{\mathcal{D}} \Omega^2 du dv
\]
is bounded from above. 

The above result holds for the second-order system \eqref{raych_u}--\eqref{klein_gordon}, \eqref{duQ}--\eqref{dvQ} since the presence of $\Lambda$ in \eqref{waveeqn_r} does not require any substantial change in the proofs of the above statements. This is enough for the purposes of the present paper (see remark \ref{rmk:equivalence}). An additional proof that only requires the level of regularity of the first-order system \eqref{dur}--\eqref{algconstr} is presented in \cite{mythesis}.
\end{remark}

\section{Existence and stability of the Cauchy horizon} \label{section:stability}
We recall that a set $\mathcal{P} \subset [0, U] \times [v_0, +\infty)$ is called a \textbf{past set} if $J^{-}(u, v) \coloneqq [0, u] \times [v_0, v]$ is contained in $\mathcal{P}$ for every $(u, v)$ in $\mathcal{P}$. Given a set $S \subset [0, U] \times [v_0, +\infty)$, we define 
\[
J^{-}(S) \coloneqq \bigcup_{(u, v) \in S} J^{-}(u, v),
\]
and analogous definitions hold for $J^{+}(S)$, $I^{-}(S)$ and $I^{+}(S)$ (where $I^{-}(u, v)\coloneqq [0, u) \times [v_0, v)$).

 In the following, we work with a solution to the characteristic IVP of section \ref{section:ivp}, defined in the maximal past set $\mathcal{P}$ containing
\[
\mathcal{D}_0 \coloneqq [0, U] \times \{v_0\} \cup \{0\}  \times [v_0, +\infty), \quad \text{ for some } 0 < U < +\infty,\, v_0 \ge 0.
\]
The existence of $\mathcal{P}$ is guaranteed by the well-posedness results proved in section \ref{section:wellposedness}. In turn, its maximality is to be interpreted in the sense that the solution cannot be defined on any larger past set (see also remark \ref{rmk:extension_principle}).
From now on, we suppose that \hyperref[section:assumptions]{assumptions} \textbf{(A)}, \textbf{(B)}, \textbf{(B2)} (see remark \ref{rmk:equivalence}),  \textbf{(C)}, \textbf{(D)}, \textbf{(E)}, \textbf{(F)} and \textbf{(G)} hold. In particular, we are going to use the equations from both the first-order and second-order systems, depending on the most convenient choice.

In this context, $\mathcal{P}$ corresponds to a region containing the event horizon and extending inside the dynamical black hole under investigation. 
In the course of the following proofs, we actually work with solutions defined in 
\[
\mathcal{P} \cap \{v \ge v_1\},
\]
for some $v_1 \ge v_0$. We require (finitely many times) that the values of $U$ and $v_1$ are, respectively, sufficiently small and sufficiently large. A posteriori, this implies that our results hold in a region adjacent to both the event horizon and the Cauchy horizon of the dynamical black hole under investigation.  We will highlight the steps where we constrain the values of $U$ and $v_1$. This occurs finitely many times and depends only on the $L^{\infty}$ norm of the initial data and possibly on $\eta$, $R$ (see proposition \ref{prop: redshift}), $Y$ (see proposition \ref{prop: noshift}), and $\varepsilon$ (see proposition \ref{prop: blueshift}).  These quantities can ultimately be chosen in terms of the initial data.

In this section, we are also going to prove that the \textbf{apparent horizon}
\[
\mathcal{A} \coloneqq \{(u, v) \in [0, U] \times [v_0, +\infty) \colon\, \lambda(u, v) = 0\}
\]
is a non-empty $C^1$ curve for large values of the $v$ coordinate. It is also convenient to define the \textbf{regular region} 
\[
\mathcal{R} \coloneqq \{(u, v) \in [0, U] \times [v_0, +\infty) \colon \, \lambda(u, v) > 0 \},
\]
which, as we will see, extends from the event horizon to part of the redshift region. The remaining part of the  black hole interior is occupied by  the \textbf{trapped region}
\begin{equation} \label{def_trapped}
\mathcal{T} \coloneqq \{(u, v) \in [0, U] \times [v_0, +\infty) \colon \, \lambda(u, v) < 0 \}.
\end{equation}
For the system under analysis, the regular region is a priori non-empty, differently from the case of the Reissner-Nordstr{\"o}m-de Sitter solution. In the latter case, indeed, every 2-sphere in the interior of the black hole region is a trapped surface and, furthermore, the apparent horizon coincides with the event horizon.

\begin{remark}[on the main constants] \label{remark:mainconstants}
Throughout the bootstrap procedure, we use several auxiliary, positive quantities:
\begin{itemize}
\item Redshift region (proposition \ref{prop: redshift}): $\eta$, $\delta = \delta(\eta)$, $R =  R(\delta)$,
\item No-shift region (proposition \ref{prop: noshift}): $\varepsilon = \varepsilon(\beta)$, $\Delta$, $Y = Y(\Delta)$,
\item Early blueshift region (see \eqref{def_beta}): $\beta$.
\end{itemize}
We stress that, in proposition \ref{prop: blueshift}, we choose $\Delta$ in function of $\varepsilon$. However, as explained in remark \ref{rmk:not_necessary}, an analogous proof holds even if we choose $\Delta$ independently from $\varepsilon$.
All above constants can be ultimately defined in terms of the initial data.

In particular, we have:
\begin{itemize}
\item $\delta \to 0$ as $\eta \to 0$ and $R \to r_+$ as $\delta \to 0$ (see proposition \ref{prop: redshift}),
\item $Y \to r_{-}$ as $\Delta \to 0$ (see proposition \ref{prop: blueshift}),
\item $\varepsilon \to 0$ as $\beta \to 0$ (see lemma \ref{lemma:lambdanu_blueshift}).
\end{itemize}
\end{remark}

\subsection{Event horizon} \label{sec:EH}
Recall that $\varpi_+$, $Q_+$ and $\Lambda$ are the parameters of the final sub-extremal Reissner-Nordstr{\"o}m-de Sitter black hole (see \hyperref[section:assumptions]{assumption} \textbf{(F)}), while $r_+$ and $K_+$ denote the radius and the surface gravity, respectively, associated to its event horizon. 
\begin{proposition}[bounds along the event horizon] \label{prop: boundsEH}
For every $v \ge v_0$, we have:
\begin{align}
0 < r_+ - r(0, v) &\le C_{\mathcal{H}}e^{-2sv}, \label{bound_r_eventhorizon}\\
0 < \lambda(0, v) &\le C_{\mathcal{H}}e^{-2sv}, \label{bound_lambda_eventhorizon} \\
  \frac{e^{2K_+ v}}{C_{\mathcal{H}}} \le |\nu|(0, v) &\le C_{\mathcal{H}} e^{2K_+ v}, \label{bound_nu_eventhorizon}\\
 |\varpi(0, v) - \varpi_+| &\le C_{\mathcal{H}}e^{-2sv}, \label{bound_varpi_eventhorizon} \\
 |Q(0, v) - Q_+| & \le C_{\mathcal{H}} e^{-2sv}, \label{bound_Q_eventhorizon} \\
|K(0, v) - K_+| &\le C_{\mathcal{H}} e^{-2sv},  \label{bound_K_eventhorizon} \\
|\partial_v \log \Omega^2(0, v) - 2K(0, v) | &\le C_{\mathcal{H}} e^{-2sv}, \label{bound_dvlog_eventhorizon} \\
 |D_u \phi|(0, v) &\le \begin{cases}
C_{\mathcal{H}} |\nu|(0, v) e^{-sv}, &\mathrm{if }\,\, s < 2K_+, \\
C_{\mathcal{H}} (v - v_0), &\mathrm{if }\,\, s = 2K_+,\\
C_{\mathcal{H}}, & \mathrm{if }\,\,  s > 2K_+,
\end{cases} \label{bootstrap_firstbound}\\
|\phi|(0, v) + |\partial_v \phi|(0, v) &\le C_{\mathcal{H}} e^{-sv}. \label{final_Pricelaw}
\end{align}
where $C_{\mathcal{H}}$ is a positive constant depending only on the initial data.
\end{proposition}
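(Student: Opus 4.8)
The plan is to treat every one of these estimates as a statement about quantities restricted to the null ray $\{u=0\}$, where the gauge choice $\kappa(0,\cdot)\equiv 1$ from \eqref{assumption:kappa} collapses the constraint \eqref{algconstr} into the identity $\lambda(0,v)=(1-\mu)(r,\varpi,Q)(0,v)$ and turns each $v$-equation of the first-order system into an ODE in $v$ alone. First I would record a crude bound: by assumption \textbf{(F)} the functions $r,\varpi,Q$ converge as $v\to+\infty$, while $\lambda>0$ (assumption \textbf{(G)}) forces $r(0,\cdot)$ to increase to $r_+$, so $r(0,\cdot)$ remains bounded and bounded away from zero; hence $\lambda=(1-\mu)(r,\varpi,Q)$ is a continuous function of bounded arguments and is itself bounded. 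With this in hand I would integrate the charge and mass equations \eqref{dvQ} and \eqref{dvvarpi} from $v$ to $+\infty$. Each right-hand side is controlled by the Price law \eqref{assumption:expPricelaw}: every term is either quadratic in $(\phi,\partial_v\phi)$ or of the form $\lambda\,|\phi|^2$ with $\lambda$ bounded, hence $O(e^{-2sv})$; integrating and using $Q(0,v)\to Q_+$, $\varpi(0,v)\to\varpi_+$ gives \eqref{bound_Q_eventhorizon} and \eqref{bound_varpi_eventhorizon}.

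The heart of the argument is the simultaneous control of $\lambda$ and $r_+-r$ in \eqref{bound_lambda_eventhorizon} and \eqref{bound_r_eventhorizon}, which I would obtain by linearizing the constraint. Writing $\hat F(r):=(1-\mu)(r,\varpi_+,Q_+)$, one has $\hat F(r_+)=0$ and $\hat F'(r_+)=2K_+>0$ by \eqref{def_surfacegrav} and sub-extremality, so for $v$ large the mean value theorem gives $\hat F(r(0,v))=\hat F'(\xi_v)\bigl(r(0,v)-r_+\bigr)$ with $\hat F'(\xi_v)\ge K_+$. Adding and subtracting the reference values,
\[
\lambda(0,v)+\hat F'(\xi_v)\bigl(r_+-r(0,v)\bigr)=(1-\mu)(r,\varpi,Q)-(1-\mu)(r,\varpi_+,Q_+),
\]
whose right-hand side is $O(|\varpi-\varpi_+|+|Q-Q_+|)=O(e^{-2sv})$ by the bounds just established. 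Since both terms on the left are strictly positive (using $\lambda>0$, $r<r_+$, $\hat F'(\xi_v)>0$), each is separately bounded by $Ce^{-2sv}$, which is exactly \eqref{bound_lambda_eventhorizon} and \eqref{bound_r_eventhorizon}; the finite range $[v_0,v_1]$ is absorbed into $C_{\mathcal H}$ by continuity. Estimate \eqref{bound_K_eventhorizon} then follows at once, since $K$ is a smooth function of $(r,\varpi,Q)$ away from $r=0$ and all three arguments are now controlled.

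For $\nu$ and the conformal factor I would integrate \eqref{dulambda_kappa}, which on $\{u=0\}$ reads $\partial_v\log|\nu|=2K-m^2 r|\phi|^2$. By \eqref{bound_K_eventhorizon} and the Price law the integrand equals $2K_++O(e^{-2sv})$, so after fixing $|\nu|(0,v_0)=1$ from the gauge one gets $\log|\nu|(0,v)=2K_+v+O(1)$, i.e.\ \eqref{bound_nu_eventhorizon}. Since $\Omega^2=-4\nu\kappa$ by \eqref{def_kappa} and $\kappa\equiv1$ on the horizon, differentiating gives $\partial_v\log\Omega^2-2K=-m^2 r|\phi|^2=O(e^{-2sv})$, which is \eqref{bound_dvlog_eventhorizon}. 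Finally, for \eqref{bootstrap_firstbound} I would bound $\zeta=rD_u\phi$ by integrating \eqref{dvzeta}: on the horizon its modulus is $\le|\nu|\bigl(|\partial_v\phi|+C|\phi|\bigr)\le C e^{(2K_+-s)v}$ by \eqref{bound_nu_eventhorizon} and the Price law, and integrating in $v$ produces precisely the three regimes $e^{(2K_+-s)v}$, $(v-v_0)$ and $O(1)$ according to the sign of $2K_+-s$; dividing by $r$, and in the subcritical case re-expressing $e^{2K_+v}$ through $|\nu|$, yields the stated trichotomy. Estimate \eqref{final_Pricelaw} is the hypothesis \eqref{assumption:expPricelaw} itself.

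The main obstacle is the linearization step: what lets one peel off the bounds on $\lambda$ and on $r_+-r$ \emph{separately}, rather than only on their sum, is the strict positivity of the two left-hand terms, which rests on the signs $\lambda>0$ and $r<r_+$ together with sub-extremality $K_+>0$. Controlling the error term there requires the $Q$- and $\varpi$-estimates to be already in place, which dictates the order in which the integrations must be performed; and a minor care point is that the linear bounds hold for $v$ large and are extended to all $v\ge v_0$ by continuity, enlarging $C_{\mathcal H}$ if necessary.
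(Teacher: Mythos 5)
Your proposal is correct, but the route you take to the two central estimates \eqref{bound_lambda_eventhorizon} and \eqref{bound_r_eventhorizon} is genuinely different from the paper's. The paper derives them from the Raychaudhuri equation: it first establishes preliminary two-sided exponential growth bounds for $|\nu|(0,\cdot)$ from \eqref{lognu}, then uses \eqref{new_raych} to show (via a Cauchy-sequence argument plus identification of the limit) that $\lambda/\nu \to 0$ at infinity, writes $\lambda(0,v)=\nu(0,v)\int_v^{+\infty} r|\partial_v\phi|^2\,\nu^{-1}\,dv'$, and estimates that integral using the monotonicity of $v \mapsto \nu^{-1}(0,v)$; the decay of $r_+-r(0,\cdot)$ then follows by integrating $\lambda$. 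You instead observe that the gauge $\kappa(0,\cdot)\equiv 1$ turns the constraint \eqref{algconstr} into the pointwise identity $\lambda(0,v)=(1-\mu)(r,\varpi,Q)(0,v)$, linearize $\hat F:=(1-\mu)(\cdot,\varpi_+,Q_+)$ at its root $r_+$ by the mean value theorem (with $\hat F'(\xi_v)\ge K_+>0$ near $r_+$ by sub-extremality), and split the resulting relation $\lambda + \hat F'(\xi_v)\,(r_+-r) = O(|\varpi-\varpi_+|+|Q-Q_+|)$ using the strict positivity of both left-hand terms. This is purely algebraic: it needs only the $Q$- and $\varpi$-decays (plus the crude boundedness facts) as input, it yields $\lambda$ and $r_+-r$ simultaneously, and it lets you defer all control of $\nu$ until after $K$ is bounded, which gives a cleaner logical ordering (the paper must run its preliminary $\epsilon$-argument for $\nu$ before it can touch $\lambda$). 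What the paper's route buys in exchange is machinery that does not rely on the constraint being usable pointwise: the recast Raychaudhuri identity and the integrating-factor trick $\partial_v\bigl(|\nu|e^{-sv}\bigr)\gtrsim |\nu|e^{-sv}$ are exactly the tools reused later in the interior regions, where $\kappa\not\equiv 1$ and your splitting at $r_+$ is unavailable. The remaining estimates — $Q$, $\varpi$, $K$, the final two-sided $\nu$ bound, $\partial_v\log\Omega^2$, and the trichotomy for $D_u\phi$ — proceed essentially as in the paper; in the subcritical case $s<2K_+$ you replace the paper's integrating-factor inequality by the two-sided $\nu$ bound, which is equivalent once that bound is in hand.

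One shared caveat, for the record: in the case $s=2K_+$ your integration (like the paper's own) actually gives $|rD_u\phi|(0,v)\le C\bigl(1+(v-v_0)\bigr)$, which matches the stated bound $C_{\mathcal{H}}(v-v_0)$ only away from $v=v_0$; this is an artifact of the statement itself, not a gap in your argument.
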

\begin{proof}
The proof exploits the decay due to the redshift effect, similarly to the proof of \cite[proposition 4.4]{VdM1}. In the current case, however, the competition between the redshift effect and the exponential Price law is evident in the estimate for $|D_u \phi|$ (see already remark \ref{rmk:interplay}).

In the following, the letter $C$ will denote a positive constant depending uniquely on the initial data.  Moreover, we will exploit the assumptions \eqref{assumption:kappa} and \eqref{absence_antitrapped} on $\kappa$ and $\nu$, respectively, multiple times in the next computations. 

\textbf{Preliminary bounds and exponential growth of $\bm{|\nu|}$}: first, we notice that \eqref{assumption:r} and \eqref{assumption:lambda}  imply
\begin{equation} \label{preliminarybound_r_eventhorizon}
0 < r(0, v_0) \le r(0, v) < r_+ < + \infty, \quad \forall \, v \ge v_0.
\end{equation}
It will also be useful to write \eqref{dulambda_kappa}, when evaluated on $\mathcal{H}^+$, as
\begin{equation} \label{lognu}
\partial_v \log|\nu|(0, v) = 2K(0, v) - m^2 r(0, v) |\phi|^2(0, v).
\end{equation}

Now, given $\epsilon > 0$, expressions \eqref{assumption:expPricelaw}, \eqref{assumption:surface_grav} and the boundedness of $r$ provide a sufficiently large $v_1 > v_0$ such that
\begin{equation} \label{preliminarybound_kappa}
\left | 2K(0, v) - m^2 r(0, v) |\phi|^2(0, v) - 2K_+ \right| < \epsilon, \quad \forall \, v \ge v_1.
\end{equation}
This estimate will be improved during the next steps of the current proof, but for the moment we can use it in \eqref{lognu} and require $\epsilon$ to be sufficiently small to obtain
\begin{equation} \label{lognu_strictly_pos}
0 < K_+ < 2K_+ - \epsilon \le  \partial_v \log |\nu|(0, v)  \le 2K_+ + \epsilon, \quad \forall\, v \ge v_1,
\end{equation}
and so, by integrating and due to \eqref{assumption:nu}:
\begin{equation} \label{preliminarybound_nu}
 e^{(2K_+ -\epsilon)(v-v_1)} \le |\nu|(0, v) \le e^{(2K_+ + \epsilon)(v- v_1)}, \quad \forall \, v \ge v_1.
\end{equation}

Note that $\Omega^2(0, v) = -4 \nu(0, v)$ for every $v \ge v_0$, by \eqref{def_kappa}. So,  \eqref{assumption:expPricelaw}, \eqref{lognu} and the boundedness of $r$ entail
\[
\left | \partial_v \log \Omega^2 (0, v) - 2K(0, v) \right| \le C e^{-2sv}.
\]

To obtain \eqref{bound_lambda_eventhorizon}, we  first show that  $\frac{\lambda}{\nu}$ admits a finite limit as $v \to +\infty$, and that such a limit is zero. Indeed, by \hyperref[section:assumptions]{assumptions} \textbf{(A)} and \textbf{(F)}, we have that $\lim_{v \to +\infty} \lambda(0, v) = \lim_{v \to +\infty} (1-\mu)(0, v) = 0$.
Moreover, due to \eqref{preliminarybound_nu} and to the fact that $\lambda_{|\mathcal{H}^+} > 0$:
\begin{equation} \label{limit_lambdaonu}
\lim_{v \to +\infty}\frac{\lambda}{\nu}(0, v) = 0.
\end{equation}

\textbf{Exponential decays:}
We now focus on the proof of the decay of $\lambda$.  Using \eqref{raych_v} and \eqref{def_kappa} we cast one of the Raychaudhuri equations as
\[
\partial_v \br{\frac{\lambda}{\nu \kappa}} = -r \frac{|\partial_v \phi|^2}{\nu \kappa}.
\]
Using the above, assumptions \eqref{assumption:expPricelaw} and \eqref{assumption:lambda}, and exploiting  \eqref{preliminarybound_r_eventhorizon} and \eqref{limit_lambdaonu}, we have:
\begin{align*}
0 &< \lambda(0, v) = \nu(0, v) \int_{+\infty}^v \partial_v \br{\frac{\lambda}{\nu}}(0, v') dv' \\
& =
\nu(0, v) \int_v^{+\infty} \frac{r |\partial_v \phi|^2}{\nu} (0, v') dv'  
 \le C \nu(0, v) \int_v^{+\infty} \frac{e^{-2sv'}}{\nu(0, v')} dv' .
\end{align*}
We now multiply and divide by $K_+$, which is a positive quantity, and use that $K_+ < \partial_v \log |\nu|(0, v')$ (see \eqref{lognu_strictly_pos}), the fact that $v \mapsto \nu^{-1}(0, v)$ is an increasing function (indeed $\partial_v \nu^{-1} = |\nu|^{-1} \partial_v \log |\nu| > 0$) and \eqref{preliminarybound_nu} to write
\begin{align}
\nu(0, v) \int_v^{+\infty} \frac{e^{-2sv'}}{\nu(0, v')} dv' &\le \frac{|\nu|(0, v)}{K_+} \int_v^{+\infty} e^{-2sv'} \partial_v\br{\frac{1}{\nu}}(0, v')dv' \nonumber \\
 &\le C |\nu|(0, v)e^{-2sv} \int_v^{+\infty} \partial_v\br{\frac{1}{\nu}}(0, v')dv'  = C e^{-2sv}, \label{lambda_eventhorizon}
\end{align}
which therefore gives $0 < \lambda(0, v) \le C e^{-2sv}$ for every $v \ge v_0$.

After recalling \eqref{assumption:Q}, we integrate \eqref{maxwelleqnv}  along the event horizon and use \eqref{preliminarybound_r_eventhorizon} and the exponential Price law \eqref{assumption:expPricelaw} to write
\begin{equation} \label{bound_Q_eventhorizon_prel}
\left | Q(0, v) - Q_+ \right| = C \left | \int_v^{+\infty} r^2(0, v') \Im(\phi \overline{\partial_v \phi})(0, v') dv' \right| \le C e^{-2sv}, \quad \forall\, v \ge v_0.
\end{equation}
This gives \eqref{bound_Q_eventhorizon}.

We can use this after integrating \eqref{dvvarpi} (recall that $\theta = r \partial_v \phi$), together with \eqref{assumption:expPricelaw}, \eqref{assumption:varpi},   \eqref{bound_Q_eventhorizon_prel}, \eqref{preliminarybound_r_eventhorizon} and the decay of $\lambda$ to see that
\begin{equation} \label{bound_varpi_eventhorizon_prel}
|\varpi(0, v) - \varpi_+| \le Ce^{-2sv}, \quad \forall \, v \ge v_0,
\end{equation}
therefore giving \eqref{bound_varpi_eventhorizon}.

By integrating \eqref{lambda_eventhorizon} and using \eqref{assumption:r}, it also follows that
\[
0 < r_+ - r(0, v) \le C e^{-2sv}.
\]
Due to definition \eqref{def_surfacegrav} and to the decays proved for $Q$ and $\varpi$, this is also sufficient to prove 
\begin{equation} \label{bound_K_eventhorizon_prel}
|K(0, v)-K_+| \le C  e^{-2 s v}, \quad \forall\, v \ge v_0.
\end{equation} 
This result also allows us to improve \eqref{preliminarybound_kappa} and, following the same steps leading to \eqref{lognu_strictly_pos} and \eqref{preliminarybound_nu}, to obtain:
\begin{equation} \label{finalbound_nu}
|\nu|(0, v) \sim e^{2K_+(v - v_0)}, \quad \forall \, v \ge v_0.
\end{equation}

\textbf{Bounds on $\bm{|D_u \phi|}$}:
we use \eqref{waveeqn_phi3_r}, \eqref{assumption:expPricelaw}, \eqref{bound_Q_eventhorizon_prel} and  \eqref{preliminarybound_r_eventhorizon}  to write
\begin{equation} \label{duphi_preliminary_bound}
\left |\partial_v \br{r D_u \phi} \right|(0, v) = \left| -\nu \partial_v \phi + m^2 \nu r \phi + i \frac{q Q \nu}{r} \phi \right|(0, v) \le C|\nu|(0, v) e^{-sv}.
\end{equation}
Now, let us first assume that $s < 2K_+$ and define the constant $a \coloneqq K_+ - \frac{s}{2} > 0$. There exists $V \ge v_0$, depending uniquely on the initial data, such that,  using \eqref{dulambda_kappa}, \eqref{assumption:expPricelaw} and \eqref{bound_K_eventhorizon_prel}, we obtain
\[
\partial_v \br{|\nu| e^{-sv}}_{|\mathcal{H}^+} = |\nu|(0, v) e^{-sv} \br{2K(0, v) - m^2 r(0, v) |\phi|^2(0, v) - s} > a |\nu|(0, v) e^{-sv},
\]
for every $v \ge V$.
We can use this inequality when we integrate \eqref{duphi_preliminary_bound}:
\begin{align}
|r D_u \phi|(0, v) &\le C\br{1 + \int_{v_0}^v |\nu|(0, v') e^{-s v'} dv'} < C\br{1 + \frac{1}{a}\int_{V}^v \partial_v \br{|\nu|(0, v') e^{-s v'}} dv'} \nonumber \\
& \le C \br{1 + \frac{1}{a}|\nu|(0, v)e^{-sv}}, \label{duphi_preliminary_bound2}
\end{align}
where we emphasize that $C$ depends uniquely on the initial data.
Using \eqref{finalbound_nu} and the boundedness of $r$, we conclude that 
\[
|D_u \phi|(0, v) \le C |\nu|(0, v) e^{-sv} \lesssim  e^{(2K_+ - s)v}, \quad \forall\, v \ge v_0,
\]
when $s < 2K_+$.

On the other hand, when $s > 2K_+$, \eqref{finalbound_nu} gives:
\[
|r D_u \phi|(0, v) \le  C\br{1 + \int_{v_0}^v  e^{(2K_+ -s) v'} dv'} \le \tilde{C},
\]
for some $\tilde{C}$ determined by the initial data and for every $v \ge v_0$. The final estimate follows from \eqref{preliminarybound_r_eventhorizon}.

When $s= 2K_+$,  we can integrate \eqref{duphi_preliminary_bound} and use \eqref{finalbound_nu} to get
\[
|r D_u \phi|(0, v) \le C \br{1 + \int_{v_0}^v |\nu|(0, v') e^{-2K_+ v'} dv'} \le \tilde{C}(v - v_0), \quad \forall \, v \ge v_0,
\]
for some $\tilde{C} > 0$ depending on the initial data. 
The final estimate follows again from the boundedness of $r$. 

Finally, \eqref{final_Pricelaw} follows from  \eqref{assumption:expPricelaw}. We then choose $C_{\mathcal{H}}$ as the largest of the previous positive constants depending on the initial data.
\end{proof}

\begin{remark}[Redshift effect] \label{rmk:interplay}
From the point of view of a family of observers crossing the event horizon $\mathcal{H}^+$, the energy associated to the null geodesics ruling $\mathcal{H}^+$ decays as $e^{-2K_+ v}$ due to the redshift effect.

This is the same rate  at which we found the \textbf{geometric quantity} $\left| \nu^{-1} D_u \phi \right|$ to decay for $s > 2K_+$ along the event horizon (see \eqref{bootstrap_firstbound}). Here, the constant $s$ is the same as in \eqref{assumption:expPricelaw}. For $s < 2K_+$, the quantity $\left| \nu^{-1} D_u \phi \right|$ decays as $e^{-sv}$ or faster, which is the same rate dictated by the Price law upper bound \eqref{assumption:expPricelaw}. From these two different rates, we deduce the physical relevance of two competing phenomena along the event horizon: the redshift effect and the decay of the scalar field $\phi$.
\end{remark}

\subsection{Level sets of the area-radius function} \label{section:curves}
The strategy of the next proofs is based on a partition of the past set $\mathcal{P}$. In the two-dimensional quotient space that we are considering, such subsets are mainly separated by curves where the area-radius function is constant.  We review the main properties of the sets 
\begin{equation} \label{def_constantR}
\Gamma_{\varrho} \coloneqq \{(u, v) \in \mathcal{P}\, \colon\, r(u, v) = \varrho\}
\end{equation}
for some $\varrho \in (0, r_+)$ (see also \cite{CGNS4}).

\begin{lemma}[properties of $\Gamma_{\varrho}$] \label{lemma:properties_constantr}
Assume that $\Gamma_{\varrho} \ne \varnothing$. Then, the following set equality holds:
\[
\Gamma_{\varrho} = \{ (u_{\varrho}(v), v) \, \colon \, v \in [v_1, +\infty) \},
\] 
for some $v_1 \ge v_0$ and for some $C^1$ function $u_{\varrho} \colon [v_1, +\infty) \to \bbR$ such that $r(u_{\varrho}(\cdot), \cdot) \equiv \varrho$. Moreover, let 
\begin{equation} \label{def_underv}
\underline{v} \coloneqq \inf \{v \ge v_1 \colon \lambda(u_{\varrho}(v), v) \le 0 \}.
\end{equation}
Then, the following properties are satisfied:
\begin{description}
\item[$\bm{\mathrm{P}\Gamma 1.}$] $\Gamma_{\varrho}$ is a $C^1$ curve,
\item[$\bm{\mathrm{P}\Gamma 2.}$]  We have
\begin{equation} \label{du_R}
u'_{\varrho}(v) = -  \frac{\lambda(u_{\varrho}(v), v)}{\nu(u_{\varrho}(v), v)},
\end{equation}
\item[$\bm{\mathrm{P}\Gamma 3.}$]  $\lambda(u_{\varrho}(v), v) \le 0$ for every $v \ge \underline{v}$.
\end{description}
\end{lemma}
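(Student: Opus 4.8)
The plan rests on two monotonicity properties of the solution: the strict monotonicity of $r$ in $u$, since $\nu = \partial_u r < 0$ throughout $\mathcal{P}$ (see \eqref{dur} and remark \ref{rmk:signs}), and the fact that the sign condition $\lambda \le 0$ propagates towards the future in the $v$-direction, which is a direct consequence of the Raychaudhuri equation \eqref{raych_v}. For the graph structure, $\mathrm{P}\Gamma 1$ and $\mathrm{P}\Gamma 2$, I would first note that for each fixed $v$ the map $u \mapsto r(u,v)$ is $C^1$ with derivative $\nu < 0$, hence strictly decreasing and injective; therefore $\Gamma_{\varrho}$ intersects each slice $\{v = \mathrm{const}\}$ in at most one point, and this defines $u_{\varrho}(v)$. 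Regularity and the ODE then follow from the implicit function theorem applied to $F(u,v) \coloneqq r(u,v) - \varrho$: under assumption \textbf{(B2)} one has $r \in C^2$ (remark \ref{rmk:equivalence}), so $F \in C^1$, and since $\partial_u F = \nu \ne 0$ the level set is locally a $C^1$ graph with $u'_{\varrho} = - \partial_v F / \partial_u F = - \lambda / \nu$, which is precisely \eqref{du_R}.

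For the claim that the domain of $u_{\varrho}$ is a half-line $[v_1, +\infty)$, I would use the event-horizon asymptotics of proposition \ref{prop: boundsEH}: there $r(0,\cdot)$ is strictly increasing with $r(0,v) \nearrow r_+ > \varrho$, so for all large $v$ the top value $r(0,v)$ of the slice exceeds $\varrho$. Combined with the strict decrease of $r$ in $u$ and the $v$-direction propagation below (which guarantees that once $r$ has dropped to the value $\varrho$ at some $v_\ast$ the level persists to the future), one concludes that $\varrho$ belongs to the range of $r(\cdot,v)$ exactly for $v$ in a half-line, with $v_1 \ge v_0$ its left endpoint.

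The heart of the lemma is $\mathrm{P}\Gamma 3$, and it is also where the main obstacle lies: the curve $\Gamma_{\varrho}$ is monotone neither in $u$ nor in $v$, so $g(v) \coloneqq \lambda(u_{\varrho}(v), v)$ has no a priori sign monotonicity. The decisive tool is the $v$-propagation: by \eqref{raych_v} (equivalently \eqref{new_raych}), $\lambda/\Omega^2$ is non-increasing in $v$ at fixed $u$, with $\Omega^2 = -4\nu\kappa > 0$; hence $\lambda(u, v_\ast) \le 0$ forces $\lambda(u,v) \le 0$ for all $v \ge v_\ast$, while $\lambda(u, v_\ast) > 0$ forces $\lambda(u,v) > 0$ for all $v \le v_\ast$. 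Since $\{g \le 0\}$ is closed and contains $\underline{v}$ with $g(\underline{v}) = 0$, if it were not all of $[\underline{v}, +\infty)$ I would set $w \coloneqq \inf\{v > \underline{v} : g(v) > 0\}$, so that $g(w) = 0$ and there are $v_n \downarrow w$ with $g(v_n) > 0$. Backward propagation from $(u_{\varrho}(v_n), v_n)$ gives $\lambda(u_{\varrho}(v_n), \cdot) > 0$ on $[v_0, v_n]$, so $r(u_{\varrho}(v_n), \cdot)$ is increasing there and $r(u_{\varrho}(v_n), w) < \varrho = r(u_{\varrho}(w), w)$; by the strict $u$-monotonicity of $r$ this yields $u_{\varrho}(v_n) > u_{\varrho}(w) \eqqcolon u^\ast$. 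Because $\mathcal{P}$ is a past set, $u^\ast < u_{\varrho}(v_n)$ then places the whole segment $\{u^\ast\} \times [v_0, v_n]$ inside $\mathcal{P}$, and forward propagation from $g(w) = 0$ gives $\lambda(u^\ast, \cdot) \le 0$ on $[w, v_n]$, whence $r(u^\ast, v_n) \le \varrho$; but $u^\ast < u_{\varrho}(v_n)$ and the strict $u$-monotonicity force $r(u^\ast, v_n) > r(u_{\varrho}(v_n), v_n) = \varrho$, a contradiction. Therefore $g \le 0$ on all of $[\underline{v}, +\infty)$, establishing $\mathrm{P}\Gamma 3$. The only delicate points are the repeated use of the past-set property to guarantee that the auxiliary points and segments lie in $\mathcal{P}$, and the careful bookkeeping of which slice each inequality is evaluated on.
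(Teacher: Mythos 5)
Your treatment of the graph structure and of P$\Gamma 2$ coincides with the paper's (implicit function theorem plus $\nu<0$; note that $r\in C^1$ already follows from the definition of solution, so invoking \textbf{(B2)} is unnecessary). Your proof of P$\Gamma 3$ is correct and is genuinely different in character from the paper's: the paper outsources this property entirely to \cite[lemma 4.2]{CGNS4}, whereas you reconstitute it from scratch using only the Raychaudhuri propagation $\partial_v(\lambda/\Omega^2)\le 0$ (the content of lemma \ref{lemma:constantu}), the strict $u$-monotonicity of $r$, and the past-set property of $\mathcal{P}$. I checked the contradiction chain (backward propagation of $\lambda>0$ along $\{u_{\varrho}(v_n)\}$, the inequality $u_{\varrho}(v_n)>u_{\varrho}(w)$, forward propagation of $\lambda\le 0$ along $\{u_{\varrho}(w)\}$, and the clash of the two monotonicities at the slice $v=v_n$) and it closes; the minor claim $g(\underline{v})=0$ can fail when $\underline{v}=v_1$, but your argument only uses $g(w)=0$, which holds in all cases. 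This self-contained version is a real gain over a citation.

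The gap is in the part you treat most casually: the assertion that the domain of $u_{\varrho}$ is a full half-line $[v_1,+\infty)$. Your persistence argument fixes $(u_*,v_*)\in\Gamma_{\varrho}$ and propagates the sign of $\lambda$ along the vertical line $\{u_*\}\times[v_*,v]$, then applies the intermediate value theorem on the slice $\{v\}$. But this presupposes $(u_*,v)\in\mathcal{P}$ for every $v\ge v_*$, which is not automatic: $\mathcal{P}$ is only the \emph{maximal past set} on which the solution exists, and nothing at this stage excludes that the solution breaks down at finite $v$ at values of $u$ beyond the curve (in the region $u>u_{\varrho}(v)$ one only knows $r<\varrho$, with no a priori positive lower bound, so $r$ could in principle collapse to $0$ there and terminate $\mathcal{P}$). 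This is precisely why the paper, in citing \cite[lemma 4.2]{CGNS4}, lists the extension criterion (theorem \ref{thm:extension_principle}) among the ingredients, and your proposal never invokes it.

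The way to close this is to follow the curve rather than a vertical line: suppose the maximal interval of definition of $u_{\varrho}$ has finite right endpoint $v_+$. On the causal past of the curve one has $\varrho\le r\le r(0,\cdot)<r_+$; using \eqref{du_R} together with the sign propagation, $u_{\varrho}$ is monotone on each side of $\underline{v}$, so $u_+\coloneqq\lim_{v\to v_+}u_{\varrho}(v)$ exists, and $r$ is bounded above and away from zero on $[0,u_+]\times[v_0,v_+)$ (the points with $u>u_{\varrho}(v)$ in this rectangle are handled by compactness, or by sliding back along outgoing lines to the curve where $\lambda>0$). Theorem \ref{thm:extension_principle} then extends the solution beyond $v_+$, continuity gives $r(u_+,v_+)=\varrho$, and the implicit function theorem continues the curve past $v_+$, contradicting maximality. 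Without this step the set equality in the lemma is not established.
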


\begin{lemma}[conditions to bound $\lambda$ away from zero] \label{lemma:lambda_along_curve}
Let $\varrho \in (r_{-}, r_+)$. Assume that $\Gamma_{\varrho} \ne \emptyset$ and that for every $\epsilon > 0$ there exists $v_1 \ge v_0$ and $c > 0$ such that
\[
|Q(u_{\varrho}(v), v) - Q_+| + |\varpi(u_{\varrho}(v), v) - \varpi_+|  < \epsilon, \quad \forall \, v \ge v_1,
\]
and 
\[
0 < c \le \kappa(u_{\varrho}(v), v) \le 1, \quad \forall\, v \ge v_1.
\]
Then, there exist positive constants $C_1$ and $C_2$ such that:
\[
-C_2 \le \lambda(u_{\varrho}(v), v) \le - C_1 < 0, \quad \forall\, v \ge v_1.
\]
\end{lemma}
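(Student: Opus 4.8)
The plan is to reduce everything to the algebraic constraint \eqref{algconstr}, namely $\lambda = \kappa(1-\mu)$, and to control the two factors separately along $\Gamma_{\varrho}$. Since $0 < c \le \kappa \le 1$ is handed to us by hypothesis, the entire content of the lemma is to show that $1-\mu$ is trapped inside a \emph{fixed} negative interval along $\Gamma_{\varrho}$ for $v$ large. The crucial simplification is that $r \equiv \varrho$ is constant on $\Gamma_{\varrho}$, so by \eqref{def_mu} the quantity $(1-\mu)(u_{\varrho}(v), v)$ depends only on $\varpi(u_{\varrho}(v), v)$ and $Q(u_{\varrho}(v), v)$, with the radius frozen at $\varrho$.

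First I would record the structural fact that, since $\varrho \in (r_{-}, r_+)$ lies strictly between two consecutive positive roots of $(1-\mu)(\cdot, \varpi_+, Q_+)$, one has $(1-\mu)(\varrho, \varpi_+, Q_+) < 0$. Indeed, by the sign pattern of the Reissner-Nordstr\"om-de Sitter metric function recalled in assumption \textbf{(F)} (see also section 3 of \cite{CGNS2}), the polynomial $r^2(1-\mu)(\cdot, \varpi_+, Q_+)$ has leading coefficient $-\frac{\Lambda}{3} < 0$, is positive at $r=0$, and hence is strictly negative precisely on $(r_{-}, r_+)$. Set $-2m_0 \coloneqq (1-\mu)(\varrho, \varpi_+, Q_+) < 0$. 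This is the one genuinely non-computational step, but it is exactly the (stronger, quantitative) analogue of the reasoning in the commented-out sign lemma, so no new idea is required.

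Next I would estimate, using \eqref{def_mu} and $r \equiv \varrho$ on the curve,
\[
(1-\mu)(u_{\varrho}(v),v) - (1-\mu)(\varrho, \varpi_+, Q_+) = -\frac{2}{\varrho}\left(\varpi(u_{\varrho}(v),v) - \varpi_+\right) + \frac{1}{\varrho^2}\left(Q(u_{\varrho}(v),v)^2 - Q_+^2\right).
\]
Bounding $|Q^2 - Q_+^2| \le |Q - Q_+|\,(|Q-Q_+| + 2|Q_+|)$ and invoking the hypothesis $|Q - Q_+| + |\varpi - \varpi_+| < \epsilon$, the right-hand side is at most $C_{\varrho}\,\epsilon$ for a constant $C_{\varrho}$ depending on $\varrho$ and $|Q_+|$. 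I then fix $\epsilon$ small enough that $C_{\varrho}\epsilon \le m_0$ (this single choice of $\epsilon$ selects a corresponding $v_1$ and $c$ from the hypotheses), which yields
\[
-3m_0 \le (1-\mu)(u_{\varrho}(v), v) \le -m_0 < 0, \quad \forall\, v \ge v_1.
\]

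Finally I would combine this with $\lambda = \kappa(1-\mu)$ and $c \le \kappa \le 1$. Since $1-\mu < 0$, multiplying by a larger $\kappa$ makes the product more negative, so the least negative value of $\lambda$ occurs at $\kappa = c$, $(1-\mu) = -m_0$, giving $\lambda \le -c\,m_0$, while the most negative occurs at $\kappa = 1$, $(1-\mu) = -3m_0$, giving $\lambda \ge -3m_0$. Hence $C_1 = c\,m_0$ and $C_2 = 3m_0$ work. The main obstacle is really only conceptual, namely pinning the sign of $(1-\mu)(\varrho, \varpi_+, Q_+)$ via the geometry of the reference sub-extremal spacetime; everything after that is the elementary continuity/perturbation argument above, and notably this lemma requires neither a monotonicity property of $\varpi$ nor a bootstrap, in contrast to the interior regions treated later.
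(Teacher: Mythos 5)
Your proposal is correct and follows essentially the same route as the paper's proof: write $(1-\mu)(u_{\varrho}(v),v)$ via \eqref{def_mu} as $(1-\mu)(\varrho,\varpi_+,Q_+)+O(\epsilon)$ using the hypotheses on $\varpi$ and $Q$, note that $(1-\mu)(\cdot,\varpi_+,Q_+)<0$ on $(r_{-},r_+)$ by the root structure of the sub-extremal Reissner-Nordstr{\"o}m-de Sitter metric coefficient, and conclude through $\lambda=\kappa(1-\mu)$ with $c\le\kappa\le 1$. The only difference is cosmetic: where the paper cites the plot of $(1-\mu)(\cdot,\varpi_+,Q_+)$ from \cite[section 3]{CGNS2}, you supply a short self-contained sign argument for the quartic $r^2(1-\mu)$, and you make the $O(\epsilon)$ perturbation and the final constants $C_1=c\,m_0$, $C_2=3m_0$ explicit.
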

\begin{proof}
Assume that $\epsilon$ is sufficiently small. We then use \eqref{def_mu} to write
\begin{align*}
(1 - \mu)(u_\varrho(v), v) &= \br{1 - \frac{2}{\varrho}\br{\varpi-\varpi_+}-\frac{2}{\varrho}\varpi_+ + \frac{Q^2-Q_+^2}{\varrho^2}+\frac{Q_+^2}{\varrho^2} - \frac{\Lambda}{3}\varrho^2}(u_\varrho(v), v) \\
&= (1-\mu)(\varrho, \varpi_+, Q_+) + O(\epsilon),
\end{align*}
for every $v \ge v_1$.
After inspecting the plot of $(1-\mu)(\cdot, \varpi_+, Q_+)$ (see, e.g. \cite[section 3]{CGNS2}), using the smallness of $\epsilon$, \eqref{algconstr} and  the bounds on $\kappa(u_{\varrho}(\cdot), \cdot)$, we have:
\[
-C_2 \le \lambda(u_\varrho(v), v)=[\kappa(1-\mu)](u_\varrho(v), v) \le - C_1 < 0,
\]
for some positive constants $C_1$ and $C_2$.
\end{proof}

We use the definitions \eqref{def_trapped} and  \eqref{def_underv}  of $\mathcal{T}$ and  $\underline{v}$, respectively, for the next result.

\begin{corollary}[entering the trapped region] \label{corollary_trappedregion}
If the assumptions of lemma \ref{lemma:lambda_along_curve} are satisfied, the curve $\Gamma_{\varrho}$ is spacelike for large values of the $v$ coordinate. Moreover, $\underline{v} < +\infty$ and $\Gamma_{\varrho} \cap \{v \ge v_1\} \subset \mathcal{T}$.
\end{corollary}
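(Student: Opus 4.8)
The plan is to deduce all three claims directly from Lemma~\ref{lemma:lambda_along_curve}, which does essentially all of the work: under the stated hypotheses it furnishes constants $C_1, C_2 > 0$ with
\[
-C_2 \le \lambda(u_{\varrho}(v), v) \le -C_1 < 0, \quad \forall\, v \ge v_1.
\]
The key point is that this \emph{strict} negativity of $\lambda$ along $\Gamma_{\varrho}$, combined with the sign $\nu < 0$ from Remark~\ref{rmk:signs} and the structural properties of $\Gamma_{\varrho}$ established in Lemma~\ref{lemma:properties_constantr}, immediately yields the trapped-region inclusion, the finiteness of $\underline{v}$, and the spacelike character of the curve. No new estimates are needed.

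First I would record the inclusion $\Gamma_{\varrho} \cap \{v \ge v_1\} \subset \mathcal{T}$: every point of $\Gamma_{\varrho}$ with $v \ge v_1$ is of the form $(u_{\varrho}(v), v)$ and satisfies $\lambda(u_{\varrho}(v), v) \le -C_1 < 0$, so it lies in $\mathcal{T}$ by the definition~\eqref{def_trapped}. For the finiteness of $\underline{v}$, the same bound shows $\lambda(u_{\varrho}(v), v) \le 0$ for all $v \ge v_1$, so the set $\{v \ge v_1 \colon \lambda(u_{\varrho}(v), v) \le 0\}$ appearing in~\eqref{def_underv} is the whole interval $[v_1, +\infty)$; hence its infimum is finite, namely $\underline{v} = v_1 < +\infty$.

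It remains to verify that $\Gamma_{\varrho}$ is spacelike for large $v$. I would parametrize the curve as in Lemma~\ref{lemma:properties_constantr} and consider its tangent vector $T = u'_{\varrho}(v)\,\partial_u + \partial_v$. Using~\eqref{du_R} from property P$\Gamma 2$ together with $\lambda < 0$ and $\nu < 0$, one gets
\[
u'_{\varrho}(v) = -\frac{\lambda(u_{\varrho}(v), v)}{\nu(u_{\varrho}(v), v)} < 0, \quad \forall\, v \ge v_1.
\]
Since the quotient metric~\eqref{metric_eqn} reads $g_{\mathcal{Q}} = -\Omega^2\, du\, dv$, the null coordinate directions satisfy $g_{\mathcal{Q}}(\partial_u, \partial_u) = g_{\mathcal{Q}}(\partial_v, \partial_v) = 0$ and $g_{\mathcal{Q}}(\partial_u, \partial_v) = -\tfrac{1}{2}\Omega^2$, whence
\[
g_{\mathcal{Q}}(T, T) = -\Omega^2\, u'_{\varrho}(v) > 0, \quad \forall\, v \ge v_1,
\]
using $\Omega^2 > 0$ and $u'_{\varrho}(v) < 0$. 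A tangent vector of positive norm is spacelike, so $\Gamma_{\varrho}$ is a spacelike curve for all $v \ge v_1$, i.e.\ for large values of the $v$ coordinate.

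I do not expect a genuine obstacle here, as the analytic content is entirely inherited from Lemma~\ref{lemma:lambda_along_curve}; the only points needing care are bookkeeping-level. One is reconciling the $v_1$ of Lemma~\ref{lemma:properties_constantr} (the left endpoint of the domain of $u_{\varrho}$) with the possibly larger $v_1$ produced by Lemma~\ref{lemma:lambda_along_curve}; taking the maximum of the two is harmless and affects only the value, not the finiteness, of $\underline{v}$. The other is keeping the sign conventions straight in the norm computation, which ultimately reduces to the single sign of $u'_{\varrho}(v)$.
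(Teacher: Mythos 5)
Your proposal is correct and follows essentially the same route as the paper, whose proof is simply the observation that $\nu < 0$ in $\mathcal{P}$ and $\lambda < 0$ along $\Gamma_{\varrho} \cap \{v \ge v_1\}$ by lemma \ref{lemma:lambda_along_curve}; you merely spell out the tangent-vector computation $g(T,T) = -\Omega^2 u'_{\varrho}(v) > 0$ and the bookkeeping for $\underline{v}$ that the paper leaves implicit.
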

\begin{proof}
The conclusion follows from the negativity of $\nu$ in $\mathcal{P}$ and of $\lambda$ in $\Gamma_{\varrho} \cap \{v \ge v_1\}$, for some $v_1 \ge v_0$ (see lemma \ref{lemma:lambda_along_curve}).
\end{proof}

\subsection{Redshift region} \label{sec:RR}
For some fixed $R$, chosen close enough to $r_+$ (see \hyperref[section:assumptions]{assumption} \textbf{(F)}), we now study the region $J^{-}(\Gamma_R)$, after proving that it is non-empty.\footnote{By definition \eqref{def_constantR}, $\Gamma_R = \{r = R\}$ is a (possibly empty) subset of the maximal past set $\mathcal{P}$. Since $\nu < 0$ in $\mathcal{P}$ (see remark \ref{rmk:signs}) and due to the extension criterion (remark \ref{rmk:extension_principle}), the curve being empty means that $R$ is smaller than any value of the area-radius function in $[0, U] \times [v_0, +\infty)$. We will see that this is never the case, i.e.\ for every choice of $v_1 \ge v_0$ and $U$, we can find a curve in $[0, U] \times [v_1, +\infty)$ such that $r = R$ along it.} We denote this set as the \textbf{redshift region} (see also \cite{Dafermos_2003, LukOh1, VdM1}).
Most of the estimates showed in proposition \ref{prop: boundsEH} propagate throughout the redshift region, and the interplay between  the decay of the scalar field and the redshift effect that we observed along the event horizon $\mathcal{H}^+$ (see remark \ref{rmk:interplay}) is still present.

To obtain quantitative bounds in this region, it is sufficient to bootstrap the estimates on $|\phi|$, $|\partial_v \phi|$, $|D_u \phi|$, $\kappa$ and $|\nu|$ from $\mathcal{H}^+$, and exploit these to achieve the remaining bounds. In particular, we employ the results of proposition \ref{prop: boundsEH} and use the smallness of $r_+ - r$ to close the bootstrap inequalities.  The proof is an adaptation of the one in \cite[proposition 4.5]{VdM1} to the case of exponential estimates. In particular, this requires a careful treatment of the Gronwall argument used to close the bootstrap for $|D_u \phi|$. The estimates that we obtain depend on the integrable quantity $u \Omega^2(0, v) \sim u e^{2K_+ v}$ (see proposition \ref{prop: boundsEH}).
\begin{proposition}[propagation of estimates by redshift] \label{prop: redshift}
Given a fixed $\eta \in (0, K_+)$, let $\delta \in (0, \eta)$ be small compared to the initial data and to $\eta$. Moreover, let $R$ be a fixed constant such that 
\[
0 < r_+ - R < \delta
\] 
and consider the function
\begin{equation} \label{def_cs}
c(s) \coloneqq \begin{cases}
s, & \text{if } 0 < s \le 2K_+ - \eta, \\
2K_+ - \eta, &\text{if } s \ge 2K_+ - \eta.
\end{cases}
\end{equation}
Then, given $v_1 \ge v_0$ large, whose size depends on the initial data, we have $J^{-}(\Gamma_R) \cap \{v \ge v_1\} \ne \varnothing$. Moreover, for every $(u, v) \in J^{-}(\Gamma_R) \cap \{v \ge v_1\}$, the following inequalities hold:
\begin{align}
2\br{r(0, v) - r(u, v)}  \le u \Omega^2(0, v) &\le C \delta, \label{redshift_uomega} \\
0 < r_+ -  r(u, v) &\le Ce^{-2sv} + u \Omega^2(0, v),  \label{redshift_r_final}\\
 |\lambda|(u, v) &\le Ce^{-2sv} + u \Omega^2(0, v), \label{redshift_lambda_final}\\
|\nu|(u, v) &\sim \Omega^2(0, v), \label{redshift_nu_final} \\
|\varpi(u, v) - \varpi_+| &\le C e^{-2c(s) v}, \label{decay_redshift_varpi}\\
|Q(u, v) - Q_+| &\le C e^{-2c(s)v}, \label{decay_redshift_Q} \\
|K(u, v) - K_+| &\le C\br{e^{-2c(s)v} + u\Omega^2(0, v)},\\
| \partial_v \log \Omega^2(u, v) - 2K(u, v) | &\le C e^{-2 c(s) v}, \label{redshift_dvlog_final} \\
|D_u \phi|(u, v) &\le C |\nu|(u, v) e^{-c(s)v}, \label{decay_redshift_Duphi} \\
|\phi|(u, v) + |\partial_v \phi|(u, v) &\le C e^{-c(s)v},
\end{align}
for a positive constant $C$ that  depends only on the initial data and on $\eta$.
Additionally, we have:
\begin{equation} \label{redshift_lambdasign}
\partial_u \lambda(u, v) < 0, \quad \forall \, (u, v) \in J^{-}(\Gamma_R) \cap \{v \ge v_1 \}.
\end{equation}
\end{proposition}
\begin{proof}
 In the following, $C_{\mathcal{H}}$ denotes the constant appearing in the statement of proposition \ref{prop: boundsEH}. 
We will use the notation 
\[
C = C(N_{\text{i.d.}}, \eta)
\]
to denote a positive constant depending on (a suitable norm of) the initial data of our IVP and on $\eta$. We will use the same letter $C$ to denote possibly different constants, when the exact expression of such constants is not important for the bootstrap. To close the bootstrap argument, we will require  $\delta$ to be sufficiently small with respect to $N_{\text{i.d.}}$ and $\eta$.  Moreover, we use that
 $0 < \kappa \le 1$ and $\nu < 0$
 in $\mathcal{P}$ (see \eqref{firstorder_kappa} and remark \ref{rmk:signs}) and that $\lambda_{|\mathcal{H}^+} > 0$ (see \hyperref[section:assumptions]{assumption} \textbf{(G)}) multiple times.

\textbf{Setting up the bootstrap procedure}:
we recall that $\mathcal{P}$ is the maximal past set where we can define a solution to the characteristic IVP of section \ref{section:ivp}, with initial data prescribed on $[0, U] \times \{v_0 \} \cup \{0\} \times [v_0, +\infty)$. In the following, we consider $v_1 \ge v_0$ to be sufficiently large with respect to the initial data and to $\eta$, and define the set
\begin{equation} \label{def_tildeP}
\mathcal{P}_{\delta} \coloneqq \mathcal{P} \cap \{(u, v) \in [0, U] \times [v_1, +\infty) \colon\,  0 < r_+ - r(u, v) \le \delta \}.
\end{equation}
The latter is non-empty, since the area-radius function provides an increasing parametrization on $\mathcal{H}^+$, and  $r(0, v) \to r_+$ as $v \to +\infty$ (see \eqref{assumption:r} and \eqref{assumption:lambda}).  

We now \textbf{define $\bm{E}$} as the set of points $q$ in $\mathcal{P}_{\delta}$
such that the following four inequalities hold for every $(u, v)$ in $J^{-}(q) \cap  \mathcal{P}_{\delta}$:
\begin{align}
|\phi|(u, v) + |\partial_v \phi|(u, v) &\le 4C_{\mathcal{H}} e^{-c(s) v}, \label{phi_bootstrap_constantr}\\
|D_u \phi|(u, v) &\le M|\nu|(u, v) e^{-c(s) v}, \label{Duphi_bootstrap_constantr} \\
\kappa(u, v) &\ge \frac12, \label{kappa_bootstrap_constantr} \\
\frac18 \Omega^2(0, v) \le |\nu|(u, v) &\le  \Omega^2(0, v), \label{bootstrap_nu_redshift}
\end{align}
where $M > 0$ is a constant depending uniquely on the initial data.

We notice that, since $E$ is a past set by construction,  inequalities \eqref{phi_bootstrap_constantr}--\eqref{bootstrap_nu_redshift} can be integrated along causal curves ending in $(u, v)$ and starting from the event horizon or from the null segment $[0, U] \times \{v_1\}$.  

\textbf{Outline of the proof}: in the following, we  show that estimates \eqref{redshift_uomega}--\eqref{redshift_lambdasign} are valid in $E$ and, at the same time, use this result to \emph{close the bootstrap} and show that $E = \mathcal{P}_{\delta}$. The conclusion then follows after proving that $\varnothing \ne J^{-}(\Gamma_R) \cap \{v \ge v_1\} \subset \mathcal{P}_{\delta}$. 

\textbf{Closing the bootstrap}: 
let us now fix $(u, v) \in E$. 
First, we stress that we have the following bounds on the area-radius function:
\begin{equation} \label{bounds_r_constantr}
\begin{cases} 
0 < r(0, v)-r(u, v) \le r_+ -r(u, v) \le \delta, \\
r(u, v) > \frac{r_+}{2}, \\
r(u, v) < r_+,
\end{cases}
\end{equation}
due to \eqref{def_tildeP}, \hyperref[section:assumptions]{assumption} \textbf{(G)} and by taking $\delta < \frac{r_+}{2}$.

We notice that, by \eqref{bounds_r_constantr} and \eqref{bootstrap_nu_redshift}:
\begin{equation} \label{deltauomega}
\delta \ge r(0, v) - r(u, v)  = \int_0^u |\nu|(u', v) du' \ge C u \Omega^2(0, v).
\end{equation}
In particular:
\begin{equation}  \label{uOmegaDelta}
u\Omega^2(0, v) \le C \delta.
\end{equation}
We stress that in \cite{VdM1} this relation was taken as the definition of the redshift region and, on the other hand, the bounds for $r$ were derived. A similar computation and \eqref{bound_r_eventhorizon} yield
\begin{equation} \label{bound_r_withOmega}
r_+ - r(u, v) =r_+ -r(0, v) +r(0, v) -r(u, v) \le C e^{-2sv} + u \Omega^2(0, v).
\end{equation}

Now, we use \eqref{maxwelleqnu} and bootstrap inequalities \eqref{phi_bootstrap_constantr} and \eqref{Duphi_bootstrap_constantr} to get
\[
|\partial_u Q|(u, v) \le  C M C_{\mathcal{H}}  r^2(u, v) |\nu|(u, v)e^{-2c(s)v}.
\]
After integrating in $u$ and using \eqref{bounds_r_constantr}:
\begin{equation} \label{Q_constantr}
|Q(u, v) - Q(0, v)|  \le C M C_{\mathcal{H}} \delta e^{-2 c(s)v}.
\end{equation}
Then, by proposition \ref{prop: boundsEH} and by  \eqref{def_cs}:
\begin{equation} \label{Q_constantr2}
|Q(u, v) - Q_+| \le |Q(u, v) - Q(0, v)| + |Q(0, v) - Q_+| \le C(N_{\text{i.d.}}, M) e^{-2c(s) v}.
\end{equation}

Furthermore, by integrating \eqref{rwaveeqn} and using \eqref{phi_bootstrap_constantr}, \eqref{bounds_r_constantr}  and the above bound on $Q$, we obtain
\[
|r \lambda|(u, v) \le r_+ |\lambda|(0, v) + \br{C(N_{\text{i.d.}}, \eta) + O\br{e^{-2 c(s) v}} }\int_0^u |\nu|(u', v)du'.
\]
The decay of $\lambda_{|\mathcal{H}^+}$ proved in proposition \ref{prop: boundsEH} and \eqref{bootstrap_nu_redshift}, yield
\begin{equation} \label{boundlambda_redshift}
|\lambda|(u, v) \le C e^{-2sv} + u\Omega^2(0, v),
\end{equation}
for a positive constant $C=C(N_{\text{i.d.}}, \eta)$. 

We can apply the latter bound, together with bootstrap inequalities \eqref{phi_bootstrap_constantr}--\eqref{kappa_bootstrap_constantr} and with \eqref{def_kappa}, to estimate $\varpi$. Indeed, \eqref{duvarpi2} gives
\[
|\partial_u \varpi|(u, v) \le C|\nu|(u, v)e^{-2c(s) v}.
\]
Thus, \eqref{bounds_r_constantr} and the bounds along the event horizon let us write
\begin{equation} \label{varpi_constantr}
|\varpi(u, v) - \varpi_+| \le |\varpi(u, v) - \varpi(0, v)| + |\varpi(0, v) - \varpi_+| \le C e^{-2c(s) v}.
\end{equation}

The definition of $K$ (see \eqref{def_surfacegrav}), together with bounds \eqref{bounds_r_constantr}, \eqref{bound_r_withOmega}, \eqref{Q_constantr2} and \eqref{varpi_constantr}, can be used to obtain
\begin{align} 
|K(u, v) - K_+| 
&\le \frac{1}{r_+^2} \left | \varpi(u, v) - \varpi_+ \right| + \frac{1}{r_+^3} \left | Q^2(u, v) - Q_+^2 \right| + \frac{|\Lambda|}{3}\left | r(u, v) - r_+ \right| \nonumber  \\
&+ |\varpi|(u, v) \left|\frac{1}{r^2(u, v)} - \frac{1}{r_+^2} \right| + Q^2(u, v) \left | \frac{1}{r^3(u, v)} - \frac{1}{r_+^3} \right| \nonumber \\
&\le C\br{ e^{-2sv}+u\Omega^2(0, v)} + O( e^{-2c(s)v}) \le C\br{ e^{-2c(s)v}+u\Omega^2(0, v)}, \label{preliminarybound_K}
\end{align}
for $v_1$ sufficiently large. 

In order to close the bootstrap for $\kappa$, we use  equation \eqref{dukappa}, the bootstrap inequalities \eqref{Duphi_bootstrap_constantr}, \eqref{bootstrap_nu_redshift} and recall that $\zeta = r D_u \phi$ to get
\[
|\partial_u \log \kappa|(u, v) = \frac{r |D_u \phi|^2}{|\nu|}(u, v) \le   C M^2 |\nu|(u, v) e^{-2c(s)v} \le C M^2 \Omega^2(0, v) e^{-2c(s)v}.
\]
So, after integrating from $\mathcal{H}^+$, we use \eqref{assumption:kappa} and \eqref{bounds_r_constantr} to get:
\begin{equation} \label{kappa_constantr}
\exp(-C M^2 u \Omega^2(0, v) e^{-2 c(s) v}) \le \kappa(u, v) \le  1.
\end{equation}
Since $u\Omega^2(0,v)$ is bounded (see \eqref{uOmegaDelta}), the above closes the bootstrap if we choose $v_1$ large enough. 

Notice that \eqref{dulambda_kappa}, the positivity of $K$ (see \eqref{preliminarybound_K}) and the smallness of $|\phi|$  imply that 
\begin{equation} \label{monotonicity_lambda}
\partial_v \nu = \partial_u \lambda  < 0 
\end{equation}
holds in $E$.

We can now improve estimate \eqref{Duphi_bootstrap_constantr} by following the procedure in \cite{VdM1} (see also the Gronwall inequalities exploited in \cite[section 5]{CGNS4}). 
 Let us consider a constant $a > 0$, which will be chosen in the next lines, and define
\begin{equation} \label{def_g}
f(u, v) \coloneqq  r(u, v)\frac{D_u \phi}{\nu}(u, v).
\end{equation}
Then, \eqref{dulambda_kappa} and \eqref{waveeqn_phi3_r}  imply
\begin{equation} \label{preliminary_Duphi}
\partial_v \br{e^{av} f(u, v)} 
= e^{av}\br{-\partial_v \phi + m^2 r \kappa \phi + \frac{i q Q \kappa}{r} \phi} + e^{a v} f(u, v) (a - \kappa(2K - m^2 r |\phi|^2)).
\end{equation}
Before continuing, we fix $a$ in such a way that the following two conditions are satisfied:
we require
\[
a - \kappa\br{2K - m^2 r |\phi|^2}(u, v') < 0, \quad \forall\, v' \ge v_1
\]
and
\[
a-c(s) > \frac{\eta}{2} > 0.
\]
Notice that such a choice of $a$ is admissible due to \eqref{kappa_constantr}, \eqref{def_cs}, \eqref{preliminarybound_K} and \eqref{phi_bootstrap_constantr}, provided that $\delta$ is chosen sufficiently small.

This choice of $a$ is then exploited when we integrate \eqref{preliminary_Duphi} and use  \eqref{phi_bootstrap_constantr}, \eqref{bounds_r_constantr} and \eqref{Q_constantr2}:
\begin{align*}
 e^{a v}|f|(u, v) &\le e^{a v_0}|f|(u, v_0)e^{\int_{v_0}^v \br{a - \kappa(2K - m^2 r |\phi|^2)}(u, v') dv'} \\
 &\quad + \int_{v_0}^v e^{a v'} \left | -\partial_v \phi + m^2 r \kappa \phi + i \frac{q Q \kappa}{r} \phi \right| (u, v') e^{\int_{v'}^v \br{a- \kappa \br{2K - m^2 r |\phi|^2}}(u, s)ds} dv'
 \\
&\le C(N_{\text{i.d.}}, v_1)\br{e^{a v_0} |f|(u, v_0) + 1} + C(N_{\text{i.d.}}, \eta)\int_{v_1}^{v}e^{(a-c(s))v'}dv',
\end{align*}
where we used
\[
\int_{v_0}^{v_1} e^{av'} \left | -\partial_v \phi + m^2 r \kappa \phi + i \frac{q Q \kappa}{r}\phi \right|(u, v')e^{\int_{v'}^{v_1}\br{a - \kappa(2K - m^2 r |\phi|^2)}(u, s) ds}dv' \le C(N_{\text{i.d.}}, v_1),
\]
with $C(N_{\text{i.d.}}, v_1)$ depending on the initial data, due to theorem \ref{thm:localexistence}, and on $v_1$. Notice that, so far, we increased the size of $v_1$ based on the $L^{\infty}$ norm of the initial data and on $\eta$. Thus, we have $C(N_{\text{i.d.}}, v_1) = C(N_{\text{i.d.}}, \eta)$.

Finally, we obtain:
\[
\left |\frac{D_u \phi}{\nu} \right | (u, v) \le C(N_{\text{i.d.}}, \eta) \br{e^{-av} +  e^{-c(s)v}},
\]
where we emphasize that the above constant does not depend on $M$. 
The bootstrap inequality \eqref{Duphi_bootstrap_constantr} then closes after choosing $M > 0$ sufficiently large with respect to the initial data. From now on, the constant $M$ will be absorbed in $C(N_{\text{i.d.}}, \eta)$.

Now, if we integrate \eqref{waveeqn_log} and use \eqref{phi_bootstrap_constantr}, \eqref{Duphi_bootstrap_constantr}, \eqref{Q_constantr2}, \eqref{boundlambda_redshift}, the fact that $\Omega^2 = 4|\nu| \kappa$, the boundedness of $r$ and $\kappa$ and, finally, \eqref{bootstrap_nu_redshift}:
\begin{equation} \label{partialvOmega}
\left| \partial_v \log \frac{\Omega^2(u, v)}{\Omega^2(0, v)} \right| \le C \int_0^u |\nu|(u', v) du' \le C u \Omega^2(0, v).
\end{equation}

Now, notice that, by \hyperref[section:assumptions]{assumption} \textbf{(A)}:
\[
\frac{\Omega^2(u, v_0)}{\Omega^2(0, v_0)} = \kappa(u, v_0) = 1+o(1),
\]
as $u$ goes to zero.\footnote{In the following, we assume that $v_1$ is sufficiently large. By \eqref{deltauomega} and \eqref{bound_nu_eventhorizon}, this implies that $u$ is small.} So, if we integrate \eqref{partialvOmega} along the $v$ direction and use \eqref{bound_nu_eventhorizon} and \eqref{bootstrap_nu_redshift}, we obtain:
\[
\left | \log \frac{\Omega^2(u, v)}{\Omega^2(0, v)} + o(1) \right| \le C u \int_{v_0}^v \Omega^2(0, v') dv' \le C u e^{2K_+ v} \le C u \Omega^2(0, v),
\]
for some $C=C(N_{\text{i.d.}}, \eta)$ that may differ from term to term. Therefore, using \eqref{uOmegaDelta}:
\begin{equation} \label{OmegaRatio}
 e^{-C \delta + o(1)} \le e^{-C u \Omega^2(0, v) + o(1)} \le \frac{\Omega^2(u, v)}{\Omega^2(0, v)} \le e^{C u \Omega^2(0, v) + o(1)} \le e^{C \delta + o(1)},
\end{equation}
where the $o(1)$ notation refers to the limit $v_1 \to +\infty$. Since $\Omega^2 = 4 |\nu| \kappa$, the above and \eqref{kappa_bootstrap_constantr} are enough to close the bootstrap inequality \eqref{bootstrap_nu_redshift}, provided that $\delta$ is chosen small and $v_1$ is sufficiently large with respect to the initial data and $\eta$.

Equations \eqref{dukappa}, \eqref{waveeqn_log_new} and the fact that $\zeta = r D_u \phi$ give
\begin{align}
\partial_u\br{ \partial_v \log \Omega^2 - 2K} &= (\kappa - 1) \partial_u (2K) - 2 \Re \br{D_u \phi \overline{\partial_v \phi}} - \frac{2\kappa}{r^2} \br{\partial_u \varpi - \frac{2 Q \partial_u Q}{r}}, \label{waveeqn_log_div} \\
&= \partial_u \left[2K (\kappa - 1) \right ] - 2K  \frac{  \kappa r |D_u \phi|^2}{\nu} - 2 \Re \br{D_u \phi \overline{\partial_v \phi}} - \frac{2\kappa}{r^2} \br{\partial_u \varpi - \frac{2 Q \partial_u Q}{r}}. \nonumber
\end{align}
By integrating the above from the event horizon and using \eqref{assumption:kappa}, \eqref{bound_dvlog_eventhorizon}, \eqref{phi_bootstrap_constantr}, \eqref{Duphi_bootstrap_constantr}, \eqref{bounds_r_constantr}, \eqref{preliminarybound_K}, \eqref{kappa_constantr}, the previous bounds on $Q$, $\partial_u Q$, $\varpi$ and $\partial_u \varpi$:
\begin{align*}
|\partial_v \log \Omega^2(u, v) - 2K(u, v)| &\lesssim e^{-2sv} + 2K(u, v)(1-\kappa(u, v)) + e^{-2c(s)v}\int_0^u |\nu|(u', v)du' \\
&\lesssim e^{-2sv} + 1-e^{-C e^{-2c(s)v}} + \delta e^{-2c(s)v}.
\end{align*}
Now, notice that, given a fixed $C>0$, the function $h(x) \coloneqq 1-e^{-Ce^{-x}}-Ce^{-x}$ is non-positive. Indeed, $h(0)<0$, $h'(x)>0$ for every $x$ in $\bbR$ and $\lim_{x \to +\infty}h(x) = 0$. Thus, recalling \eqref{def_cs}:
\begin{equation} \label{dvlog_constantr}
|\partial_v \log \Omega^2(u, v) - 2K(u, v)| \le C e^{-2 c(s) v}.
\end{equation}

In the next steps, we will bound $\phi$ and $\partial_v \phi$. 
By integrating $\partial_u \phi = D_u \phi - iq A_u \phi$ and using \eqref{final_Pricelaw}, \eqref{Duphi_bootstrap_constantr},   \eqref{bounds_r_constantr}, we have:
\begin{align}
|\phi|(u, v) & = \left | \phi(0 ,v) e^{-iq \int_0^u A_u(u', v)du'} + \int_0^u (D_u \phi)(u', v) e^{-iq \int_{u'}^u A_u(s, v) ds} du' \right| \label{integrate_phi} \\
& \le C_{\mathcal{H}}e^{-s v} + C(N_{\text{i.d.}}, \eta) e^{-c(s) v} \int_0^u |\nu|(u', v) du' \le  \br{C_{\mathcal{H}} + C(N_{\text{i.d.}}, \eta)\delta} e^{-c(s)v}. \nonumber
\end{align}
By taking $\delta$ small:
\begin{equation} \label{final_phi_redshift}
|\phi|(u, v) < \frac32 C_{\mathcal{H}}  e^{-c(s)v}.
\end{equation}

Now, the wave equation \eqref{waveeqn_phi2_r} for $\phi$ can be integrated to give
\begin{align}
(r \partial_v \phi) (u, v) &=  \br{r \partial_v \phi}(0, v) e^{-iq \int_{0}^u A_u(u', v)du'} +  \label{integrate_dvphi}\\
&+ \int_0^u e^{-iq \int_{u'}^u A_u(s, v)ds} \br{-\lambda D_u \phi + \nu \kappa  m^2 r \phi - i \frac{qQ\nu \kappa \phi}{r}}(u', v)du'. \nonumber
\end{align}
Then, expressions \eqref{final_Pricelaw}, \eqref{phi_bootstrap_constantr}, \eqref{Duphi_bootstrap_constantr}, \eqref{bounds_r_constantr},  \eqref{Q_constantr2} and \eqref{boundlambda_redshift} yield
\[
|r \partial_v \phi|(u, v) \le r_+ C_{\mathcal{H}} e^{-sv} +    C(N_{\text{i.d.}}, \eta) e^{-c(s)v}  \int_0^u |\nu|(u', v)du' \le r_+ C_{\mathcal{H}} e^{-sv} +   C(N_{\text{i.d.}}, \eta) \delta e^{-c(s)v},
\]
where $C(N_{\text{i.d.}}, \eta)$ may possibly differ from term to term and the quantity $O(e^{-2c(s)v})$ stemming from the estimate in $Q$ was reabsorbed in the costants, assuming that $v_1$ is sufficiently large.
The fact that $r(u, v) \ge \frac{r_+}{2}$ (see \eqref{bounds_r_constantr}), together with a suitably small choice of $\delta$, implies
\[
|\partial_v \phi|(u, v) < \frac52 C_{\mathcal{H}}e^{-c(s)v}.
\]
Combined with \eqref{final_phi_redshift}, this result closes the bootstrap inequality \eqref{phi_bootstrap_constantr}.
Due to the connectedness of $\mathcal{P}_{\delta}$, the previous steps reveal the set equality $E = \mathcal{P}_{\delta}$. 

\textbf{Localization of $\bm{J^{-}(\Gamma_R)}$}:
 To conclude the proof, it is enough to show that $J^{-}(\Gamma_R) \cap \{v \ge v_1\}\ne \varnothing$ and that $J^{-}(\Gamma_R) \cap \{v \ge v_1\} \subset \mathcal{P}_{\delta}$. 
First, we choose $v_1$ sufficiently large so that $u_R(v_1) \le U$: this choice of $v_1$ is possible by the exponential decay of the $u$ coordinate in this region (see \eqref{uOmegaDelta} and recall that $\Omega^2(0, v) \sim e^{2K_+ v}$) and entails that $\Gamma_R = \{r = R\} \ne \varnothing$, due to the extension criterion (see remark  \ref{rmk:extension_principle}). In particular, $J^{-}(\Gamma_R) \cap \{v \ge v_1\} \ne \varnothing$.

 Moreover,  $\Gamma_R$ is a spacelike curve for $v_1$ large (see corollary \ref{corollary_trappedregion}), and thus $u \le u_R(v) \le U$ for every $(u, v)$ in $J^{-}(\Gamma_R) \cap \{v \ge v_1 \}$.
 So, using again that $\nu < 0$ in $\mathcal{P}$, we have, for every $(u, v)$ in $ J^{-}(\Gamma_R) \cap \{v \ge v_1\}$: 
\[
0 < r_+ - r(u, v) \le r_+ - r(u_R(v), v)= r_+ - R < \delta,
\]
and thus $J^{-}(\Gamma_R) \cap \{v \ge v_1\} \subset \mathcal{P}_{\delta}$. 
\end{proof}

\begin{remark} \label{rmk:diffparam}
By corollary \ref{corollary_trappedregion} and by \eqref{monotonicity_lambda}, there exists a $C^1$ function $u \mapsto v_R(u)$ defined for $0 \le u \le u_R(v_1)$ such that (see also lemma \ref{lemma:properties_constantr}):
\[
\Gamma_R \cap \{v \ge v_1\} = \{(u_R(v), v) \colon \, v \ge v_1\} = \{(u, v_R(u)) \colon \, 0 \le u \le u_R(v_1)\}.
\]
We considered the case $v \ge v_1$ because  \eqref{monotonicity_lambda} holds for large values of the $v$ coordinate. However, notice that we might have 
\[
\Gamma_R \cap \{v \ge v_1\} \subsetneq \Gamma_R,
\]
see definition \eqref{def_constantR}. Since we are going to study solutions to the characteristic IVP restricted to large values of the $v$ coordinate, we will make no distinction between $\Gamma_{\varrho}$ and $\Gamma_{\varrho} \cap \{v \ge v_1\}$, for $\varrho \in (0, r_+)$, whenever the restriction to large values of $v$ is clear from the context.
 
Furthermore, by differentiating the relation $r(u, v_R(u)) = R$, we can see that
\begin{equation} \label{dv_R}
v_R'(u) = -\frac{\nu(u, v_R(u))}{\lambda(u, v_R(u))} < 0.
\end{equation}
\end{remark}

\subsection{Apparent horizon}
Assume that the constants $\eta$, $R$ and $\delta$ of proposition \ref{prop: redshift} are fixed. 
This allows us to give a parametrization of the apparent horizon 
\[
\mathcal{A} = \{(u, v) \in [0, U] \times [v_0, +\infty) \colon\, \lambda(u, v) = 0\}
\]
and to show that, for large values of the $v$ coordinate, $\mathcal{A}$ is a $C^1$ curve that lies in the causal past of $\Gamma_R$. 
We report  here the main results on $\mathcal{A}$, that follow from methods analogous to those used in \cite[section 4]{CGNS4}.

\begin{lemma}[on the sign of $\lambda$ along null cones] \label{lemma:constantu}
Let $\mathcal{P}$ be the maximal past set where a solution to our characteristic IVP is defined. Take $(u, v) \in \mathcal{P}$. If $\lambda(u, v) = 0$ (respectively, $\lambda(u, v) < 0$), then $\lambda(u, v + V) \le 0$ (respectively, $\lambda(u, v + V) < 0$) for every $V > 0$ for which $(u, v+V) \in \mathcal{P}$.
\end{lemma}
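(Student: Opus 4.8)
The plan is to single out a quantity whose sign coincides with that of $\lambda$ and which is monotone along the outgoing null direction $\{u=\mathrm{const}\}$, and to read off the conclusion from this monotonicity. The natural candidate is $\lambda/(\nu\kappa)$, whose evolution in $v$ is governed by the Raychaudhuri equation \eqref{new_raych}. (Equivalently, one could work with $\partial_v r/\Omega^2$ via \eqref{raych_v}, using $\Omega^2=-4\nu\kappa>0$; the two formulations are interchangeable.)

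First I would reduce the statement to an assertion on the null segment $\{u\}\times[v,v+V]$. Since $\mathcal{P}$ is a past set and $(u,v+V)\in\mathcal{P}$, the definition gives $[0,u]\times[v_0,v+V]\subset\mathcal{P}$, so in particular the entire segment $\{u\}\times[v,v+V]$ lies in $\mathcal{P}$, where the solution and all the relevant quantities $\lambda,\nu,\kappa,r,\partial_v\phi$ are defined and continuous, and where $\nu<0$, $\kappa>0$ by remark \ref{rmk:signs}. In particular $\nu\kappa<0$ throughout, so the quotient below is well defined.

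Next, for the fixed value of $u$, I would set $G(v')\coloneqq \dfrac{\lambda}{\nu\kappa}(u,v')$ and invoke \eqref{new_raych} to obtain
\[
\partial_{v'} G(u,v') = -\,\frac{r\,|\partial_v\phi|^2}{\nu\kappa}(u,v').
\]
Because $r>0$ and $|\partial_v\phi|^2\ge 0$ while $\nu\kappa<0$, the right-hand side is non-negative, so $v'\mapsto G(u,v')$ is non-decreasing on $[v,v+V]$. Writing $\lambda=(\nu\kappa)\,G$ and recalling $\nu\kappa<0$, the sign of $\lambda$ is opposite to that of $G$, which lets me transfer the monotonicity of $G$ into the desired statement about $\lambda$: if $\lambda(u,v)=0$ then $G(u,v)=0$, hence $G(u,v+V)\ge 0$ and $\lambda(u,v+V)=(\nu\kappa)\,G(u,v+V)\le 0$; if $\lambda(u,v)<0$ then $G(u,v)>0$, hence $G(u,v+V)\ge G(u,v)>0$ and $\lambda(u,v+V)<0$. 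This is precisely the claim.

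I do not expect a genuine obstacle here: the whole content is the sign structure of the Raychaudhuri equation combined with the previously established signs $\nu<0$ and $\kappa>0$. The only two points deserving a word of care are that the outgoing segment is contained in $\mathcal{P}$ (immediate from the past-set property) and that the denominator $\nu\kappa$ never vanishes, so that $G$ is well defined and the sign comparison is legitimate; both are automatic in $\mathcal{P}$.
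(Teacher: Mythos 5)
Your proof is correct and follows essentially the same route as the paper: both arguments rest on the monotonicity in $v$ furnished by the Raychaudhuri equation \eqref{raych_v} together with the sign facts $\nu<0$, $\kappa>0$ (equivalently $\Omega^2=-4\nu\kappa>0$) from remark \ref{rmk:signs}. The paper phrases it with the non-increasing quantity $\lambda/\Omega^2$ while you use the non-decreasing quantity $\lambda/(\nu\kappa)$; since $\lambda/\Omega^2=-\tfrac14\,\lambda/(\nu\kappa)$, these are the same argument up to a sign convention, as you yourself note.
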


\begin{remark}
At the end of section \ref{section:stability} we will prove that $[0, U] \times [v_1, +\infty] \subset \mathcal{P}$ for a sufficiently small $U$ and a sufficiently large $v_1$. Therefore, a posteriori, the statement in lemma \ref{lemma:constantu} holds for every $V > 0$.
\end{remark}

\begin{corollary}[characterization of the apparent horizon] \label{corollary_apphorizon}
Under the assumptions of proposition \ref{prop: redshift}, we have that:
\begin{description}
\item[$\bm{\mathrm{P}\mathcal{A} 1.}$] For every $\tilde u \in [0, U]$, the set $\Gamma_R \cap \mathcal{A} \cap \{u= \tilde u\}$, if non-empty, consists only of a single point or only of a null segment of finite length. 
\end{description}
Moreover, for $v \ge v_1$, the set $\mathcal{A}$ is a non-empty $C^1$ curve and the following properties are satisfied:
\begin{description}
\item[$\bm{\mathrm{P}\mathcal{A} 2.}$] The inclusion
\begin{equation} \label{apphorizon_position}
\mathcal{A} \cap \{v \ge v_1 \} \subset I^{-}(\Gamma_R)
\end{equation}
holds;
\item[$\bm{\mathrm{P}\mathcal{A} 3.}$] The apparent horizon can be parametrized as 
\[
\mathcal{A} \cap \{v \ge v_1\} = \{(\uapp(v), v) \, \colon \, v \ge v_1  \}
\]
for some $C^1$ function $\uapp$ defined on $[v_1, +\infty)$ such that $\lambda(\uapp(v), v) = 0$  on such a domain;
\item[$\bm{\mathrm{P}\mathcal{A} 4.}$] We have
\begin{equation} \label{du_app}
\uapp'(v) = -\frac{\partial_v \lambda (\uapp(v), v)}{\partial_u \lambda(\uapp(v), v)} \le 0, \quad \forall\, v \ge v_1.
\end{equation}
\end{description}
\end{corollary}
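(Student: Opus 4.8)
The plan is to reduce all four properties to two monotonicity facts already established: the strict redshift monotonicity $\partial_u\lambda<0$ throughout $J^{-}(\Gamma_R)\cap\{v\ge v_1\}$ from proposition~\ref{prop: redshift} (equation~\eqref{monotonicity_lambda}), and the one-sided monotonicity of $\lambda/\Omega^2$ in $v$ furnished by the Raychaudhuri equation~\eqref{raych_v}, i.e.\ lemma~\ref{lemma:constantu}. I would first dispatch P$\mathcal{A}2$ and P$\mathcal{A}3$ together by a slice-by-slice argument, then obtain P$\mathcal{A}4$ by implicit differentiation, and finally return to P$\mathcal{A}1$, whose finiteness claim is most cleanly derived \emph{after} P$\mathcal{A}2$ is in hand.

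For P$\mathcal{A}2$--P$\mathcal{A}3$ I would fix $v^{*}\ge v_1$ and study $u\mapsto\lambda(u,v^{*})$ on the segment $[0,u_R(v^{*})]\times\{v^{*}\}$, which lies in $J^{-}(\Gamma_R)\cap\{v\ge v_1\}$. At $u=0$ one has $\lambda(0,v^{*})>0$ by assumption~\textbf{(G)} (see~\eqref{assumption:lambda}), while at the far endpoint $\lambda(u_R(v^{*}),v^{*})<0$ because $\Gamma_R\subset\mathcal{T}$ for $v\ge v_1$ by corollary~\ref{corollary_trappedregion}; the hypotheses of the underlying lemma~\ref{lemma:lambda_along_curve} (smallness of $|Q-Q_+|$ and $|\varpi-\varpi_+|$, together with $\tfrac12\le\kappa\le1$) are supplied by the redshift estimates~\eqref{decay_redshift_Q},~\eqref{decay_redshift_varpi} and the bootstrap bound on $\kappa$. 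Since $\partial_u\lambda<0$ on this segment, $\lambda(\cdot,v^{*})$ is strictly decreasing, so the intermediate value theorem yields a \emph{unique} zero $\uapp(v^{*})\in(0,u_R(v^{*}))$. Strict interiority is precisely P$\mathcal{A}2$, and uniqueness on each slice defines the function $\uapp$; its $C^1$ regularity follows from the implicit function theorem applied to $\lambda$, which is $C^1$ under assumption~\textbf{(B2)} (remark~\ref{rmk:equivalence}), using $\partial_u\lambda\ne0$. This gives P$\mathcal{A}3$.

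P$\mathcal{A}4$ is then immediate: differentiating $\lambda(\uapp(v),v)\equiv0$ produces~\eqref{du_app}, and the sign is controlled because $\partial_u\lambda<0$ while $\partial_v\lambda\le0$ along $\mathcal{A}$ — the latter since, at a zero of $\lambda$, the Raychaudhuri identity~\eqref{raych_v} collapses to $\partial_v\lambda/\Omega^2=\partial_v(\lambda/\Omega^2)\le0$ — so that $\uapp'=-\partial_v\lambda/\partial_u\lambda\le0$. For P$\mathcal{A}1$ I would fix $\tilde u\in[0,U]$ and restrict to the null ray $\{u=\tilde u\}$: by lemma~\ref{lemma:constantu} the map $v\mapsto(\lambda/\Omega^2)(\tilde u,v)$ is non-increasing, hence its zero set is a (possibly degenerate or empty) interval $I$ on which $\partial_v r=\lambda=0$ forces $r\equiv r_{*}$ constant. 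Consequently $\Gamma_R\cap\mathcal{A}\cap\{u=\tilde u\}$ equals $I$ when $r_{*}=R$ and is empty otherwise, yielding the single-point/null-segment dichotomy. Finiteness follows from P$\mathcal{A}2$: any point of such a segment with $v\ge v_1$ would lie on both $\mathcal{A}$ and $\Gamma_R$, forcing $\uapp(v)=u_R(v)$ and contradicting the strict inequality $\uapp(v)<u_R(v)$; thus the segment is confined to $\{v<v_1\}$ and has finite length.

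The main obstacle is the slice argument underlying P$\mathcal{A}2$--P$\mathcal{A}3$: one must verify that the three ingredients — positivity of $\lambda$ on $\mathcal{H}^{+}$, strict negativity of $\lambda$ along $\Gamma_R$ via corollary~\ref{corollary_trappedregion}, and the strict redshift monotonicity $\partial_u\lambda<0$ — are simultaneously valid on the \emph{same} region $J^{-}(\Gamma_R)\cap\{v\ge v_1\}$, which is where the careful bookkeeping of proposition~\ref{prop: redshift} pays off. Everything else (the implicit differentiation, the $v$-monotonicity dichotomy, and the finiteness deduction) is routine once these signs are secured.
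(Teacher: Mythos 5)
Your slice argument, the implicit-function step, and the sign computation for P$\mathcal{A}4$ all match the paper's own proof, and your derivation of the interval structure in P$\mathcal{A}1$ (the zero set of the monotone function $\lambda/\Omega^2$ along an outgoing ray) is a clean equivalent of the paper's use of lemma \ref{lemma:constantu}. However, there is a genuine gap in your treatment of P$\mathcal{A}2$--P$\mathcal{A}3$: your intermediate-value argument lives entirely on the sub-segment $[0,u_R(v^{*})]\times\{v^{*}\}$, i.e.\ in $J^{-}(\Gamma_R)$, so it only shows that the zeros of $\lambda(\cdot,v^{*})$ \emph{in that sub-segment} are unique and strictly interior. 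But $\mathcal{A}$ is defined on all of $[0,U]\times[v_0,+\infty)$ (intersected with $\mathcal{P}$), and the solution extends well past $\Gamma_R$ (this is the whole point of the no-shift and blueshift regions), so the slice $\{v=v^{*}\}$ continues into $J^{+}(\Gamma_R)$, where the redshift monotonicity $\partial_u\lambda<0$ of \eqref{redshift_lambdasign} is not available. Without ruling out zeros of $\lambda$ there, the assertion ``strict interiority is precisely P$\mathcal{A}2$'' is false as written: a zero of $\lambda$ at some $(u,v^{*})$ with $u>u_R(v^{*})$ would violate both the inclusion \eqref{apphorizon_position} and the set equality in P$\mathcal{A}3$, since your parametrization $u_{\mathcal{A}}$ would simply miss it.

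The missing step is exactly where the paper spends a paragraph: one shows $\mathcal{A}\cap\{v\ge v_1\}\cap J^{+}(\Gamma_R)=\varnothing$ by combining (i) $\lambda(u_R(v),v)<0$ for $v\ge v_1$ (lemma \ref{lemma:lambda_along_curve}); (ii) the spacelike character of $\Gamma_R$ (corollary \ref{corollary_trappedregion}), so that every point of $J^{+}(\Gamma_R)\cap\{v\ge v_1\}$ lies on an outgoing null ray emanating from a point of $\Gamma_R$; and (iii) lemma \ref{lemma:constantu}, which propagates the strict negativity of $\lambda$ forward in $v$ along such rays, giving $\lambda<0$ throughout $J^{+}(\Gamma_R)\cap\{v\ge v_1\}$. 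All of these tools are ones you already invoke elsewhere, so the repair is short; but as it stands your proof establishes only that $\mathcal{A}\cap J^{-}(\Gamma_R)\cap\{v\ge v_1\}$ is a graph, not properties P$\mathcal{A}2$ and P$\mathcal{A}3$ for all of $\mathcal{A}\cap\{v\ge v_1\}$. (Your remaining deviations from the paper are harmless: doing the intermediate-value argument on every slice rather than only at $v=v_1$, evaluating the Raychaudhuri equation \eqref{raych_v} directly at a zero of $\lambda$ instead of quoting lemma \ref{lemma:constantu} for $\partial_v\lambda|_{\mathcal{A}}\le 0$, and proving finiteness in P$\mathcal{A}1$ by confining the segment to $\{v<v_1\}$ — which needs only $\lambda(u_R(v),v)<0$ for $v\ge v_1$, not P$\mathcal{A}2$ — rather than via the decay \eqref{redshift_u_final} of $u_R$.)
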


\subsection{No-shift region} \label{sec:NR}
In regions corresponding to smaller values of the area-radius, the redshift effect is counterbalanced by a blueshift effect which is expected to originate in the Cauchy horizon (whose existence and stability will be proved as a consequence of the results of section \ref{section:stability}). A by-product of this phenomenon is the propagation, throughout this new region, of most of the estimates already obtained in proposition \ref{prop: redshift}. We recall that the constants $\eta$, $\delta$ and $R$ can be fixed according to the statement of proposition \ref{prop: redshift}, and that  $r_{-} \in (0, r_+)$ is one of the roots of the function $(1-\mu)(\cdot, \varpi_+, Q_+)$. A more precise formulation of the above heuristics is then the following: let us  choose\footnote{ The letter $Y$, which denotes the constant radius in the region where redshift and blueshift effects balance each other, stands for the yellow color, which lies between the red and blue colors in the visible electromagnetic spectrum.} $Y \in (r_{-}, R)$, close enough to $r_{-}$, and work in the set
\begin{equation} \label{noshift_region}
J^{+}(\Gamma_R) \cap J^{-}(\Gamma_Y),
\end{equation}
which we denote by the name \textbf{no-shift region}, 
after proving that it is non-empty. The following properties hold in this set:
\begin{itemize}
\item the functions $r_+-r(\cdot, \cdot)$ and $r(\cdot, \cdot) - r_{-}$ are bounded, but not necessarily $\delta$--small,
\item the surface gravity $K$ of the dynamical black hole changes sign: it remains positive near $\Gamma_R$ and becomes negative near $\Gamma_Y$. In this context, we introduce the quantity
\begin{equation} \label{def_surfacegrav_minus}
K_{-} \coloneqq \frac12 \abs{\partial_r (1-\mu)(r_{-}, \varpi_+, Q_+)},
\end{equation}
namely the absolute value of the surface gravity along the Cauchy horizon\footnote{We adopt the convention of \cite{CGNS4, LukOh1}, where $K_{-}$ is a positive quantity. In \cite{VdM1}, this quantity is defined with the opposite sign. } of the final sub-extremal black hole. 
\end{itemize}
In particular, the proof of proposition \ref{prop: redshift}, in which the smallness of $r_+-r$ is used to close the bootstrap inequalities, cannot be directly applied to this region.
 Nonetheless, we can subdivide the region \eqref{noshift_region} in finitely many narrow subsets on which we have a finer control on the area-radius function. The validity of the respective estimates is then proved in each subset by induction, where the basis step follows from the estimates on the redshift region, while the inductive step is proved with the aid of bootstrap (this technique was used in \cite{VdM1}, see also \cite{Dafermos_2005_uniqueness}).  A key role will be played by the bound 
on the geometric quantity $1-\mu$, which lets us relate the quantity $\nu$ defined in this region to $\nu_{|\Gamma_R}$. We stress that, in the uncharged case \cite{CGNS4}, the bound on $1-\mu$ could be obtained directly from the monotonicity of the renormalized Hawking mass. Such a property is not present in the current case.


\begin{proposition}[propagation of estimates due to coexisting redshift and blueshift effects] \label{prop: noshift}
Let $\Delta > 0$ be small, compared to the initial data, and let $Y>0$ be such that 
\[
0 < Y - r_{-} < \Delta.
\]
Then, given $v_1 \ge v_0$ large, we have $ J^+(\Gamma_R) \cap J^{-}(\Gamma_Y) \cap \{v \ge v_1 \} \ne \varnothing $ and, for every $(u, v)$ in $J^{+}(\Gamma_R) \cap J^{-}(\Gamma_Y) \cap \{v \ge v_1 \}$,  the following relations hold:
\begin{align}
 - C_{\mathcal{N}} \le \lambda(u, v) &\le -c_{\mathcal{N}} <0, \label{noshift_lambda_final}\\
 |\nu|(u, v) &\sim |\nu|(u, v_R(u))  \label{noshift_nu_final}\\
|\varpi(u, v) - \varpi_+| &\le C_{\mathcal{N}} e^{-2c(s) v}, \label{noshift_varpi_final} \\
|Q(u, v) - Q_+| &\le C_{\mathcal{N}} e^{-2c(s)v}, \label{noshift_Q_final} \\
|K(u, v) - K_+| + |K(u, v) + K_{-}| &\le C_{\mathcal{N}},\\
|\partial_v \log \Omega^2 (u, v) - 2K(u, v)| &\le C_{\mathcal{N}} e^{-2c(s) v}, \label{noshift_dvlog_final} \\
|D_u \phi|(u, v) &\le C_{\mathcal{N}} |\nu|(u, v) e^{-c(s)v}, \label{noshift_Duphi_final} \\
|\phi|(u, v) + |\partial_v \phi|(u, v) &\le C_{\mathcal{N}} e^{-c(s)v}, \label{noshift_phi_final}\\
 u &\sim e^{-2K_+ v}, \label{noshift_u_final}\\
 v-v_R(u) &\le C_{\mathcal{N}} \label{noshift_v_final},
\end{align}
for some positive constants $C_{\mathcal{N}}, c_{\mathcal{N}}$ depending on the initial data, and possibly on $\eta$, $R$ and $Y$. Here, $v_R$ is the function that gives a parametrization of $\Gamma_R$ as 
\[
\Gamma_R= \{r = R\} = \{(u, v_R(u)) \, \colon \, u \in [0, u_R(v_1)]\},
\]
and $c(s)$ was defined in \eqref{def_cs}. 
\end{proposition}
\begin{proof}
We recall that there exists a value $r_0 \in (r_{-}, r_+)$  such that\footnote{See also section 3 in \cite{CGNS2} for a more detailed analysis of the function $r \mapsto (1-\mu)(r, \varpi_+, Q_+)$.} $\partial_r (1-\mu)(r_0, \varpi_+, Q_+) = 0$.

\textbf{Definitions of the constants}: 
For every positive integer $l$, we define the following constants inductively:
\begin{equation} \label{def_Cl}
\begin{cases}
C_l \,\coloneqq 4 C_{l-1}, \\
C_0 \coloneqq 4C_{\mathcal{H}},
\end{cases}
\end{equation}
where $C_{\mathcal{H}}$ is the constant given by proposition \ref{prop: boundsEH}, and we consider\footnote{In particular, our choice will imply that $N \to +\infty$ if either $R \to r_+$ or $Y \to r_{-}$.}  $N \in \bbN$, which will be chosen sufficiently large during the proof, based on the initial data, on $R$ and on $Y$.
Moreover, let 
\begin{equation} \label{def_eps_noshift}
\epsilon = \epsilon(N) \coloneqq \frac{R-Y}{N} > 0.
\end{equation}
For $l=0, \ldots, N$, we define the sets
\[
\mathcal{N}_l \coloneqq J^{+}(\Gamma_R) \cap \{v \ge v_1\}  \cap \{(u, v) \in \mathcal{P} \, \colon \, R-l\epsilon \le r(u, v) \le R-(l-1)\epsilon \}
\]
and
\[
\mathcal{N} \coloneqq \bigcup_{l=0}^N \mathcal{N}_l.
\]
 We observe that, for every $l=0, \ldots, N$, we have $\mathcal{N}_l \ne \varnothing$, by the extension criterion (see remark  \ref{rmk:extension_principle}). It is straightforward to see that $\mathcal{N}_0 = \Gamma_R \cap \{v \ge v_1\}$.  
 In the following, we will repeatedly use that $0 < \kappa \le 1$ in $\mathcal{P}$ (see \eqref{firstorder_kappa}).
\begin{figure}[H]
\centering
\includegraphics[width=0.9\textwidth]{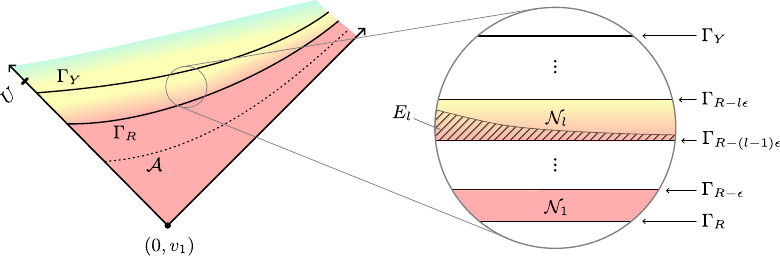}
\end{figure} 
\textbf{1. Induction}:
We want to prove, by induction, that for every $l \in \{0, \ldots, N\}$ the following holds:
\begin{align}
|\phi|( u,  v) + |\partial_v \phi|( u,  v) &\le C_l e^{-c(s)  v}, \label{phi_induction_noshift}\\
|D_u \phi|( u,  v) &\le M C_l|\nu|( u,  v)e^{-c(s)  v}, \label{Duphi_induction_noshift} \\
\kappa( u,  v) &\ge \frac12, \label{kappa_induction_noshift} \\
|\partial_v \log \Omega^2( u,  v) - 2K( u,  v)| &\le C e^{-2c(s)  v}, \label{dvlog_induction_noshift}
\end{align}
for every $(u,  v) \in \mathcal{N}_l$, for some constants $C=C(N_{\text{i.d.}}, \eta, R, Y)$ and $M = M(N_{\text{i.d.}}, \eta, R, Y)$. As we will see, this result implies that  inequalities \eqref{noshift_lambda_final}--\eqref{noshift_v_final} are satisfied. The above inequalities hold for $l=0$ by the results obtained in the redshift region (see, in particular, \eqref{redshift_dvlog_final}, \eqref{phi_bootstrap_constantr}--\eqref{kappa_bootstrap_constantr}) and by choosing $M$ large compared to the initial data. 

Let us now consider the inductive step: let $l \in \{1, \ldots, N\}$ be fixed and assume that, for every  $i=0, \ldots, l-1$, inequalities \eqref{phi_induction_noshift}--\eqref{dvlog_induction_noshift} are satisfied in $\mathcal{N}_i$.
To prove that these inequalities also hold for $i=l$ in $\mathcal{N}_l$, we adapt the bootstrap techniques of \cite{VdM1}.

 \textbf{2. Bootstrap}:  
  we define $E_l$ as the set of points $q$ in $\mathcal{N}_l$ such that the following inequalities hold for every $(u, v)$ in $J^{-}(q) \cap \mathcal{N}_l$:
\begin{align}
|\phi|(u, v) + |\partial_v \phi|(u, v) &\le  C_l e^{-c(s) v}, \label{phi_bootstrap_noshift}\\
|D_u \phi|(u, v) &\le M C_l |\nu|(u, v) e^{-c(s)v}, \label{Duphi_bootstrap_noshift} \\
\kappa(u, v) &\ge \frac12. \label{kappa_bootstrap_noshift}
\end{align}
The set $E_l$ is non-empty since \eqref{phi_bootstrap_noshift}--\eqref{kappa_bootstrap_noshift} hold in $\{r = R-(l-1)\epsilon\} \subset \partial \mathcal{N}_{l-1} \cap \partial \mathcal{N}_l$, by the induction hypothesis. Now, let $(u, v) \in E_l$ be fixed. 

We will repeatedly use the following estimates for $r$: 
\begin{equation} \label{bounds_r_noshift}
\begin{cases}
r_{-} < Y \le r \le R < r_+, &\text{in } \mathcal{N},\\
0 \le R-r(u, v) \le R-Y, \\
0 \le R-(l-1)\epsilon - r(u, v) \le \epsilon, \\
0 \le r(u, v) - (R- l \epsilon) \le \epsilon.
\end{cases}
\end{equation}

Now, equation \eqref{maxwelleqnu}, together with \eqref{phi_bootstrap_noshift}, \eqref{Duphi_bootstrap_noshift} and \eqref{bounds_r_noshift}, gives
\begin{equation} \label{noshift_duQ}
|\partial_u Q|(u, v) \le C M C_l^2 |\nu|(u, v) e^{-2 c(s)v}.
\end{equation}
If we integrate the latter from the curve $\Gamma_R$ and use the results of proposition \ref{prop: redshift} and \eqref{def_Cl}:
\begin{equation} \label{noshift_Q}
|Q(u, v) - Q_+| \le |Q(u, v)-Q(u_R(v), v)| + |Q(u_R(v), v) - Q_+| \le C(N_{\text{i.d.}}, \eta, M)  e^{-2 c(s) v}.
\end{equation}
Notice that the above constant depending on $M$ is coupled with the term $e^{-2c(s)}$, which decays faster than the leading-order term of the next inequalities that we take into account.

We can proceed similarly for $\varpi$ by using \eqref{def_kappa} and integrating equation \eqref{duvarpi2} from $\Gamma_R$. Indeed, we can use the fact that $|\lambda|$ is bounded from above by a positive constant (this can be proved by following the same steps that led to \eqref{boundlambda_redshift} in the redshift region), together with  \eqref{phi_bootstrap_noshift}--\eqref{kappa_bootstrap_noshift}, \eqref{bounds_r_noshift}, \eqref{noshift_Q} and the bounds obtained in the redshift region, to write
\begin{equation} \label{noshift_duvarpi}
|\partial_u \varpi|(u, v) \le C |\nu|(u, v) e^{-2c(s) v}
\end{equation}
and thus
\begin{equation} \label{noshift_varpi}
|\varpi(u, v) - \varpi_+| \le |\varpi(u, v) - \varpi(u_R(v), v)| + |\varpi(u_R(v), v) - \varpi_+| \le C e^{-2c(s)v}.
\end{equation}

The inequality 
\begin{equation} \label{noshift_mu}
-C(N_{\text{i.d.}}) <  (1-\mu)(u, v) <  - C(N_{\text{i.d.}}, R, Y) < 0,
\end{equation}
 follows from \eqref{noshift_Q}, \eqref{noshift_varpi}, \eqref{kappa_bootstrap_noshift} and lemma \ref{lemma:lambda_along_curve} (its proof does not require $r$ to be constant: it is sufficient that the area-radius is bounded away from $r_{-}$ and $r_+$  as in the current case). In particular, $(1-\mu)(u, v)$ approaches zero when either $R \to r_+$ or $Y \to r_{-}$.
Whence \eqref{kappa_bootstrap_noshift} entails
\begin{equation} \label{noshift_lambda}
-C(N_{\text{i.d.}}) < \lambda(u, v)=[\kappa(1-\mu)](u, v) < - \frac{C(N_{\text{i.d.}}, R, Y)}{2} < 0.
\end{equation}

We now show the validity of \eqref{noshift_u_final}. Due to \eqref{redshift_uomega}:
\begin{equation} \label{bounduR}
2(r(0, v_R(u)) - R) \Omega^{-2}(0, v_R(u)) \le u \le C \delta \Omega^{-2}(0, v_R(u)),
\end{equation}
where $v_R$ was introduced in remark \ref{rmk:diffparam}.
Furthermore, the difference between $v_R(u)$ and $v$ is bounded. In fact, if we integrate \eqref{noshift_lambda} from $v_R(u)$ to $v$:
\[
-C(N_{\text{i.d.}}) \br{v-v_R(u)} < r(u, v) - R < -\frac{C(N_{\text{i.d.}}, R, Y)}{2}\br{v - v_R(u)}.
\]
Hence, due to \eqref{bounds_r_noshift}:
\begin{equation} \label{bounddifferencev}
0 \le  v - v_R(u) \le 2 \frac{R-r(u, v)}{C(N_{\text{i.d.}}, R, Y)} \le 2 \frac{R-Y}{C(N_{\text{i.d.}}, R, Y)}.
\end{equation}
Since $\Omega^2(0, y) \sim e^{2K_+ y}$ for every $y$, if we plug \eqref{bounddifferencev} in \eqref{bounduR} then \eqref{noshift_u_final} is satisfied.

Now, using \eqref{def_kappa} and \eqref{algconstr}, the Raychaudhuri equation \eqref{raych_v} can be cast in the form 
\begin{equation} \label{raych_recast}
\partial_v \left ( \frac{\nu}{1-\mu} \right) = \frac{\nu}{1-\mu} \frac{r |\partial_v \phi|^2}{ \lambda},
\end{equation}
to get a bound on $\nu$. Indeed, by integrating the above from $v_R(u)$ to $v$, we have
\begin{equation} \label{nu_firstintegration}
|\nu|(u, v) = |\nu|(u, v_R(u)) \frac{|1-\mu|(u, v)}{|1-\mu|(u, v_R(u))}e^{\int_{v_R(u)}^v \frac{r}{\lambda}|\partial_v \phi|^2(u, v') dv' }.
\end{equation}
By  \eqref{phi_bootstrap_noshift}, \eqref{bounds_r_noshift} and \eqref{noshift_lambda}:
\begin{equation} \label{def_Cnu}
\left | \int_{v_R(u)}^v \frac{r}{\lambda} |\partial_v \phi|^2 (u, v')dv' \right| \le C(N_{\text{i.d.}}, R, Y) e^{-2c(s)v_R(u)}.
\end{equation}
Thus, \eqref{nu_firstintegration} and \eqref{noshift_mu} give
\begin{equation} \label{noshift_nu}
 |\nu|(u, v) \sim |\nu|(u, v_R(u)).
\end{equation}

We can now close bootstrap inequality \eqref{Duphi_bootstrap_noshift}. By inspecting \eqref{waveeqn_phi3_r} and using \eqref{phi_bootstrap_noshift}, \eqref{bounds_r_noshift}, \eqref{noshift_Q} and \eqref{noshift_nu}, we have
\[
|\partial_v(r D_u \phi)|(u, v) \le C(N_{\text{i.d.}})  C_l |\nu|(u, v)  e^{-c(s)v} \le  C(N_{\text{i.d.}}, R, Y) C_l  |\nu|(u, v_R(u)) e^{-c(s)v},
\]
for $v_1$ large. 
Notice that the above constant does not depend on $M$.
Before integrating the above inequality from $\Gamma_R$, we notice that \eqref{decay_redshift_Duphi} gives
\[
|r D_u \phi|(u, v_R(u)) \le C(N_{\text{i.d.}}, \eta) |\nu|(u, v_R(u)) e^{-c(s) v_R(u)} 
\]
Therefore, the relation $C_l = 4^{l+1}C_{\mathcal{H}}$
and estimates \eqref{bounddifferencev} and \eqref{noshift_nu} allow us to write
\begin{align*}
|rD_u \phi|(u, v) &\le \br{C(N_{\text{i.d.}}, \eta) + C(N_{\text{i.d.}}, R, Y)} C_l |\nu|(u, v_R(u))e^{c(s) \br{v - v_R(u)}} e^{-c(s) v} \\
&\le C(N_{\text{i.d.}}, \eta, R, Y)  C_l |\nu|(u, v) e^{-c(s)v}.
\end{align*}
Inequality \eqref{Duphi_bootstrap_noshift} is then closed due to \eqref{bounds_r_noshift} and by choosing $M$ sufficiently large with respect to the initial data, $\eta$, $R$ and $Y$ (this might possibly require a larger value for $v_1$, as well).

The bootstrap on $\kappa$ can then be closed by following the same steps that precede \eqref{kappa_constantr} in the proof of proposition \ref{prop: redshift}. 

Now,  $\Gamma_{R-(l-1)\epsilon}$ is non-empty (see  the end of the proof of proposition \ref{prop: redshift}). Lemma \ref{lemma:properties_constantr} then shows that we have the parametrization $\Gamma_{R-(l-1)\epsilon} = \{(u_{l-1}(v), v) \, \colon \, v \ge v_1\}$ for some function $u_{l-1}$.

In order to close the bootstrap for $\phi$ and $\partial_v \phi$, we first notice that, by the inductive assumption (i.e. \eqref{phi_induction_noshift} for the $(l-1)$-th step) and by \eqref{def_Cl}, we have:
\[
|\phi|(u_{l-1}(v), v) \le \frac{C_l}{4}e^{-c(s)v}.
\]
Then, by integrating $\partial_u \phi = D_u \phi - i q A_u \phi$ and using \eqref{Duphi_bootstrap_noshift} and \eqref{bounds_r_noshift}, we obtain
\[
|\phi|(u, v)  \le \frac{C_l}{4} e^{-c(s) v} + C_l C(N_{\text{i.d.}}, R, Y) e^{-c(s) v} \int_{u_{l-1}(v)}^u |\nu|(u', v) du' \le  \br{\frac14 + C(N_{\text{i.d.}}, R, Y) \epsilon} C_l e^{-c(s)v}, 
\]
By choosing $N$ sufficiently large (see \eqref{def_eps_noshift}),  the last constant multiplying $C_l e^{-c(s)v}$ is strictly smaller than $\frac{3}{8}$.

Similarly, we integrate \eqref{waveeqn_phi2_r} from $u_{l-1}(v)$ to $u$ (see also \eqref{integrate_dvphi}) and use \eqref{phi_bootstrap_noshift},  \eqref{Duphi_bootstrap_noshift},  \eqref{bounds_r_noshift}, \eqref{noshift_lambda} and the inductive assumption (see \eqref{phi_induction_noshift} for the $(l-1)$-th step) to obtain
\begin{align*}
|r \partial_v \phi|(u, v) &\le \frac{C_l}{4}\br{R - (l-1)\epsilon} e^{-c(s)v} +  C_{l} C(N_{\text{i.d.}}, R, Y) e^{-c(s)v}  \int_{u_{l-1}(v)}^u |\nu|(u', v)du' \\
&\le C_l e^{-c(s)v} \br{ \frac14 \br{R - (l-1)\epsilon} + C(N_{\text{i.d.}}, R, Y) \epsilon},
\end{align*}
assuming that $v_1$ is sufficiently large.
If we divide both sides of the above by $r(u, v) \ge R-l\epsilon$ (see \eqref{bounds_r_noshift}), the first term enclosed in brackets can be estimated using that $R-l\epsilon \ge Y$ and that,  taking $N$ large:
\[
\frac{R-l\epsilon + \epsilon}{R- l \epsilon} < 1+\frac{\epsilon}{Y} < 2.
\]
Thus, for $N$ large:
\[
|\partial_v \phi|(u, v) < \frac58 C_l e^{-c(s)v}.
\]
This result and the previous bound on $\phi$ close the bootstrap inequality \eqref{phi_bootstrap_noshift}.
Hence, we have $E_l = \mathcal{N}_l$. 

\textbf{3. Remaining estimates}: 
We are now left to prove \eqref{dvlog_induction_noshift} in $\mathcal{N}_l$. Similarly to the case of the redshift region, this follows after integrating \eqref{waveeqn_log_div} from $\Gamma_{R-(l-1)\epsilon}$, and using \eqref{dvlog_induction_noshift} (in $\mathcal{N}_{l-1}$),   \eqref{phi_bootstrap_noshift}, \eqref{Duphi_bootstrap_noshift},  \eqref{bounds_r_noshift}, \eqref{noshift_duQ}, \eqref{noshift_Q},  \eqref{noshift_duvarpi}, \eqref{noshift_varpi} and the fact that $K$ is bounded.

It then follows that  inequalities \eqref{noshift_lambda_final}--\eqref{noshift_v_final} are satisfied in $\mathcal{N}$. Moreover, we showed in the previous steps that $\Gamma_{R-(l-1)\epsilon} \ne \varnothing$, for $l \in \{1, \ldots, N\}$. The last step of the induction yields $\varnothing \ne \Gamma_Y \cap \{v \ge v_1\} \subset \mathcal{N}$, and we further know that $\Gamma_Y$ is a spacelike curve, for $v \ge v_1$, by corollary \ref{corollary_trappedregion}. The latter result, together with the fact that $\nu < 0$ in $\mathcal{P}$, entails that $\varnothing \ne J^{+}(\Gamma_R) \cap J^{-}(\Gamma_Y) \cap \{v \ge v_1 \} \subset \mathcal{N}$ (see last paragraph of the proof of proposition \ref{prop: redshift}).
\end{proof}

\subsection{Early blueshift region: general behaviour}  \label{sec:EBR1}
In the next steps, we inspect the consequences of the blueshift effect in the region $J^{+}(\Gamma_Y)$.
 Given a real number 
\begin{equation} \label{def_beta}
\beta > 0,
\end{equation}
sufficiently small compared to the initial data,  we define the curve
\begin{equation} \label{def_gamma}
\gamma \coloneqq \Big \{(u, (1+\beta)v_{Y}(u)) \colon \, u \in [0, u_Y(v_1)] \Big \}.
\end{equation}
For a suitably large choice of $v_1$, the curve $\gamma$ is spacelike, since that is the case for $\Gamma_Y = \{ (u, v_Y(u))\colon u \in [0, u_Y(v_1)] \}$. We also use the notation
\[
v_{\gamma}(u) \coloneqq (1+\beta)v_Y(u),
\]
and we denote by $u_{\gamma}(v)$ the $u$-coordinate of the unique point along the curve $\gamma$ corresponding to the outgoing null coordinate $v$.

The aim of the curve $\gamma$ is to probe a spacetime region where the blueshift effect starts to be relevant, but whose influence is weak enough so that the quantities describing the dynamical black hole can be controlled as done in the previously inspected regions.
This is why we denote the region 
\[
J^+(\Gamma_Y) \cap J^{-}(\gamma)
\]
by the name \textbf{early blueshift region}. 

Here the dynamical solution is still quantitatively close to the final sub-extremal Reissner-Nordstr{\"o}m-de Sitter black hole, in the sense that $\varpi$ converges to $\varpi_+$ and $Q$ converges to  $Q_+$ at an exponential rate. However, due to the non-zero value of $\beta$ (which distinguishes $\gamma$ from a curve of constant radius), we lose decay with respect to the no-shift region.  Also, although this region is delimited by a curve of constant area-radius only in the past, the function $r$ is still bounded away from zero for $v_1$ large.

Some technical difficulties, which were absent in the uncharged $\Lambda>0$ case and in the charged $\Lambda=0$ case, arise in this region. Differently from \cite{CGNS2} and \cite{CGNS4}, we cannot directly apply BV estimates  (originally presented in \cite{Dafermos_2005_uniqueness}) on the quantities $D_u \phi$ and $\partial_v \phi$, due to new terms in equations \eqref{dutheta} and \eqref{dvzeta} that were absent in the uncharged case. We then prove the results by bootstrap, similarly to the procedure followed in the redshift region, where now we close the bootstrap inequalities by exploiting  the smallness of $r - r_{-}$. 
The procedure differs from the bootstrap technique used in \cite{VdM1}, since the approximations of the $v$ coordinate used there give rise to exponential errors that cannot be removed\footnote{Notice that that the same problem occurred in the no-shift region of section \ref{sec:NR}. However, these error terms could be controlled due to \eqref{noshift_v_final}.} using the smallness of $r- r_{-}$, see also section \ref{subsec:review_work}.

 A prominent role in the proof is played by $\mathcal{C}_s$, a function in both the $u$ and $v$ coordinates that we define and that determines the exponential decay of the main quantities. This function is positive in the early blueshift region\footnote{It is not necessarily positive to the future of this region, but this will not be a problem, see already section \ref{section:lateblueshift}.} and replaces the linear function $v \mapsto c(s)v$ of  the previous regions. The aim of $\mathcal{C}_s$ is to interpolate between the exponential decay known along the curve $\Gamma_Y$ (see proposition \ref{prop: noshift}) and the one expected along $\gamma$ (see \cite{CGNS4}), so that the function $|\nu| \exp (-\mathcal{C}_s)$, appearing in many integral estimates, is monotonically increasing in $v$. In particular, $\mathcal{C}_s$ is chosen in such a way that it is monotonic in both coordinates, so that the main integral estimates of this region can be achieved similarly to the ones of the redshift and no-shift regions.

\begin{proposition}[estimates in the early blueshift region] \label{prop: blueshift}
Let $\beta > 0$.
Then, for every $\varepsilon > 0$ small enough, depending only on the initial data, there exists  $\Delta \in (0, \varepsilon)$ and $Y > 0$ such that
\begin{equation} \label{Y_constraint}
0 < Y - r_{-} < \Delta,
\end{equation}
 and such that the following inequalities hold for every $(u, v) \in J^+(\Gamma_Y) \cap J^{-}(\gamma) \cap \{v \ge v_1\}$:
\begin{align}
r(u, v) &\ge r_{-} - \varepsilon, \label{blueshift_r_final}\\
|\varpi(u, v) - \varpi_+| &\le C e^{-2(c(s) - \tau)v }, \\
 |Q(u, v) - Q_+| &\le C e^{-2 (c(s) - \tau)v }, \label{blueshift_final_Q}\\
|K(u, v) + K_{-}| &\le C \varepsilon, \label{blueshift_final_K} \\
|\partial_v \log \Omega^2(u, v) - 2K(u, v)| &\le C e^{-(c(s) - \tau)v}, \label{blueshift_final_dvlog}\\
|D_u \phi|(u, v) &\le C |\nu|(u, v) e^{-(c(s) - \tau)v }, \label{Duphi_blueshift_final}\\
 |\phi|(u, v) + |\partial_v \phi|(u, v) &\le C e^{-(c(s) - \tau)v }, \label{phi_blueshift_final}
\end{align}
for some $v_1 \ge v_0$ large, where $\tau > 0$, $\tau = O(\beta)$ and 
where $C > 0$ depends on the initial data and possibly on $\eta$, $R$ and $Y$. The curve $\gamma$ is defined in \eqref{def_gamma}. The positive constant $K_{-}$ is given by \eqref{def_surfacegrav_minus} and $c(s)$ is defined in \eqref{def_cs}. 
\end{proposition}
\begin{proof}
In the following, we will repeatedly use that $0 < \kappa \le 1$, that $\nu < 0$ in $\mathcal{P}$ and that $\lambda < 0$ in $J^{+}(\Gamma_Y)$ (this follows from \eqref{noshift_lambda_final} and lemma \ref{lemma:constantu}, together with the fact that $\Gamma_Y$ is a spacelike curve).
In the next lines the constant $C_{\mathcal{N}} = C_{\mathcal{N}}(N_{\text{i.d.}}, \eta, R, Y)$ (defined in the statement of proposition \ref{prop: noshift}) will be used. 

Given $C>0$ depending uniquely on the initial data, we define
\begin{equation} \label{def_curlyC}
\mathcal{C}_{s}(u, v) \coloneqq c(s)v_Y(u) - 2(2K_{-} + C \varepsilon)(v-v_Y(u)),
\end{equation}
which is positive in $J^+(\Gamma_Y) \cap J^{-}(\gamma)$.
Recall that, in the latter region, $v_Y(u) \le v \le (1+\beta)v_Y(u)$. So,  if $\beta$ is small:
\[
\mathcal{C}_s(u, v) = c(s) v_Y(u) - 2(2K_{-}+ C \varepsilon)(v-v_Y(u)) \ge [ c(s) - 2(2K_{-} + C \varepsilon)\beta ] v_Y(u) > 0,
\]
for every $(u, v) \in J^+(\Gamma_Y) \cap J^{-}(\gamma) \cap \{v \ge v_1\}$. In particular, $\mathcal{C}_s$ is a \textbf{positive} function in the early blueshift region. Using the restrictions on the $v$ coordinate again, the above also implies
\begin{equation} \label{translate_decay}
\mathcal{C}_s(u, v) > (c(s) - \tau)v,
\end{equation}
for $\tau > 0$, $\tau = O(\beta)$.
Furthermore, the fact that $v'_Y(u) < 0$ for every $u \in [0, u_R(v_1)]$ (see also \eqref{dv_R}) gives
\begin{equation} \label{monotonicity_curlyC}
\partial_u \mathcal{C}_s < 0 \quad \mathrm{in }\, J^+(\Gamma_Y) \cap J^{-}(\gamma) \cap \{v \ge v_1\}.
\end{equation}

\textbf{Setting up the bootstrap procedure}: let $E$ be the set of points $q$ in $\mathcal{D} \coloneqq J^+(\Gamma_Y) \cap J^{-}(\gamma) \cap \{v \ge v_1\}$ such that the following inequalities hold for every $(u, v)$ in $J^{-}(q) \cap \mathcal{D}$:
\begin{align}
r(u, v) &\ge r_{-} - 2\varepsilon, \label{r_bootstrap_blueshift}\\
|\phi|(u, v) + |\partial_v \phi|(u, v) &\le 4 C_{\mathcal{N}} e^{-\mathcal{C}_{s}(u, v)}, \label{phi_bootstrap_blueshift}\\
|D_u \phi|(u, v) &\le M C_{\mathcal{N}}  |\nu|(u, v) e^{-\mathcal{C}_{s}(u, v)}, \label{Duphi_bootstrap_blueshift} \\
\kappa(u, v) &\ge\ \frac12, \label{kappa_bootstrap_blueshift}
\end{align}
for some suitably large $M = M(N_{\text{i.d.}}) > 0$.

The validity of the bootstrap inequalities along $\Gamma_Y$ is promptly ensured for $v_1$ large, since $Y > r_{-}$, due to \eqref{noshift_phi_final}, \eqref{noshift_Duphi_final},  \eqref{kappa_induction_noshift} and to the fact that (see \eqref{def_curlyC}):
\[
\mathcal{C}_s(u, v_Y(u)) = c(s) v_Y(u), \quad \forall\, u \in [0, u_Y(v_1)].
\]

\textbf{Closing the bootstrap}: 
let $(u, v) \in E$.
First, we notice that, by construction of $E$:
\begin{equation} \label{bounds_r_blueshift}
0 < \frac{r_{-}}{2} < r_{-} - 2 \varepsilon \le r(u, v) \le Y < r_+,
\end{equation} 
provided that $\varepsilon$ is sufficiently small.

Similarly to the previous regions (and now using \eqref{monotonicity_curlyC}), the inequalities  \eqref{phi_bootstrap_blueshift}, \eqref{Duphi_bootstrap_blueshift}, \eqref{bounds_r_blueshift} applied to \eqref{maxwelleqnu} and the results of proposition \ref{prop: noshift} yield
\begin{equation} \label{blueshift_Q}
\abs{Q(u, v) - Q_+} \le \abs{Q(u, v) - Q(u_Y(v), v)} + \abs{Q(u_Y(v), v) -  Q_+} \le C(N_{\text{i.d.}}, \eta, R, Y) e^{-2 \mathcal{C}_{s}(u, v)},
\end{equation}
where, in particular, we used $e^{-c(s)v_Y(u)} \le e^{-\mathcal{C}_s(u, v)}$. 

Furthermore, integrating \eqref{rwaveeqn} from $\Gamma_Y$ and using \eqref{phi_bootstrap_blueshift}, \eqref{bounds_r_blueshift} and  \eqref{blueshift_Q}, together with the bound \eqref{noshift_lambda} obtained in the no-shift region:
\begin{equation} \label{boundlambda_blueshift}
|\lambda|(u, v) \le C=C(N_{\text{i.d.}}),
\end{equation}

In order to bound $\varpi$, we consider \eqref{duvarpi2} and apply \eqref{def_kappa}, the bootstrap inequalities \eqref{phi_bootstrap_blueshift}, \eqref{Duphi_bootstrap_blueshift},  \eqref{kappa_bootstrap_blueshift}, the previous bounds on $|\lambda|$ and $Q$ and finally integrate using \eqref{bounds_r_blueshift} and \eqref{monotonicity_curlyC} to get
\begin{equation} \label{blueshift_varpi}
\abs{\varpi(u, v) - \varpi_+} \le \abs{\varpi(u, v) - \varpi(u, v_Y(u))} + \abs{\varpi(u, v_Y(u)) - \varpi_+} \lesssim e^{-2 \mathcal{C}_{s}(u, v)}.
\end{equation}

Now, since $J^{+}(\Gamma_Y) \cap J^{-}(\gamma)$ is stricly contained in the trapped region $\mathcal{T}$ for $v$ large (see \eqref{apphorizon_position}),  relation \eqref{algconstr} entails that $(1-\mu)(u, v) < 0$. In particular, by \eqref{def_mu}:
\[
(1-\mu)(u, v) = 1 - \frac2r \varpi_+ + \frac{Q_+^2}{r^2} - \frac{\Lambda}{3}r^2 - \frac2r(\varpi(u, v)-\varpi_+) + \frac{Q(u, v)^2 - Q_+^2}{r^2} < 0.
\]
Due to \eqref{blueshift_Q} and \eqref{blueshift_varpi}, the latter implies that
\[
(1-\mu)(r(u, v), \varpi_+, Q_+) < \frac2r (\varpi(u, v) -\varpi_+) - \frac{Q(u, v)^2-Q_+^2}{r^2} \lesssim e^{-2 \mathcal{C}_{s}(u, v)}.
\]
Thus, by studying the plot of $r \mapsto (1-\mu)(r, \varpi_+, Q_+)$ (see e.g. \cite[section 3]{CGNS2}), we have 
\begin{equation} \label{bounds_r_improved}
r(u, v) > r_{-} - \varepsilon,
\end{equation}
provided that $v_1$ is sufficiently large. This closes the bootstrap inequality \eqref{r_bootstrap_blueshift} for $r$.

We now use \eqref{def_surfacegrav} and  \eqref{def_surfacegrav_minus} to write
\begin{align*}
|K(u, v) + K_{-}| 
&\le |\varpi|(u, v) \left | \frac{1}{r^2(u, v)} - \frac{1}{r_{-}^2} \right| + \frac{|\varpi(u, v) - \varpi_+|}{r_{-}^2} \\
&\quad + Q^2(u, v) \left | \frac{1}{r^3(u, v)}-\frac{1}{r_{-}^3} \right| + \frac{|Q^2(u, v) - Q_+^2|}{r_{-}^3} + \frac{|\Lambda|}{3}|r(u, v) - r_{-}|.
\end{align*}
The terms proportional to $|\varpi - \varpi_+|$ and $|Q - Q_+|$ give $O(e^{-2\mathcal{C}_s(u, v)})$,  due to the previously obtained estimates.
Moreover, we note that the inequality $r(u, v) \le Y$, together with \eqref{Y_constraint}, \eqref{bounds_r_improved} and the fact that $\Delta < \varepsilon$ gives:
\[
|r(u, v) - r_{-}| < \max\{\Delta, \varepsilon\}  =\varepsilon.
\]
On the other hand, from \eqref{bounds_r_blueshift} and the above we obtain:
\[
\left | \frac{1}{r^2(u, v)} - \frac{1}{r_{-}^2} \right|  < C(N_{\text{i.d.}}) \varepsilon,
\]
and similarly for the cubic case.
Therefore, we conclude that 
\begin{equation} \label{bound_2K_blueshift}
|2K(u, v) +2 K_{-}| \le C(N_{\text{i.d.}}) \varepsilon,
\end{equation}
for $v_1$ sufficiently large.
In particular, $K$ is a negative quantity in the early blueshift region.

The bootstrap condition on $\kappa$ can be closed as done in the redshift region (see e.g. \eqref{kappa_constantr}), now using \eqref{dukappa} (recall that $\zeta = r D_u \phi$), \eqref{Duphi_bootstrap_blueshift}, \eqref{bounds_r_blueshift} and the fact that $\kappa$ is close to 1 in the no-shift region. We emphasize that $\kappa$ can be taken close to 1 also in the current region, provided that $v_1$ is sufficiently large. Hence, for a suitable choice of $v_1$:
\begin{equation} \label{kappaK_lowerbound_blueshift}
\abs{\kappa(2K - m^2 r |\phi|^2)} > (1 - o(1))(2K_{-} - C\varepsilon) > 2K_{-} - 2 C \varepsilon > 0,
\end{equation}
due to the smallness of $\varepsilon$, where $C > 0$ depends uniquely on the initial data.

The estimate for $\partial_v \log \Omega^2$ is obtained analogously to the procedure of the previous regions, i.e. by integrating \eqref{waveeqn_log_div} from $\Gamma_Y$ and using \eqref{noshift_dvlog_final}, \eqref{monotonicity_curlyC}, \eqref{phi_bootstrap_blueshift}, \eqref{Duphi_bootstrap_blueshift}, \eqref{bounds_r_blueshift}, \eqref{bound_2K_blueshift},  the estimate for $\partial_v \log \Omega^2$ in the no-shift region and the previous bounds on $\partial_u Q$, $Q$, $\partial_u \varpi$ and $\varpi$.

To close the bootstrap for $\phi$ we integrate $\partial_u \phi = -iq A_u \phi + D_u \phi$ from the curve $\Gamma_Y$ (see also \eqref{integrate_phi} for an analogous expression). Thus, we  use \eqref{noshift_phi_final},  \eqref{Y_constraint},  \eqref{monotonicity_curlyC}, \eqref{Duphi_bootstrap_blueshift},  \eqref{bounds_r_improved} and the fact that $e^{-c(s)v_Y(u)} \le e^{-\mathcal{C}_s(u, v)}$ to get, for  a suitably large $v_1$:
\begin{align}
|\phi|(u, v) &\le C_{\mathcal{N}} e^{-c(s)v_Y(u)} + M C_{\mathcal{N}} e^{- \mathcal{C}_{s}(u, v)}\int_{u_Y(v)}^u |\nu|(u', v)du' \nonumber \\ 
&\le C_{\mathcal{N}}\br{1 + \br{Y- r_{-}+\varepsilon} M}   e^{-\mathcal{C}_{s}(u, v)} \nonumber
 \\
 &\le C_{\mathcal{N}} \br{ 1+ 2  \varepsilon M } e^{- \mathcal{C}_{s}(u, v)}. \label{closebootstrapphi}
\end{align}
The quantity $1+2\varepsilon M(N_{\text{i.d.}})$ is strictly smaller than $\frac32$ for $\varepsilon$ small.

To bound $|\partial_v \phi|$, we integrate \eqref{waveeqn_phi2_r} from $\Gamma_Y$ and obtain an expression for $r \partial_v \phi$ analogous to \eqref{integrate_dvphi}.
Then, by using \eqref{noshift_phi_final},  \eqref{monotonicity_curlyC}, \eqref{bounds_r_blueshift}, \eqref{blueshift_Q},   \eqref{boundlambda_blueshift},  \eqref{bounds_r_improved}, the bootstrap inequalities for $\phi$ and $D_u \phi$, and the fact that $e^{-c(s)v_Y(u)} \le e^{-\mathcal{C}_s(u, v)}$:
\begin{align*}
|r \partial_v \phi|(u, v) &\le C_{\mathcal{N}}\br{ Y + C(N_{\text{i.d.}}) \int_{u_Y(v)}^u |\nu|(u', v)du' } e^{- \mathcal{C}_{s}(u, v)} \\
&\le C_{\mathcal{N}} \br{ Y + (Y - r_{-} + \varepsilon) C(N_{\text{i.d.}}) } e^{- \mathcal{C}_{s}(u, v)},
\end{align*}
for  $v_1$ large. After dividing both sides of the above by $r \ge r_{-} - \varepsilon$ (see \eqref{bounds_r_improved}) and using \eqref{Y_constraint} and \eqref{bounds_r_blueshift} to notice that
\[
\frac{Y}{r_{-} - \varepsilon}  < 2 \qquad \text{ and } \qquad \frac{Y-r_{-} + \varepsilon}{r_{-} - \varepsilon} < \frac{4 \varepsilon}{r_{-}},
\]
for $\varepsilon$ small, we conclude that
\[
|\partial_v \phi|(u, v) \le C_{\mathcal{N}} \br{ 2 + \frac{4 \varepsilon}{r_{-}} C(N_{\text{i.d.}})} e^{- \mathcal{C}_{s}(u, v)} < \frac52 C_{\mathcal{N}} e^{-\mathcal{C}_s(u, v)}.
\]
This result, together with \eqref{closebootstrapphi}, is sufficient to close bootstrap inequality \eqref{phi_bootstrap_blueshift}.

We now improve the bound \eqref{Duphi_bootstrap_blueshift} for $D_u \phi$.
First, we notice that, by integrating \eqref{dulambda_kappa} from $\Gamma_Y$:
\begin{equation} \label{blueshift_nu_explicit}
|\nu|(u, v) = |\nu|(u, v_Y(u))\exp\br{\int_{v_Y(u)}^v \kappa(2K - m^2 r |\phi|^2)(u, v') dv'}.
\end{equation}
 Moreover, by \eqref{waveeqn_phi3_r}:
\[
|r D_u \phi|(u, v) \le |r D_u \phi|(u, v_Y(u)) + C(N_{\text{i.d.}}) C_{\mathcal{N}} \int_{v_Y(u)}^v |\nu|(u, v') e^{-\mathcal{C}_{s}(u, v')} dv'.
\]
Since \eqref{dulambda_kappa}, \eqref{def_curlyC}  and \eqref{bound_2K_blueshift} give\footnote{Here and in the following steps the expression of $\mathcal{C}_s$ plays a prominent role. The same reasoning would not work if we replaced $\mathcal{C}_s(u, v)$ with $(c(s)-\tau)v$.}
\begin{align*}
\partial_v \br{|\nu|(u, v)e^{-\mathcal{C}_{s}(u, v)}} &= |\nu|(u, v) e^{-\mathcal{C}_{s}(u, v)} \br{\kappa (2K - m^2 r |\phi|^2)(u, v) +4K_{-} + 2C \varepsilon} \\
& \ge |\nu|(u, v) e^{-\mathcal{C}_{s}(u, v)}(2K_{-} + C \varepsilon) > 0
\end{align*}
in the bootstrap set $E$, then:
\[
\int_{v_Y(u)}^v |\nu|(u, v') e^{-\mathcal{C}_{s}(u, v')}dv' \le \frac{1}{2K_{-} + C \varepsilon} \int_{v_Y(u)}^v \partial_v \br{|\nu| e^{-\mathcal{C}_s}} dv' \le \frac{|\nu|(u, v) e^{-\mathcal{C}_{s}(u, v)}}{2K_{-}}.
\]
Expressions \eqref{noshift_Duphi_final}, \eqref{blueshift_nu_explicit}, \eqref{bound_2K_blueshift}, \eqref{phi_bootstrap_blueshift} and the definition of $\mathcal{C}_s$ then yield
\begin{align*}
|r D_u \phi|(u, v) &\le C_{\mathcal{N}} |\nu|(u, v) \br{r_+ e^{-c(s)v_Y(u) + (2K_{-} + C  \varepsilon)(v - v_Y(u))} + C(N_{\text{i.d.}}) e^{-\mathcal{C}_s(u, v)}} \\
& \le C_{\mathcal{N}} C(N_{\text{i.d.}}) |\nu|(u, v) e^{-\mathcal{C}_{s}(u, v)},
\end{align*}
for a suitably large $v_1$. 
Hence, the bootstrap closes by choosing $M$ large compared to the initial data (this might possibly reduce the size $\varepsilon$ even further).

Finally, we can express the above estimates in terms $c(s)-\tau$, rather than $\mathcal{C}_s(u, v)$, using \eqref{translate_decay}.
\end{proof}

\begin{remark} \label{rmk:not_necessary}
We observe that:
\begin{itemize}
\item The inequality $\Delta < \varepsilon$ that we assumed in the statement of proposition \ref{prop: blueshift} is not needed for the proof. Such a condition can be replaced by an assumption on the smallness of $\Delta + \varepsilon$ (with respect to the initial data).
\item Proposition \ref{prop: blueshift} can be proved in an analogous fashion if we replace \eqref{def_curlyC} with the following definition of $\mathcal{C}_s$:
\[
\mathcal{C}_s(u, v) = c(s) v_Y(u) - \alpha (2K_{-} + C \varepsilon)(v - v_Y(u)),
\] 
for $(u, v) \in J^+(\Gamma_Y) \cap J^{-}(\gamma) \cap \{v \ge v_1 \}$ and for any $\alpha > 1$.
\end{itemize}
\end{remark}

\subsection{\texorpdfstring{Early blueshift region: bounds along $\gamma$}{Early blueshift region: bounds along gamma}} \label{sec:EBR2}

In this section, we improve the estimates of $\nu$ and $\lambda$ along the future boundary of the early blueshift region.

\begin{lemma}[integral bounds on $\Omega^2$] \label{lemma:IB_Omega}
The following results hold:
\begin{enumerate}
\item If there exists $D >0$ such that
\[
\partial_v \log \Omega^2(w, z) \le -D, \quad \forall\, (w, z) \in J^+(\Gamma_Y) \cap J^{-}(\gamma),
\]
and if $v$ is sufficiently large, then:
\begin{equation} \label{IB_Omega1}
\int_{u_{Y}(v)}^u \Omega^2(\tilde u, v)d\tilde u \le C(N_{\text{i.d.}}, R, Y) \frac{\beta}{1+\beta}v \, e^{-D\frac{\beta}{1+\beta}v}, \quad \forall\, (u, v) \in \gamma.
\end{equation}
\item Let $E \subseteq J^+(\gamma)$ be a past set such that $\gamma \subset E$. If there exists $ D >0$ such that
\begin{equation} \label{assumption_IB}
\partial_v \log \Omega^2(w, z) \le -D, \quad \forall\, (w, z) \in J^+(\Gamma_Y) \cap J^{-}(E),
\end{equation}
and if $v$ is sufficiently large, then:
\[
\int_{u_{\gamma}(v)}^u \Omega^2(\tilde u, v)d\tilde u \le C(N_{\text{i.d.}}, R, Y) (v- v_{\gamma}(u)) e^{-D\frac{\beta}{1+\beta}v}, \quad \forall\, (u, v) \in E.
\]
\end{enumerate}
\end{lemma}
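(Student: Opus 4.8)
The plan is to handle both parts with one device: transport the pointwise value of $\Omega^2$ backwards in $v$ down to the curve $\Gamma_Y$ using the hypothesis, and then change the integration variable along $\Gamma_Y$ so that the factor $|\nu|$ cancels and the conformal factor telescopes. First I would record the two algebraic identities driving the computation. By \eqref{def_kappa} we have $\Omega^2 = -4\nu\kappa$, and differentiating $r(\tilde u, v_Y(\tilde u)) = Y$ gives, exactly as in \eqref{dv_R}, $v_Y'(\tilde u) = -\nu/\lambda$ on $\Gamma_Y$. Changing variables from $\tilde u$ to $w = v_Y(\tilde u)$ (a decreasing function of $\tilde u$) therefore produces the crucial cancellation
\[
\Omega^2(\tilde u, v_Y(\tilde u))\, d\tilde u = -4\nu\kappa\,\frac{dw}{v_Y'(\tilde u)} = -4\kappa\lambda(\tilde u, v_Y(\tilde u))\, dw = 4\kappa^2|1-\mu|(\tilde u, v_Y(\tilde u))\, dw,
\]
where I used $\lambda = \kappa(1-\mu)$ from \eqref{algconstr} and $1-\mu < 0$ in the trapped region. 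Since $r \equiv Y$ on $\Gamma_Y$ while $\varpi \to \varpi_+$, $Q \to Q_+$ and $\kappa \to 1$ there (propositions \ref{prop: boundsEH} and \ref{prop: noshift}), for $v_1$ large the bracketed factor satisfies $\kappa^2|1-\mu|(\tilde u, v_Y(\tilde u)) \le 2 c_{\mu}^Y$.

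Next I would transport $\Omega^2$ vertically. For each fixed $\tilde u$ the segment joining $(\tilde u, v)$ to $(\tilde u, v_Y(\tilde u)) \in \Gamma_Y$ lies inside the region where the hypothesis is posed — this must be checked from the spacelike character of $\Gamma_Y$ and $\gamma$, which forces $v_Y(\tilde u) \le v$ along the slice — so integrating $\partial_v \log \Omega^2 \le -\bm{C}$ yields $\Omega^2(\tilde u, v) \le \Omega^2(\tilde u, v_Y(\tilde u))\, e^{-\bm{C} (v - v_Y(\tilde u))}$. Inserting this together with the change of variables reduces both claims to estimating
\[
\int 4\kappa^2|1-\mu|(\tilde u, v_Y(\tilde u))\, e^{-\bm{C}(v - w)}\, dw \le 8 c_{\mu}^Y \int e^{-\bm{C}(v-w)}\, dw
\]
over the relevant $w$-range.

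The endpoints are precisely what separate the two cases, so I would compute them carefully. For $(u,v) \in \gamma$ one has $v_Y(u) = v/(1+\beta)$ and $v_Y(u_Y(v)) = v$, so $w$ sweeps $[\,v/(1+\beta),\, v\,]$, of length $\tfrac{\beta}{1+\beta}v$; this is part 1. For $(u,v) \in E \subseteq J^{+}(\gamma)$ the range is instead $[\,v_{\gamma}(u)/(1+\beta),\, v/(1+\beta)\,]$, of length $\tfrac{1}{1+\beta}(v - v_{\gamma}(u))$, and it lies entirely in $\{w \le v/(1+\beta)\}$; hence $e^{-\bm{C}(v-w)} \le e^{-\bm{C}\frac{\beta}{1+\beta}v}$ holds uniformly, and bounding the exponential by this value times the interval length produces exactly $\tfrac{8 c_{\mu}^Y}{1+\beta}(v - v_{\gamma}(u))\, e^{-\bm{C}\frac{\beta}{1+\beta}v}$, which is part 2. (The past-set hypothesis $\gamma \subset E$ is what guarantees here that every transport segment remains in $J^{+}(\Gamma_Y) \cap J^{-}(E)$.)

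The step I expect to be the main obstacle is the upper endpoint $w = v$ appearing in part 1: there the decay factor $e^{-\bm{C}(v-w)}$ degenerates to $1$, so the crude ``interval-length times uniform decay'' estimate used for part 2 is no longer directly available, and one must argue that the portion of the $w$-integral closest to $\Gamma_Y$ is still controlled by the stated decaying quantity. I would resolve this by exploiting the monotonicity of the interpolating weight $\mathcal{C}_s$ — via \eqref{monotonicity_curlyC} and remark \ref{properties_curlyC} — so that the transported conformal factor inherits the exponential decay measured along $\gamma$ rather than the trivial bound at $\Gamma_Y$; the prefactor $\tfrac{\beta}{1+\beta}v$ then comes from the length of the $w$-interval and the decaying exponential $e^{-\bm{C}\frac{\beta}{1+\beta}v}$ from this weight. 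A secondary but indispensable point, used implicitly throughout, is the verification that the vertical transport segments stay within the region where $\partial_v\log\Omega^2 \le -\bm{C}$ is assumed, which is exactly where the spacelikeness of the bounding curves and (for part 2) the hypothesis $\gamma \subset E$ enter.
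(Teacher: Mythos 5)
Your proof of part 2 is correct and is essentially identical to the paper's own proof: transport $\Omega^2$ down to $\Gamma_Y$ using the hypothesis, observe that every point of $J^+(\gamma)$ satisfies $v-v_Y(\tilde u)\ge\frac{\beta}{1+\beta}v$ so the factor $e^{-\bm{C}\frac{\beta}{1+\beta}v}$ is uniform, and convert the remaining integral along $\Gamma_Y$ via the identity \eqref{mini_BV_gamma} and the bound $\kappa\abs{1-\mu}\le 2c_{\mu}^Y$ from \eqref{bound_gammaY} (your intermediate sign on $-4\kappa\lambda\,dw$ is only an orientation slip that cancels). Note that the paper proves \emph{only} this part in detail, dismissing part 1 with ``follows similarly''.

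For part 1, however, your proposal has a genuine gap, and in fact you have located exactly the point where the printed statement breaks down. The proposed repair via the monotonicity of $\mathcal{C}_s$ (through \eqref{monotonicity_curlyC} and remark \ref{properties_curlyC}) is not an argument: $\mathcal{C}_s$ is a decay weight for $\phi$, $Q-Q_+$ and $\varpi-\varpi_+$, and it enters no equation or hypothesis controlling $\Omega^2$. There is also no mechanism by which ``the transported conformal factor inherits the decay measured along $\gamma$'': the hypothesis $\partial_v\log\Omega^2\le-\bm{C}$ propagates upper bounds \emph{forward} in $v$ (from $\Gamma_Y$ towards $\gamma$), while bounding a point of $J^{-}(\gamma)$ in terms of data on $\gamma$ would require a \emph{lower} bound on $\partial_v\log\Omega^2$, which is not assumed.

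Moreover, no repair can succeed, because \eqref{IB_Omega1} is false as printed. Your computation already yields the sharp output of this method, namely
\[
\int_{u_Y(v)}^{u}\Omega^2(\tilde u, v)\,d\tilde u \;\le\; 8c_{\mu}^Y\int_{\frac{v}{1+\beta}}^{v}e^{-\bm{C}(v-w)}\,dw \;=\; \frac{8c_{\mu}^Y}{\bm{C}}\left(1-e^{-\bm{C}\frac{\beta}{1+\beta}v}\right)\;\le\; 8c_{\mu}^Y\,\frac{\beta}{1+\beta}\,v,
\]
i.e.\ the stated bound \emph{without} the exponential factor: the portion of the $w$-integral near $w=v$ (points near $\Gamma_Y$) contributes an amount of order one. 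This cannot be improved, since in the paper's own setting the left-hand side does not decay at all: by \eqref{blueshift_final_K} and \eqref{blueshift_final_dvlog} one also has $\partial_v\log\Omega^2\ge-\bm{C}'$ in $J^{+}(\Gamma_Y)\cap J^{-}(\gamma)\cap\{v\ge v_1\}$ for a fixed constant $\bm{C}'>0$, and on $\Gamma_Y$ one has $\kappa\ge\frac12$ and $\abs{1-\mu}\ge\frac12 c_{\mu}^Y$ for $v$ large (\eqref{kappa_induction_noshift}, \eqref{bound_gammaY}); repeating your change of variables with these lower bounds gives, for $(u,v)\in\gamma$ with $v$ large,
\[
\int_{u_Y(v)}^{u}\Omega^2(\tilde u, v)\,d\tilde u \;\ge\; \frac{c_{\mu}^Y}{2}\int_{\frac{v}{1+\beta}}^{v}e^{-\bm{C}'(v-w)}\,dw \;=\; \frac{c_{\mu}^Y}{2\bm{C}'}\left(1-e^{-\bm{C}'\frac{\beta}{1+\beta}v}\right)\;\ge\;\frac{c_{\mu}^Y}{4\bm{C}'}\;>\;0,
\]
whereas the right-hand side of \eqref{IB_Omega1} tends to zero as $v\to+\infty$. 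So the correct form of part 1 is your intermediate bound $8c_{\mu}^Y\frac{\beta}{1+\beta}v$, with the factor $e^{-\bm{C}\frac{\beta}{1+\beta}v}$ deleted. This discrepancy is harmless for the paper: all invocations of the lemma (twice in proposition \ref{prop:lateblueshift}, once in theorem \ref{thm:general_C0_extension}) use only part 2, which both you and the paper prove correctly.
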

\begin{proof}
We will prove the second statement. Inequality \eqref{IB_Omega1}  follows similarly.

First, we notice that estimate \eqref{noshift_mu} can be improved when $1-\mu$ is restricted to the curve $\Gamma_Y$. Indeed, we have
\[
(1-\mu)(u_Y(z), z) = (1-\mu)(Y, \varpi_+, Q_+) + 2 \frac{\varpi_+ - \varpi(u_Y(z), z)}{Y} + \frac{Q^2(u_Y(z), z) - Q_+^2}{Y^2},
\]
for every $z \ge v_1$. Using \eqref{noshift_Q_final} and \eqref{noshift_varpi_final}:
\begin{equation} \label{bound_gammaY}
\Big |  |1-\mu|(u_Y(z), z) - |1- \mu|(Y, \varpi_+, Q_+) \Big | \lesssim e^{-2c(s)z},
\end{equation}
for every $z$ suitably large. An analogous estimate can be proved for $|1-\mu|(w, v_Y(w))$, for every $w$.

Let us now integrate \eqref{assumption_IB} from $\Gamma_Y$ to $(\tilde u, \tilde v) \in E$ and use the fact that $\tilde v \ge (1+\beta) v_Y(\tilde u)$ to get
\[
\log \frac{\Omega^2(\tilde u, \tilde v)}{\Omega^2(\tilde u, v_Y(\tilde u))} \le -D \frac{\beta}{1+\beta}\tilde v,
\]
and thus
\begin{equation} \label{proof_boundOmega1}
\Omega^2(\tilde u, \tilde v) \le \Omega^2(\tilde u, v_Y(\tilde u)) e^{-D \frac{\beta}{1+\beta} \tilde v}, \quad \forall\, (\tilde u, \tilde v) \in E.
\end{equation}
Now, if we differentiate the relation $r(\cdot, v_Y(\cdot))=Y$, we have
\[
\nu(\cdot, v_Y(\cdot)) + \lambda(\cdot, v_Y(\cdot)) v'_Y(\cdot) \equiv 0 \quad \text{ in } [0, u_Y(v_1)].
\]
Given $(u, v) \in \text{int}E$, we can integrate the above in $[u_{\gamma}(v), u]$, use that $u_{\gamma} = v_{\gamma}^{-1}$ and perform a change of variables (see \eqref{dv_R}): 
\begin{equation} \label{mini_BV_gamma}
\int_{u_{\gamma}(v)}^u \nu(\tilde u, v_Y( \tilde u))d \tilde u = \int_{v_Y(u)}^{v_Y(u_{\gamma}(v))} \lambda(u_{Y}(\tilde v), \tilde v) d\tilde v.
\end{equation}
On the other hand, if we integrate \eqref{proof_boundOmega1} from the curve $\gamma \subset E$ to $(u, v) \in \text{int}E$ and use: \eqref{proof_boundOmega1}, the facts that $\Omega^2 = 4|\nu|\kappa$ and $0 < \kappa \le 1$ in $\mathcal{P}$ and the relations \eqref{mini_BV_gamma}, \eqref{algconstr} and \eqref{bound_gammaY}, in this order:
\begin{align*}
\int_{u_{\gamma}(v)}^u \Omega^2(\tilde u, v)d \tilde u &\le e^{-D\frac{\beta}{1+\beta}v} \int_{u_{\gamma}(v)}^u \Omega^2(\tilde u, v_Y(\tilde u)) d \tilde u \le 4 e^{-D \frac{\beta}{1+\beta}v} \int_{u_{\gamma}(v)}^u |\nu|(\tilde u, v_Y(\tilde u)) d \tilde u \\
&= 4 e^{-D\frac{\beta}{1+\beta}v} \int_{v_Y(u)}^{v_Y(u_{\gamma}(v))} |\lambda|(u_{Y}(\tilde v), \tilde v) d \tilde v \\
&\le C(N_{\text{i.d.}}, R, Y)(v_Y(u_{\gamma}(v)) -v_Y(u)) e^{-D\frac{\beta}{1+\beta}v} \\
& \le C(N_{\text{i.d.}}, R, Y)(v - v_{\gamma}(u))  e^{-D\frac{\beta}{1+\beta}v},
\end{align*}
for $v_1$ suitably large.
\end{proof}

We recall that, in the following, the notation $\lesssim$ refers to constants depending on the initial data, $\eta$, $R$ and $Y$.

\begin{lemma}[bounds along $\gamma$] \label{lemma:lambdanu_blueshift}
Under the assumptions of proposition \ref{prop: blueshift}, if we further suppose that $U$ is sufficiently small, the following inequalities hold for every $(u, v) \in \gamma \cap \{v \ge v_1\}$:
\begin{align}
  e^{-(2K_{-} + c_1 \varepsilon)  \frac{\beta}{1+\beta}v} 
 \lesssim -\lambda(u, v) &\lesssim  e^{-(2K_{-}-c_2  \varepsilon )  \frac{\beta}{1+\beta}v},
  \label{bound_lambda_gamma}\\
 \, u^{-1  +\frac{K_{-}}{K_+}\beta + \tilde{\alpha}} \lesssim -\nu(u, v) &\lesssim  u^{-1  +\frac{K_{-}}{K_+}\beta - \alpha}, \label{bound_nu_gamma}
\end{align}
for some positive constants $\alpha$, $\tilde{\alpha}$, $c_1$ and $c_2$.  The constants $\alpha$ and $\tilde{\alpha}$ depend on the initial data, on $\beta$ and on $\varepsilon$, whereas $c_1$ and $c_2$ depend on the initial data and on $\eta$, $R$ and $Y$.   Moreover, for every
\[
\beta >0
\]
small compared to the initial data, there exists $\varepsilon > 0$ such that
\[
\frac{K_{-}}{K_+} \beta - \alpha(\beta, \varepsilon) > 0.
\] 
\end{lemma}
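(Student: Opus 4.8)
The plan is to regard $\gamma$ as the future boundary of the early blueshift region and to transport the two transversal quantities $\nu$ and $\frac{\nu}{1-\mu}$ from $\Gamma_Y$ along the ingoing rays $\{u = \mathrm{const}\}$, recovering $\lambda$ at the end from the algebraic constraint $\lambda = \kappa(1-\mu)$ in \eqref{algconstr}. The decisive bookkeeping observation is that on $\gamma$ one has $v = (1+\beta)v_Y(u)$, so the transport interval $[v_Y(u), (1+\beta)v_Y(u)]$ has length $\beta v_Y(u) = \frac{\beta}{1+\beta}v$, which will manufacture exactly the exponent $\frac{\beta}{1+\beta}v$ of \eqref{bound_lambda_gamma}. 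As a preliminary step I would pin down $|\nu|$ on $\Gamma_Y$ in terms of $u$: combining the redshift bound \eqref{redshift_nu_final} with the localization \eqref{redshift_u_final} yields $v_R(u) = -\frac{1}{2K_+}\log u + O(1)$ up to $\delta$-errors in the exponent, whence $|\nu|(u,v_R(u)) \sim u^{-1}$ up to a factor $u^{\pm O(\delta)}$; the no-shift transport \eqref{noshift_nu_final} and the bounded width \eqref{noshift_v_final} then propagate this to $|\nu|(u,v_Y(u)) \sim u^{-1}$ in the same sense, with $v_Y(u) = -\frac{1}{2K_+}\log u + O(1)$.

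For the $\nu$ estimate \eqref{bound_nu_gamma} I would integrate $\partial_v \log|\nu| = \kappa(2K - m^2 r|\phi|^2)$ (equation \eqref{dulambda_kappa}, in the form \eqref{blueshift_nu_explicit}) from $\Gamma_Y$ to $\gamma$. By \eqref{bound_2K_blueshift}, \eqref{kappaK_lowerbound_blueshift} and \eqref{blueshift_kappa} the integrand equals $-2K_{-} + O(\varepsilon)$, modulo the exponentially small term $m^2 r|\phi|^2$ controlled by \eqref{phi_blueshift_final}, so the transport factor is $e^{-(2K_{-} + O(\varepsilon))\beta v_Y(u)}$. Multiplying by $|\nu|(u,v_Y(u)) \sim u^{-1}$ and converting $\beta v_Y(u)$ into a power of $u$ via $v_Y(u) \approx -\frac{1}{2K_+}\log u$ gives the net power $u^{-1 + \frac{K_{-}}{K_+}\beta}$, with the $\delta$-errors (preliminary step) and $\varepsilon$-errors (blueshift integrand) collected into the exponents $-\alpha$ and $+\tilde\alpha$, which by construction satisfy $0 < \alpha, \tilde\alpha \lesssim \delta + \varepsilon$.

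The $\lambda$ estimate \eqref{bound_lambda_gamma} I would derive by controlling $\frac{\nu}{1-\mu}$ through the recast Raychaudhuri equation \eqref{raych_recast}. Since $\lambda < 0$ in $J^+(\Gamma_Y)$, the integrand $\frac{r|\partial_v\phi|^2}{\lambda}$ is non-positive, so $\big|\frac{\nu}{1-\mu}\big|(u,v') \le \big|\frac{\nu}{1-\mu}\big|(u,v_Y(u))$ holds automatically for every $v' \in [v_Y(u), v_\gamma(u)]$; together with $(1-\mu)(u,v_Y(u)) = -c_{\mu}^Y + o(1)$ (the refinement \eqref{bound_gammaY}) and the preliminary step, writing $-\lambda = \kappa|\nu|\big/\big|\frac{\nu}{1-\mu}\big|$ immediately gives the lower bound for $-\lambda$ on $\gamma$. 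The matching upper bound requires a lower bound on $\big|\frac{\nu}{1-\mu}\big|$, i.e. boundedness of $\int_{v_Y(u)}^{v_\gamma(u)} \frac{r|\partial_v\phi|^2}{|\lambda|}\,dv'$. Feeding the lower bound on $|\lambda|$ just obtained back into this integral and using \eqref{phi_blueshift_final} with \eqref{curlyC_ongamma}, the integrand is dominated by $e^{[-2c(s) + (10K_{-} + O(\varepsilon))\beta]\,v_Y(u)}$; the exponent is negative with room precisely because $\beta < \frac{c(s)}{7K_{-}}$ in \eqref{def_beta} forces $10K_{-}\beta < \tfrac{10}{7}c(s) < 2c(s)$. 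Hence the integral converges and the transport factor for $\frac{\nu}{1-\mu}$ is bounded below for $v_1$ large, closing the two-sided bound; via $\lambda = \kappa(1-\mu)$ with $\kappa \to 1$ one reads off \eqref{bound_lambda_gamma} with $C_{K_{-}} < c_2 < 2C_{K_{-}}$ (cf. \eqref{def_CK}).

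This last point is the main obstacle: the correction integral in the recast Raychaudhuri equation carries a factor $1/|\lambda|$ that a priori blows up as $\lambda \to 0$, and only the quantitative interplay between the scalar-field decay rate $c(s)$ and the blueshift rate $K_{-}$—encoded in the bound $\beta < c(s)/(7K_{-})$—keeps it integrable. For the final positivity claim, since $0 < \alpha \lesssim \delta + \varepsilon$, for any fixed $\beta \in \big(0, c(s)/(7K_{-})\big)$ it then suffices to choose $\delta \in (0,\eta)$ and $\varepsilon$ small, compatibly with the smallness requirements of propositions \ref{prop: redshift}--\ref{prop: blueshift} and with $U$ small so that $\gamma \cap \{v \ge v_1\}$ is covered, to guarantee $\frac{K_{-}}{K_+}\beta - \alpha(\delta,\beta,\varepsilon) > 0$.
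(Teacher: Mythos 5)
Your proposal is correct, and for the $\nu$-bound \eqref{bound_nu_gamma} it follows the paper's proof essentially verbatim: transport $|\nu|$ along ingoing rays via \eqref{blueshift_nu_explicit} with integrand $-2K_{-}+O(\varepsilon)$, chain back through the no-shift and redshift regions to get $|\nu|(u,v_Y(u))\sim u^{-1+O(\delta)}$, and convert $\beta v_Y(u)$ into powers of $u$ via \eqref{old_estimate_u} and the Taylor trick \eqref{taylor_exp}. For the $\lambda$-bound \eqref{bound_lambda_gamma}, however, your route is genuinely reorganized. The paper integrates the equation $\partial_u \lambda = \lambda\,\frac{\nu}{1-\mu}(2K-m^2r|\phi|^2)$ in the \emph{$u$-direction} from $\Gamma_Y$, which forces it to first establish two-sided bounds on the $u$-integral $\int_{u_Y(v)}^u \frac{\nu}{1-\mu}\,du'$; this in turn requires the change-of-variables identity \eqref{mini_BV} (converting $u$-integrals of $\nu$ along $\Gamma_Y$ into $v$-integrals of $\lambda$) together with \eqref{bound_gammaY} and \eqref{algconstr}. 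You bypass all of this with the pointwise identity $-\lambda = \kappa\,|\nu|\big/\tfrac{\nu}{1-\mu}$ and pure $v$-direction arguments: the Raychaudhuri monotonicity of $\tfrac{\nu}{1-\mu}$ gives the upper bound on that factor for free, and the $v$-transport of $|\nu|$ (already needed for \eqref{bound_nu_gamma}) supplies the rest. The analytic core is the same in both proofs — in particular the feedback step, where the lower bound on $|\lambda|$ is fed into $\int \frac{|\partial_v\phi|^2}{|\lambda|}\,dv'$ and the exponent is killed precisely by $\beta < \frac{c(s)}{7K_{-}}$, reproducing the paper's computation \eqref{computation_exponent} — but your organization avoids \eqref{mini_BV} and the two $u$-integral estimates \eqref{preliminary_lambda_blueshift2}, \eqref{preliminary_lambda_blueshift3}, which is a modest simplification.

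One point you should make explicit: the feedback integral runs over points $(u,v')$ with $v'\in[v_Y(u),v_\gamma(u)]$, which lie strictly inside the early blueshift region, while you state the lower bound on $|\lambda|$ only on $\gamma$. This is harmless for two independent reasons, but one of them must be said: either invoke $\partial_u|\lambda|<0$ (from \eqref{dulambda_kappa} and \eqref{blueshift_final_K}, as the paper does) to pass from $(u,v')$ to $(u_\gamma(v'),v')$, or note that your derivation of the lower bound applies verbatim at any interior point, since $v'-v_Y(u)\le \frac{\beta}{1+\beta}v'$ there. Also, to land $c_2$ strictly below $2C_{K_{-}}$ as the lemma requires, use \eqref{bound_2K_blueshift} with $\kappa\le 1$ directly in the transport exponent (the integrated $\kappa$-correction from \eqref{blueshift_kappa} is $o(1)$ because $e^{-2\mathcal{C}_s}$ is summable over the interval), rather than the cruder \eqref{kappaK_lowerbound_blueshift}, which would only give $c_2$ near $2C_{K_{-}}$.
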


\begin{remark}[on the estimates along the curve $\gamma$] \label{rmk:lambda_decay}
As $\beta$ tends to zero (see \eqref{def_beta}), the curve $\gamma$ approaches the curve of constant radius $\Gamma_Y$. The importance of a non-zero value for $\beta$ is apparent in the results of lemma \ref{lemma:lambdanu_blueshift}, showing that, along  $\gamma$:
\begin{itemize}
\item Not only the quantity $|\lambda|$ is bounded (as occurs in the redshift, no-shift and early blueshift regions, see \eqref{redshift_lambda_final}, \eqref{noshift_lambda_final} and \eqref{boundlambda_blueshift}) but it also decays at a rate which is integrable in $[v_1, +\infty)$; this result will be vital to prove that $r$ is bounded and bounded away from zero at the Cauchy horizon;
\item The ``blow-up'' behaviour of $|\nu|$ in the $u$ coordinate (property already present in the previous regions, see \eqref{bound_nu_eventhorizon}, \eqref{redshift_nu_final}, \eqref{noshift_nu_final}) is now integrable in $u$: this can be used in alternative to the previous point to show the stability of the Cauchy horizon, once we propagate this result to $J^+(\gamma)$.
\end{itemize}
We also notice that, in $J^+(\Gamma_Y) \cap J^{-}(\gamma)$, requiring $U$ to be small is equivalent to requiring suitably large values of $v_1$, due to the spacelike nature of $\gamma$ and its location in the black hole interior. Indeed, we showed that every $u \in [0, u_Y(v_1)]$ satisfies \eqref{noshift_u_final}:
\[
 u \sim e^{-2K_+ v_Y(u)},
\]
and a similar relation holds for points along $\gamma$, since $v_Y(u) = (1+\beta)^{-1}v_{\gamma}(u)$. On the other hand, in the region $J^+(\gamma)$ the restrictions on the $v$ coordinate do not constrain the ingoing null coordinate.
\end{remark}

\begin{proof}[proof of lemma \ref{lemma:lambdanu_blueshift}] The proof follows the arguments of \cite[lemma 6.6]{CGNS2}, where however Reissner-Nordstr{\"o}m-de Sitter initial data were prescribed along the event horizon. For convenience of the reader, we present an explicit proof in appendix \ref{appendixProofGamma}.
\end{proof}

\subsection{Late blueshift region} \label{section:lateblueshift}
In the causal future of the curve $\gamma$, also dubbed the \textbf{late blueshift region}, the dynamical black hole departs considerably\footnote{By this, we mean that we have less control on the main dynamical quantities, e.g. $K$ and $\varpi$, with respect to the previous regions that we analysed. A precise notion of  `distance' between the dynamical black-hole and the fixed sub-extremal Reissner-Nordstr{\"o}m-de Sitter solution depends on the norm used to control the PDE variables.} from the fixed sub-extremal solution. After showing that  the upper bounds in  \eqref{bound_lambda_gamma} and \eqref{bound_nu_gamma}  propagate to $J^+(\gamma)$, it will follow that $r$ remains bounded away from zero in this entire region. This will be the key ingredient to show the existence, for the dynamical model, of a Cauchy horizon $\mathcal{CH}^+$ beyond which the metric $g$ can be continuously extended. We stress that the smallness of the region $J^+(\gamma)$ is what allows us to propagate decaying estimates  despite the adverse contribution of the blueshift effect (see already lemma \ref{lemma:r_bounded_future}). 

The core idea of the proof is again a bootstrap argument involving the quantity $\partial_v \phi$. The proof, however, differs significantly from the bootstrap procedures of the previous regions, due to the lack of an upper bound\footnote{The main obstructions to this result stemming from the quadratic term $|D_u \phi|^2$ in \eqref{duvarpi2} and the term $\kappa^{-1}$ in \eqref{dvvarpi}. The renormalized Hawking mass was shown to blow up  in the real and massless case \cite{CGNS4} for a class of initial data and by further assuming a lower bound on $|\partial_v \phi|$ along the event horizon.} on the quantities $|\varpi|$ and $|K|$ and due to the absence of a lower bound on $\kappa$. A natural way to close the argument is to control $\phi$ and $D_u \phi$ by integrating the estimate on $\partial_v \phi$ that we bootstrap from $\gamma$.  This is also possible due to a bound on $\partial_v \log \Omega^2$, which is related to $K$ via \eqref{waveeqn_log_new}.

The technical difficulties of the proof can be summarised as follows. Due to the presence of the non-constant charge function $Q$, we cannot apply the soft analysis used in \cite{CGNS2, CGNS4} to propagate the bounds on $|\nu|$ and $|\lambda|$  from the curve $\gamma$. A bootstrap argument lets us circumvent this difficulty and obtain additional estimates on $|\phi|$, $|\partial_v \phi|$, $|D_u \phi|$ and $|Q|$.
Two core ingredients of the proof are given by a bound on $\partial_v \log \Omega^2$, which also plays a crucial role in \cite{VdM1}, and by the sub-extremality condition $\frac{K_{-}}{K_+} > 1$.


\begin{proposition}[estimates in the late blueshift region] \label{prop:lateblueshift}
Under the assumptions of proposition \ref{prop: blueshift}, if $U$ and $\beta$ are sufficiently small and if  $\varepsilon$ is chosen small so that (see lemma \ref{lemma:lambdanu_blueshift})
\[
\frac{K_{-}}{K_+} \beta - \alpha( \beta, \varepsilon) > 0,
\]
then the following inequalities hold for every $(u, v) \in J^+(\gamma) \cap \{r \ge r_{-} - \varepsilon, \, v \ge v_1 \}$:
\begin{align}
0 <-\lambda(u, v) &\lesssim \frac{\Omega^2(u, v)}{\Omega^2(u, v_{\gamma}(u)) }|\lambda|(u, v_{\gamma}(u))+ e^{-\min\{2c(s)-4\tau, 2K_{-}-\tau\}v}, \label{lateblueshift_final_lambda} \\
0 < -\nu(u, v) &\lesssim u^{-1 + \frac{K_{-}}{K_+}\beta - \alpha},  \label{lateblueshift_final_nu}\\
|\partial_v \phi|(u, v) &\lesssim e^{-\br{c(s) -\tau}v}, \label{lateblueshift_final_dvphi} \\
|\phi|(u, v) &\lesssim e^{-\br{c(s) -\tau}v_{\gamma}(u)}, \label{lateblueshift_final_phi} \\
|D_u \phi|(u, v) &\lesssim u^{-1+\frac{K_{-}}{K_+}\beta -\alpha} e^{-\br{c(s) -\tau}v_{\gamma}(u)}, \label{lateblueshift_final_Duphi} \\
\partial_v \log \Omega^2(u, v) &=  -2K_{-} + O(\varepsilon), \label{lateblueshift_final_dvlog} \\
|Q|(u, v) & \lesssim |Q|(u, v_{\gamma}(u)) +e^{-2\br{c(s) -\tau}v_{\gamma}(u)},
\end{align}
where $\tau > 0$, $\tau = O(\beta)$.
\end{proposition}
\begin{proof}
As for the previous proofs, we will repeatedly use that $0 < \kappa \le 1$ and $\nu  < 0$ in $\mathcal{P}$, and that $\lambda < 0$ in $J^+(\gamma)$ (see lemma \ref{lemma:constantu} and lemma \ref{lemma:lambdanu_blueshift}). The value of $\varepsilon$ is taken small so that, in particular:
\begin{equation} \label{bounds_r_lateblueshift}
\frac34 r_{-} < r_{-} - \varepsilon \le r \le Y \quad \text{ in } J^{+}(\Gamma_Y) \cap \{r \ge r_{-} - \varepsilon, \, v \ge v_1 \}.
\end{equation} 
Moreover, we will use that
\begin{equation} \label{upperBoundcs}
c(s) < 2K_{-}.
\end{equation}
The above follows from  \eqref{def_cs} and from the fact that $\frac{K_{-}}{K_+} > 1$ (see appendix A in \cite{CGNS3}).

\textbf{Setting up the bootstrap procedure}: Define $E$ as the set of points $q$ in $\mathcal{D} \coloneqq J^+(\gamma) \cap \{r \ge r_{-} - \varepsilon, \, v \ge v_1\}$ such that the following inequalities hold for every $(u, v)$ in $J^{-}(q) \cap \mathcal{D}$:
\begin{align}
|r \partial_v \phi|(u, v) &\le M e^{-\br{c(s) -\tau}v}, \label{phi_bootstrap_lateblueshift}\\
|\lambda|(u, v) &\le  C(N_{\text{i.d.}}),\label{lambda_bootstrap_lateblueshift}  \\
\partial_v \log \Omega^2(u, v) &\le -K_{-}, \label{dvlog_bootstrap_lateblueshift}
\end{align}
 where $M=M(N_{\text{i.d.}}) > 0$ is a large constant and $\tau > 0$, $\tau = O(\beta)$. Moreover, notice that, whenever $(u, v)$ belongs to $\gamma$, we have $(u, v) = (u_{\gamma}(v), v) = (u, v_{\gamma}(u))$.  
 
\textbf{Closing the bootstrap}: let us fix $(u, v) \in E$. 
By integrating the bootstrap inequality \eqref{phi_bootstrap_lateblueshift} and using \eqref{phi_blueshift_final} and the above bounds on $r$, we can write:
\begin{equation} 
|\phi|(u, v) \le |\phi|(u, v_{\gamma}(u)) + C(N_{\text{i.d.}}, M) \int_{v_{\gamma}(u)}^v e^{-\br{c(s) -\tau}v'} dv'  \le C(N_{\text{i.d.}}, \eta, R, Y, M) e^{-\br{c(s) -\tau}v_{\gamma}(u)}. \label{bound_phi_lateblueshift}
\end{equation}
Since the constants $R$, $Y$ and $\eta$ will be left unchanged in the next steps, we can absorb those in $C = C(N_{\text{i.d.}}, M) > 0$.

In the causal future of $\gamma$, the charge function $Q$ remains bounded. Indeed, by integrating \eqref{maxwelleqnv} and using \eqref{bounds_r_lateblueshift},  \eqref{phi_bootstrap_lateblueshift} and  \eqref{bound_phi_lateblueshift}:
\begin{equation} \label{bound_Q_lateblueshift}
|Q|(u, v) \le |Q|(u, v_{\gamma}(u)) + C(N_{\text{i.d.}}, M) e^{-2(c(s)-\tau)v_{\gamma}(u)}.
\end{equation}

Now, \eqref{rwaveeqn}, \eqref{def_kappa} and the bounds \eqref{bounds_r_lateblueshift}, \eqref{dvlog_bootstrap_lateblueshift},  \eqref{bound_phi_lateblueshift} and  \eqref{bound_Q_lateblueshift}  give
\[
|\partial_v (r\nu)|(u, v) \lesssim \Omega^2(u, v)\le  - \frac{\Omega^2 (u, v)}{K_{-}} \partial_v \log \Omega^2 (u, v) = - \frac{\partial_v [ \Omega^2](u, v)}{K_{-}}.
\]
After integrating the latter from $\gamma$ and exploiting  \eqref{bound_nu_gamma},  \eqref{def_kappa} and the boundedness of $r$:
\[
|\nu|(u, v) \lesssim |\nu|(u, v_{\gamma}(u)) + \Omega^2(u, v_{\gamma}(u)) \le C(N_{\text{i.d.}}) u^{-1 + \frac{K_{-}}{K_+} \beta - \alpha}.
\]

To estimate $|D_u \phi|$, we first notice that bound \eqref{Duphi_blueshift_final}, obtained for the early blueshift region, and \eqref{bound_nu_gamma} entail
\[
|D_u \phi|(u, v_{\gamma}(u)) \lesssim u^{-1 + \frac{K_{-}}{K_+}\beta - \alpha} e^{-(c(s)- \tau)v_{\gamma}(u)}.
\]
Therefore, the relations \eqref{bounds_r_lateblueshift}, \eqref{phi_bootstrap_lateblueshift}, \eqref{dvlog_bootstrap_lateblueshift}, \eqref{bound_phi_lateblueshift},  the fact that $\Omega^2=-4\nu \kappa$ and the above bounds on $\nu$ and $Q$ can be used to integrate \eqref{waveeqn_phi3_r} as follows: 
\begin{align*}
|r D_u \phi|(u, v) &\lesssim |r D_u \phi|(u, v_{\gamma}(u)) + \int_{v_{\gamma(u)}}^v |\nu \partial_v \phi|(u, v') dv' + C(N_{\text{i.d.}}) \int_{v_{\gamma(u)}}^v   \Omega^2 |\phi| (u, v') dv'   \\
&\lesssim u^{-1 + \frac{K_{-}}{K_+}\beta - \alpha} e^{-(c(s)- \tau)v_{\gamma}(u)} + C(N_{\text{i.d.}}, M)  u^{-1 + \frac{K_{-}}{K_+}\beta - \alpha} e^{-(c(s) - \tau)v_{\gamma}(u)} \\
&- C(N_{\text{i.d.}}, M) e^{-(c(s) - \tau)v_{\gamma}(u)} \int_{v_{\gamma}(u)}^v \frac{\partial_v [\Omega^2](u, v')}{K_{-}} dv',
\end{align*}
where all above constants are positive. So, since $\Omega^2(u, v_{\gamma}(u)) \lesssim u^{-1+\frac{K_{-}}{K_+}\beta-\alpha}$:
\begin{equation} \label{lateblueshift_temp_duphi}
|D_u \phi|(u, v) \le C(N_{\text{i.d.}}, M) u^{-1 + \frac{K_{-}}{K_+}\beta - \alpha} e^{-(c(s)-\tau)v_{\gamma}(u)}.
\end{equation}

To close bootstrap inequality \eqref{dvlog_bootstrap_lateblueshift}, we first note that the results \eqref{blueshift_final_dvlog} and \eqref{bound_2K_blueshift} obtained in the early blueshift region give
\begin{align}
|\partial_v \log \Omega^2(u_{\gamma}(v), v) + 2K_{-}| &\le |\partial_v \log \Omega^2(u_{\gamma}(v), v) - 2K(u_{\gamma}(v), v)| + |2K(u_{\gamma}(v), v) + 2K_{-} | \nonumber \\
&= o(1) + O(\varepsilon), \label{preliminary_dvlog_gamma}
\end{align}
where $o(1)$ refers to the limit $v_1 \to +\infty$. On the other hand, it is useful to notice that \eqref{phi_bootstrap_lateblueshift} and the previous bound on $|D_u \phi|$ yield\footnote{We recall that $\frac{K_{-}}{K_+}\beta - \alpha > 0$ by assumptions.} 
\begin{align}
\int_{u_{\gamma}(v)}^u |D_u \phi \, \partial_v \phi|(u', v) du' = o(1) \label{preliminary_lateblueshift1},
\end{align}
as $v_1 \to +\infty$.

Furthermore, \eqref{dvlog_bootstrap_lateblueshift} also holds in $J^+(\Gamma_Y) \cap J^{-}(E)$ (see \eqref{blueshift_final_K} and \eqref{blueshift_final_dvlog}). By applying lemma \ref{lemma:IB_Omega} with $D=K_{-}$:
\begin{equation}
\int_{u_{\gamma}(v)}^u \Omega^2(u', v) du' \le C(N_{\text{i.d.}}) (v-v_{\gamma}(u)) e^{-K_{-} \frac{\beta}{1+\beta}v} \le  e^{-K_{-} \frac{\beta}{2+2\beta}v}, \label{preliminary_lateblueshift2}
\end{equation}
for $v_1$ sufficiently large.
Now, if we integrate \eqref{waveeqn_log}  from the curve $\gamma$ and use \eqref{bounds_r_lateblueshift}, \eqref{lambda_bootstrap_lateblueshift}, \eqref{preliminary_dvlog_gamma}, \eqref{preliminary_lateblueshift1}, \eqref{preliminary_lateblueshift2}  and the previous bound on $|\nu|$:
\begin{align*}
 \partial_v \log \Omega^2(u, v) +  2K_{-} +O(\varepsilon) + o_{v_1}(1)&\le   \partial_v \log \Omega^2(u, v) - \partial_v \log \Omega^2(u_{\gamma}(v), v)  \\
&= o_{v_1}(1) + o_U(1),
\end{align*}
where $o_U(1)$ refers to the limit $U \to 0$.
This closes bootstrap inequality  \eqref{dvlog_bootstrap_lateblueshift}. Notice that a similar reasoning gives
\[
\partial_v \log \Omega^2(u, v) = -2K_{-}+O(\varepsilon)
\]
for $v_1 $ large and $U$ small. Hence, after integration:
\begin{equation} \label{estimateOmega2Blueshift}
e^{-(2K_{-} +\tau)(v- v_{\gamma}(u))} \le \frac{\Omega^2(u, v)}{\Omega^2(u, v_{\gamma}(u))} \le e^{-(2K_{-} -\tau)(v- v_{\gamma}(u))},
\end{equation}
where $\tau > 0$, $\tau = O(\beta)$ and we assumed that $\varepsilon$ is small compared to $\beta$.\footnote{Recall that for every small choice of $\beta$, the value of $\varepsilon$ may possibly be taken smaller so that $\frac{K_{-}}{K_+}\beta-\alpha > 0$. See also the proof of lemma \ref{lemma:lambdanu_blueshift}.}

To close the bootstrap on $|\lambda|$: integrate \eqref{raych_v} and use \eqref{bounds_r_lateblueshift} to get
\[
\frac{|\lambda|}{\Omega^2}(u, v) \le \frac{|\lambda|}{\Omega^2}(u, v_{\gamma}(u)) + C(N_{\text{i.d.}}) \int_{v_{\gamma}(u)}^v \frac{|\partial_v \phi|^2}{\Omega^2}(u, v')dv'. 
\]
Then, after using \eqref{phi_bootstrap_lateblueshift} and \eqref{estimateOmega2Blueshift}:
\begin{align}
|\lambda|(u, v) \le \frac{\Omega^2(u, v)}{\Omega^2(u, v_{\gamma}(u))} \br{|\lambda|(u, v_{\gamma}(u)) + C(N_{\text{i.d.}}, M)e^{-(2K_{-} + \tau)v_{\gamma}(u)} \int_{v_{\gamma}(u)}^v e^{-(2c(s) -2K_{-} + \tau)v'}dv'}.
\end{align}
We finally obtain \eqref{lateblueshift_final_lambda} after distinguishing between the cases $c(s) \le K_{-}$ and $c(s) > K_{-}$, and exploiting \eqref{estimateOmega2Blueshift} again. In particular, \eqref{lambda_bootstrap_lateblueshift} is closed.

For the next step, it is useful to notice that \eqref{estimateOmega2Blueshift}, \eqref{bound_phi_lateblueshift}, \eqref{upperBoundcs} and \eqref{bound_nu_gamma} (recall that $\Omega^2 \sim |\nu|$ along $\gamma$) give:
\begin{align}
\int_{u_{\gamma}(v)}^u \br{\Omega^2 |\phi|}(u', v)du' &\le C(N_{\text{i.d.}}, M) \int_{u_{\gamma}(v)}^u \Omega^2(x, v_{\gamma}(x)) e^{-(2K_{-} - \tau)(v - v_{\gamma}(x))} e^{-(c(s) -\tau)v_{\gamma}(x)} dx \nonumber \\
&\le C(N_{\text{i.d.}}, M) u^{\frac{K_{-}}{K_+}\beta - \alpha} e^{-(c(s) -\tau)v} \label{intOmegaphi}.
\end{align}

To close the bootstrap inequality on $\partial_v \phi$, we integrate \eqref{waveeqn_phi2_r} (see  \eqref{integrate_dvphi} for an explicit expression) and use \eqref{def_kappa}, \eqref{phi_blueshift_final}, the boundedness of $r$, of the charge $Q$, \eqref{lateblueshift_final_lambda},  \eqref{intOmegaphi}, \eqref{lateblueshift_temp_duphi}, \eqref{bound_lambda_gamma} and \eqref{upperBoundcs}:
\begin{align*}
|r \partial_v \phi|(u, v) &\lesssim |r \partial_v \phi|(u_{\gamma}(v), v) + \int_{u_{\gamma}(v)}^u \br{|\lambda D_u \phi| + \Omega^2 |\phi|}(u', v) du' \\
&\le C(N_{\text{i.d.}}) e^{ - (c(s) -\tau)v}+ C(N_{\text{i.d.}}, M) e^{-(2K_{-}- \tau)v} \int_{u_{\gamma}(v)}^u e^{(2K_{-}-c(s))v_{\gamma}(x)}x^{-1+\frac{K_{-}}{K_+}\beta -\alpha} dx \\
&+ C(N_{\text{i.d.}}, M) u^{\frac{K_{-}}{K_+}\beta-\alpha} e^{-\min \{2c(s) -2\tau, 2K_{-}-\tau\}v}  +C(N_{\text{i.d.}}, M) u^{\frac{K_{-}}{K_+} \beta - \alpha}  e^{-(c(s)-\tau)v}   \\
&\le \br{C(N_{\text{i.d.}}) +  C(N_{\text{i.d.}}, M)u^{\frac{K_{-}}{K_+}\beta -\alpha}  } e^{-(c(s) -\tau)v},
\end{align*}
assuming that $\beta$ and $\varepsilon$ are small compared to the initial data. Inequality \eqref{phi_bootstrap_lateblueshift} is then improved, provided that $U$ is sufficiently small.
\end{proof}

\begin{lemma} \label{lemma:r_bounded_future}
For every $\tilde{\varepsilon} > 0$ small, there exists $U_{\tilde{\varepsilon}} > 0$ such that
\[
r(u, v) > r_{-} - 2\tilde{\varepsilon}
\]
for every $(u, v)$ in $J^+(\gamma) \cap \{u < U_{\tilde{\varepsilon}}\}$.
\end{lemma}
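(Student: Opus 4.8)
The plan is to integrate $\lambda=\partial_v r$ along each outgoing ray $\{u=\text{const}\}$ issuing from the curve $\gamma$, using that the total decrease of $r$ in the $v$--direction is summable and tends to zero as $u\to 0$. The key preliminary observation is that $r$ is already very close to $r_{-}$ along $\gamma$ when $u$ is small. Indeed, taking the boundary values on $\gamma$ of the inequality $(1-\mu)(r(u,v),\varpi_+,Q_+)\lesssim e^{-2\mathcal{C}_s(u,v)}$ established inside the early blueshift region (obtained right before \eqref{bounds_r_improved}), and recalling that $r\mapsto(1-\mu)(r,\varpi_+,Q_+)$ has a simple zero at $r_{-}$ with $\partial_r(1-\mu)(r_{-},\varpi_+,Q_+)=-2K_{-}<0$, one gets a quantitative refinement of \eqref{blueshift_r_final} along $\gamma$: from $-2K_{-}(r-r_{-})+O(|r-r_{-}|^2)\lesssim e^{-2\mathcal{C}_s}$ it follows that $r(u,v_\gamma(u))>r_{-}-C\,e^{-2\mathcal{C}_s(u,v_\gamma(u))}$. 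By \eqref{curlyC_ongamma} we have $\mathcal{C}_s(u,v_\gamma(u))=a\,v_\gamma(u)$ with $a>0$ (guaranteed by the choice of $\beta$), and since $\gamma$ is spacelike with $v_\gamma(u)=(1+\beta)v_Y(u)\to+\infty$ as $u\to 0$ (by \eqref{old_estimate_u}), the right-hand side can be made smaller than $\tfrac{\tilde{\varepsilon}}{2}$ by choosing $u$ small.

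Next I would run a continuity (bootstrap) argument along the ray $\{u=\text{const}\}$. Fix $\tilde{\varepsilon}$ small, with $2\tilde{\varepsilon}\le\varepsilon$ so that $\{r>r_{-}-2\tilde{\varepsilon}\}\subseteq\{r\ge r_{-}-\varepsilon\}$, where the estimates of proposition \ref{prop:lateblueshift} hold. For fixed small $u$, let $v^{\ast}$ be the supremum of the values $v\ge v_\gamma(u)$ such that $r(u,\cdot)>r_{-}-2\tilde{\varepsilon}$ on $[v_\gamma(u),v]$; this set is non-empty since $r(u,v_\gamma(u))>r_{-}-\tfrac{\tilde{\varepsilon}}{2}$ by the previous paragraph. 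On $[v_\gamma(u),v^{\ast})$ the bound \eqref{lateblueshift_final_lambda} applies, so that
\[
\int_{v_\gamma(u)}^{v^{\ast}}|\lambda|(u,v')\,dv'\ \lesssim\ e^{-(2K_{-}-\overline{C}\varepsilon)\frac{\beta}{1+\beta}v_\gamma(u)},
\]
and the right-hand side tends to $0$ as $u\to 0$. Hence, for $v\in[v_\gamma(u),v^{\ast})$,
\[
r(u,v)\ =\ r(u,v_\gamma(u))+\int_{v_\gamma(u)}^{v}\lambda(u,v')\,dv'\ >\ \Big(r_{-}-\tfrac{\tilde{\varepsilon}}{2}\Big)-\tfrac{\tilde{\varepsilon}}{2}\ =\ r_{-}-\tilde{\varepsilon},
\]
once $u$ is small enough that the integral is below $\tfrac{\tilde{\varepsilon}}{2}$. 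This strict inequality $r(u,v)>r_{-}-\tilde{\varepsilon}>r_{-}-2\tilde{\varepsilon}$ forces $v^{\ast}=+\infty$: if $v^{\ast}$ were finite, continuity would give $r(u,v^{\ast})=r_{-}-2\tilde{\varepsilon}$, contradicting the strict bound. Thus $r(u,v)>r_{-}-2\tilde{\varepsilon}$ for every $v\ge v_\gamma(u)$, i.e. for every $(u,v)\in J^{+}(\gamma)\cap\{u<U_{\tilde{\varepsilon}}\}$, where $U_{\tilde{\varepsilon}}$ is the threshold determined above.

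The main (soft) obstacle is the apparent circularity: the decay of $\lambda$ in \eqref{lateblueshift_final_lambda} is only available where $r\ge r_{-}-\varepsilon$, which is precisely what we are propagating. The continuity argument resolves this, but it must be accompanied by the guarantee that the ray does not leave the maximal past set $\mathcal{P}$ before reaching the Cauchy horizon $\{v=+\infty\}$. This follows from the extension criterion (theorem \ref{thm:extension_principle}): on the rectangle $[0,u]\times[v_1,v^{\ast}]$ the area radius obeys $0<r_{-}-2\tilde{\varepsilon}<r\le R<r_{+}$ (using $r>r_{-}-\varepsilon$ in the earlier regions and the bootstrapped bound in the late blueshift part), so $r$ is bounded and bounded away from zero, and no breakdown can occur. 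I expect the remaining verifications — that $\mathcal{C}_s(u,v_\gamma(u))=a\,v_\gamma(u)\to+\infty$ and the elementary linearization of $1-\mu$ at its simple root $r_{-}$ — to be entirely routine.
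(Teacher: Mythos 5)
Your proof is correct, and its core is exactly the argument the paper outsources to \cite[lemma 7.2]{CGNS2}: propagate the lower bound on $r$ from $\gamma$ into $J^{+}(\gamma)$ along outgoing rays by a continuity argument, using the decay of $\lambda$ from proposition \ref{prop:lateblueshift} (which is only available where $r\ge r_{-}-\varepsilon$, whence the bootstrap) and the extension criterion of theorem \ref{thm:extension_principle} to exclude breakdown before $v=+\infty$; whether one integrates $\lambda$ in $v$, as you do, or $\nu$ in $u$, is immaterial, as remark \ref{rmk:lambda_decay} itself points out.

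Where you genuinely add something is your first paragraph. The only bound the paper states along $\gamma$ is \eqref{blueshift_r_final}, i.e.\ $r\ge r_{-}-\varepsilon$ with $\varepsilon$ fixed once and for all by the construction; feeding that into the continuity argument gives $r>r_{-}-\varepsilon-o(1)$, which proves the lemma only for $\tilde{\varepsilon}\gtrsim\varepsilon$ (sufficient for corollary \ref{coroll: cauchyhor}, where a single $\tilde{\varepsilon}<\frac{r_{-}}{2}$ is used, but not for \emph{every} small $\tilde{\varepsilon}$ with the curve $\gamma$ held fixed). Your sharpening --- combining $(1-\mu)(r(\cdot),\varpi_+,Q_+)\lesssim e^{-2\mathcal{C}_s}$, which indeed follows from $1-\mu<0$ in the trapped region together with the decays of $\varpi$ and $Q$ in proposition \ref{prop: blueshift}, with the fact that $(1-\mu)(\cdot,\varpi_+,Q_+)$ has a simple zero at $r_{-}$ of slope $-2K_{-}$ --- shows $r|_{\gamma}\to r_{-}$ as $u\to 0$ and removes this restriction without re-running the whole construction with $\varepsilon$ tied to $\tilde{\varepsilon}$; this is precisely the ingredient that in \cite{CGNS2} comes for free from the monotonicity of the Hawking mass and that must be replaced here. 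Two harmless slips to fix: on the past rectangle the correct upper bound is $r<r_{+}$, not $r\le R$ (the rectangle meets the redshift region, where $r>R$); and the rectangle fed to the extension criterion should start at $v_0$, or one should note that $[0,U]\times[v_0,v_1]\subset\mathcal{P}$ by the standing choices of $U$ and $v_1$, rather than starting it at $v_1$.
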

\begin{proof}
The proof is analogous to the one of lemma 7.2 in \cite{CGNS2}, where the results on $\nu$ and $\lambda$ exploited in such a proof can be replaced with those of proposition \ref{prop:lateblueshift}.
\end{proof}

\begin{corollary} \label{coroll: cauchyhor}
Let $U_{\tilde{\varepsilon}} > 0$ be the value provided by lemma \ref{lemma:r_bounded_future} for some $\tilde{\varepsilon} < \frac{r_{-}}{2}$. Then:
\[
[0, U_{\tilde{\varepsilon}}] \times [v_0, +\infty) \subset \mathcal{P},
\]
and the estimates of proposition \ref{prop:lateblueshift} hold in $J^+(\gamma) \cap \{u < U_{\tilde{\varepsilon}}\}$. Moreover, the limit
\[
\lim_{v \to +\infty} r(u, v) 
\]
exists and is finite for every $u \in [0, U_{\tilde{\varepsilon}}]$, and
\[
\lim_{u \to 0} \lim_{v \to +\infty} r(u, v) = r_{-}.
\]
\end{corollary}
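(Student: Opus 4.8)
The plan is to assemble the regional estimates of Sections~\ref{sec:EH}--\ref{section:lateblueshift} into uniform control of the area-radius and then invoke the extension criterion to fill out the strip. First I would fix $\tilde{\varepsilon} < \frac{r_{-}}{2}$ and let $U_{\tilde{\varepsilon}} > 0$ be the radius provided by lemma~\ref{lemma:r_bounded_future}, so that $r(u,v) > r_{-} - 2\tilde{\varepsilon} > 0$ on $J^+(\gamma) \cap \{u < U_{\tilde{\varepsilon}}\}$, shrinking $U_{\tilde{\varepsilon}}$ if necessary so that $U_{\tilde{\varepsilon}} \le u_Y(v_1)$. Combining this lower bound with the bound $r \le Y < r_+$ valid in the late blueshift region (see \eqref{bounds_r_lateblueshift}), with the early-blueshift and no-shift bounds $r_{-} - \varepsilon < r \le R < r_+$, and with the redshift bound $r_+ - r \lesssim \delta$, the function $r$ is bounded away from $0$ and bounded from above on all of $\mathcal{P} \cap \br{[0,U_{\tilde{\varepsilon}}] \times [v_1,+\infty)}$; on the remaining compact piece $[0,U_{\tilde{\varepsilon}}] \times [v_0,v_1]$ the same control follows from continuity of the solution together with the well-posedness results of section~\ref{section:wellposedness}.

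For the inclusion $[0,U_{\tilde{\varepsilon}}] \times [v_0,+\infty) \subset \mathcal{P}$, I would fix an arbitrary $v^* \in [v_0,+\infty)$ and work on the compact rectangle $[0,U_{\tilde{\varepsilon}}] \times [v_0,v^*]$. On the portion of this rectangle lying in $\mathcal{P}$ the previous paragraph gives uniform bounds $0 < L \le r \le R' < +\infty$, so theorem~\ref{thm:extension_principle} applies, yielding finiteness of $N(\mathcal{D})$ and a minimal extension time depending only on $N_0$. A standard continuation argument then shows this rectangle cannot contain a future boundary of $\mathcal{P}$: if some point of $[0,U_{\tilde{\varepsilon}}] \times [v_0,v^*]$ lay outside $\mathcal{P}$, the uniform existence time would let us extend the solution past the corresponding boundary, contradicting the maximality of $\mathcal{P}$. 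Hence the rectangle lies in $\mathcal{P}$, and since $v^*$ is arbitrary the whole strip does. I expect \textbf{this to be the main obstacle}, since it is where a priori bounds must be turned into genuine global existence up to the Cauchy horizon; the real content is packaged in lemma~\ref{lemma:r_bounded_future} (the upgrade of the lower bound on $r$) together with theorem~\ref{thm:extension_principle}. Once the lower bound $r > r_{-} - 2\tilde{\varepsilon}$ is in hand, choosing $\tilde{\varepsilon} \le \frac{\varepsilon}{2}$ gives $r > r_{-} - \varepsilon$ on $J^+(\gamma) \cap \{u < U_{\tilde{\varepsilon}}\}$, so this set sits inside the region $J^+(\gamma) \cap \{r \ge r_{-} - \varepsilon,\ v \ge v_1\}$ of proposition~\ref{prop:lateblueshift} (recall $J^+(\gamma) \subset \{v \ge v_1\}$), and the estimates of that proposition hold there.

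For the existence of the limit, fix $u \in [0,U_{\tilde{\varepsilon}}]$. To the future of $\gamma$ one has $\lambda < 0$ (lemma~\ref{lemma:constantu} together with lemma~\ref{lemma:lambdanu_blueshift}), so $v \mapsto r(u,v)$ is decreasing for $v \ge v_\gamma(u)$ and bounded below by $r_{-} - 2\tilde{\varepsilon} > 0$; a monotone bounded function converges, giving existence and finiteness of $\mathfrak{r}(u) \coloneqq \lim_{v \to +\infty} r(u,v)$. Equivalently, the integrable decay \eqref{lateblueshift_final_lambda} gives $\int_{v_\gamma(u)}^{+\infty} |\lambda|(u,v')\,dv' < +\infty$, so $r(u,v) = r(u,v_\gamma(u)) - \int_{v_\gamma(u)}^{v} |\lambda|(u,v')\,dv'$ converges.

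Finally, for $\lim_{u \to 0}\mathfrak{r}(u) = r_{-}$ I would pinch $\mathfrak{r}(u)$ for $0 < u < U_{\tilde{\varepsilon}}$: monotonicity in $v$ gives $\mathfrak{r}(u) \le r(u,v_\gamma(u)) < Y < r_{-} + \Delta < r_{-} + \varepsilon$, using $\lambda < 0$ and $\Delta < \varepsilon$ (see \eqref{Y_constraint} and \eqref{closebootstrapDelta}), while lemma~\ref{lemma:r_bounded_future} gives $\mathfrak{r}(u) \ge r_{-} - 2\tilde{\varepsilon} \ge r_{-} - \varepsilon$. Hence $|\mathfrak{r}(u) - r_{-}| < \varepsilon$ uniformly on $(0,U_{\tilde{\varepsilon}})$. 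Since $\mathfrak{r}(u)$ is an intrinsic quantity of the fixed solution, independent of the auxiliary parameters $\varepsilon, \Delta, Y, \beta$, I may rerun the construction for any admissibly small $\varepsilon$: given $\varepsilon' > 0$, choosing $\varepsilon < \varepsilon'$ and $\tilde{\varepsilon} \le \frac{\varepsilon}{2}$ produces $U_{\tilde{\varepsilon}} > 0$ with $|\mathfrak{r}(u) - r_{-}| < \varepsilon'$ for all $0 < u < U_{\tilde{\varepsilon}}$, so that $\lim_{u \to 0}\lim_{v \to +\infty} r(u,v) = r_{-}$. This one-sided limit is consistent with the boundary value $\lim_{v \to +\infty} r(0,v) = r_+$ fixed by assumption~\eqref{assumption:r}.
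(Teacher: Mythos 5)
Your proof is correct and takes essentially the same route as the paper's (much terser) three-line argument: the inclusion via theorem \ref{thm:extension_principle} combined with lemma \ref{lemma:r_bounded_future}, existence of the limits via monotonicity of $r$ in $J^+(\gamma)$ (i.e.\ $\lambda<0$ and $\nu<0$), and the final limit again from lemma \ref{lemma:r_bounded_future}. The details you supply — the continuation argument filling the strip, the choice $\tilde\varepsilon \le \frac{\varepsilon}{2}$ to place $J^+(\gamma)\cap\{u<U_{\tilde\varepsilon}\}$ inside the region of proposition \ref{prop:lateblueshift}, and the pinching $r_{-}-2\tilde\varepsilon \le \lim_{v\to+\infty}r(u,v) \le Y < r_{-}+\Delta$ with the auxiliary parameters rerun arbitrarily small — are exactly what the paper leaves implicit.
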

\begin{proof}
The inclusion follows from the extension criterion (see remark \ref{rmk:extension_principle}) and lemma \ref{lemma:r_bounded_future}. The limits are well-defined by the monotonicity of $r$ in $J^+(\gamma)$, namely the fact that $\lambda < 0$ (see \eqref{noshift_lambda_final} and lemma \ref{lemma:constantu}) and $\nu < 0$. The final limit follows from lemma \ref{lemma:r_bounded_future}.
\end{proof}

\subsection{A continuous extension} \label{section:C0extension}
The construction obtained so far allows us to represent the Cauchy horizon  $\mathcal{CH}^+$ of the dynamical black hole as the level set of a regular function. The problem of constructing a continuous extension is well understood (see e.g.  \cite{Dafermos_2005_uniqueness, CGNS4, LukOh1, VdM1}), so we only provide an overview.

In particular, let $(u, V)$ be a new coordinate system defined as follows. Given a sufficiently regular function $f \colon [v_0, +\infty) \to \bbR$ such that the limit of its first derivative as $v \to +\infty$ exists and such that
\begin{equation} \label{der_condition}
\frac{df(v)}{dv} \underset{v \to +\infty}{\longrightarrow} -2K_{-},
\end{equation}
we define
\begin{equation}\label{def_Vcoord}
V(v) \coloneqq 1 - \int_{v}^{+\infty} e^{f(y)}dy, \quad \forall\, v \in [v_0, +\infty).
\end{equation}
The choice $f(v) = -2K_{-} v$ was taken into account in \cite{LukOh1} to construct a continuous extension in the case $\Lambda = 0$, where the authors studied the evolution of two-ended asymptotically flat initial data. Here, we are going to consider any function $f$ such that
\begin{equation} \label{def_fprimed}
\frac{df(v)}{dv} = 2K(u_{\gamma}(v), v), \quad \forall\, v \in [v_0, +\infty),
\end{equation}
as done in \cite{KehleVdM1}. This gives, specifically,
\[
\mathcal{C H}^+ = V^{-1}(\{1\}).
\]
The solutions to our IVP expressed in the coordinate system $(u, V)$ will be denoted by a subscript, e.g. $r_V(u, V) = r(u, v)$ and $\lambda_V (u, V) = \partial_V r_V(u, V) = e^{-f(v)} \lambda(u, v)$. We designate the smallest value that $V$ can take by $V_0 = V(v_0)$.

In the following, we restrict the range of the ingoing null coordinate to the set $[0, U]$, where we use again the letter $U$ to denote a value chosen sufficiently small, so that corollary \ref{coroll: cauchyhor} holds for some fixed $\tilde{\varepsilon} < \frac{r_{-}}{2}$. 

In this context, we show that the metric $g$ and the scalar field $\phi$ admit a continuous extension beyond the Cauchy horizon, in the sense of theorem 11.1 in \cite{Dafermos_2005_uniqueness}. This is a strong signal for the potential failure of conjecture \ref{conjcontinuous}, whose definitive resolution however requires a complementary study of the exterior problem.

\begin{theorem}[A $C^0$ extension of $g$ and $\phi$] \label{thm:general_C0_extension}
For every $0 < \delta_U < U$, the functions $r_V$, $\Omega^2_V$ and $\phi_V$ admit a continuous extension to the set $[\delta_U, U] \times [V_0, 1]$. Moreover, on $[\delta_U, U]$, the functions $r_V(\cdot, 1)$ and $\Omega^2_V(\cdot, 1)$ are positive.
\end{theorem}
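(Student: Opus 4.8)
The plan is to work throughout in the $(u,V)$ coordinates and to exploit that $V=V(v)$ is a strictly increasing $C^1$ function of $v$ alone, with $\frac{dV}{dv}=e^{f(v)}>0$, so that $(u,v)\mapsto(u,V)$ is a $C^1$ diffeomorphism onto $[\delta_U,U]\times[V_0,1)$. Since $r$, $\Omega^2$ and $\phi$ are continuous in $(u,v)$ (they solve the characteristic IVP), the functions $r_V$, $\Omega_V^2$ and $\phi_V$ are already continuous on the open strip $[\delta_U,U]\times[V_0,1)$, and as $V=1$ corresponds to $v=+\infty$ it suffices to prove that each of $r_V$, $\phi_V$ and $\log\Omega_V^2=\log\Omega^2-f(v)$ converges, as $v\to+\infty$, to a function of $u$ that is continuous on $[\delta_U,U]$, with convergence uniform in $u\in[\delta_U,U]$; a uniform limit of continuous functions then furnishes the continuous extension to $V=1$. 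The restriction $u\ge\delta_U>0$ is essential here: it guarantees $v_{\gamma}(u)\le v_{\gamma}(\delta_U)<+\infty$, so that for all $v\ge v_{\gamma}(\delta_U)$ every point $(u,v)$ with $u\in[\delta_U,U]$ already lies in $J^+(\gamma)$, where the estimates of proposition \ref{prop:lateblueshift} hold with constants uniform in $u$.

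For $r_V$ and $\phi_V$ this is a direct consequence of the decay estimates. Writing $r(u,v)-\lim_{v'\to+\infty}r(u,v')=-\int_v^{+\infty}\lambda(u,v')\,dv'$ and using \eqref{lateblueshift_final_lambda}, namely $|\lambda|(u,v')\lesssim e^{-(2K_{-}-\overline{C}\varepsilon)\frac{\beta}{1+\beta}v'}$ with $2K_{-}-\overline{C}\varepsilon>0$, the tail integral is bounded by a constant multiple of $e^{-(2K_{-}-\overline{C}\varepsilon)\frac{\beta}{1+\beta}v}$ uniformly for $u\in[\delta_U,U]$; hence $r_V(u,V)\to r_\infty(u):=\lim_{v\to+\infty}r(u,v)$ uniformly. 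The limit exists and is finite by corollary \ref{coroll: cauchyhor}, and $r_\infty(u)\ge r_{-}-2\tilde\varepsilon>0$ by lemma \ref{lemma:r_bounded_future} with $\tilde\varepsilon<r_{-}/2$, which gives the positivity of $r_V(\cdot,1)$. The same argument applied to $\phi$, using the decay \eqref{lateblueshift_final_dvphi} of $|\partial_v\phi|$ (whose two contributions decay like $e^{-\mathcal{C}_s(u_{\gamma}(v),v)}$ and $e^{-(2K_{-}-\overline{C}\varepsilon)\frac{\beta}{1+\beta}v}\,e^{-\mathcal{C}_s(u,v_{\gamma}(u))}$, both integrable in $v$ and uniformly controlled on $[\delta_U,U]$), shows that $\phi_V(u,V)$ converges uniformly to a continuous limit.

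The remaining and most delicate point is the extension of $\Omega_V^2$. Since $\log\Omega_V^2(u,V)=\log\Omega^2(u,v)-f(v)$ and $f'(v)=2K(u_{\gamma}(v),v)$ by \eqref{def_fprimed}, I would show that $\partial_v\log\Omega^2(u,v)-f'(v)$ is integrable over $[v_{\gamma}(\delta_U),+\infty)$ uniformly in $u\in[\delta_U,U]$, so that $\log\Omega_V^2$ converges uniformly by the fundamental theorem of calculus; its limit $L(u)$ is then continuous and $\Omega_V^2(u,1)=e^{L(u)}>0$, which yields the positivity of $\Omega_V^2(\cdot,1)$. I split
\[
\partial_v\log\Omega^2(u,v)-f'(v)=\big[\partial_v\log\Omega^2(u,v)-\partial_v\log\Omega^2(u_{\gamma}(v),v)\big]+\big[\partial_v\log\Omega^2(u_{\gamma}(v),v)-2K(u_{\gamma}(v),v)\big].
\]
The second bracket is $O(e^{-\mathcal{C}_s(u_{\gamma}(v),v)})$ by the blueshift estimate \eqref{blueshift_final_dvlog} restricted to $\gamma$, and is exponentially small in $v$ since $\mathcal{C}_s$ grows linearly along $\gamma$ (this is exactly why $f'$ is anchored to the value of $2K$ on $\gamma$), hence integrable. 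The first bracket equals $\int_{u_{\gamma}(v)}^u\partial_u\partial_v\log\Omega^2(u',v)\,du'$, which I estimate term by term from the wave equation \eqref{waveeqn_log}: the term $\int|D_u\phi\,\partial_v\phi|\,du'$ is controlled by \eqref{preliminary_lateblueshift1}; the term $\int\Omega^2 r^{-2}\big(\tfrac12-Q^2r^{-2}\big)\,du'$ is bounded, using the boundedness of $r^{-1}$ and of $Q$ in $J^+(\gamma)$, by a constant multiple of $\int_{u_{\gamma}(v)}^u\Omega^2\,du'$, which decays by \eqref{Omega_improved_lateblueshift}; and the term $\int 2\lambda\nu r^{-2}\,du'$ is bounded by $r_{-}^{-2}\,\big(\sup_{u'}|\lambda|(u',v)\big)\int_{u_{\gamma}(v)}^u|\nu|\,du'\lesssim e^{-(2K_{-}-\overline{C}\varepsilon)\frac{\beta}{1+\beta}v}$, since $\int_{u_{\gamma}(v)}^u|\nu|\,du'=r(u_{\gamma}(v),v)-r(u,v)$ is bounded and $|\lambda|$ decays by \eqref{lateblueshift_final_lambda}. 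All three contributions are integrable in $v$ with bounds uniform on $[\delta_U,U]$, which closes the argument.

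I expect the third paragraph to be the main obstacle: bounding $\int_{u_{\gamma}(v)}^u\partial_u\partial_v\log\Omega^2\,du'$ requires assembling the late-blueshift integral estimates (\eqref{Omega_improved_lateblueshift} and \eqref{preliminary_lateblueshift1}) together with the decay of $\lambda$, and hinges on the design choice $f'(v)=2K(u_{\gamma}(v),v)$, which is precisely what makes the boundary contribution along $\gamma$ exponentially small. By contrast, the extensions of $r_V$ and $\phi_V$ are comparatively routine consequences of the integrability of $\lambda$ and $\partial_v\phi$ established in proposition \ref{prop:lateblueshift}.
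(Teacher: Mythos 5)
Your proposal is correct and follows essentially the same route as the paper: uniform-in-$u$ integrability of $|\lambda|$ and $|\partial_v\phi|$ for $r_V$ and $\phi_V$, and for $\Omega^2_V$ the identical splitting of $\partial_v\log\Omega^2(u,v)-2K(u_\gamma(v),v)$ into the boundary term along $\gamma$ (controlled by \eqref{blueshift_final_dvlog}) plus $\int_{u_\gamma(v)}^u\partial_u\partial_v\log\Omega^2\,du'$, estimated from \eqref{waveeqn_log} via \eqref{preliminary_lateblueshift1} and lemma \ref{lemma:IB_Omega}. The only cosmetic difference is your handling of the $\lambda\nu r^{-2}$ term (integrating $|\nu|$ exactly to get $r(u_\gamma(v),v)-r(u,v)$, rather than the paper's pointwise bound $|\nu|\lesssim u^{-1+\rho\beta-\alpha}$), which is an equally valid variant.
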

\begin{proof}
The proof is analogous to that of theorem 11.1 in \cite{Dafermos_2005_uniqueness} and of proposition 8.14 in \cite{LukOh1}. We summarize the main ideas for the reader's convenience.

The extension for $r$ follows after noticing that, by monotonicity, the following limit
\begin{equation} \label{limitr_exists}
 \lim_{v \to +\infty}r(u, v) =\lim_{V \to 1} r_V(u, V) \eqqcolon  r_V(u, 1)
\end{equation}
exists for every $u \in [\delta_U, U]$.
Moreover, the above convergence is uniform in $u$  due to the decay of $|\lambda|$ proved in the late blueshift region (see \eqref{lateblueshift_final_lambda} and \eqref{estimateOmega2Blueshift}). Therefore, $r_V(\cdot, 1)$ is continuous. This is sufficient to show Cauchy-continuity for $r$.

The extensions of $\phi$ and $\log \Omega^2$ follow similarly.
\end{proof}

\section{\texorpdfstring{No-mass-inflation scenario and $H^1$ extensions}{No-mass-inflation scenario and H1 extensions}} \label{sec:nomassinflation}

Heuristics and numerical work (see section \ref{subsec:review_work}) suggest the validity of the exponential Price law upper bound \eqref{assumption:expPricelaw} along the event horizon of a solution asymptotically approaching (in the sense of section \ref{section:assumptions}) a Reissner-Nordstr{\"o}m-de Sitter black hole. Compared to the case $\Lambda = 0$, however, the exact asymptotics given by the constant value $s$ in \eqref{assumption:expPricelaw} play an even more important role, since the  redshift (resp. blueshift) effect is expected to give an exponential contribution to the decay (resp. growth) of the main dynamical quantities: the value of $s$ then determines the leading exponential contribution. This has a decisive role in determining the final fate of the dynamical black hole: a dominance of the blueshift effect is expected to cause geometric quantities such as the renormalized Hawking mass to blow up at the Cauchy horizon (this is the \textbf{mass inflation} scenario \cite{IsraelPoisson, Dafermos_2005_uniqueness, CGNS4}), whereas a sufficiently fast decay of the scalar field along $\mathcal{H}^+$ may allow to extend the dynamical quantities across the Cauchy horizon in such a way that they are still weak solutions to the Einstein-Maxwell-charged-Klein-Gordon system (we have, in this case, \textbf{lack of global uniqueness}).

In \cite{CGNS4}, where the massless, real scalar field model was studied, any positive value of $s$ was allowed  and a further lower bound on the scalar field was assumed along the event horizon.\footnote{With respect to our exponential Price law, the work \cite{CGNS4} uses a different convention. For every $\epsilon > 0$, they  choose initial data such that $e^{-(SK_+ + \epsilon)v} \lesssim |\partial_v \phi|(0, v) \lesssim e^{-(SK_+ - \epsilon)v}$ along the event horizon, for some $S > 0$.} There, different relations between $s$ and the quantity\footnote{See also appendix A in \cite{CGNS3} for a proof of the positivity of $\rho - 1$. The case $\rho = 1$ is associated to extremal black holes.}
\begin{equation} \label{def_rho}
\rho \coloneqq \frac{K_{-}}{K_+} > 1
\end{equation}
led to different outcomes. The authors found a set in the $\rho-s$ plane for which mass inflation can be excluded for generic initial data: in this context, an $H^1$ extension of the solution was built and the so-called Christodoulou-Chru{\'s}ciel version of strong cosmic censorship was called into question. On the other hand, they found another set in the $\rho-s$ plane, disjoint from the previous one, for which mass inflation was guaranteed. This suggests that the $H^1$ version of the strong cosmic censorship, in this spherically symmetric toy model with $\Lambda > 0$, does hold if such initial data can be attained when starting the evolution from a suitable spacelike hypersurface. 

In the following, by requiring an upper bound (but no lower bound) on the scalar field along the event horizon (see \eqref{assumption:expPricelaw})  we prove, despite the lack of many monotonicity results, that mass inflation can be avoided generically\footnote{Provided that $\rho$ and $s$ satisfy \eqref{assumption:no_inflation}. We emphasize that, to study the admissibility of such values of $\rho$ and $s$, a rigorous analysis of the exterior problem is required. Although it has been established \cite{DyatlovLinear, BarretoZworski}  that the value of $s$ corresponding to the decay of the (real or complex) scalar field along $\mathcal{H}^+$ is determined by the spectral gap of the Laplace-Beltrami operator, a rigorous determination of such $s$ is still generally lacking. } also in the massive and charged case. With respect to the case of a real-valued scalar field, we extend the set of possible values of $s$ for which $H^1$ extensions can be constructed (see Fig.\ \ref{fig:no_mass_inflation1}), in agreement with the results obtained for the linear problem \cite{CostaFranzen, HintzVasy1}. 

Also in this section, we set $U$ as the maximum of the ingoing null coordinate, where $U$ is chosen suitably small, so that corollary \ref{coroll: cauchyhor} holds for a fixed $\tilde{\varepsilon} < \frac{r_{-}}{2}$. 

\subsection{No mass inflation} \label{subsec:nomassinflation}
\begin{theorem}[A sufficient condition to prevent mass inflation] \label{thm:no_mass_inflation}
Given the characteristic IVP of section \ref{section:ivp}, choose initial data for which
\begin{equation} \label{assumption:no_inflation}
s > K_{-} \quad \text{ and } \quad \rho = \frac{K_{-}}{K_+} < 2
\end{equation}
are satisfied.
Then, there exists a positive constant $C$, depending only on the initial data, such that
\[
|\varpi|(u, v) \le C, \quad \forall\, (u, v) \in J^+(\gamma),
\]
provided that the value of $U$ is sufficiently small.
\end{theorem}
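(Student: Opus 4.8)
The plan is to integrate the evolution equation \eqref{dvvarpi} for $\varpi$ along the ingoing rays $\{u=\text{const}\}$, starting from the curve $\gamma$, and to show that the resulting integral is bounded uniformly in $u$. Fix $u\in[0,U]$ with $U$ so small that corollary \ref{coroll: cauchyhor} applies and $J^+(\gamma)\cap\{u\le U\}=J^+(\gamma)$, and recall that $\varpi(u,v_\gamma(u))$ is already bounded by the early-blueshift estimates of proposition \ref{prop: blueshift}. Since $\lambda<0$ throughout $J^+(\gamma)$ (lemma \ref{lemma:constantu} together with lemma \ref{lemma:lambdanu_blueshift}), I would write
\[
\varpi(u,v)=\varpi(u,v_\gamma(u))+\int_{v_\gamma(u)}^{v}\left[\tfrac{\lambda}{2}m^2r^2|\phi|^2+Qq\,\Im(\phi\overline\theta)+\tfrac{|\theta|^2}{2\kappa}\right](u,v')\,dv'.
\]
Using the bounds of proposition \ref{prop:lateblueshift} — the decay $|\phi|,|\partial_v\phi|\lesssim e^{-\mathcal{C}_s}$, the estimate $|\lambda|\lesssim e^{-(2K_-\!-\overline{C}\varepsilon)\frac{\beta}{1+\beta}v}$ of \eqref{lateblueshift_final_lambda}, and the boundedness of $Q$ and $r$ — the first two integrands decay exponentially in $v'$ and are integrable, which immediately yields the lower bound $\varpi(u,v)\ge -C$. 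The whole difficulty is therefore concentrated in the manifestly nonnegative term $\tfrac{|\theta|^2}{2\kappa}$, the quantity governing mass inflation.

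In the real, uncharged model of \cite{CGNS4} this term is controlled by the monotonicity of $\varpi$, which is unavailable here (remark \ref{rmk:signs}); I would replace it by a Grönwall argument. Writing $\kappa=\lambda/(1-\mu)$ by \eqref{algconstr} and $\theta=r\partial_v\phi$, one has $\tfrac{|\theta|^2}{2\kappa}=\tfrac{r^2|\partial_v\phi|^2\,|1-\mu|}{2|\lambda|}$, and since $J^+(\gamma)\subset\mathcal{T}$ the definition \eqref{def_mu} gives $|1-\mu|=-1+\tfrac{2\varpi}{r}-\tfrac{Q^2}{r^2}+\tfrac{\Lambda}{3}r^2\lesssim 1+|\varpi|$. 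This produces the differential inequality $\partial_v\varpi\lesssim g(u,v)\,(1+|\varpi|)+(\text{integrable})$ with kernel $g=r^2|\partial_v\phi|^2/|\lambda|$, whence Grönwall's inequality gives $|\varpi|(u,v)\lesssim\bigl(1+|\varpi|(u,v_\gamma(u))\bigr)\exp\!\bigl(\int_{v_\gamma(u)}^{+\infty} g(u,v')\,dv'\bigr)$. The theorem thus reduces to the uniform flux bound $\int_{v_\gamma(u)}^{+\infty}\tfrac{r^2|\partial_v\phi|^2}{|\lambda|}(u,v')\,dv'\le C$.

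The convergence of this flux is exactly where the hypotheses \eqref{assumption:no_inflation} enter. I would first upgrade the one-sided bound \eqref{lateblueshift_final_dvlog} on $\partial_v\log\Omega^2$ to a two-sided estimate $|\partial_v\log\Omega^2+2K_-|\lesssim\varepsilon+o(1)$, re-running the integration of the wave equation for $\log\Omega^2$ from $\gamma$ while retaining both signs (the correction terms are $o(1)$), and combine it with the bounds on $\nu$ through $\kappa=\Omega^2/(4|\nu|)$ to get two-sided control of $\kappa$ and hence of $|\lambda|$. Splitting $|\partial_v\phi|$ into the two contributions of proposition \ref{prop:lateblueshift}, the term carrying $e^{-\mathcal{C}_s(u,v_\gamma(u))}$ shares the decay rate of $|\lambda|$ (it originates from the $\lambda D_u\phi$ source in \eqref{waveeqn_phi2_r}), so after division by $|\lambda|$ it remains integrable; the remaining term decays like $e^{-2\mathcal{C}_s(u_\gamma(v),v)}=e^{-2av}$ along $\gamma$, with $a=\frac{c(s)-2(2K_-+C_{K_-}\varepsilon)\beta}{1+\beta}$ by \eqref{curlyC_ongamma}, while the blueshift growth of $|\lambda|^{-1}$ is of order $e^{2K_-(v-v_\gamma(u))}$. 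The integral then converges precisely when $a>K_-$, i.e. $c(s)>K_-$ after taking $\beta,\varepsilon$ small. This is what \eqref{assumption:no_inflation} guarantees: if $K_-<s\le 2K_+-\eta$ then $c(s)=s>K_-$ by \eqref{def_cs}, while if $s>2K_+-\eta$ then $c(s)=2K_+-\eta$ and $2K_+-\eta>K_-=\rho K_+$ follows from $\rho<2$ (with $\eta$ small); the smallness of $U$ is used, as in lemma \ref{lemma:lambdanu_blueshift}, to render the $o(1)$ corrections harmless.

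The main obstacle I anticipate is this last balance: extracting a lower bound on $|\lambda|$ (equivalently an upper bound on $\kappa^{-1}$) that is simultaneously strong enough near the Cauchy horizon, where $\kappa\to0$, and compatible with the slowly decaying contribution to $\partial_v\phi$, without any monotonicity of $\varpi$ to fall back on. In practice this forces a self-consistent bootstrap in which one propagates $|\varpi|\le C$ (hence $K\approx -K_-$ via \eqref{def_surfacegrav}, hence $\kappa$ bounded below) and the flux finiteness together, closing only because the physical decay rate $c(s)$ of the scalar field strictly exceeds the surface gravity $K_-$ of the Cauchy horizon. A secondary subtlety, absent in \cite{CGNS4}, is the nonlinear coupling $|1-\mu|\sim\varpi$ inside $\tfrac{|\theta|^2}{2\kappa}$, which is what dictates the Grönwall closure rather than a direct bound, and one must verify that the charge term $Qq\,\Im(\phi\overline\theta)$ and the mass term do not disturb the sign structure — which they do not, by their exponential integrability established in the first step.
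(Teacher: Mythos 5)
Your reduction of the theorem to a uniform flux bound on $\int \frac{|\theta|^2}{|\lambda|}(u,v')\,dv'$ is sound and is essentially the paper's own first step: the paper extracts the $\varpi$-linear part of $\frac{|\theta|^2}{2\kappa}$ and solves \eqref{dvvarpi} by an integrating factor (equation \eqref{varpi_explicit}), which is the same Gr\"onwall mechanism you describe, and your final case analysis ($c(s)=s>K_{-}$ when $s\le 2K_+-\eta$, and $c(s)=2K_+-\eta>K_{-}$ when $\rho<2$) coincides with the paper's. The genuine gap is in how you dispose of the second contribution to $|\partial_v\phi|$ in \eqref{lateblueshift_final_dvphi}. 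You claim that the term carrying $e^{-\mathcal{C}_s(u,v_\gamma(u))}$ ``shares the decay rate of $|\lambda|$, so after division by $|\lambda|$ it remains integrable.'' Its prefactor $e^{-(2K_{-}-\overline{C}\varepsilon)\frac{\beta}{1+\beta}v}$ matches only the rate of $|\lambda|$ restricted to $\gamma$ (equivalently $\sup_{u'\in[u_\gamma(v),u]}|\lambda|(u',v)$, since $\partial_u\lambda>0$ there by \eqref{blueshift_lambda_eqn}), not the rate of $|\lambda|(u,v)$ itself. Deep inside $J^+(\gamma)$ the blueshift forces
\begin{equation*}
|\lambda|(u,v)\;\approx\;|\lambda|(u_\gamma(v),v)\,e^{-2K_{-}\frac{v-v_\gamma(u)}{1+\beta}},
\end{equation*}
which is precisely the content of \eqref{noinflation_upplambda}--\eqref{noinflation_lowlambda}; consequently the quotient of that term squared by $|\lambda|(u,v)$ picks up the growth factor $e^{+2K_{-}\frac{v-v_\gamma(u)}{1+\beta}}$ and your flux integrand grows like $e^{2K_{-}\frac{1-\beta}{1+\beta}v}$, so the integral diverges. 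The matching two-sided bound on $|\lambda|$ at the weak rate $e^{-2K_{-}\frac{\beta}{1+\beta}v}$ that your argument needs is simply false in $J^+(\gamma)$.

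The paper closes exactly this hole by genuinely improving the pointwise estimate to \eqref{noinflation_dvphi}, $|\partial_v\phi|\lesssim e^{-\mathcal{C}_s(u_\gamma(v),v)}$, i.e.\ by removing the offending term altogether; this is the content of \eqref{noinflation_prel_dvphi}--\eqref{inner_integral}. The point is that when re-integrating the $\lambda D_u\phi$ source one must \emph{not} bound the two factors by their separate suprema in $u'$: one keeps the product $C(u')\,e^{-a v_\gamma(u')}=h(v_\gamma(u'))$ intact inside the $u'$-integral, where $C(u')$ is the $u'$-dependent prefactor of the bootstrap-improved upper bound on $|\lambda|$ and $e^{-a v_\gamma(u')}$ comes from the decay of $|D_u\phi|$. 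The function $h$ is increasing precisely because $c(s)\le 2K_+-\eta<2K_{-}-\eta$ (here $\rho>1$ is used, via \eqref{eta_small_H1}), so $h(v_\gamma(u'))\le h(v)$, and this converts the $v$-independent factor $e^{-\mathcal{C}_s(u,v_\gamma(u))}$ into the $v$-decaying factor $e^{-\mathcal{C}_s(u_\gamma(v),v)}$. Your sketch has no counterpart of this step, and without it the bootstrap cannot close. A secondary but real issue: the paper bootstraps the \emph{smallness} of the flux, $\int\frac{|\theta|^2}{|\lambda|}\le\frac1n$ in \eqref{noinflation_bootstrap}, not mere finiteness, because via the Raychaudhuri identity \eqref{raych_recast} the flux enters the \emph{rate} (not just the constant) of the upper bound on $|\lambda|$ through the near-conservation of $\frac{\nu}{1-\mu}$; with your coarser ``flux $\le C$, $|\varpi|\le C$'' bootstrap the exponential rates degrade by a factor $e^{-cC}$, and both the monotonicity of $h$ and the final convergence condition $c(s)>K_{-}$ are no longer what your argument delivers.
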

\begin{proof}
We first present the main idea of the proof. Here, we choose the constants $\eta$, $\delta$, $R$, $\Delta$, $\varepsilon$, $Y$ and $\beta$ introduced in section \ref{section:stability} in such a way that the assumptions of proposition \ref{prop:lateblueshift} are satisfied.

 Using \eqref{algconstr}, equation \eqref{dvvarpi} can be cast as
\[
\partial_v \varpi =- \frac{|\theta|^2}{r \lambda} \varpi + \frac{\lambda}{2} m^2 r^2 |\phi|^2 + Qq \Im \br{\phi \overline{\theta}} + \frac{|\theta|^2}{2 \lambda} \br{1 + \frac{Q^2}{r^2} - \frac{\Lambda}{3}r^2},
\]
and thus, for every $(u, v) \in J^+(\gamma)$:
\begin{align}
\varpi(u, v) &= \varpi(u, v_{\gamma}(u))e^{-\int_{v_{\gamma}(u)}^v \frac{|\theta|^2}{r \lambda}(u, v')dv'} +  \label{varpi_explicit}\\
&+ \int_{v_{\gamma}(u)}^v \left[ \frac{\lambda}{2} m^2 r^2 |\phi|^2 + Q q \Im \br{\phi \overline{\theta}} + \frac{|\theta|^2}{2 \lambda} \br{1 + \frac{Q^2}{r^2} - \frac{\Lambda}{3}r^2} \right](u, v') e^{-\int_{v'}^v \frac{|\theta|^2}{r \lambda}(u, y)dy} dv'. \nonumber
\end{align}
We recall that $\varpi(u, v_{\gamma}(u)) \to \varpi_+$ as $u \to 0$ (see proposition \ref{prop: blueshift}) and notice that the quantities inside the above square bracket, except for $\frac{|\theta|^2}{2 \lambda}$, either decay or are bounded in $L^{\infty}$ norm (see proposition \ref{prop:lateblueshift}). To control $\varpi$  and extend it continuously to the Cauchy horizon is then sufficient to have a uniform integral bound on $\frac{|\theta|^2}{|\lambda|}$ (see also theorem \ref{thm:general_C0_extension}, where uniform estimates were used to extend the metric). In the following, we bootstrap such an integral bound from the curve $\gamma$. This allows us to get a lower bound on $|\lambda|$. We are then able to close the bootstrap argument using \eqref{lateblueshift_final_dvphi} and  \eqref{assumption:no_inflation}. The idea of bootstrapping such an integral bound is taken from \cite{CGNS4}, where, however, the expression for $\varpi$ is considerably simpler and further restrictions on the values of $s$ and $\beta$ are required since less estimates are available in the region $J^+(\gamma)$.  We now proceed with the detailed proof.

In the following, we repeatedly use the boundedness of $r$. Let $n \in \bbN_0$ be fixed and large with respect to the initial data. We denote by $E$ the set of points $q \in J^+(\gamma)$ such that the following inequality holds for every $(u, v) \in J^{-}(q) \cap J^+(\gamma)$:
\begin{equation} \label{noinflation_bootstrap}
\int_{v_{Y}(u)}^v \frac{|\theta|^2}{|\lambda|}(u, v') dv' = \int_{v_{Y}(u)}^v \frac{r^2|\partial_v \phi|^2}{|\lambda|}(u, v') dv' \le \frac{1}{n}.
\end{equation}
The above holds for every $(u, v) \in \gamma$, due to \eqref{phi_blueshift_final} and \eqref{bound_lambda_gamma}, provided that $U$ and $\beta$ are sufficiently small.

Now, let $(u, v) \in E$. Notice that $|r - r_{-}|$ and $|Q-Q_+|$ are small in $E$, provided that $U$ is sufficiently small, due to the results of proposition \ref{prop:lateblueshift} and to corollary \ref{coroll: cauchyhor}. Moreover, the reasoning at the beginning of the current proof shows that $|\varpi - \varpi_+|$ is small in $E$, too.  Therefore, by \eqref{def_surfacegrav}:
\begin{equation} \label{supbound_2K}
2K(u, v) = -2K_{-} + o(1),
\end{equation}
where $o(1)$ refers to the limit $U \to 0$.

This gives a lower bound on $|\lambda|$ as follows. First, notice that \eqref{raych_recast} gives
\[
0 < \frac{\nu}{1-\mu}(u, v) \le \frac{\nu}{1-\mu}(u, v_Y(u))
\]
and, by construction, the same relation holds in $J^{-}(u, v) \cap J^+(\gamma)$.
Expressions  \eqref{bound_gammaY}, \eqref{mini_BV_gamma},  \eqref{algconstr} and the definition of the curve $\gamma$ yield 
\begin{align} 
\int_{u_{\gamma}(v)}^u \frac{\nu}{1-\mu}(\tilde u, v) d\tilde u &\le \int_{ u_{\gamma}(v)}^u \frac{\nu}{1-\mu}(\tilde u, v_{Y}(\tilde u)) d\tilde u  \nonumber \\
&\lesssim \frac{1}{C(N_{\text{i.d.}}) + o(1)} \int_{v_Y(u)}^{v_Y(u_{\gamma}(v))} \br{C(N_{\text{i.d.}}) + e^{-2c(s)v'}}dv' \nonumber \\
&\le  \frac{v- v_{\gamma}(u)}{(1+o(1))(1+\beta)},  \label{preliminary_upp_IB}
\end{align}
where the notation $o(1)$ refers to the limit $U \to 0$.
Thus, if we integrate \eqref{blueshift_lambda_eqn} from $\gamma$ and use \eqref{bound_lambda_gamma}, \eqref{phi_blueshift_final}, \eqref{supbound_2K}, \eqref{preliminary_upp_IB}:
\begin{align}
|\lambda|(u, v) &= |\lambda|(u_{\gamma}(v), v) e^{\int_{u_{\gamma}(v)}^u \frac{\nu}{1-\mu}(2K - m^2 r |\phi|^2)(u', v) du' } \nonumber \\
&\ge C(N_{\text{i.d.}}) e^{-(2K_{-}+c_1\varepsilon)\frac{\beta}{1+\beta}v} e^{-(2K_{-} + O(\beta) + o(1))(v- v_{\gamma}(u))} \nonumber \\
& \gtrsim C(u) e^{-(2K_{-} + c_1 \varepsilon)\br{1+O(\beta)+o(1)}v}, \label{noinflation_lowlambda}
\end{align}
where we defined
\[
C(u) \coloneqq e^{2K_{-}(1+O(\beta)+o(1))v_{\gamma}(u)}.
\]

We now close the bootstrap argument. Using \eqref{noinflation_lowlambda} and  \eqref{lateblueshift_final_dvphi}:
\[
\int_{v_{\gamma}(u)}^v \frac{|\theta|^2}{|\lambda|}(u, v')dv' \lesssim \frac{1}{C(u)} \int_{v_{\gamma}(u)}^v e^{(-2 c(s)+2K_{-} + C(\beta, \varepsilon) + o(1))v'}dv',
\]
for some $C(\beta, \varepsilon) > 0$ that goes to zero as $\beta \to 0$ or as $\varepsilon \to 0$. Under our assumptions, the last exponent is negative. Indeed, given $s$ and $\rho$ as in \eqref{assumption:no_inflation}, there exists  $\delta_{\varpi} > 0$ and we can choose $\eta$  sufficiently small in proposition \ref{prop: redshift} (see also remark \ref{remark:mainconstants}), so that
\begin{equation} \label{noinflation_aux}
s > K_{-} + \delta_{\varpi} \quad \text{and} \quad \rho < 2 - \frac{\delta_{\varpi} + \eta}{K_+}.
\end{equation}
Therefore, by \eqref{def_cs} we have:
\[
c(s) > K_{-} + \delta_{\varpi}.
\]

Based on the initial data, we can choose $\beta$, $\varepsilon$ and $U$ suitably small so that $C(\beta, \varepsilon) < \delta_{\varpi}$. Thus, if $U$ is sufficiently small, there exists $a > 0$ such that:
\[
\int_{v_{\gamma}(u)}^v \frac{|\theta|^2}{|\lambda|}(u, v')dv' \le C(N_{\text{i.d.}}) e^{-a v_{\gamma}(u)},
\]
and bootstrap inequality \eqref{noinflation_bootstrap} closes, possibly after choosing an even smaller value of $U$.
\end{proof}

\begin{remark}
The linear analysis performed in \cite{CostaFranzen} shows the analogue of a no-mass-inflation regime in the entire region $K_{-} < \min\{s, 2K_+\}$. This coincides with the region of the $\rho-s$ plane given in theorem \ref{thm:no_mass_inflation} (see also Fig.\ \ref{fig:no_mass_inflation1}). The linear analysis in \cite{HintzVasy1} presents, using different techniques, an analogous result where the regularity of the extension depends on the decay of the initial data. 
\end{remark}

\begin{remark}[On mass inflation]
In \cite{CGNS4}, sufficient conditions to show the presence of mass inflation were found. By assuming a lower bound on the scalar field along the event horizon, namely by requiring
\[
e^{-(SK_+ + \varepsilon)v} \lesssim |\partial_v \phi|(0, v) \lesssim e^{-(SK_+ - \varepsilon)v},
\]
for $0 < S < \min \{\rho, 2\}$ and $\varepsilon > 0$, it was possible to show that $\lim_{v \to +\infty}|\varpi|(u, v) = +\infty$, for every $u \in ]0, U]$. 

In the charged scalar field case, oscillations affect this scenario and integral bounds for $|\partial_v \phi|$ are expected to be more relevant. Furthermore, the monotonicity of the quantities $\varpi$, $\theta$, $\zeta$ and $\kappa$ play a prominent role in the proof of the above result in the real, massless case. In particular, in \cite{CGNS4}, it was possible to prove that the mass inflation scenario is strongly related with the behaviour of the integral 
in \eqref{noinflation_bootstrap}. In particular, in the real and massless case:
\begin{enumerate}
\item The convergence of the integral in \eqref{noinflation_bootstrap} implies 
\[
\int_{v_{\gamma}(u)}^v \frac{|\theta|^2}{|\lambda|}(u, v')dv' \to 0, \quad \text{ as } u \to 0,
\]
whereas in the current case it is not even clear if the limit is well-defined. Even working with subsequences does not allow to immediately control the quantity $\varpi$.
\item The lower bound on the scalar field can be propagated from the event horizon to $J^+(\gamma)$, due to the positivity of $\zeta$ and $\theta$ in $J^+(\mathcal{A})$. In the current case, on the other hand, the system does not include such explicit  monotonicity properties (for instance we recall that $\zeta$, $\theta$ and $\phi$ are complex-valued quantities). 
\end{enumerate}
\end{remark}

\subsection{\texorpdfstring{Construction of an $H^1$ extension}{Construction of an H1 extension}} \label{subsec:constructionH1extension}
By prescribing initial data such that conditions  \eqref{assumption:no_inflation} are satisfied, we proved that the solution to the Einstein-Maxwell-charged-Klein-Gordon system admits a non-trivial Cauchy horizon $\mathcal{CH}^+$, along which $|\varpi|$ remains bounded. In this context, we now prove that such a solution further admits an $H^1_{\text{loc}}$ extension up to its Cauchy horizon, namely we show that all PDE variables of the IVP (except possibly for the ones corresponding, in a different coordinate system, to $\theta$ and $\kappa$) admit a continuous extension,  that the Christoffel symbols can be extended in $L^2_{\text{loc}}$ and that $\phi$ can be extended in $H^1_{\text{loc}}$.

Extensions in such a class represent the minimal requirement to understand the extended spacetimes as weak solutions to the Einstein equations.
We stress that we required enough regularity so that the second-order PDE system under consideration is equivalent to the first-order system \eqref{dur}--\eqref{algconstr} (see also remark \ref{rmk:equivalence}). It is interesting to notice that, if less regularity is demanded, it is a priori possible to construct metric extensions beyond the Cauchy horizon that solve the first-order system but not the second-order system of PDEs.

In the current case, the fact that our extensions solve the second-order system of PDEs in a weak sense beyond $\mathcal{CH}^+$ is a corollary of previous work \cite{GajicLuk}, see already remark \ref{remark:beyond}.

To construct the $H^1_{\text{loc}}$ extension, we first introduce the coordinate
\[
\tilde{v}(v) \coloneqq r(U, v_{\mathcal{A}}(U)) - r(U, v),
\]
where $v_{\mathcal{A}}(U) \coloneqq \max \{ v \colon \lambda(U, v) = 0\}$ (see also corollary \ref{corollary_apphorizon} for a localization and the main properties of the apparent horizon). We  provide a further $C^0$ extension in the $(u, \tilde{v})$ coordinate system, which will be used to construct an $H^1_{\text{loc}}$ extension. We will denote the solutions to our IVP in the coordinate system $(u, \tilde{v})$ by a tilde, e.g. $\tilde{r}(u, \tilde{v}) = r(u, v)$ and $\tilde{\lambda}(u, \tilde{v})  = \partial_{\tilde v} \tilde{r}(u, \tilde{v}) = - \frac{\lambda(u, v)}{\lambda(U, v)}$.
In this coordinate system, the Cauchy horizon can be expressed as a level set: $\mathcal{CH}^+ = \tilde{v}^{-1}(\tilde V)$, for some positive constant $\tilde{V}$. Moreover, we denote $\tilde{v}_0 \coloneqq \tilde{v}(v_0)$.

In order to establish an $H^1_{\text{loc}}$ extension, we use a further coordinate $\mathring{v}$ defined by the equality
\begin{equation} \label{def_ring}
\mathring{\Omega}^2 (U, \mathring{v}) \equiv 1,
\end{equation}
where the circle superscript is used here and in the following to denote PDE variables expressed in this newly-defined coordinate.
Expression \eqref{def_ring}, together with \eqref{def_kappa} and \eqref{algconstr}, entails, in particular, that
\[
\frac{d \mathring{v}}{dv} = \Omega^2(U, v).
\]
Similarly to the previous cases, we denote $\mathring{v}_0 \coloneqq \mathring{v}(v_0)$.

The coordinate systems $(u, \tilde{v})$  and $(u, \mathring{v})$ were used in \cite[section 12]{CGNS4}, where their $C^1$-compatibility up to the Cauchy horizon and the finiteness of $\mathring{V} \coloneqq \mathring{v}(\tilde{V})$ were shown.\footnote{In particular, every $C^0$ extension in the $(u, \tilde{v})$ coordinate system gives rise to a $C^0$ extension in the $(u, \mathring{v})$ coordinate system, and vice versa.}

\begin{theorem} \label{thm:noinflation_C0_extension}
Suppose that the assumptions of theorem \ref{thm:no_mass_inflation} hold. Then, for every $0 < \delta_U < U$, the functions $\tilde{\lambda}$, $\tilde{\varpi}$,  $\tilde{\phi}$, $\tilde{Q}$, $\tilde{r}$, $\tilde{\nu}$, $\tilde{\zeta}$, $\tilde{A}_u$ (but not necessarily $\tilde{\theta}$ or $\tilde{\kappa}$) admit a continuous extension to $[\delta_U, U] \times [0, \tilde{V}]$. 
\end{theorem}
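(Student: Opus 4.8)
The plan is to reduce the statement, for each admissible variable, to the convergence of the corresponding $(u,v)$--function as $v\to+\infty$, uniformly in $u\in[\delta_U,U]$, and then to transfer this to continuity in the $(u,\tilde v)$ chart exactly as in the proof of theorem \ref{thm:general_C0_extension}. The point is that $v\mapsto\tilde v(v)=r(U,v_{\mathcal A}(U))-r(U,v)$ is a strictly increasing $C^1$ function with $\frac{d\tilde v}{dv}=-\lambda(U,v)>0$ in the trapped region (recall $\lambda<0$ there by lemma \ref{lemma:constantu} and lemma \ref{lemma:lambdanu_blueshift}), and by corollary \ref{coroll: cauchyhor} it has a finite limit $\tilde V$ as $v\to+\infty$; hence $v\mapsto\tilde v$ is a homeomorphism of $[v_{\mathcal A}(U),+\infty]$ onto $[0,\tilde V]$. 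Under this reparametrization the variables $\tilde r,\tilde\varpi,\tilde\phi,\tilde Q,\tilde A_u,\tilde\nu,\tilde\zeta$ coincide with the untilded ones (they involve no $v$--Jacobian, since $u$ is unchanged), while $\tilde\lambda(u,\tilde v)=-\lambda(u,v)/\lambda(U,v)$ carries the Jacobian factor. Thus it suffices to prove: (i) each of $r,\varpi,\phi,Q,A_u,\nu,\zeta$ and the ratio $\lambda(u,v)/\lambda(U,v)$ admits a limit as $v\to+\infty$ for every fixed $u\in[\delta_U,U]$; (ii) this convergence is uniform in $u$. Once (i)--(ii) hold, the limits are continuous (uniform limits of continuous functions) and the standard triangle-inequality argument of theorem \ref{thm:general_C0_extension} yields the continuous extension on $[\delta_U,U]\times[0,\tilde V]$.

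For the invariant scalars I would establish (i) by showing that the relevant $\partial_v$ is integrable in $v$ over $[v_1,+\infty)$, the two recurring controlling quantities being $\Omega^2$, which decays exponentially by \eqref{lateblueshift_final_dvlog}, and $|\theta|=r|\partial_v\phi|$, which decays exponentially by the improved bound \eqref{noinflation_dvphi} obtained inside the proof of theorem \ref{thm:no_mass_inflation}. Concretely: $\partial_v r=\lambda$ is integrable by \eqref{lateblueshift_final_lambda} (this is already in corollary \ref{coroll: cauchyhor}); by \eqref{dvA}, $|\partial_v A_u|=\tfrac{|Q|}{2r^2}\Omega^2\lesssim\Omega^2$; from \eqref{dvzeta}, $|\partial_v\zeta|\lesssim \tfrac{|\theta||\nu|}{r}+\tfrac{\Omega^2}{4r}|\phi|(m^2r^2+|qQ|)\lesssim|\theta|+\Omega^2$ using $|\nu|\kappa=\tfrac14\Omega^2$ and the bound $|\nu|\lesssim u^{-1+\frac{K_-}{K_+}\beta-\alpha}$ of \eqref{lateblueshift_final_nu} (bounded on $[\delta_U,U]$); from \eqref{dvQ}, $|\partial_v Q|\lesssim|\phi||\theta|$; and for $\nu$ itself, writing $\partial_v\nu=\nu\kappa(2K-m^2r|\phi|^2)$ gives $|\partial_v\nu|=\tfrac{\Omega^2}{4}\,|2K-m^2r|\phi|^2|\lesssim\Omega^2$, the bracket being bounded in the late blueshift region. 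The one nontrivial scalar is $\varpi$: using \eqref{dvvarpi} and \eqref{algconstr} to write $\tfrac{|\theta|^2}{2\kappa}=\tfrac{|\theta|^2|1-\mu|}{2|\lambda|}$, the only dangerous term is controlled by $\sup|1-\mu|$ (finite, since $r,\varpi,Q$ are bounded by theorem \ref{thm:no_mass_inflation} and proposition \ref{prop:lateblueshift}) times $\tfrac{|\theta|^2}{|\lambda|}$, whose integral over $[v_\gamma(u),+\infty)$ is exactly the quantity bootstrapped to be finite in \eqref{noinflation_bootstrap}; the remaining terms of \eqref{dvvarpi} are dominated by $|\lambda|$ and $|\phi||\theta|$ and are integrable. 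Hence each of these functions is Cauchy as $v\to+\infty$ and converges.

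The delicate variable is $\tilde\lambda$, since both $\lambda(u,v)$ and $\lambda(U,v)$ tend to $0$. Here I would differentiate the Raychaudhuri equation \eqref{raych_v} in logarithmic form, obtaining $\partial_v\log|\lambda|=\partial_v\log\Omega^2+\tfrac{|\theta|^2}{r|\lambda|}$, so that
\[
\partial_v\log\frac{|\lambda|(u,v)}{|\lambda|(U,v)}=\Big[\partial_v\log\Omega^2(u,v)-\partial_v\log\Omega^2(U,v)\Big]+\Big[\tfrac{|\theta|^2}{r|\lambda|}(u,v)-\tfrac{|\theta|^2}{r|\lambda|}(U,v)\Big].
\]
The first bracket is $\lesssim e^{-av}$ by applying \eqref{extension_preliminaryboundlog} at $u$ and at $U$ and using the triangle inequality through the curve $\gamma$; the second bracket is integrable in $v$ by the $\tfrac{|\theta|^2}{|\lambda|}$ bound of \eqref{noinflation_bootstrap} together with $r\ge r_- -\varepsilon$. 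Therefore $\log\big(\lambda(u,v)/\lambda(U,v)\big)$ converges to a finite value and the ratio converges to a finite, strictly negative limit, so $\tilde\lambda$ extends continuously (and, being bounded away from $0$, its extension is nonvanishing). This is exactly the step that distinguishes $\tilde\lambda$ from $\tilde\theta=-\theta(u,v)/\lambda(U,v)$ and $\tilde\kappa=\kappa(u,v)/|\lambda(U,v)|$: the latter are ratios of two quantities both tending to $0$ (note $\kappa=-\Omega^2/4\nu\to0$ since $\Omega^2\to0$ and $\nu$ converges), with no available control on the limit — consistent with $\phi$ being only $C^0$, not $C^1$, at $\mathcal{CH}^+$.

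Finally, the uniformity in (ii) follows because all of the above decay estimates are uniform on $[\delta_U,U]$: staying away from $u=0$ keeps $v_\gamma(u)\in[v_\gamma(U),v_\gamma(\delta_U)]$ and $|\nu|\lesssim \delta_U^{-1+\frac{K_-}{K_+}\beta-\alpha}$ bounded, so the constants $\tilde C(u)$, $C(u)$ of \eqref{noinflation_upplambda}--\eqref{noinflation_lowlambda} and the implicit constants in the tail integrals $\int_v^{+\infty}(\cdots)\,dv'$ are uniform, whence each tail tends to $0$ uniformly in $u$. I expect the main obstacle to be precisely the $\tilde\lambda$ ratio and the $\tfrac{|\theta|^2}{2\kappa}$ term in $\partial_v\varpi$: both bypass the absence of monotonicity and of a lower bound on $\kappa$ in $J^+(\gamma)$ only through the closed bootstrap \eqref{noinflation_bootstrap} of theorem \ref{thm:no_mass_inflation}, so the proof genuinely relies on the no-mass-inflation hypothesis \eqref{assumption:no_inflation} and on $U$ being small enough for corollary \ref{coroll: cauchyhor} and that bootstrap to apply.
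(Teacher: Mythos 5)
Your proposal is correct, and its overall skeleton (reduce everything to convergence as $v\to+\infty$, uniformly in $u\in[\delta_U,U]$, then transfer to the $(u,\tilde v)$ chart by the triangle-inequality argument of theorem \ref{thm:general_C0_extension}) is the same as the paper's. For the scalars $r,\varpi,\phi,Q,A_u,\nu,\zeta$ your integrability arguments match the paper's up to presentation: the paper integrates the same $\partial_v$--equations, handles $\varpi$ through the integrated formula \eqref{varpi_explicit} rather than your pointwise bound $\tfrac{|\theta|^2}{2\kappa}=\tfrac{|\theta|^2|1-\mu|}{2|\lambda|}$, and both hinge on the closed bootstrap \eqref{noinflation_bootstrap} and the exponential decays \eqref{lateblueshift_final_lambda}, \eqref{lateblueshift_final_dvlog}, \eqref{noinflation_dvphi}.

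Where you genuinely diverge is the delicate variable $\tilde\lambda$. The paper first extends the geometric ratio $\tfrac{\tilde\nu}{1-\tilde\mu}$ (Raychaudhuri \eqref{raych_recast} is monotone in $v$ for large $\tilde v$, and \eqref{finitenessint} gives uniform convergence, plus a positive lower bound propagated from $\Gamma_Y$), and then integrates the $\partial_u\lambda$ equation \eqref{dulambda_kappa} \emph{in the $u$ direction} from the normalization $\tilde\lambda(U,\cdot)\equiv-1$, so that $\tilde\lambda$ is expressed as an exponential of an integral of already-extended quantities. You instead control the ratio $\lambda(u,v)/\lambda(U,v)$ directly by differentiating its logarithm \emph{in $v$}, using Raychaudhuri at both $u$ and $U$, the estimate \eqref{extension_preliminaryboundlog} through the curve $\gamma$, and the integrability of $\tfrac{|\theta|^2}{|\lambda|}$. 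Both arguments are sound and both yield that the extension of $\tilde\lambda$ is negative and bounded away from zero. The paper's ordering buys something extra that it needs downstream: the continuous, positive extensions of $\tfrac{\tilde\nu}{1-\tilde\mu}$ and hence of $\tilde\Omega^2=-4\tilde\lambda\tfrac{\tilde\nu}{1-\tilde\mu}$, which are then reused to extend $\tilde\nu$, $\tilde A_u$, $\tilde\zeta$ and, crucially, in the $H^1$ extension of theorem \ref{thm:H1}; your route gets $\tilde\lambda$ more directly but would need a small additional step (e.g.\ $\tilde\Omega^2=-4\tilde\lambda\,\tfrac{\tilde\nu}{1-\tilde\mu}$, or $\tilde\kappa(1-\tilde\mu)=\tilde\lambda$ with $1-\tilde\mu$ extended via $r,\varpi,Q$) to recover those facts.

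One minor inaccuracy in a side remark: you assert $\kappa\to0$ ``since $\Omega^2\to0$ and $\nu$ converges'', but this presumes $\nu$ converges to a \emph{nonzero} limit, which is not guaranteed — the paper's remark following this theorem explicitly allows $\tilde\nu$ to extend to $0$, in which case $\tilde\kappa$ must blow up so that $\tilde\Omega^2=-4\tilde\nu\tilde\kappa$ stays positive. This does not affect your proof, since the statement excludes $\tilde\theta$ and $\tilde\kappa$.
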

\begin{proof}
The idea of the proof is to prove that any dynamical quantity $\tilde{q} \in \{\tilde{\lambda}$, $\tilde{\varpi}$, $\tilde{\phi}$, $\tilde{Q}$, $\tilde{r}$, $\tilde{\nu}$, $\tilde{\zeta}$, $\tilde{A}_u\}$ can be extended continuously, using the fact that (see also the proof of theorem \ref{thm:general_C0_extension}):
\[
\tilde{q}(u, \tilde{V}) \coloneqq \lim_{\tilde v \to \tilde V} \tilde{q}(u, \tilde{v})
\]
exists, is finite and that
\[
\left | \tilde{q}(u, \tilde{v})- \tilde{q}(u, \tilde{V}) \right| \underset{\tilde{v} \to \tilde{V}}{\longrightarrow} 0 \quad \text{uniformly in } [\delta_U, U].
\]
Alternatively, it is sufficient to show that $\tilde{q}$ can be expressed in terms of quantities that can be extended.

To show the above, we can use the estimates of proposition \ref{prop:lateblueshift} and follow the proof of  \cite[theorem 12.1]{CGNS4}, keeping in mind that, in our case, we need to control additional quantities such as $Q$ and $A_u$ due to the non-zero charge of the scalar field.
We refer the reader to \cite{CGNS4, mythesis} for the details.
\end{proof}

\begin{theorem}[An $H^1_{\text{loc}}$ extension] \label{thm:H1}
Suppose that the assumptions of theorem \ref{thm:no_mass_inflation} hold. Then, for every $0 < \delta_U < U$, there exists a continuous extension of the metric $g$ to the set $\mathcal{D} = [\delta_U, U] \times [\tilde{v}_0, \tilde{V}]$, with Christoffel symbols in $L^2(\mathcal{D})$ and such that $\phi \in H^1(\mathcal{D})$.
\end{theorem}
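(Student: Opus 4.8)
The plan is to perform the extension in the $(u,\mathring v)$ gauge of \eqref{def_ring}, where $\mathring\Omega^2(U,\mathring v)\equiv 1$, and to split the statement into three tasks: continuity of the metric coefficients, $L^2$ control of the Christoffel symbols, and $H^1$ control of $\phi$. The continuous extension of $g=-\mathring\Omega^2\,du\,d\mathring v+\mathring r^2\sigma_{S^2}$ is almost immediate: by theorem \ref{thm:noinflation_C0_extension} the quantities $\tilde r,\tilde\nu,\tilde\lambda,\tilde\varpi,\tilde Q,\tilde\phi$ extend continuously and $\tilde\Omega^2=-4\tilde\lambda\,\tfrac{\tilde\nu}{1-\tilde\mu}$ extends continuously and positively, and by the $C^1$-compatibility of $(u,\tilde v)$ and $(u,\mathring v)$ proved in \cite[\S12]{CGNS4} the same holds for $\mathring r$ and $\mathring\Omega^2$ (so that $L^2$ and $H^1$ in the two gauges coincide); hence $g$ extends continuously with $\mathring r,\mathring\Omega^2>0$ on $\mathcal{CH}^+$. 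Writing out the Christoffel symbols, the sphere-sector ones $\mathring\nu/\mathring r$, $\mathring\lambda/\mathring r$, $2\mathring r\mathring\nu/\mathring\Omega^2$, $2\mathring r\mathring\lambda/\mathring\Omega^2$ and the intrinsic symbols of $\sigma_{S^2}$ are quotients of continuously extendible quantities with $\mathring r,\mathring\Omega^2$ bounded away from zero, hence bounded and in $L^2(\mathcal D)$. Only $\Gamma^u_{uu}=\partial_u\log\mathring\Omega^2$ and $\Gamma^{\mathring v}_{\mathring v\mathring v}=\partial_{\mathring v}\log\mathring\Omega^2$ remain, and the crucial simplification is that, since $\varpi$ and $Q$ are bounded in $J^+(\gamma)$ by theorem \ref{thm:no_mass_inflation}, equation \eqref{waveeqn_log} can be recast as
\[
\partial_u\partial_v\log\Omega^2=-2\Re\br{D_u\phi\,\overline{\partial_v\phi}}+O(\Omega^2),
\]
with the $O(\Omega^2)$ coefficient controlled by $\varpi$, $Q$ and $r^{-1}$.

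Next I would treat the "$L^1$-in-$v$" quantities. For $\partial_u\log\mathring\Omega^2=\partial_u\log\Omega^2$ (the coordinate $\mathring v$ depends on $v$ alone) I integrate the display above in $v$ from $\gamma$: the boundary term $\partial_u\log\Omega^2(u,v_\gamma(u))$ is continuous along $\gamma$, the $O(\Omega^2)$ contribution is finite because $\int\Omega^2\,dv<\infty$, and the cross term is finite because $|D_u\phi|=|\mathring\zeta|/\mathring r$ is bounded while $\int|\partial_v\phi|\,dv<\infty$ by \eqref{noinflation_dvphi} and \eqref{estimate_curlyC}; thus $\partial_u\log\mathring\Omega^2$ is in fact bounded, hence in $L^2(\mathcal D)$. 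For the scalar field, $\partial_u\phi=\mathring\zeta/\mathring r-iq\mathring A_u\phi$ is bounded for the same reasons, and $\partial_{\mathring v}\phi=\mathring\theta/\mathring r$ obeys
\[
\int_{\mathcal D}|\partial_{\mathring v}\phi|^2\,d\mathring v\,du=\int_{\delta_U}^U\!\int\frac{|\theta|^2}{r^2\,\Omega^2(U,v)}\,dv\,du\lesssim\int_{\delta_U}^U\!\int\frac{|\theta|^2}{|\lambda|}\,dv\,du<\infty,
\]
where I used $\Omega^2(U,v)=4\,\tfrac{\nu}{1-\mu}(U,v)\,|\lambda(U,v)|$, the boundedness above and below of $\tfrac{\nu}{1-\mu}(U,\cdot)$ and of $\tilde\lambda$, and the uniform integral bound \eqref{finitenessint} (equivalently \eqref{noinflation_bootstrap}). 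This gives $\phi\in H^1(\mathcal D)$.

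The hard part, which I expect to be the main obstacle, is $\partial_{\mathring v}\log\mathring\Omega^2$: this is genuinely unbounded at $\mathcal{CH}^+$ and can only be controlled in $L^2$. Since $\partial_{\mathring v}\log\mathring\Omega^2(U,\cdot)\equiv 0$, integrating $\partial_u\partial_{\mathring v}\log\mathring\Omega^2$ in $u$ from $U$ yields
\[
\partial_{\mathring v}\log\mathring\Omega^2(u,\mathring v)=-\frac{1}{\Omega^2(U,v)}\int_u^U\partial_u\partial_v\log\Omega^2(u',v)\,du'.
\]
The $O(\Omega^2)$ part contributes $\frac{1}{\Omega^2(U,v)}\int_u^U\Omega^2(u',v)\,du'=\int_u^U\mathring\Omega^2(u',\mathring v)\,du'\lesssim 1$, hence is bounded; the cross term is bounded by $\frac{1}{\Omega^2(U,v)}e^{-\mathcal C_s(u_\gamma(v),v)}$ (using $|D_u\phi|\lesssim 1$ and $\int_u^U|\partial_v\phi|\,du'\lesssim e^{-\mathcal C_s(u_\gamma(v),v)}$ from \eqref{noinflation_dvphi}), which is not bounded. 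Its $L^2$ norm, however, is finite because the change of variables $d\mathring v=\Omega^2(U,v)\,dv$ supplies a compensating weight:
\[
\int|\partial_{\mathring v}\log\mathring\Omega^2|^2\,d\mathring v\lesssim\int\frac{e^{-2\mathcal C_s(u_\gamma(v),v)}}{\Omega^2(U,v)}\,dv .
\]
Here $\Omega^2(U,v)$ decays like $e^{-2K_-v}$ up to $\varepsilon$-corrections by the surface-gravity estimates of proposition \ref{prop:lateblueshift}, while $\mathcal C_s(u_\gamma(v),v)\approx c(s)\,v$ up to $\beta,\varepsilon$-corrections by \eqref{curlyC_ongamma}, so the integrand behaves like $e^{-2(c(s)-K_-)v}$ and the integral converges precisely when $c(s)>K_-$. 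This convergence threshold is the crux: it is exactly condition \eqref{assumption:no_inflation}, since $s>K_-$ forces $c(s)>K_-$ when $c(s)=s$, and $\rho<2$ (with $\eta$ chosen small in proposition \ref{prop: redshift}) forces $2K_+-\eta>K_-$ when $c(s)=2K_+-\eta$; this is why the $H^1_{\mathrm{loc}}$ extension is available on, and only on, the no-mass-inflation range. Collecting the bounded and the $L^2$ pieces gives Christoffel symbols in $L^2(\mathcal D)$, which together with the two preceding paragraphs completes the construction.
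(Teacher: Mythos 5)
Your proposal is correct, and its skeleton is exactly the paper's: pass to the $(u,\mathring v)$ gauge via theorem \ref{thm:noinflation_C0_extension} and the $C^1$-compatibility of the $(u,\tilde v)$ and $(u,\mathring v)$ charts, reduce the Christoffel symbols to $\Gamma^u_{uu}$ and $\Gamma^{\mathring v}_{\mathring v\mathring v}$, and control $\partial_{\mathring v}\phi$ through $\int |\mathring\theta|^2\,d\mathring v=\int |\theta|^2/\Omega^2(U,\cdot)\,dv\lesssim\int|\theta|^2/|\lambda|\,dv$. The genuine difference is in the Christoffel-symbol step: the paper dispatches the $L^2$ bound on $\Gamma^u_{uu}$ and $\Gamma^{\mathring v}_{\mathring v\mathring v}$ as ``completely analogous to the proof of theorem 12.3 in \cite{CGNS4}'', whereas you carry it out in the charged setting — checking that the extra charge and mass terms in \eqref{waveeqn_log} are $O(\Omega^2)$ with coefficients kept bounded by theorem \ref{thm:no_mass_inflation}, integrating from the normalized hypersurface $\{u=U\}$ where $\partial_{\mathring v}\log\mathring\Omega^2\equiv 0$, and identifying the convergence threshold $c(s)>K_-$ of the resulting $v$-integral with condition \eqref{assumption:no_inflation}, which is the same bookkeeping as \eqref{neg_exp}--\eqref{noinflation_aux}. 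Your version buys self-containedness (and makes transparent why the $H^1$ extension lives precisely on the no-mass-inflation range) at the cost of length; the paper buys brevity by leaning on the uncharged precedent. Two small repairs to your write-up. First, the lower bound $\Omega^2(U,v)\gtrsim e^{-(2K_-+C\varepsilon)v}$, which is what makes your final integral converge, does \emph{not} follow from proposition \ref{prop:lateblueshift}: estimate \eqref{lateblueshift_final_dvlog} only controls $\Omega^2$ from above. You need \eqref{noinflation_lowlambda} from the proof of theorem \ref{thm:no_mass_inflation}, combined with $\Omega^2(U,\cdot)=4\tfrac{\nu}{1-\mu}(U,\cdot)\,|\lambda|(U,\cdot)$ and the two-sided positive bounds on $\tfrac{\nu}{1-\mu}(U,\cdot)$; since that estimate is established under the very assumptions you are working with, this is a citation slip rather than a gap. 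Second, for $\int|\mathring\theta|^2\,d\mathring v$ you close with the boundedness of $\tilde\lambda$ plus \eqref{finitenessint}, while the paper instead uses the pointwise bound $|\theta|^2/|\lambda|(U,\cdot)\lesssim e^{-av}$ from the proof of theorem \ref{thm:no_mass_inflation}; both routes are valid and rest on the same ingredients.
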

\begin{proof}
Theorem \ref{thm:noinflation_C0_extension} gives a continuous extension in the $(u, \tilde{v})$ coordinate system. Since $(u, \tilde{v})$ and $(u, \mathring{v})$ correspond to  $C^1$--compatible local charts (see \cite[section 12]{CGNS4}), a continuous extension in the latter coordinate system promptly follows. We will work in the $(u, \mathring{v})$ system throughout the remaining part of the proof.

The Christoffel symbols are given explicitly in appendix \ref{appendixUsefulExpressions}. Due to the fact that $\tilde r$ and $\tilde{\Omega}^2$ are strictly positive on the Cauchy horizon (see the proof of theorem \ref{thm:noinflation_C0_extension}), we only need to check that $\Gamma^u_{uu}$ and $\Gamma^{\mathring{v}}_{\mathring{v} \mathring{v}}$ are in $L^2_{\text{loc}}$. The proof of this result is completely analogous to the proof of theorem 12.3 in \cite{CGNS4}.

To show that $\| \phi \|_{H^1(\mathcal{D})}$ is bounded, with $\mathcal{D}=[\delta_U, U] \times [\tilde{v}_0, \tilde V]$, we notice two key results:
\begin{itemize}
\item Due to the continuous extensions of $\mathring{\phi}$, $\mathring{\zeta}$ and $\mathring{A}_u$, there exists $C > 0$, depending uniquely on the initial data, such that
\[
\int_{\mathring{v}_0}^{\mathring{V}} \br{|\phi|^2 + |\partial_u \phi|^2}(u, z) dz \le C,
\]
for every $u \in [\delta_U, U]$.
\item In the proof of theorem \ref{thm:no_mass_inflation} we showed that there exists $a > 0$ such that
\[
 \frac{|\theta|^2}{|\lambda|} (u, v) \lesssim \frac{e^{-av}}{\tilde{C}(u)}, \quad \forall\, (u, v) \in  J^+(\gamma),
\]
for some constant $\tilde{C}(u) > 0$ depending on $u$.
We obtained such a result using the lower bound \eqref{noinflation_lowlambda}. The same bound also gives
\[
\frac{|\theta|^2(u, v)}{|\lambda|(U, v)}  \lesssim \frac{e^{-av}}{\tilde{C}(U)}, \quad \forall\, (u, v) \in  J^+(\gamma).
\]
So, using the latter, using the fact that:\footnote{We notice that $|\tilde{\lambda}|(U, \cdot) \equiv 1$ and that $\tilde{\Omega}^2$ is positive up to the Cauchy horizon, see the proof of theorem \ref{thm:noinflation_C0_extension}.}
\[
\frac{\Omega^2}{ |\lambda|} (U, v) = 4\frac{\nu}{1-\mu}(U, v) = 4 \frac{\tilde{\nu}}{1- \tilde{\mu}}(U, \tilde v(v)) = \tilde{\Omega}^2(U, \tilde{v}(v)) \ge C > 0, \quad \forall\, v \ge v_0,
\]
and, finally, exploiting the relation $\mathring{\theta}(u, \mathring{v}) = \Omega^{-2}(U, v(\mathring{v}))\theta(u, v(\mathring{v}))$, we have:
\[
\int_{\mathring{v}_0}^{\mathring{V}} |\mathring{\theta}|^2(u, z) dz = \int_{v_0}^{+\infty} \frac{|\theta|^2(u, z)}{\Omega^2(U, z)}dz \lesssim \int_{v_0}^{+\infty} \frac{|\theta|^2(u, z)}{|\lambda|(U, z)} dz \le C,
\]
for some $C>0$ depending uniquely on the initial data. \qedhere
\end{itemize} 
\end{proof}

\begin{corollary}
Let $M_0$ be the preimage of $(0, U] \times [\mathring{v}_0, \mathring{V}]$ via the coordinate functions $(u, \mathring{v})$. Then, the pull-backed scalar field and Christoffel symbols on $M_0$ are, respectively, in $H^1_{\text{loc}}(M_0)$ and $L^2_{\text{loc}}(M_0)$.
\end{corollary}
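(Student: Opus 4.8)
The plan is to transfer the global (integrated) bounds established in Theorem \ref{thm:H1} from the coordinate rectangle $\mathcal{D} = [\delta_U, U] \times [\mathring v_0, \mathring V]$ to the spacetime manifold $M_0$, which is the preimage under the chart $(u, \mathring v)$ of $(0, U] \times [\mathring v_0, \mathring V]$. The key point is that $M_0$ is covered by the family of rectangles $\mathcal{D}_{\delta_U}$ as $\delta_U$ ranges over $(0, U)$, so a local (rather than global) statement on $M_0$ is exactly what Theorem \ref{thm:H1} yields on each such rectangle.

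First I would fix a point $p \in M_0$ and let $(u_p, \mathring v_p)$ be its coordinates, with $u_p \in (0, U]$. I would choose $\delta_U$ with $0 < \delta_U < u_p$, so that $p$ lies in the interior (relative to $M_0$) of the region $\mathcal{D}_{\delta_U} = [\delta_U, U] \times [\mathring v_0, \mathring V]$. Theorem \ref{thm:H1} then provides a continuous extension of the metric $g$ to this rectangle with Christoffel symbols in $L^2(\mathcal{D}_{\delta_U})$ and $\phi \in H^1(\mathcal{D}_{\delta_U})$. Pulling these back to the spacetime $M_0$ via the projection $\pi \colon \mathcal{M} \to \mathcal{Q}$ and the chart, and using the fact that the angular part of the metric contributes only bounded factors of $r^2 \sigma_{S^2}$ (with $r$ bounded away from zero on the closure of the relevant region, by lemma \ref{lemma:r_bounded_future} and corollary \ref{coroll: cauchyhor}), the $L^2$ and $H^1$ integrability over any compact neighbourhood of $p$ contained in the preimage of $\mathcal{D}_{\delta_U}$ is immediate. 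Since $p$ was arbitrary, this establishes the local integrability statements at every point of $M_0$, which is precisely $\phi \in H^1_{\text{loc}}(M_0)$ and Christoffel symbols in $L^2_{\text{loc}}(M_0)$.

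The only mild subtlety — and the step I would be most careful about — is the behaviour as $u \to 0$, i.e.\ along the part of $M_0$ approaching the event horizon and the corner where the two characteristic hypersurfaces meet. This is exactly why the statement is phrased as $H^1_{\text{loc}}$ and $L^2_{\text{loc}}$ over the \emph{open} set $(0, U]$ in the $u$ variable rather than over $[0, U]$: the constant $C$ appearing in the $L^2$ and $H^1$ bounds of Theorem \ref{thm:H1} may degenerate as $\delta_U \to 0$, and no uniform bound up to $u = 0$ is claimed. The local formulation sidesteps this entirely, since every compact subset of $M_0$ has $u$-coordinate bounded below by some positive $\delta_U$, and on such a subset the global bound of Theorem \ref{thm:H1} applies directly. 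I would therefore emphasize that the proof reduces to this covering argument together with the boundedness of $r$ away from zero, and requires no new estimates beyond those already propagated to $J^+(\gamma)$ in proposition \ref{prop:lateblueshift} and assembled in theorems \ref{thm:noinflation_C0_extension} and \ref{thm:H1}.
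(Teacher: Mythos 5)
Your proposal is correct and follows essentially the same route as the paper: exhaust $M_0$ by the preimages of the compact rectangles $[\delta_U, U] \times [\mathring{v}_0, \mathring{V}]$, and transfer the two-dimensional $L^2$/$H^1$ bounds of theorem \ref{thm:H1} to four-dimensional local integrability using the boundedness (by continuity on a compact set) of the metric coefficients — the paper phrases this via the volume form $dV_4 = \frac{\mathring{\Omega}^2}{2}\mathring{r}^2\, du\, d\mathring{v}$, which is the same observation as your remark on the angular factors and the continuity of $\mathring{r}$, $\mathring{\Omega}^2$. Your discussion of why the statement is local (the constants may degenerate as $\delta_U \to 0$) is a correct reading of why the corollary is stated over $(0,U]$ rather than $[0,U]$.
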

\begin{proof}
For every $0 < \delta_U < U$, we denote by $\mathcal{M}_{\delta}$ the preimage of $[\delta_U, U]\times [\mathring{v}_0, \mathring{V}]$ via the coordinate functions $(u, \mathring{v})$. The volume form on $\mathcal{M}_{\delta}$ is 
\[
dV_4 = \frac{\mathring{\Omega}^2}{2} \mathring{r}^2 du d \mathring{v}.
\]
Since $\mathring{r}$ and $\mathring{\Omega}^2$ are continuous in the compact set $[\delta_U, U]\times [\mathring{v}_0, \mathring{V}]$, the results of theorem \ref{thm:H1} are sufficient to conclude the proof. 
\end{proof}

\begin{remark}[Extensions beyond the Cauchy horizon as weak solutions] \label{remark:beyond}
In the asymptotically flat, extremal case studied in \cite[sec. 10]{GajicLuk}, it was explicitly shown how to construct $H^1$ extensions solving (in a weak sense) the second-order system of PDEs beyond the Cauchy horizon $\mathcal{CH}^+$. Indeed, notice that the blueshift instability described in Fig. \ref{fig:blueshiftinst} becomes weaker as the surface gravity of the Cauchy horizon approaches zero, i.e. as the extremal limit is reached.

The procedure exploits a well-posedness result, under spherical symmetry, that requires less regularity for the initial data  compared to our local existence theorem \ref{thm:localexistence}. The result then follows after considering a sequence of characteristic initial value problems approaching the Cauchy horizon and using the fact that the time of existence for the respective solutions does not depend on the location of such IVP in the sequence.

Due to the freedom in defining the main extendable PDE variables beyond the Cauchy horizon, the technique reveals the existence of infinitely many weak solutions beyond $\mathcal{CH}^+$. 

We stress that the same procedure can be applied in our case, since the presence of the cosmological constant $\Lambda$ in our second-order system \eqref{raych_u}--\eqref{klein_gordon}, \eqref{duQ}--\eqref{dvQ} does not represent an obstruction to the reasoning of \cite{GajicLuk}, and neither does the sub-extremality condition.
\end{remark}

\appendix

\section{Useful expressions} \label{appendixUsefulExpressions}
Let us consider the metric for a spherically symmetric spacetime. It is always possible to locally express such a metric using null coordinates $(u, v)$ as
\begin{equation}
g=-\Omega^2(u, v)du dv +r^2(u, v)\sigma_{\bbS^2}.
\end{equation}
The respective Christoffel symbols are
\begin{align*}
&\Gamma^{u}_{uu}=\partial_u \log \Omega^2, &&\Gamma^{u}_{\theta \theta}=\frac{\Gamma^{u}_{\varphi \varphi}}{\sin^2 \theta} = \frac{2r}{\Omega^2}\partial_v r,\\
&\Gamma^{v}_{vv} = \partial_v \log \Omega^2, &&\Gamma^v_{\theta \theta} = \frac{\Gamma^{v}_{\varphi \varphi}}{\sin^2 \theta} = \frac{2r}{\Omega^2}\partial_u r,\\
&\Gamma^{\theta}_{u \theta}=\Gamma^{\varphi}_{u \varphi} = \frac{1}{r} \partial_u r, && \Gamma^{\theta}_{v \theta} = \Gamma^{\varphi}_{v \varphi} = \frac{1}{r}\partial_v r\\
& \Gamma^{\theta}_{\varphi \varphi} = -\cos \theta \sin \theta, && \Gamma^{\varphi}_{\theta \varphi}= \cot \theta,
\end{align*}
where the remaining symbols are either zero or can be obtained by symmetries.  We used the xAct package suite \cite{xAct} in Mathematica for the following computations. The non-zero components of the Ricci tensor are:
\begin{align*}
R_{uu} &= \frac{2}{r\Omega}\left(2 \partial_u r \partial_u \Omega -\Omega \partial_u^2 r \right),\\
R_{uv} &= R_{vu}= -\frac{2 \partial_u \partial_v r}{r} + \frac{2}{\Omega^2} \left( \partial_v \Omega \partial_u \Omega - \Omega \partial_u \partial_v \Omega \right), \\
R_{vv} &= \frac{2}{r\Omega}\left( 2\partial_v r \partial_v \Omega -  \Omega \partial_v^2 r \right), \\
R_{\theta \theta} &= 1 + \frac{4}{\Omega^2} \left( \partial_v r \partial_u r + r\partial_u \partial_v r \right), \\
R_{\varphi \varphi} &= \sin^2 \theta R_{\theta \theta},
\end{align*}
and the Ricci scalar is given by
\[
R = \frac{2}{r^2} - \frac{8 \partial_v \Omega \partial_u \Omega}{\Omega^4} + \frac{8}{r^2 \Omega^2} \left( \partial_v r \partial_u r + 2 r \partial_u \partial_v r \right) + \frac{8}{\Omega^3} \partial_u \partial_v \Omega.
\]
Finally, the non-zero components of the Einstein tensor are
\begin{align*}
G_{uu} &= \frac{4 \partial_u r \partial_u \Omega}{r \Omega} - \frac{2}{r}\partial_u^2 r,\\
G_{uv} &= G_{vu} = \frac{\Omega^2}{2r^2} + \frac{2 \partial_v r \partial_u r}{r^2} + \frac{2 \partial_u \partial_v r}{r}, \\
G_{vv} &= \frac{4}{r\Omega} \partial_v r \partial_v \Omega - \frac{2}{r} \partial_v^2 r, \\
G_{\theta \theta} &= -\frac{4r}{\Omega^2} \partial_u \partial_v r + \frac{4r^2}{\Omega^4} \partial_v \Omega \partial_u \Omega - \frac{4r^2}{\Omega^3} \partial_u \partial_v \Omega, \\
G_{\varphi \varphi} &= \sin^2 \theta G_{\theta \theta}.
\end{align*}
To study the system of equations \eqref{main_system}, we also report the components of the energy-momentum tensor $\Tem$:
\begin{align*}
&\Tem_{uu}=\Tem_{vv}=0,\\
&\Tem_{uv} = \frac{\Omega^2 Q^2}{4r^4},\\
&\Tem_{\theta \theta} = \frac{\Tem_{\varphi \varphi}}{\sin^2 \theta}=\frac{Q^2}{2r^2},
\end{align*}
and of $\Tphi$:
\begin{align*}
&\Tphi_{uu}= |D_u \phi|^2, \quad &&\Tphi_{vv}=|D_v \phi|^2,\\
&\Tphi_{uv} = \frac{m^2 \Omega^2}{4}|\phi|^2, \quad &&\Tphi_{\theta \theta} = \frac{\Tphi_{\varphi \varphi}}{\sin^2 \theta}= \frac{2r^2}{\Omega^2}\Re{\br{D_u \phi \overline{D_v \phi}}} - \frac{1}{2}m^2 r^2 |\phi|^2.
\end{align*}
A short computation also gives (see \eqref{noethercurr} and \eqref{fieldtensor}):
\[
\star F = -Q(u, v) \sin \theta d\theta \wedge d \varphi
\]
and 
\[
\star J = q r^2 \sin \theta \br{\Im \br{\phi \overline{D_u \phi}} du \wedge d\theta \wedge d\varphi + \Im \br{\phi \overline{D_v \phi}} dv \wedge d\theta \wedge d\varphi }.
\]

\section{The Reissner-Nordstr{\"o}m-de Sitter solution and the choice of coordinates} \label{app:eddfinkel}
In this short section we review the main properties of spherically symmetric charged black holes. At the same time, we establish the relation between the null coordinates employed in this work and some Eddington-Finkelstein coordinates commonly adopted for the Reissner-Nordstr{\"o}m-de Sitter solution.

The (sub-extremal, or non-degenerate) Reissner-Nordstr{\"o}m-de Sitter metric is the unique static, spherically symmetric solution of the electro-vacuum Einstein equations
\[
\text{Ric}_{\mu \nu}(g) -\frac{1}{2} R(g) g_{\mu \nu} + \Lambda g_{\mu \nu} = 2g^{\alpha \beta}F_{\alpha \mu}F_{\beta \nu} - \frac12 g_{\mu \nu} F^{\alpha \beta}F_{\alpha \beta},
\]
for a positive cosmological constant $\Lambda$. Here, the strength field tensor $F$ satisfies the Maxwell equations $dF = 0$ and $d \star F = 0$. This solution is denoted by the three real constants $\varpi_+ > 0$ (mass parameter), $Q_+$ (charge parameter) and $\Lambda$.\footnote{This notation for the mass and charge is not standard, but we adopt it here to draw a comparison with our work on the Einstein-Maxwell-charged-Klein-Gordon system.} To such values, we associate the function
\[
1-f(r)\coloneqq 1-\frac{2\varpi_+}{r}+\frac{Q_+^2}{r^2}-\frac{\Lambda}{3}r^2,
\]
required to have three distinct real roots $r_{-}$, $r_+$ and $r_C$, with $r_{-} < r_+ < r_C$ (\textbf{sub-extremality assumption}). In general, $1-f$ may possess a fourth distinct (negative) root, which we denote by $r_n$.
 The Reissner-Nordstr{\"o}m-de Sitter metric, able to describe a black hole spacetime
 in an expanding cosmological scenario, can be written in coordinates $(t, r) \in \bbR \times (r_{-}, r_+)$ as
\begin{equation} \label{RNdS_metric}
g_{\text{RNdS}} = -\br{1- f(r)}dt^2 + \frac{1}{1-f(r)}dr^2 + r^2 \sigma_{S^2},
\end{equation}
where $\sigma_{S^2}$ is the standard metric on the unit round sphere.
We refer the reader to \cite{CostaFranzen, CostaGirao} for an overview on this spacetime, on the different coordinate systems to resolve the coordinate singularities at $r=r_+$ (corresponding to the event horizon $\mathcal{H}^+_{\text{RNdS}}$) and $r=r_{-}$ (corresponding to the Cauchy horizon $\mathcal{CH}^+_{\text{RNdS}}$), and on the failure of determinism inherent in this non-generic solution.

We now show that the outgoing null coordinate chosen in section \ref{section:assumptions}  for our Einstein-Maxwell-charged-Klein-Gordon system is strictly related to the outgoing Eddington-Finkelstein coordinate
\begin{equation} \label{eddfink}
\tilde{v} =  \frac{r^* + t}{2}, \quad \text{with } r^* \coloneqq \int \frac{dr}{1-f(r)},
\end{equation}
that can be used to describe regions of the Reissner-Nordstr{\"o}m-de Sitter spacetime close to the event horizon $\mathcal{H}_{\text{RNdS}}^+$. As we will see, this result implies that \textbf{the $\bm{v}$ coordinate  in the exponential Price law upper bound \eqref{assumption:expPricelaw} is half of the outgoing Eddington-Finkelstein coordinate adopted for the linear analyses \cite{CostaFranzen} and \cite{HintzVasy1}}. This is of particular interest in the cosmological scenario, since different coordinate choices could lead to different exponential behaviours of the scalar field along the event horizon $\mathcal{H}^+$ of the dynamical black hole.

In particular, using the Eddington-Finkelstein coordinates
\[
\tilde{u} = \frac{r^* -t}{2}, \qquad
\tilde{v} = \frac{r^* + t}{2},
\]
and then the coordinate
\[
u = \frac{1}{2K_+}e^{2K_+ \tilde u},
\]
where $K_+$ is the surface gravity of $\mathcal{H}^+_{\text{RNdS}}$, 
we can write \eqref{RNdS_metric} as
\begin{align*}
g_{\text{RNdS}} &= 4\big ( 1-f(r(\tilde u, \tilde v)) \big ) d\tilde u d \tilde v + r^2(\tilde u, \tilde v) \sigma_{S^2}\\
& = 4 e^{-2K_+ \tilde u(u)}\big (1- f(r(u, \tilde v)) \big) du d \tilde v + r^2( u, \tilde v) \sigma_{S^2} \\
&\eqqcolon  -\Omega^2_{\text{RNdS}}(u, \tilde v)^2 du d \tilde v + r^2( u, \tilde v) \sigma_{S^2}.
\end{align*}
Notice that $1-f(r(u, \tilde v))$ is non-positive for $r(u, \tilde v) \in [r_{-}, r_+]$ (see \cite[section 3]{CGNS2}), and  that  $\mathcal{H}_{\text{RNdS}}^+=$ $\{\tilde u = -\infty\} = \{u = 0\}$.
We observe that:
\[
\tilde v + \tilde u = r^* = \frac{1}{2K_+} \log |r - r_+| + O(1),
\]
for $r$ close to $r_+$.
So, by decomposing $1-f$ into the product of its roots, we have
\[
1-f(r) \sim -e^{2K_+(\tilde v+ \tilde{u})} , \quad \text{ as } r \nearrow r_+.
\]
Therefore:
\begin{equation} \label{asymp1}
\Omega^2_{\text{RNdS}} \sim e^{2K_+ \tilde v}, \quad \text{ as } r \nearrow r_+. 
\end{equation}
We now compare this expression with \eqref{metric_eqn}. Assumption \textbf{(A)} in section \ref{section:assumptions}, \eqref{def_kappa} and \eqref{bound_nu_eventhorizon} give
\begin{equation} \label{asymp2}
\Omega^2(0, v) = -4 \nu(0, v) \sim e^{2K_+ v}, \quad \forall\, v \ge v_0.
\end{equation}
This result holds for all initial data satisfying the assumptions of section \ref{section:assumptions} and also for Reissner-Nordstr{\"o}m-de Sitter initial data\footnote{Namely the case in which $\varpi_0 \equiv \varpi_+$, $r(0, v) \equiv r_+$, $Q_0 \equiv Q_+$, $\lambda_0 \equiv 0$, $\kappa_0 \equiv 1$ and $\Omega^2(0, v) = 4 e^{2K_+ v}$. See also \cite{CGNS1}.} for the characteristic IVP of section \ref{section:ivp}. Indeed, the proof of proposition \ref{prop: boundsEH} can be easily extended to the case $\lambda|_{\mathcal{H}} \equiv 0$. In particular, we have $v =\tilde v$.

\section{Proof of lemma \ref{lemma:lambdanu_blueshift}} \label{appendixProofGamma}
In the following, we give a proof of lemma \ref{lemma:lambdanu_blueshift} (see also \cite[lemma 6.6]{CGNS2}).
\begin{proof}
Let $(u, v) \in \gamma \cap \{v \ge v_1 \}$.
To prove the bounds for $|\lambda|$, some preliminary estimates will be useful. First, we obtain an upper bound on $\frac{\nu}{1-\mu}$ by exploiting \eqref{raych_recast}:
\begin{equation} \label{preliminary_lambda_blueshift1}
0 <  \frac{\nu}{1-\mu}(u, v) \le \frac{\nu}{1-\mu}(u, v_Y(u)).
\end{equation} 

Furthermore, by differentiating the relation $r(\cdot, v_Y(\cdot))=Y$, we note that
\[
\nu(\cdot, v_Y(\cdot)) + \lambda(\cdot, v_Y(\cdot)) v'_Y(\cdot) \equiv 0 \quad \text{ in } [0, u_Y(v_1)].
\]
If we integrate the latter in $[u_Y(v), u]$, use that $u_Y = v_Y^{-1}$ and perform a change of variables:
\begin{equation} \label{mini_BV}
\int_{u_Y(v)}^u \nu(\tilde u, v_Y( \tilde u))d \tilde u = \int_{v_Y(u)}^v \lambda(u_Y(\tilde v), \tilde v) d\tilde v.
\end{equation}

Now, we obtain integral bounds on $\frac{\nu}{1-\mu}$ as a preliminary step. An upper bound is achieved as follows. We use \eqref{preliminary_lambda_blueshift1}, 
together with \eqref{bound_gammaY},  \eqref{mini_BV}, \eqref{algconstr} and with the definition of the curve $\gamma$, in this precise order, to obtain
\begin{align}
0 &< \int_{u_Y(v)}^u \frac{\nu}{1-\mu}(u', v) du' \le \int_{u_Y(v)}^u \frac{\nu}{1-\mu}(\tilde u, v_Y(\tilde u)) d \tilde u \nonumber \\
&\le  \frac{1}{C(N_{\text{i.d.}}, R, Y) + o(1)} \int_{u_Y(v)}^u (-\nu)(\tilde u, v_Y(\tilde u)) d \tilde u = \frac{1}{C(N_{\text{i.d.}}, R, Y) + o(1)} \int_{v_Y(u)}^v(-\lambda)(u_Y(\tilde v), \tilde v) d \tilde v \nonumber \\
&\le \frac{C(N_{\text{i.d.}}, R, Y) + o(1)}{C(N_{\text{i.d.}}, R, Y) + o(1)} \int_{v_Y(u)}^v \kappa(u_Y(\tilde v), \tilde v) d \tilde v \le (1+o(1)) \frac{\beta}{1+\beta} v, \label{preliminary_lambda_blueshift2}
\end{align}
where the notation $o(1)$ refers to the limit $u \to 0$.
\begin{figure}[H]
\centering
\includegraphics[width=0.5\textwidth]{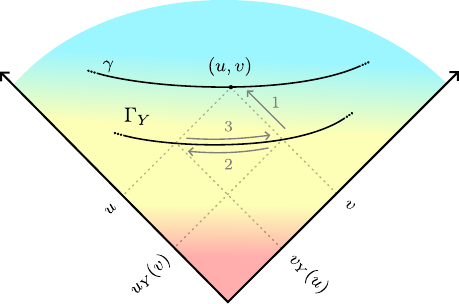}
\end{figure}
This yields a lower bound on $|\lambda|$ as follows. 
We use \eqref{algconstr} to cast \eqref{dulambda_kappa} as
\begin{equation} \label{blueshift_lambda_eqn}
\partial_u \lambda = \lambda \frac{\nu}{1-\mu}(2K - m^2 r |\phi|^2).
\end{equation}
If we integrate the latter from $\Gamma_Y$ and use that $|\lambda|(u_Y(v), v) = [\kappa(1-\mu)](u_Y(v), v)$, together with the boundedness of $r$, \eqref{phi_blueshift_final}, \eqref{bound_2K_blueshift}, \eqref{bound_gammaY},  \eqref{preliminary_lambda_blueshift2} and the fact that $\kappa$ is close to 1 in the early blueshift region:
\begin{align}
|\lambda|(u, v)  &= |\lambda|(u_Y(v), v)\exp\br{\int_{u_Y(v)}^u \frac{\nu}{1-\mu}\br{2K - m^2 r |\phi|^2}(u', v)du'} \nonumber \\
& \ge C(N_{\text{i.d.}}, R, Y) \exp \br{-(2K_{-} +  C(N_{\text{i.d.}}) \varepsilon ) (1+o(1))\frac{\beta}{1+\beta}v} \ \nonumber \\
&\ge C(N_{\text{i.d.}}, R, Y) e^{-(2K_{-} + c_1 \varepsilon)  \frac{\beta}{1+\beta}v}  \label{lowdecay_lambda_gamma} 
\end{align}
for $U$ sufficiently small,
where the constant $c_1$ depends uniquely on the initial data.

Now, we imitate the previous steps to get an upper bound on $|\lambda|$. 
 Since $\lambda < 0$ and $\partial_u |\lambda| < 0$ in this region (see \eqref{dulambda_kappa} and  \eqref{blueshift_final_K}), we use \eqref{phi_blueshift_final}, \eqref{lowdecay_lambda_gamma} and the fact that $v = (1+\beta)v_Y(u)$ to write:
\begin{align*}
\int_{v_Y(u)}^v \frac{|\partial_v \phi|^2}{|\lambda|} (u, v') dv' &\le \int_{v_Y(u)}^v \frac{|\partial_v \phi|^2(u, v')}{|\lambda| (u_{\gamma}(v'), v')} dv' \\
&\lesssim \int_{v_Y(u)}^v e^{-2(c(s)-\tau)v' + \br{2K_{-} + c_1 \varepsilon}\frac{\beta}{1+\beta}v'} dv' \\
&\lesssim e^{-a v_Y(u)} = e^{-\frac{a}{1+\beta}v},
\end{align*}
for some $a > 0$ and for $\beta$ small.
 In particular, provided that $v_1$ is chosen sufficiently large,  a reasoning similar to that of the steps  \eqref{raych_recast}--\eqref{noshift_nu} leads to
\begin{equation} \label{preliminary_lambda_blueshift1_low}
\frac{\nu}{1-\mu}(u, v) \ge \frac{1}{(1+ o(1))} \cdot \frac{\nu}{1-\mu} (u, v_Y(u)),
\end{equation}
where the notation $o(1)$ refers to the limit $u \to 0$.

Analogously to the steps in \eqref{preliminary_lambda_blueshift2}, but now using \eqref{preliminary_lambda_blueshift1_low} with \eqref{bound_gammaY}, \eqref{mini_BV}, \eqref{algconstr}, with the fact that $\kappa$ is close to 1 in the early blueshift region, and with the definition of the curve $\gamma$, in this precise order, we obtain:
\begin{equation}
 \int_{u_Y(v)}^u \frac{\nu}{1-\mu}(u', v) du' \ge(1+o(1)) \frac{\beta}{1+\beta}v, \label{preliminary_lambda_blueshift3}
\end{equation}
for $(u, v) \in \gamma \cap \{ v \ge v_1 \}$.
We now integrate \eqref{blueshift_lambda_eqn} from $\Gamma_Y$ and use that $|\lambda|(u_Y(v), v) = [\kappa(1-\mu)](u_Y(v), v)$, together with the boundedness of $r$, \eqref{phi_blueshift_final}, \eqref{bound_2K_blueshift} and   \eqref{preliminary_lambda_blueshift3}:
\begin{align}
|\lambda|(u, v)  &= |\lambda|(u_Y(v), v)\exp\br{\int_{u_Y(v)}^u \frac{\nu}{1-\mu}\br{2K - m^2 r |\phi|^2}(u', v)du'} \nonumber  \\
&\le C(N_{\text{i.d.}}, R, Y) \exp \br{-\br{2K_{-} - C(N_{\text{i.d.}}) \varepsilon}(1+o(1)) \frac{\beta}{1+\beta} v } \nonumber \\
&\le C(N_{\text{i.d.}}, R, Y) e^{-(2K_{-} - c_2 \varepsilon)\frac{\beta}{1+\beta}v} , \label{uppdecay_lambda_gamma}
\end{align}
where  $c_2 > 0$ depends uniquely on the initial data and can be chosen so that $2K_{-} - c_2 \varepsilon > 0$.

We now proceed to estimate $|\nu|$ along $\gamma$. First, by  relation  \eqref{noshift_u_final} proved in the no-shift region, we have
\begin{equation} \label{old_estimate_u}
 u \sim e^{-2K_+ v_Y(u)}, \quad \forall\,  u \in [0, u_Y(v_1)].
\end{equation}
Now, for every $0 < \sigma < 2 K_{-}$ and for every $u \le u_Y(v_1)$:
\begin{align} 
e^{-\br{2  K_{-} \pm \sigma}\beta v_Y(u)} &\sim u^{\frac{2 K_{-} \pm \sigma}{2K_+} \beta} = u^{ \frac{K_{-}}{K_+}\beta \pm \alpha}, \label{preliminary_estimate_u}
\end{align}
with $0 < \alpha= \frac{\sigma \beta}{2K_+}  \lesssim 1$. 

We now use, in this order, \eqref{blueshift_nu_explicit}, \eqref{noshift_nu} (from the no-shift region), \eqref{kappaK_lowerbound_blueshift},  the definition \eqref{def_gamma} of $\gamma$,  \eqref{redshift_nu_final}, \eqref{redshift_uomega} and \eqref{preliminary_estimate_u} to obtain:
\begin{align*}
|\nu|(u, v) &= |\nu|(u, v_Y(u))\exp\br{\int_{v_Y(u)}^v \kappa(2K - m^2 r |\phi|^2)(u, v') dv'} \\
&\le C(N_{\text{i.d.}}, R, Y) |\nu|(u, v_R(u)) e^{-(2K_{-}-2C\varepsilon)\beta v_Y(u)} \\
&\le C(N_{\text{i.d.}}, R, Y) \frac{1}{u} e^{-(2K_{-}-2C\varepsilon)\beta v_Y(u)} \\
&\lesssim u^{-1 + \frac{K_{-}}{K_+}\beta -\alpha},
\end{align*}
for every $(u, v) \in \gamma \cap \{ v \ge v_1 \}$,
where  $0 < \alpha \lesssim \varepsilon \beta$. In particular, for every $\beta$, we can assume without loss of generality that $\varepsilon$ was chosen sufficiently small so that
\begin{equation} \label{rhobeta}
\frac{K_{-}}{K_{+}}\beta - \alpha > 0.
\end{equation}

The lower bound in \eqref{bound_nu_gamma} then follows analogously to the previous steps, after recalling that $v_Y(u)-v_R(u)$ is bounded by a constant (see \eqref{bounddifferencev}).
\end{proof}

\printbibliography[heading=bibintoc]

\end{document}